\documentclass[11pt]{article}

\usepackage{indentfirst}
\usepackage{times}
\usepackage{amsmath}
\usepackage{amsthm}
\usepackage{color}
\usepackage{eufrak}
\usepackage{verbatim}
\usepackage[makeroom]{cancel}
\usepackage{rotating}
\usepackage{float}
\usepackage{fullpage}
\usepackage{amssymb}
\usepackage{amsthm}
\usepackage{mathrsfs}
\usepackage{epsfig}
\usepackage{graphicx}
\usepackage{indentfirst}
\usepackage{framed}
\usepackage[numbers]{natbib}
\usepackage{color}
\usepackage{hyperref}
\usepackage{graphicx}
\usepackage{float}
\usepackage{bm}
\usepackage{hyperref}
\usepackage[capitalize,noabbrev,nameinlink]{cleveref}
\usepackage{enumitem}
\usepackage{mathtools}
\usepackage{mleftright}
\usepackage{thmtools} 
\usepackage{relsize}\usepackage{scalerel}
\usepackage{tikz}
\usepackage{multicol}
\usepackage[labelfont=bf]{caption}
\usepackage{algorithm}
\usepackage[noend]{algpseudocode}
\usepackage{tablefootnote}
\usetikzlibrary{positioning,shapes,shadows,arrows,calc}
\usepackage{lineno}

\definecolor{corlinks}{RGB}{0,0,150}
\definecolor{cormenu}{RGB}{0,0,150}
\definecolor{corurl}{RGB}{0,0,150}

\hypersetup{
	colorlinks=true,
	urlcolor=corlinks,
	linkcolor=corlinks,
	menucolor=cormenu,
	citecolor=corlinks,
	pdfborder= 0 0 0
}

\definecolor{blue-violet}{rgb}{0.54, 0.17, 0.89}
\definecolor{deep-green}{rgb}{0,0.5,0}

\newtheorem{theorem}{Theorem}[section]
\newtheorem{lemma}[theorem]{Lemma}

\newtheorem{corollary}[theorem]{Corollary}
\newtheorem{proposition}[theorem]{Proposition}

\theoremstyle{definition}

\newtheorem{definition}[theorem]{Definition}

\newtheorem{problem}[theorem]{Problem}

\newcommand{\D}{\mathcal{D}}

\let\phi\varphi
\newcommand{\zo}{\{0,1\}}
\newcommand{\binset}{\{0,1\}}
\newcommand{\bin}{\binset}
\newcommand{\poly}{\mathsf{poly}}

\newcommand{\SIZE}{\mathsf{SIZE}}
\newcommand{\Q}{\mathrm{Q}}

\newcommand{\supp}{\mathsf{Support}}

\DeclareMathOperator*\Exp{{\bf E}}
\DeclareMathOperator*\Prob{{\bf Pr}}
\newcommand{\SAMP}{\mathsf{SAMP}}
 
\DeclareMathOperator{\Ex}{\mathbb{E}}

\DeclareMathOperator*\opt{opt}
\newcommand{\bool}{\mleft\{0,1\mright\}}
\newcommand{\Kt}{\mathsf{Kt}}

\newcommand{\N}{\mathbb{N}}
\newcommand{\nat}{\N}

\newcommand{\rKt}{\mathsf{rKt}}
\newcommand{\pKt}{\mathsf{pKt}}

\newcommand{\rK}{\mathsf{rK}}
\newcommand{\pK}{\mathsf{pK}}
\newcommand{\K}{\mathsf{K}}
\renewcommand{\P}{\mathsf{P}}
\newcommand{\NP}{\mathsf{NP}}

\newcommand{\BPP}{\mathsf{BPP}}
\newcommand{\Ppoly}{\mathsf{P}/\poly}
\newcommand{\AC}{\mathsf{AC}}
\newcommand{\ACC}{\mathsf{ACC}}
\newcommand{\TC}{\mathsf{TC}}

\newcommand{\BPE}{\mathsf{BPE}}

\newcommand{\Dist}{\mathsf{Dist}}
\newcommand{\Avg}{\mathsf{Avg}}
\newcommand{\Heur}{\mathsf{Heur}}

\newcommand{\qq}{\mathsf{q}}

\newcommand{\eqdef}{\triangleq}

\newcommand{\MKtP}{\mathsf{MKtP}}

\newcommand{\EXP}{\mathsf{EXP}}
\renewcommand{\P}{\mathsf{P}}

\newcommand{\pr}{\mathsf{pr}\text{-}}

\newcommand{\DistNP}{\mathsf{DistNP}}

\newcommand{\TV}{\Delta_{\mathsf{TV}}}

\newcommand{\EX}{\mathsf{EX}}
\newcommand{\KL}{\mathsf{KL}}
\newcommand{\KT}{\mathsf{KT}}
\newcommand{\MCSP}{\mathsf{MCSP}}
\newcommand{\pMCSP}{\mathsf{Partial\text{-}MCSP}}
\newcommand{\Ent}[1]{\mathsf{H}({#1})}
\newcommand{\Enc}{\mathsf{Enc}}
\newcommand{\partialMCSP}{\mathsf{Partial}\text{-}\mathsf{MCSP}}

\begin{document}
	
\title{Learning, Compression, and Meta-Complexity\\\vspace{0.4cm}}

\author{
Halley Goldberg\vspace{0.2cm}\thanks{Email:\texttt{~halley.goldberg@cs.ox.ac.uk}}\\
{\small Department of Computer Science}\\
{\small University of Oxford}\vspace{0.6cm}
\and
Mikito Nanashima\vspace{0.2cm}\thanks{Email:\texttt{~nanashima@comp.isct.ac.jp}}\\
{\small School of Computing}\\
{\small Institute of Science Tokyo}\vspace{0.6cm}
\and
Igor C. Oliveira\vspace{0.2cm}\thanks{Email:\texttt{~igor.oliveira@warwick.ac.uk}}\\
{\small Department of Computer Science}\\
{\small University of Warwick}\vspace{0.6cm}
}

\maketitle
	
\vspace{-0.6cm}

\begin{abstract}
The idea that learning is closely tied to compression has long served as both a powerful principle and, at times, an informal analogy across machine learning and learning theory. Here we focus on settings in which that intuition can be made mathematically precise, surveying formal connections among learning algorithms, data compression, and computational complexity theory, with an emphasis on recent progress and current frontiers. We describe how techniques from complexity theory can guide the design of both supervised and unsupervised learning algorithms with provable guarantees, and conversely, how learning algorithms can yield new insights and results in complexity theory. A central theme is the role of data compression as a bridge between learning and complexity, especially through an area called meta-complexity, which studies the computational difficulty of compression problems and time-bounded variants of Kolmogorov complexity.
\end{abstract}

\newpage

\tableofcontents

\normalsize

\newpage

\section{Introduction}\label{sec:introduction}

\subsection{Overview}\label{sec:overview}

We describe connections between learning, compression, and meta-complexity (see \Cref{fig:intro-three-perspectives}). We start off with a brief introduction to each of these themes, emphasising some important conceptual points along the way.

\begin{figure}[H]
\vspace{0.15cm}
\centering
\begin{tikzpicture}[
    font=\small,
    panel/.style={rounded corners=8pt, line width=0.9pt},
    paneltitle/.style={font=\bfseries\large},
    tagline/.style={font=\small, align=center, text width=4.1cm}
]

\filldraw[
    panel,
    draw=blue!45!black,
    fill=blue!2
]
(0,0) rectangle (4.9,5.5);

\node[paneltitle,text=blue!55!black] at (2.45,4.98)
{Learning};

\draw[blue!25!black,line width=0.7pt]
(0.55,4.55) -- (4.35,4.55);

\draw[gray!60,line width=0.8pt,->]
(0.85,1.75) -- (0.85,4.15);

\draw[gray!60,line width=0.8pt,->]
(0.85,1.75) -- (4.25,1.75);

\draw[blue!55,line width=0.9pt,densely dotted,line cap=round]
(1.02,2.05)
.. controls (1.85,2.42) and (2.85,3.03) ..
(4.08,3.70);

\foreach \p in {
    (1.20,2.31),
    (2.62,3.08)
}{
    \filldraw[
        fill=yellow!85!orange,
        draw=white,
        line width=0.6pt
    ]
    \p circle (2.7pt);
}

\foreach \p in {
    (1.70,2.14),
    (2.15,2.41),
    (3.08,2.92)
}{
    \filldraw[
        fill=gray!65,
        draw=white,
        line width=0.6pt
    ]
    \p circle (2.7pt);
}

\filldraw[
    fill=white,
    draw=orange!80!black,
    line width=1.2pt
]
(3.85,3.86) circle (4.1pt);

\node[
    font=\bfseries\scriptsize,
    text=orange!80!black,
    above
]
at (3.85,4.03) {?};

\node[tagline,text width=5.95cm,overlay] at (2.45,0.72)
{Generalise or extrapolate\\from examples};

\begin{scope}[xshift=0.15cm]

\filldraw[
    panel,
    draw=green!45!black,
    fill=green!4
]
(5.2,0) rectangle (10.1,5.5);

\node[paneltitle,text=green!45!black] at (7.65,4.98)
{Compression};

\draw[green!25!black,line width=0.7pt]
(5.75,4.55) -- (9.55,4.55);

\foreach \i in {0,1}{
    \foreach \j in {0,...,2}{
        \pgfmathtruncatemacro{\parity}{mod(\i+\j,2)}
        \ifnum\parity=0
            \fill[green!55!black]
            ({6.03+0.54*\i},{2.63+0.54*\j})
            rectangle ++(0.48,0.48);
        \else
            \fill[green!15]
            ({6.03+0.54*\i},{2.63+0.54*\j})
            rectangle ++(0.48,0.48);
        \fi
        \draw[green!35!black,line width=0.35pt]
        ({6.03+0.54*\i},{2.63+0.54*\j})
        rectangle ++(0.48,0.48);
    }
}

\foreach \i/\j in {2/2,5/2,3/1,4/1,5/1,4/0}{
    \fill[green!55!black]
    ({6.03+0.54*\i},{2.63+0.54*\j})
    rectangle ++(0.48,0.48);
}

\foreach \i/\j in {3/2,4/2,2/1,2/0,3/0,5/0}{
    \fill[green!15]
    ({6.03+0.54*\i},{2.63+0.54*\j})
    rectangle ++(0.48,0.48);
}

\foreach \i in {2,...,5}{
    \foreach \j in {0,...,2}{
        \draw[green!35!black,line width=0.35pt]
        ({6.03+0.54*\i},{2.63+0.54*\j})
        rectangle ++(0.48,0.48);
    }
}

\draw[
    orange!80!black,
    rounded corners=2pt,
    line width=1.2pt
]
(5.96,2.56) rectangle (7.10,4.30);

\draw[green!55!black,line width=1pt,->]
(7.65,2.49) -- (7.65,1.91);

\node[
    draw=green!45!black,
    fill=white,
    rounded corners=3pt,
    inner xsep=8pt,
    inner ysep=3pt,
    font=\ttfamily\scriptsize
]
      at (7.65,1.67) {01 alternation};

\node[tagline,text width=5.2cm] at (7.65,0.72)
{Uncover patterns to obtain compact representations};

\end{scope}

\begin{scope}[xshift=0.30cm]

\filldraw[
    panel,
    draw=orange!55!black,
    fill=orange!4
]
(10.4,0) rectangle (15.3,5.5);

\node[paneltitle,text=orange!75!black] at (12.85,4.98)
{Meta-Complexity};

\draw[orange!30!black,line width=0.7pt]
(10.95,4.55) -- (14.75,4.55);

\filldraw[
    fill=white,
    draw=gray!65,
    rounded corners=3pt,
    line width=0.8pt
]
(11.13,2.56) rectangle (13.76,4.18);

\node[
    font=\ttfamily\scriptsize,
    align=left,
    text=gray!75!black
]
at (12.36,3.36)
{3.1415\\926535\\897932};
\filldraw[
    fill=orange!8,
    fill opacity=0.92,
    draw=orange!75!black,
    line width=1.4pt
]
(13.55,3.18) circle (0.67);

\draw[
    orange!75!black,
    line width=3pt,
    line cap=round
]
(14.07,2.64) -- (14.56,2.15);

\node[
    font=\bfseries\small,
    text=orange!80!black
]
at (13.55,3.18) {?};

\fill[green!45!black]
(11.18,1.79) rectangle (12.30,2.01);

\fill[yellow!70!orange]
(12.30,1.79) rectangle (13.42,2.01);

\fill[orange!75!black]
(13.42,1.79) rectangle (14.54,2.01);

\draw[gray!70,line width=0.6pt,rounded corners=1pt]
(11.18,1.79) rectangle (14.54,2.01);

\fill[gray!80]
(13.05,2.10) -- (12.92,2.29) -- (13.18,2.29) -- cycle;

\node[tagline,text width=4.45cm] at (12.85,0.72)
{Estimate the complexity\\of a given object};

\end{scope}

\end{tikzpicture}

\caption{\small{Three perspectives on structure in data: learning, compression, and meta-complexity. We investigate correspondences among them through several lenses, including Shannon's information theory, modern variants of (time-bounded) Kolmogorov complexity, and the theory of computational pseudorandomness. Interestingly, these connections can be used both to design new learning algorithms and to prove that certain compression problems admit no efficient algorithms.}}
\label{fig:intro-three-perspectives}
\end{figure}

\paragraph{Learning and Compression.}
At a high level, learning and compression are two ways of asking whether observed data admit a concise explanation:
\begin{itemize}
    \item A \emph{learning algorithm} uses examples to construct a model that makes useful predictions on new data.
    \item A \emph{compression algorithm} represents the data using fewer bits, together with a procedure that recovers the original data.
\end{itemize}
The outputs are different, but both tasks succeed by identifying \emph{structure} or \emph{regularity}: a pattern that permits prediction also makes some outcomes more likely than others, and likely outcomes can be assigned shorter descriptions. Conversely, a short description is evidence that the object is not completely unstructured and can therefore be a useful starting point for prediction.

This connection takes different forms in supervised and unsupervised learning.
In \emph{supervised learning}, examples arrive with labels, and the learner seeks a hypothesis that predicts the label of a fresh example.
Here compression may mean inferring from a large labelled sample a short consistent hypothesis, as in Occam's Razor, or retaining a small subsample from which a good hypothesis can be reconstructed.
In \emph{unsupervised learning}, there are no labels.
The learner may instead produce an approximate  distribution that generates the observations, or predict the next part of a sequence from its prefix.
Here the relationship with \emph{source coding} is especially direct: if a probabilistic model assigns probability $p$ to an outcome, then the outcome can be encoded using roughly $\log(1/p)$ bits.
Thus accurate probabilistic predictions lead to short descriptions of likely data, while valid code lengths can conversely be interpreted as probability assignments.
This correspondence connects distribution learning, sequence prediction, and source coding.\\

One of the main goals of this survey is to determine precisely how far this correspondence extends.
We formulate the relevant computational tasks and present reductions showing when a learner can be transformed into a compressor, and when compression can be used to construct a learner.

\paragraph{Two Theories of Information.}
We use two complementary notions of \emph{information}. 
In \emph{Shannon's information theory}, a message is drawn from a \emph{probability distribution}, and entropy measures the optimal expected description length when coding repeated outputs of that source \citep{Shannon1948} (see \citep{cover1999elements}).
This is a statistical and distributional theory: the information content of an outcome is determined by its probability under the chosen source.
In contrast, in \emph{algorithmic information theory}, the Kolmogorov complexity of an \emph{individual string} is the length of its shortest generating program \citep{kolmogorov1965three} (see \citep{DBLP:series/txcs/LiV19}).
No source distribution needs to be specified, and the theory specifies the information content of each string. 

These two theories are related, but not interchangeable. For instance, \emph{source coding}, mentioned above, connects probability distributions with description length: a distribution with a short sampler gives short algorithmic descriptions to its typical outputs. Nevertheless, compressing individual strings in this way may be computationally expensive and may require access to the code of the sampling procedure; moreover, a string with a short program need not be well compressed by a particular sampler and the associated source coding procedure.\\

A major focus of this survey is on connections between learning and compression  under \emph{computational efficiency constraints}. Consider for instance Kolmogorov complexity. It describes the shortest possible program, but it places no bound on how long that program may take to produce its output. Given a string, estimating its Kolmogorov complexity is \emph{uncomputable} in general. Moreover, central properties of the theory do not automatically survive once efficiency is imposed. For example, \emph{symmetry of information} holds for Kolmogorov complexity up to a small additive loss, whereas its time-bounded analogues do not hold in general (see, e.g., \citep{DBLP:conf/stoc/HiraharaILNO23, DBLP:conf/stoc/HuMO26}). Classical Kolmogorov complexity, while philosophically relevant, is not by itself an adequate foundation for computational learning. To study \emph{efficient} learning through compression, we need both a theory of \emph{time-bounded}  descriptions and a theory of the \emph{computational complexity} of finding or recognizing them.

\paragraph{Meta-Complexity.}
These considerations lead naturally to \emph{meta-complexity}, the study of the computational complexity of problems that ask about the complexity of  input objects.
Canonical examples ask whether a string has a short program that generates it within a prescribed time bound, or whether the truth table of a Boolean function can be computed by a small Boolean circuit.
Such problems formalise the task of detecting whether an object is compressible or structured; see \citep{SimonsMetaComplexity2023} for further context.
The central question is no longer merely whether short descriptions exist, but whether they can be recognised or found efficiently.

The complexity of these compression problems is intertwined with some of the most fundamental unresolved questions in theoretical computer science.
In the terminology of Impagliazzo's five worlds \citep{DBLP:conf/coco/Impagliazzo95}, we do not yet know which computational world we inhabit: whether worst-case hard problems remain hard on typical inputs, whether such hardness supports secure cryptography, and how broadly efficient algorithms can exploit structure. Meta-complexity, through the study of the complexity of compression problems, turns out to provide a suitable framework in which to investigate this tension between tractability on typical inputs and cryptographic hardness. In turn, this will allow us to compare the feasibility of different learning tasks and their relation to Impagliazzo's  computational worlds.

There is, however, an important mismatch between learning and the standard formulation of compression problems in meta-complexity.
Most canonical problems in meta-complexity are \emph{decision problems}.
For example, given a string $x$, a time bound $t$, and a description length $s$, the Minimum Time-Bounded Kolmogorov Complexity problem ($\mathsf{MINKT}$) asks whether  some program of size at most $s$ produces $x$ in at most $t$ steps.
The corresponding \emph{search problem} asks for such a program.
Learning is inherently a search task: a learner must output a hypothesis, sampler, or predictor, rather than merely certify that a concise representation exists.
Thus, in principle, an efficient algorithm that merely recognises compressible data does not automatically yield a corresponding search algorithm. Interestingly, many results surveyed here bridge this gap indirectly: a decision procedure separating structured objects from random ones is sufficient to construct a learning algorithm.\\

Having introduced the main themes, we elaborate on the scope of the survey. We first revisit classical connections in which compression yields generalisation in supervised learning. After this, we compare learning tasks of different strengths, from universal sequence extrapolation and distribution learning to more traditional forms of PAC learning, and relate this hierarchy to average-case complexity, meta-complexity, and cryptography.
Next, we explain how algorithms for detecting objects of low description complexity can produce learning algorithms, both through reconstruction procedures in supervised learning and through time-bounded variants of Solomonoff induction in unsupervised learning.
This perspective also reveals novel phenomena such as generic learning speedups and, in certain settings, equivalences between efficient learning and the easiness of an associated meta-complexity problem.
Finally, we discuss an unexpected converse direction, showing how sufficiently powerful learning algorithms can yield computational hardness results and new lower bounds in meta-complexity. For a more detailed description of each section of the survey, we refer to \Cref{sec:organisation} below.

We place particular emphasis on recent developments in these rapidly evolving areas. These include new connections among learning, time-bounded Kolmogorov complexity, meta-complexity, cryptography, average-case complexity, and universal sequence extrapolation. Our aim is not only to state these results, but also to explain the underlying ideas and, where appropriate, provide proofs or detailed proof sketches. The emerging theory remains far from complete, so open problems form an integral part of the survey.

\paragraph{Intended Audience and Required Background.}
The primary audience includes students and researchers interested in   computational complexity, computational learning, and theoretical computer science more broadly. We aim to make the main results and proof ideas accessible without assuming prior expertise in every area involved. The preliminaries review the necessary concepts and definitions, and provide pointers to standard textbooks that cover the required material in more detail.

We also hope that researchers and students in machine learning will find the survey useful as an entry point to the complexity-theoretic foundations of learning and compression.
We warn that some arguments might omit standard background or routine intermediate steps.
Readers may find it helpful to use their preferred AI assistant as an interactive companion for unpacking definitions, reconstructing such steps, or translating between the terminology of theoretical computer science and intuitions from machine learning.\\

We stress that the selection of topics necessarily reflects the authors' background and perspective. In particular, we do not attempt to provide a comprehensive account of learning and compression from the perspective of statistics, information theory, or theoretical machine learning. We refer the reader to \Cref{sec:related_work} for some additional pointers and complementary perspectives. 

Finally, to present a general theory, much of this survey considers expressive learning problems involving unrestricted polynomial-size Boolean circuits or all efficiently samplable example distributions. As explained later in the text, these learning problems are intractable under  cryptographic conjectures. However, many ideas and techniques surveyed here also apply to more restricted learning tasks, and such tasks are more likely to admit efficient algorithms. It is possible, therefore, that some  results can inform practical methods or analyses, and it would be interesting to see further work along these lines.  In fact, going beyond the formal connections covered here, the compression perspective has found numerous concrete applications in both classical and contemporary machine learning. A comprehensive treatment of the rich empirical literature on learning and compression lies beyond the scope of this survey, although we provide some pointers in \Cref{sec:related_work}.

\subsection{Organisation}\label{sec:organisation}

The remainder of the survey is organised as follows.

\begin{itemize}
\item \textbf{\Cref{sec:preliminaries}: Preliminaries.}
We establish the notation and background used throughout the survey.
After reviewing the required probability and information-theoretic notions, we define the supervised and unsupervised learning frameworks considered later, including sequence extrapolation, distribution learning, PAC learning, agnostic learning, and heuristic learning.
We also recall the necessary concepts from cryptography, circuit complexity, Kolmogorov complexity, meta-complexity, and natural properties.

\item \textbf{\Cref{sec:classical_connections}: Classical Connections Between Learning and Compression.}
We present two fundamental roles of compression in supervised learning.
Model compression formalises the Occam principle that a short hypothesis fitting the observed labels is unlikely to generalise poorly, while sample compression retains a small portion of the labelled sample from which a hypothesis can be reconstructed.
These results illustrate, without computational efficiency considerations, how the amount of information retained from a sample controls generalisation.

\item \textbf{\Cref{sec:complexity_learning}: Complexity-Theoretic Foundations of Learning and Compression.}
We compare five increasingly weak ``learning paradises'', ranging from universal sequence extrapolation and distribution learning to heuristic PAC learning, and establish reductions among them.
We then place these learning tasks within Impagliazzo's worlds, relating their feasibility to average-case complexity, meta-complexity, and cryptographic hardness.
Finally, we study white-box learning, where the learner receives the code of a sampler generating its labelled examples, and connect this additional access to the inversion of one-way functions.

\item \textbf{\Cref{sec:learning_via_meta}: Learning via Meta-Complexity and Algorithmic Information Theory.}
We explain how algorithms that distinguish low complexity objects from random ones can be combined with reconstruction procedures of pseudorandom generators to obtain supervised learning algorithms.
For unsupervised learning, \Cref{sec:solomonoff} revisits Solomonoff induction and asks when prediction based on shortest programs can be implemented efficiently using estimates of time-bounded Kolmogorov complexity.
We also present equivalences between learning and meta-complexity in \Cref{sec:learning_meta-equivalences}, and a technique to boost the running time of learning algorithms in \Cref{sec:speedup}.

\item \textbf{\Cref{sec:shannon_connections}: Learning vs Compression via Shannon's Information Theory.}
We develop the correspondence among source coding, universal coding, and distribution learning, first at the information-theoretic level and then under computational efficiency constraints.
The discussion covers arithmetic coding, next-bit prediction, source coding theorems for time-bounded Kolmogorov complexity, conditional sampling, and symmetry of information.
We conclude by relating this framework to autoregressive language modelling, where next-token probabilities determine predictions and code lengths.

\item \textbf{\Cref{sec:learning_to_hardness}: Meta-Complexity and Computational Hardness from Learning.}
We study the converse direction, explaining how nontrivial deterministic, randomized, and quantum learners can imply circuit size lower bounds, a major challenge in computational complexity theory.
We then show how this paradigm can establish unconditional lower bounds for meta-complexity problems.
In particular, \Cref{sec:uncond_lbs} describes a general template in which an assumed algorithm for measuring compressibility produces a learner whose complexity-theoretic consequences contradict the original easiness assumption.

\item \textbf{\Cref{sec:related_work}: Related Work.}
We provide pointers to related connections and research directions outside the survey's main scope, including further interactions among learning, compression, algorithmic information theory, complexity theory, and cryptography. We also provide links to the empirical literature on learning and compression and to related practical aspects of machine learning.
\end{itemize}

\medskip \medskip

\noindent \emph{Acknowledgements.} We thank Shuichi Hirahara for helpful discussions on deriving meta-complexity lower bounds from deterministic learning, which led to Item (\emph{ii}) of \Cref{thm:lb_from_det_learning}. We also thank Valentine Kabanets for helpful feedback on an earlier draft of this survey.\\ 

\medskip

\noindent \emph{Use of AI Tools.} We used AI tools to search for relevant literature, assist with the preparation of diagrams, identify typographical errors, check proofs for correctness, and make  textual edits to improve the exposition.\\

\medskip

\noindent \emph{Funding Information.} This work was supported in part by the UKRI Frontier Research Guarantee Grant EP/Y007999/1 and the Centre for Discrete Mathematics and its Applications (DIMAP) at the University of Warwick. The authors would also like to thank the Isaac Newton Institute for Mathematical Sciences, Cambridge, for support and hospitality during the programme Logical Foundations of Complexity Theory (LFC), where part of the work on this paper was undertaken. The LFC programme was supported by EPSRC grant EP/Z000580/1.

\newpage

\section{Preliminaries}\label{sec:preliminaries}

\subsection{Basic Notation}

For $m \geq 1$, we let $[m] = \{1, \ldots ,m\}$.

We denote the empty string by $\epsilon$, and the length of a string $x$ by $|x|$. For strings $x \in \{0,1\}^n$ and $y \in \{0,1\}^m$, their concatenation is written 
as $x y \in \{0,1\}^{n+m}$. 

Given a string $z \in \{0,1\}^n$ and a set $S \subseteq [n]$, we let $z|_S \in \{0,1\}^{|S|}$ denote the string obtained by restricting $z$ to the coordinates in $S$.

When considering algorithms, PPT denotes probabilistic polynomial time.

\subsection{Probability Theory and Information Theory}

If $x \sim D$, then $x$ is sampled from the distribution $D$. For a set $S$, $x \sim S$ denotes a uniformly random element from $S$. 
For any $a$ in the support of $D$, $D(a)$ denotes its probability.  
We write $\mathcal{U}_\ell$ for the uniform distribution over $\{0,1\}^\ell$. We use $\Delta_\mathsf{TV}(P, Q)$ to denote the statistical distance of two distributions with the same support. We will also use KL-divergence and its relation to statistical distance, as reviewed below.

\begin{definition}[KL Divergence]
Let $P$ and $Q$ be distributions on the same finite set $\Omega$. The \emph{Kullback--Leibler divergence}
from $P$ to $Q$ is defined by
\[
\KL(P \| Q)
\;\eqdef\;
\Ex_{x \sim P}\!\left[\log \frac{P(x)}{Q(x)}\right]
\;=\;
\sum_{x \in \Omega} P(x)\log \frac{P(x)}{Q(x)}.
\]
\end{definition}

\begin{lemma}[Pinsker's Inequality]\label{lem:Pinsker}
Let $P$ and $Q$ be distributions on the same finite set. Then
$$
\Delta_\mathsf{TV}(P,Q)^2 \;\le\; \frac{1}{2}\,\KL(P \| Q).
$$
\end{lemma}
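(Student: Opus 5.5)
The plan is the standard two-step argument: first reduce to distributions on a two-element set via a data-processing step, then verify the binary case by one-variable calculus. Logarithms are taken as natural logarithms; with base-$2$ logarithms $\KL$ only grows by the factor $1/\ln 2 > 1$, so the inequality is weaker there and follows from the natural-log case. We may assume $Q(x)>0$ whenever $P(x)>0$, since otherwise $\KL(P\|Q)=\infty$ and there is nothing to prove.

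\emph{Reduction to two points.} Let $A=\{x\in\Omega : P(x)\ge Q(x)\}$, and set $p=P(A)$, $q=Q(A)$. Then $\Delta_\mathsf{TV}(P,Q)=p-q$, which is exactly the statistical distance between the Bernoulli distributions $\mathrm{Ber}(p)$ and $\mathrm{Ber}(q)$. Next we need
\[
\KL(P\|Q)\;\ge\; p\ln\frac{p}{q}+(1-p)\ln\frac{1-p}{1-q},
\]
which is data processing for the map $x\mapsto \mathbf{1}[x\in A]$. This follows from the log-sum inequality applied separately to the sums over $A$ and over $\Omega\setminus A$: for nonnegative $a_i,b_i$,
\[
\sum_i a_i\ln\frac{a_i}{b_i}\;\ge\;\Bigl(\sum_i a_i\Bigr)\ln\frac{\sum_i a_i}{\sum_i b_i}.
\]
The log-sum inequality in turn follows from convexity of $t\mapsto t\ln t$ via Jensen's inequality with weights $b_i/\sum_j b_j$.

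\emph{Binary case.} Fix $p\in[0,1]$ and, for $q\in(0,1)$ with $q\le p$, define
\[
g(q)=p\ln\frac{p}{q}+(1-p)\ln\frac{1-p}{1-q}-2(p-q)^2 .
\]
We have $g(p)=0$. Differentiating,
\[
g'(q)=-\frac{p}{q}+\frac{1-p}{1-q}+4(p-q)=(q-p)\Bigl(\frac{1}{q(1-q)}-4\Bigr).
\]
Since $q(1-q)\le 1/4$, the second factor is nonnegative, so $g'(q)\le 0$ for $q\le p$. Hence $g$ is nonincreasing on $(0,p]$, and therefore $g(q)\ge g(p)=0$. This gives $\KL\ge 2\,\Delta_\mathsf{TV}^2$, which is the claimed inequality. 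The boundary cases $p\in\{0,1\}$ and $q\in\{0,1\}$ are handled by continuity, using the convention $0\ln 0=0$.

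The main obstacle is the data-processing step, where the log-sum inequality must be applied correctly, including the zero-probability conventions. The binary calculation itself is routine once the derivative is factored as above.
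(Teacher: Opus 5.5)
The paper gives no proof of this lemma. It is recorded in the preliminaries as a standard fact, with pointers to Cover--Thomas and Polyanskiy--Wu, so there is no argument of the paper's to compare yours against.

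Your proof is correct and complete, and it is the standard textbook argument (as in Cover--Thomas):
\begin{itemize}
\item You pass to the Bernoulli pair $(P(A),Q(A))$ with $A=\{x : P(x)\ge Q(x)\}$. This preserves $\Delta_\mathsf{TV}$, and by the log-sum inequality it can only decrease $\KL$.
\item You then settle the binary case via the factorization $g'(q)=(q-p)\bigl(\frac{1}{q(1-q)}-4\bigr)\le 0$ for $q\le p$.
\end{itemize}
Your side remarks also hold up. Reading $\KL$ in base $2$, as elsewhere in the paper, only weakens the inequality, since the base-$2$ divergence is the natural-log divergence divided by $\ln 2<1$. The boundary cases are fine too: $q=0$ forces $p=0$ by absolute continuity, and for $p=1$ one lets $q\to 1$.
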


\begin{definition}[Conditional KL Divergence]
Let $(P_X,P_Y)$ and $(Q_X,Q_Y)$ be random pairs on the same finite
product set $\mathcal{X}\times\mathcal{Y}$, and suppose that
$\supp(P_X)\subseteq\supp(Q_X)$.
The \emph{conditional Kullback--Leibler divergence} from $P_Y$
conditioned on $P_X$ to $Q_Y$ conditioned on $Q_X$ is defined by
\[
\KL(P_Y\mid P_X\|Q_Y\mid Q_X)
\;\eqdef\;
\Ex_{x\sim P_X}
\left[
\KL\bigl(
P_Y\mid P_X=x
\|
Q_Y\mid Q_X=x
\bigr)
\right].
\]
\end{definition}

\begin{lemma}[Chain Rule for KL Divergence]\label{lem:chainruleKL}
Let $(P_{1},\ldots,P_{m})$ and
$(Q_{1},\ldots,Q_{m})$ be random vectors on the same finite
product set $\mathcal{X}_1\times\cdots\times\mathcal{X}_m$,
and suppose that 
$\supp(P_{1},\ldots,P_{m})
\subseteq
\supp(Q_{1},\ldots,Q_{m})$.
For every $i\in[m]$, write
\[
P_{{<i}}
\eqdef
(P_{1},\ldots,P_{{i-1}})
\qquad\text{and}\qquad
Q_{{<i}}
\eqdef
(Q_{1},\ldots,Q_{{i-1}}),
\]
where $P_{{<1}}$ and $Q_{{<1}}$ denote the empty tuple. Then
\begin{align*}
    \KL\bigl(
(P_{1},\ldots,P_{m})
\|
(Q_{1},\ldots,Q_{m})
\bigr)
&=
\sum_{i=1}^m
\KL\bigl(
P_{i}\mid P_{<i}
\|
Q_{i}\mid Q_{<i}
\bigr)\\
&= m\cdot \Ex_{i\sim [m]}\left[\KL\bigl(
P_{i}\mid P_{<i}
\|
Q_{i}\mid Q_{<i}\bigr)\right].
\end{align*}

\end{lemma}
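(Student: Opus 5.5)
The plan is to prove the chain rule by factoring the joint probabilities into conditionals and taking logarithms, so that the logarithm of the joint ratio splits into a telescoping sum; then expectation is applied term by term. Write $P(x_1,\ldots,x_m)$ for the probability of $(x_1,\ldots,x_m)$ under $(P_1,\ldots,P_m)$, and similarly for $Q$. For every $x=(x_1,\ldots,x_m)$ in the support of $P$, the chain rule of probability gives
\[
P(x)=\prod_{i=1}^m P(x_i\mid x_{<i}),\qquad Q(x)=\prod_{i=1}^m Q(x_i\mid x_{<i}).
\]
Here $P(x_i\mid x_{<i})$ denotes $\Pr[P_i=x_i\mid P_{<i}=x_{<i}]$. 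The support hypothesis $\supp(P_1,\ldots,P_m)\subseteq\supp(Q_1,\ldots,Q_m)$ guarantees that every prefix $x_{<i}$ of such an $x$ has positive probability under both $P_{<i}$ and $Q_{<i}$. Hence all these conditionals are well defined and positive. In particular, the support condition required in the definition of each conditional KL term holds, since $\supp(P_{<i})\subseteq\supp(Q_{<i})$ follows by marginalisation.

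Taking logarithms, I obtain pointwise on $\supp(P)$
\[
\log\frac{P(x)}{Q(x)}=\sum_{i=1}^m \log\frac{P(x_i\mid x_{<i})}{Q(x_i\mid x_{<i})}.
\]
Next I take expectation over $x\sim P$ and use linearity. The $i$-th summand depends only on $x_{\le i}$, so its expectation can be computed under the marginal $(P_1,\ldots,P_i)$. I split this as an outer expectation over $x_{<i}\sim P_{<i}$ and an inner expectation over $x_i\sim (P_i\mid P_{<i}=x_{<i})$. The inner expectation is exactly $\KL\bigl(P_i\mid P_{<i}=x_{<i}\,\|\,Q_i\mid Q_{<i}=x_{<i}\bigr)$, so the outer expectation is by definition the conditional divergence $\KL(P_i\mid P_{<i}\|Q_i\mid Q_{<i})$. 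For $i=1$ the conditioning is on the empty tuple, and the term reduces to $\KL(P_1\|Q_1)$.

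The second displayed equality is immediate: a sum of $m$ terms equals $m$ times their average under a uniform index $i\sim[m]$. The only point needing care, rather than a genuine obstacle, is the bookkeeping of supports. One must ensure that terms with $P(x)=0$ are dropped consistently, using the convention $0\log 0=0$, and that no division by zero occurs. Both are handled by the support inclusion noted above, which propagates to every prefix marginal.
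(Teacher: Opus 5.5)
Your argument is correct: factoring both joint laws into next-coordinate conditionals, taking logarithms pointwise on the support of $P$, and applying linearity of expectation is the standard proof, and your support bookkeeping is what is needed, namely that the inclusion passes to every prefix marginal, so all conditionals are defined and positive and each conditional KL term is well defined. The paper states this lemma without proof as a standard fact, but it carries out the same factorization-and-linearity computation in \Cref{sec:solomonoff}, when it decomposes the log-loss bound into per-step KL divergences.
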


We use the following definition of entropy.
\begin{definition}[Entropy]
For a probability distribution $\D$ over a finite set, define the entropy of $\D$ as
\[
\Ent{\D} \eqdef \sum_{x\in \supp{\D}} \D(x)\log \frac{1}{\D(x)}.
\]
\end{definition}

For more information about information theory and its applications, see \citep{cover1999elements,polyanskiy2025information}.

\subsection{Learning Frameworks}\label{sec:learning-frameworks}

In \emph{supervised learning}, the learner receives examples together with the desired answers, usually labels produced by an unknown target function, and the goal is to predict the correct label on new examples. In \emph{unsupervised learning}, the learner receives samples without such labels and must instead capture some structure of the source that generated them. 

In the learning models introduced below, sequence extrapolation and distribution learning are unsupervised, since the input consists only of samples from an unknown process or distribution. On the other hand, PAC learning and heuristic learning are supervised, since the learner is given labelled examples of the form $(x,f(x))$ and aims to predict the label of future examples.

\subsubsection{Sequence Extrapolation}

A common way to learn from data is to ask what should come next. After observing a prefix of a sequence, a learner tries to reproduce the distribution of the next block without being told the process that generated the sequence. This viewpoint captures settings ranging from sequential prediction to universal induction: the past is useful precisely when it reveals regularity in an unknown source. Here we make that intuition computational by assuming only that the sequence is produced by a short randomized program running within a prescribed time bound. Since the source itself may be random, the learner is not expected to guess the next block exactly. Instead, it should make a probabilistic prediction close to the source's true conditional distribution.

Fix a \emph{sequence length} $m \in \mathbb{N}$, a \emph{block length} $n \in \mathbb{N}$, an \emph{error bound}  $\varepsilon > 0$, a \emph{confidence value} $\delta > 0$, a \emph{description length} $s \in \mathbb{N}$, and a time bound $t \in \mathbb{N}$.

We consider \emph{randomized} Turing machines $M$ of description length at most $s$ (according to a fixed universal randomized machine $U^\mathsf{r}$). We will be interested in the random variable $U^\mathsf{r}(M,\epsilon, 1^t)$ encoding the output of $M$ on the empty string $\epsilon$ when $U^\mathsf{r}$ runs over the input pair $(M,\epsilon)$ for $t$ steps. If this simulation does not produce an output, we set the output to $0^{n \cdot m}$. For convenience, for a choice of $n$ and $m$, if the output $x$ of $U^\mathsf{r}(M,\epsilon, 1^t)$ is of length less than $n \cdot m$, we add $0$'s to it and consider instead the string $x' = x0^{nm - |x|}$. Therefore, given $n$ and $m$, we will view $x \sim U^\mathsf{r}(M,\epsilon, 1^t)$ as a \emph{random variable} $(x_1, \ldots, x_m)$ distributed over $\{0,1\}^{n \cdot m}$. For simplicity, we might omit the empty string $\epsilon$.

We consider a task called $(m,n,s,t,\varepsilon,\delta)$-\emph{extrapolation}. In this computational task, we consider a randomized algorithm $L$ such that, for every (unknown) choice of $M$ of length at most $s$ and index $i$, given the first $i-1$ blocks $(x_1, \ldots, x_{i-1})$ of $x \sim U^\mathsf{r}(M,1^t)$ (denoted by $x_{<i}$), the goal is for $L(x_{<i}) \sim \{0,1\}^n$ to be distributed according to $(\mathcal{X}_i \mid {x_{<i}})$, where for a given choice of $n,m, M, t$, the random variable $(\mathcal{X}_i \mid {x_{<i}})$ denotes the conditional distribution of the $i$-th block $x_i$ given preceding blocks $x_1, \ldots, x_{i-1} \in \{0,1\}^{n}$. Formally, we have the following definition.

\begin{definition}[Sequence Extrapolation]
We say that a randomized algorithm $L$ \emph{solves}  $(m,n,s,t,\varepsilon,\delta)$-\emph{extrapolation} if for every choice of $M \in \{0,1\}^{\leq s}$ the following holds:
$$
\Pr_{i \sim [m],\;x \sim U^\mathsf{r}(M,1^t)}\Big [ \Delta_{\mathsf{TV}} \big (L(x_{<i}),(\mathcal{X}_i \mid {x_{<i}}) \big ) \leq \varepsilon \Big ]  \geq 1 - \delta.
$$
We assume that the parameters $(m,n,s,t,\varepsilon^{-1},\delta^{-1})$ are also given to $L$ in unary when considering uniform algorithms and parameter sequences.    
\end{definition}

For broader treatments of sequential prediction and algorithmic approaches to extrapolation, see \citep{CesaBianchiLugosi2006,hutter2024introduction}.

\subsubsection{Learning Distributions}

Sometimes the goal is not merely to predict the next observation, but to construct a model of the source itself. Given independent samples from an unknown distribution, a distribution learner seeks a new sampling procedure whose outputs are statistically indistinguishable from those of the original source, up to a prescribed error. Such a model captures enough of the source's structure to generate fresh representative data, even though it need not reveal how the source was originally implemented. 

Formally, we consider the task of learning an unknown discrete distribution $D$ from a fixed class $\mathcal{D}$ of distributions \citep{kearns1994learnability}. The learner has access to i.i.d.~samples from $D$, and the goal is to output a circuit $C \colon \{0,1\}^T \to \{0,1\}^n$ that samples a distribution that is $\varepsilon$-close to $D$.

\begin{definition}[Learning Distributions \citep{kearns1994learnability}]\label{def:distlearning}
    Let $\mathcal{D} = \{\mathcal{D}_n\}_{n \geq 1}$ be a class of distributions, where each $\mathcal{D}_n$ is a set of distributions supported over $\{0,1\}^n$. The class $\mathcal{D}$ is said to be learnable in time $T(n,\varepsilon,\delta)$ and with sample complexity $m(n,\varepsilon,\delta)$ if there is a randomized algorithm $A$ such that, for every $n$, $D \in \mathcal{D}_n$, $\varepsilon > 0$, $\delta > 0$, and $m = m(n,\varepsilon,\delta)$, $A(1^n,1^{\varepsilon^{-1}}, 1^{\delta^{-1}}, \cdot)$ runs in time at most $T = T(n,\varepsilon, \delta)$ and 
    \begin{equation}\label{eq:dist-learned}
     \Pr_{A,\; \vec{x} \sim D^{(m)}}\Big [ C \leftarrow A(1^n,1^{\varepsilon^{-1}} 1^{\delta^{-1}}, \vec{x}) \;\; \text{and} \;\;    \Delta_{\mathsf{TV}}(C(\mathcal{U}_T), D) \leq \varepsilon \Big ] \geq 1 - \delta,   
    \end{equation}
    where $C(\mathcal{U}_T)$ is the distribution sampled by $C$, and $\vec{x} \sim D^{(m)}$ denotes a tuple of $m$ i.i.d.~samples from $D$.
\end{definition}

For convenience, when \Cref{eq:dist-learned} holds for a distribution $D$, we say that $A$ $\varepsilon$-learns $D$ with probability at least $1 - \delta$.

For functions $s,t \colon \mathbb{N} \to \mathbb{N}$, we can consider the class  $\mathcal{D}^{s,t} = \{\mathcal{D}^{s(n),t(n)}_n\}_{n \geq 1}$ of distributions, where each $\mathcal{D}^{s',t'}_{n'}$ consists of the set of all distributions $D$ supported over $\{0,1\}^{n'}$ sampled by a randomized machine $M$ that runs in time at most $t'$ and has description length at most $s'$.

Similarly, we can consider the distribution learning problem for $\mathcal{D}^{s,t}_n$ when $n, s, t \in \mathbb{N}$ are \emph{input parameters} provided to the learning algorithm $A(1^n,1^{\varepsilon^{-1}}, 1^{\delta^{-1}}, 1^t, 1^s, \cdot)$.

\subsubsection{PAC Learning}

PAC (Probably Approximately Correct) learning formalises the familiar supervised task of learning a rule from labelled examples. An unknown target concept assigns a label to each input, and the learner observes examples drawn from a distribution that determines which inputs matter. From sufficiently many such examples, the learner should produce a hypothesis that makes few mistakes on a fresh input from the same distribution. The name highlights the two allowances that make the goal realistic: the hypothesis need only be approximately correct, and the learner need only succeed with high probability. In other words, PAC learning asks for reliable generalisation to the inputs that are likely to be encountered.

Formally, in the PAC learning framework, one considers an unknown function $f \colon \bool^n \to \bool$ for some $n \in \nat$. Such a function is referred to as a \emph{concept}. A \emph{concept class} $\mathcal{C}$ refers to a set of concepts, with $\mathcal{C}_n$ denoting $\mathcal{C} \cap \{f: \bool^{n} \to \bool\}$ for a fixed $n \in \nat$. The PAC learnability of a concept class is defined with respect to a class $\mathcal{D}$ of probability distribution ensembles $D = \{D_n\}_{n \in \nat}$, where each $D_n$ is supported over $\bool^n$.

PAC learners are provided a \emph{labelled example oracle} $\EX(f, D_n)$, with each query returning an independently sampled pair $(x, f(x))$, where $x \sim D_n$.

\begin{definition}[PAC Learning \cite{DBLP:journals/cacm/Valiant84}; see \cite{DBLP:journals/corr/abs-2511-08791} for a survey]

    Let $\mathcal{C}$ be a concept class, and let $\mathcal{D}$ be a class of distributions. An algorithm $A$ \emph{PAC learns} $\mathcal{C}$ over $\mathcal{D}$ if, for all $D \in \mathcal{D}$, all sufficiently large $n \in \nat$, all $\delta, \varepsilon \in (0,1)$, and all $f \in \mathcal{C}_n$,
    \[\Pr_{A; \; \EX(f, D_n)}\left [ \; A^{\EX(f, D_n)}(1^n, 1^{\varepsilon^{-1}}, \delta^{-1}) \text{ outputs } h \text{ such that } \Pr_{x\sim D_n}[h(x) = f(x)] \geq 1 - \varepsilon \; \right] \geq 1 - \delta.\]
    
\end{definition}

For $s: \nat \to \nat$, we can consider the class $\mathcal{C}^s = \{\mathcal{C}^{s(n)}_n\}_{n \in \nat}$ of concepts, where each $\mathcal{C}^{s(n)}_n$ is the subset of $\mathcal{C}_n$ with representations of size $s(n)$. For example, if $\mathcal{C} = \P/\poly$, then $\mathcal{C}^s = \SIZE[s]$, the set of functions $f: \bool^n \to \bool$ computable by circuits of size $s(n)$. As in the previous section, we define $\mathcal{D}^{a, t}$ to be the class of distributions samplable by a $t$-time machine with description length $a$. To make these parameters explicit, $s, t, a$ may be given as inputs to the learning algorithm.

One may also consider the model of PAC learning with \emph{membership queries}, which is exactly the same as the definition above, except the learner $A$ is additionally provided an $f$-oracle. 

A learner that always outputs a hypothesis $h$ belonging to $\mathcal{C}$ is called \emph{proper}. Otherwise, the learner is \emph{improper}. 

If $\varepsilon$ is not given as an input but the hypothesis output by $A$ satisfies $\Pr_{x\sim D_n}[h(x) = f(x)] \geq 1/2 + \gamma(n)$ with probability $1 - \delta$ for some $\gamma(n) = \poly^{-1}(n)$, then $A$ is referred to as a \emph{weak learner}.

For textbook introductions to computational learning theory and the PAC framework, see \citep{KearnsVazirani:94,ShalevShwartzBenDavid2014}.

\subsubsection{Agnostic Learning}

The PAC model assumes that every observed label is generated by some target concept from the class under consideration. In practice, labels may be noisy, internally inconsistent, or governed by a relationship that the chosen class cannot represent. \emph{Agnostic learning} removes this assumption and asks what can still be learned from an arbitrary joint distribution over inputs and labels. Since nearly perfect prediction may then be impossible, success is measured relative to the best concept available in the class. The learner must perform almost as well as that benchmark without knowing either the source of the labels or the benchmark concept in advance. This formulation preserves a meaningful learning objective while allowing both noise and model misspecification.

In technical terms, one cannot assume that  the given examples are labelled according to a function $f$ belonging to the relevant concept class $\mathcal{C}$. In fact, one cannot assume that the labels correspond to a fixed function at all. In particular, we now consider a class $\mathcal{E}$ of joint distribution ensembles $E = \{E_n\}_{n \in \nat}$, where each $E_n$ is supported over $\bool^n \times \bool$. The example oracle $\EX(E_n)$ simply returns an independent random sample from $E_n$ each time it is queried.

Now, there is clearly no hope in general for an efficient learner to produce a hypothesis with high correctness probability over $E$ (i.e., $\Pr_{(x,b) \sim E_n}[h(x) = b] \approx 1$).\footnote{For example, consider $E_n = (\mathcal{U}_n, \mathcal{U}$). This joint distribution can be sampled efficiently, but no fixed circuit has correctness probability greater than $1/2$.} Therefore, as alluded to above, we measure the success of an agnostic learner relative to the concept class $\mathcal{C}$ in question, asking for a hypothesis that does ``almost as well'' as the best concept in $\mathcal{C}$. To that end, for $n \in \nat$, define
\[{\sf opt}_n(\mathcal{C}, E) \eqdef \max_{c \in \mathcal{C}_n}\left\{\Pr_{(x, b) \sim E_n}[c(x) = b]\right\}.\]

\begin{definition}[Agnostic Learning \cite{DBLP:journals/ml/KearnsSS94, DBLP:journals/iandc/Haussler92}]

    Let $\mathcal{C}$ be a concept class, and let $\mathcal{E}$ be a class of joint distributions. An algorithm $A$ \emph{agnostically learns} $\mathcal{C}$ over $\mathcal{E}$ if, for all $E \in \mathcal{E}$, all sufficiently large $n \in \nat$, and all $\delta, \varepsilon \in (0,1)$,
    \[\Pr_{A; \; \EX(E_n)}\left [ \; A^{\EX(E_n)}(1^n, 1^{\varepsilon^{-1}}, \delta^{-1}) \text{ outputs } h \text{ such that } \Pr_{(x, b)\sim E_n}[h(x) = b] \geq {\sf opt}_n(\mathcal{C}, E) - \varepsilon \; \right] \geq 1 - \delta.\]
    
\end{definition}

In this context, if we say that an algorithm agnostically learns over the distribution class $\mathcal{D}^{a, t} = \{\mathcal{D}^{a(m),t(m)}_m\}_{m \geq 1}$ as defined above, we interpret each $\mathcal{D}^{a(m), t(m)}_m$ as a set of joint example distributions supported over $\bool^{m - 1} \times \bool$.

For a textbook treatment of agnostic learning, see \citep{ShalevShwartzBenDavid2014}.

\subsubsection{Heuristic Learnability}

Standard supervised learning guarantees usually require the learner to succeed for every target concept in the class. In complexity theory and cryptography, however, target functions often arise from a randomized generation procedure, just as the inputs on which they are evaluated arise from their own sampling process. \emph{Heuristic learnability} reflects this situation by measuring prediction success on average over both the chosen target and the chosen input. The learner may therefore fail on exceptional targets or examples, provided that these failures are sufficiently rare in the overall experiment. Even a small advantage over random guessing can be significant in this setting, making the model especially useful for expressing connections among prediction, average-case hardness, and cryptographic security.

Let $\mathcal{C}$ be a concept class, with each $f \in \mathcal{C}_n$ having a representation $\sigma_f$ of size at most polynomial in $n$. Let $P = \{P_n\}_{n \in \nat}$ be a distribution ensemble, with each $P_n$ supported over representations $\sigma_f$ of target functions $f \in \mathcal{C}_n$ such that $f$ can be evaluated in polynomial time given $\sigma_f$. Let $D = \{D_n\}_{n \in \nat}$ be another distribution ensemble, with each $D_n$ supported over example strings $x \in \bool^n$.

\begin{definition}[\cite{DBLP:conf/crypto/BlumFKL93}]\label{def:bfkl}

    For any $\varepsilon: \nat \to (0, 1/2)$, we say that $\mathcal{C}$ is \emph{$\varepsilon$-predictable on average} over $P$ and $D$ if there exists an algorithm $A$ such that for all sufficiently large $n \in \nat$, 
    \[\Pr_{\substack{A; \; \sigma_f \sim P_n; \\ \EX(f, D_n);\\ x' \sim D_n }}\left[A^{\EX(f, D_n)}(x') = f(x')\right] \geq 1 - \varepsilon(n).\]
    
\end{definition}

We say that $\mathcal{C}$ is \emph{strongly predictable on average} over $P$ and $D$ if it is $(1/q(\cdot))$-predictable on average for any polynomial $q$. We say it is \emph{weakly predictable on average} over $P$ and $D$ if, for some polynomial $q'$, it is $(1/2 - 1/q'(\cdot))$-predictable on average. We mention that weak predictability for the case of $\mathcal{C} = \P/\poly$ and all $P, D \in \P\SAMP$ implies strong predictability for such $\mathcal{C}$, $P$, and $D$ \cite{DBLP:conf/crypto/BlumFKL93}.

For $t: \nat \to \nat$, let $\SAMP[t]$ denote the class of distributions samplable uniformly in time at most $t$. In the case of a heuristic learner $A$ working on some fixed concept class $\mathcal{C}^s$, $P \in \SAMP[t_1]$, and $D \in \SAMP[t_2]$, the parameters $s, t_1, t_2$ may be given as additional inputs to $A$. 

For further background on heuristic learnability and related notions, see \citep{nanashima2021theory}.

\subsection{Theoretical Computer Science}\label{sec:prelim_TCS}

We will require several notions from cryptography, complexity theory, Kolmogorov complexity, pseudorandomness, and meta-complexity. For general background in computational complexity, including randomized computation, nonuniform computation, cryptography, and circuit complexity, see \citep{arora2009computational}.

\subsubsection{Cryptography}\label{sec:prelim_crypto}

\paragraph{Infinitely-Often Cryptography.}

For the purposes of cryptography, it is important for security to hold \emph{almost everywhere}. That is, one would hope for a cryptographic primitive that cannot be broken at any (but perhaps finitely many) input lengths. Likewise, for the purposes of learning and positive algorithmic tasks more broadly, one would hope for algorithms to be successful at all but finitely many input lengths. While it is reasonable to ask for this property in both cases, it poses a problem if one would like to investigate dualities between the possibility of secure cryptography and the hardness of tasks such as learning. In particular, if the hardness of learning at some fixed input length translates into the hardness of breaking a cryptographic primitive at a related input length, then the impossibility of learning in the sense described above would only yield security at infinitely many input lengths.

For this reason, to aid in a better understanding of connections between algorithmic hardness and cryptography, we consider cryptography that may only be secure \emph{infinitely often}.

A \emph{pseudorandom function family} is a family of keyed polynomial-time computable functions that look like a random oracle to any efficient observer. Here, we state the definition of an infinitely-often variant.

\begin{definition}[Infinitely-Often Pseudorandom Function]

    An \emph{infinitely-often pseudorandom function}, denoted ioPRF, is a polynomial-time computable function family $f = \{f: \bool^{n} \times \bool^{n} \to \bool\}_{n \in \nat}$ such that, for any inverse polynomial $\delta: \nat \to [0,1]$ and PPT machine $A$, there exist infinitely many $n \in \nat$ such that 
    \[\left | \Pr_{A; \; u \sim \bool^{n}}\left[A^{f(u, -)}(1^n) = 1\right] - \Pr_{A; \; R_{n}}\left[A^{R_{n}(-)}(1^n) = 1\right] \right | < \delta(n),\]
    where $R_{n}:\bool^{n}\to \bool$ is a uniformly random function over the set of all functions with these parameters.
    
\end{definition}

\paragraph{Auxiliary-Input Cryptography.} We also consider variants of cryptographic primitives that are allowed to make use of an additional publicly available non-uniform input string. Since this kind of cryptography generalizes the standard setting in which no such auxiliary string is given, its existence is a weaker assumption.

As with infinitely-often cryptography, auxiliary-input cryptography can arise naturally when investigating connections between secure cryptography and the hardness of learning. In particular, a specific hard instance of learning might translate to a related description of a secure cryptographic primitive. In such settings, one may only obtain auxiliary-input cryptography rather than standard (uniform) cryptography.

An \emph{auxiliary-input function family} 
\[f = \{f_z: \bool^{n(|z|)} \to \bool^{\ell(|z|)}\}_{z \in \bool^*}\]
is said to be \emph{polynomial-time computable} if there is a polynomial-time machine $F$ such that for all $z \in \bool^*$ and $x \in \bool^{n(z)}$, $F(z, x) = f_z(x)$.

A one-way function is a function that may be computed efficiently but cannot be inverted efficiently. This is a basic building block of many other objects in private-key cryptography. Here, we state the definition of the auxiliary-input variant.

\begin{definition}[Auxiliary-Input One-Way Function \cite{DBLP:conf/istcs/OstrovskyW93}]

    An \emph{auxiliary-input one-way function}, denoted AIOWF, is a polynomial-time computable auxiliary-input function family $f$ such that, for any inverse polynomial $\delta$ and PPT machine $A$, there exist infinitely many $z \in \bool^*$ such that 
    \[\Pr_{A; \, x \sim \mathcal{U}_{n(|z|)}}\left[A(z, f_z(x)) \in f_z^{-1}(f_z(x))\right] < \delta(|z|).\]
    
\end{definition}

Impagliazzo and Luby \cite{DBLP:conf/focs/ImpagliazzoL89} proved that, if all candidate one-way functions can be inverted in the sense of finding \emph{some} pre-image of a given output, then they can be inverted in the stronger sense of finding \emph{nearly uniformly random} pre-images. Thus, the existence of one-way functions is equivalent to the existence of \emph{distributional one-way functions}, which are secure against adversaries finding nearly uniformly random pre-images. More specifically, for a uniformly random input $x$ and a poly-time computable function $f$, no adversary $A$ on input $f(x)$ returns $x'$ such that $(x', f(x))$ is close in statistical distance to $(x, f(x))$. Below, the auxiliary-input generalization is stated.

\begin{definition}[Distributional Auxiliary-Input One-Way Function]

    A \emph{distributional auxiliary-input one-way function}, denoted distAIOWF, is a polynomial-time computable auxiliary-input function family $f = \{f_z: \bool^{n(|z|)} \to \bool^{\ell(|z|)}\}_{z \in \bool^*}$ such that, for some inverse polynomial $\delta$, for every PPT machine $A$, there exist infinitely many $z \in \bool^*$ such that 
    \[\Delta\left( \; (A(z, f_z(x)), f_z(x)), \;  (x, f_z(x)) \; \right) > \delta(|z|),\]
    where $x \sim \mathcal{U}_{n(|z|)}$, and $\Delta$ denotes statistical (total variation) distance.
    
\end{definition}

A \emph{pseudorandom generator} is a deterministic machine that maps a short uniformly random \emph{seed} to a longer output string such that the distribution on outputs is indistinguishable from uniformly random to efficient observers. 

\begin{definition}[Auxiliary-Input Pseudorandom Generator]

    An \emph{auxiliary-input pseudorandom generator}, denoted AIPRG, is a polynomial-time computable auxiliary-input function family $G = \{G_z: \bool^{n(|z|)} \to \bool^{\ell(|z|)}\}_{z \in \bool^*}$ such that $\ell(|z|) > n(|z|)$, and for any inverse polynomial $\delta$ and PPT machine $A$, there exist infinitely many $z \in \bool^*$ such that
    \[\left | \Pr_{A; \; v \sim \mathcal{U}_{n(|z|)}}\left[A(z, G_z(v)) = 1\right] - \Pr_{A; \; u \sim \mathcal{U}_{\ell(|z|)}}\left[A(z, u) = 1\right]\right| \leq \delta(|z|).\]

    We use the statement \textit{$\exists$ AIPRG} to denote that the above exists for any choice of $\ell(|z|) = n(|z|)^{O(1)}$.
    
\end{definition}

\begin{definition}[Auxiliary-Input Pseudorandom Function]

    An \emph{auxiliary-input pseudorandom function}, denoted AIPRF, is a polynomial-time computable auxiliary-input function family $f = \{f_z: \bool^{n(|z|)} \times \bool^{n(|z|)} \to \bool\}_{z \in \bool^*}$ such that, for any inverse polynomial $\delta$ and PPT machine $A$, there exist infinitely many $z \in \bool^*$ such that 
    \[\left | \Pr_{A; \; u \sim \bool^{n(|z|)}}\left[A^{f_z(u, -)}(z) = 1\right] - \Pr_{A; \; R_{|z|}}\left[A^{R_{|z|}(-)}(z) = 1\right] \right | < \delta(|z|),\]
    where $R_{|z|}: \bool^{n(|z|)} \to \bool$ is a uniformly random function over the set of all functions with these parameters.
    
\end{definition}

The next statement results from auxiliary-input analogues of standard transformations between one-way functions, distributional one-way functions, pseudorandom generators, and pseudorandom functions \cite{DBLP:conf/focs/ImpagliazzoL89, DBLP:journals/siamcomp/HastadILL99, DBLP:journals/jacm/GoldreichGM86}. See, e.g., \cite{DBLP:conf/focs/ApplebaumBX08} for a discussion.

\begin{lemma}\label{lem:ai-crypto}
    The following statements are equivalent.
    \begin{itemize}
        \item $\exists$ AIOWF;
        \item $\exists$ distAIOWF;
        \item $\exists$ AIPRG;
        \item $\exists$ AIPRF.
    \end{itemize}
\end{lemma}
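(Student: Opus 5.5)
We prove the equivalences as a cycle, AIOWF $\Rightarrow$ AIPRG $\Rightarrow$ AIPRF $\Rightarrow$ AIOWF, and attach distAIOWF through AIOWF $\Leftrightarrow$ distAIOWF. Each link is a known uniform construction with a black-box security reduction. The one point that must be checked is the auxiliary-input quantifier structure, ``for every PPT adversary there exist infinitely many $z$ on which it fails''. This survives whenever the reduction maps an adversary $A$ against the derived primitive on auxiliary input $z$ to an adversary $B$ against the original primitive on the \emph{same} $z$ (or on a polynomially related $z'$ computable from $z$), with only polynomial loss in advantage. Then, if $B$ fails on infinitely many $z$, so does $A$. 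The step ``for every inverse polynomial $\delta$'' is handled in the usual contrapositive way: if $A$ succeeds with advantage $\delta(|z|)$ on all but finitely many $z$, then $B$ succeeds with advantage $\poly(\delta)$ on all but finitely many $z$.

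The easy directions come first. AIPRF $\Rightarrow$ AIPRG: set $G_z(u) = (f_z(u,1),\dots,f_z(u,2n))$ on fixed distinct points. A distinguisher for $G_z$ becomes an oracle distinguisher that queries those points. AIPRG $\Rightarrow$ AIOWF: take $G_z$ itself, stretched to $\ell \ge n+1$ (composing if needed). An inverter on $z$ yields a distinguisher on $z$: output $1$ iff the returned preimage maps to the input. A random $\ell$-bit string has a preimage with probability at most $2^{n-\ell}\le 1/2$. AIOWF $\Rightarrow$ distAIOWF is immediate, because an inverter achieving statistical closeness $\delta$ to $(x,f_z(x))$ in particular outputs valid preimages with probability at least $1-\delta$.

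The remaining links are the substantive ones. For distAIOWF $\Rightarrow$ AIOWF we follow Impagliazzo--Luby. Given $f_z$, we define $g_z(x,h,i) = (f_z(x), h, h(x)_{1..i})$ with $h$ a pairwise-independent hash. An inverter for $g_z$, applied with random $h$ and with $i$ ranging over the possible values near $\log|f_z^{-1}(y)|$, produces nearly uniform preimages of $f_z$ by the leftover hash lemma. This gives a distributional inverter on the same $z$. For AIOWF $\Rightarrow$ AIPRG we use the HILL construction, which builds a PRG from any one-way function via hardcore bits, pseudoentropy, and hashing. Its reduction is uniform and uses only black-box access to $f_z$ and to the distinguisher. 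We therefore run it with $z$ fixed throughout, and the advice needed by HILL (for example the entropy estimate) is guessed from polynomially many options, with each candidate tested. For AIPRG $\Rightarrow$ AIPRF we use GGM with the tree of $G_z$'s two halves, after first stretching $G_z$ to length $2n$. The hybrid argument loses a factor equal to the number of oracle queries, which is polynomial, and the simulated hybrids need $z$, which the distinguisher has.

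I expect the main obstacle to be the HILL step. Its reduction involves nonuniform-looking guesses and several composed hybrids. One must confirm two things: that all guessing can be done uniformly with $z$ as input, and that the resulting inverter succeeds on every $z$ where the PRG is broken, not merely on a related input length. If this is handled by the enumerate-and-verify trick, then the rest of the argument is bookkeeping of the ``infinitely many $z$'' quantifier along the cycle.
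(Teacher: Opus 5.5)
Your route is exactly what the paper has in mind: a cycle through the auxiliary-input analogues of Impagliazzo--Luby, HILL and GGM, using that every black-box reduction runs on the same $z$. The paper gives no proof beyond citing these transformations, and your quantifier bookkeeping (with every reduction on the same $z$) is right.

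There is, however, a missing step in distAIOWF $\Rightarrow$ AIOWF. In the contrapositive, $\neg\exists$\,AIOWF only gives you an inverter for $g_z$ that succeeds with \emph{some} noticeable probability $\delta(|z|)$. The hashing argument needs an inverter for $g_z$ that succeeds with probability $1-1/q(|z|)$ for an arbitrary polynomial $q$. An inverter with small success may work only when the essentially unique hash-consistent preimage lies in a small subset of $f_z^{-1}(y)$. Conditioning on its success then gives a distribution far from uniform, and the leftover hash lemma buys nothing. So the hashing argument by itself only shows that $g_z$ is a \emph{weak} AIOWF. You must insert Yao's weak-to-strong amplification. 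Apply $\neg\exists$\,AIOWF to the direct product $g_z^{\otimes k}(x_1,\dots,x_k)=(g_z(x_1),\dots,g_z(x_k))$ with $k=\poly(n,q)$. Then use Yao's reduction, which is uniform and again uses only the fixed $z$, to turn a $\delta$-inverter for $g_z^{\otimes k}$ into the required $(1-1/q)$-inverter for $g_z$.

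Two smaller points. First, in AIPRG $\Rightarrow$ AIOWF, the bound $2^{n-\ell}\le 1/2$ on the uniform side does not beat an inverter whose success $\delta$ is below $1/2$: naive subtraction gives advantage $\delta-1/2$. There are two easy fixes:
\begin{itemize}
\item take $\ell=2n$, which the definition of $\exists$\,AIPRG allows; or
\item note that your test accepts only points in the image of $G_z$, so $\Pr_{u}[\text{accept}]\le 2^{n-\ell}\Pr_{v}[\text{accept}]$, and the advantage is at least $\delta/2$ already for $\ell=n+1$.
\end{itemize}

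Second, the HILL obstacle you anticipate is milder than you suggest. HILL is already a uniform construction with a uniform black-box reduction, so $G_z:=\mathrm{HILL}^{f_z}$, with the reduction run on the same $z$, works. A construction-side parameter such as the entropy estimate cannot be ``tested'' inside the generator, though. It is handled either by HILL's own combination over all guesses, or, more simply in this setting, by folding it into the auxiliary input $z'=(z,k)$. The inverter then enumerates $k$ and checks candidate preimages against $f_z$.

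Finally, the GGM hybrid loses a factor of tree depth times number of queries, not just the number of queries; this is still polynomial.
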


For more information about one-way functions, pseudorandom generators, and pseudorandom functions, see \citep{DBLP:books/cu/Goldreich2001, katz2007introduction}.

\subsubsection{Kolmogorov Complexity}

Kolmogorov complexity can be viewed as a universal measure of compression: it quantifies how concisely an individual object can be described by an effective procedure. Any particular compression scheme provides a way to encode certain objects succinctly, while a universal programming language can simulate every such scheme with only a fixed additional overhead. Consequently, Kolmogorov complexity captures, up to an additive constant, the best compression achievable by any computable method. Similarly, time-bounded Kolmogorov complexity captures the best compression achievable when decompression is required to run within a prescribed time bound.\footnote{We refer to \citep{blelloch2001introduction} for an introduction to data compression.}

Formally, we fix a time-efficient universal machine $U$, and define time-bounded Kolmogorov complexity with respect to $U$ (see \citep{DBLP:series/txcs/LiV19} for the required background in Kolmogorov complexity). In the context of Kolmogorov complexity, whenever we say that a program $M$ running on input $y$ outputs $x$ in $t$ steps, formally, this means that $U(M, y, 1^t) = x$.

	\begin{definition}[$\mathsf{K}^t$ Complexity]
	For $x,y\in\bool^*$ and $t\in\mathbb{N}$, the \emph{$t$-time-bounded Kolmogorov complexity} of $x$    conditioned on $y$ is defined as 
		\[
		\mathsf{K}^{t}(x\mid y) \eqdef \min_{M \in \{0,1\}^*} \mleft\{|M|  \,\,\middle\vert\,\,  U(M, y, 1^t) \textnormal{ outputs $x$}\mright\}.
		\]
        We define $\K^{t}(x)$ as $\K^{t}(x\mid \epsilon)$.
	\end{definition}

The definition of $\K^t$ requires us to fix the time bound in advance. On the other hand, Levin's notion of time-bounded Kolmogorov complexity simultaneously considers description length and running time.

\begin{definition}[Levin's $\Kt$ Complexity \citep{DBLP:journals/iandc/Levin84}] For $x,y\in\bool^*$, the \emph{Levin complexity} of $x$  conditioned on $y$ is defined as 
    \[
		\Kt(x \mid y) \eqdef \min_{t \in \mathbb{N},\;M \in \{0,1\}^*} \mleft\{|M| + \lceil \log t \rceil   \,\,\middle\vert\,\,  U(M, y, 1^t) \textnormal{ outputs $x$}\mright\}.
		\]
         We define $\Kt(x)$ as $\Kt(x\mid \epsilon)$.
\end{definition}

   When no time bound is considered, we simply write $\K(x)$ to denote the (classical time-unbounded) Kolmogorov complexity of the string $x$.

We also consider \emph{randomized} notions of time-bounded Kolmogorov complexity, in which the universal machine is given an additional random string as input. These complexity measures are particularly suitable in computational contexts where computations can be randomized, as is often the case with learning algorithms. Note that these measures include an additional parameter $\lambda \in [0,1]$ to denote the probability with which the machine outputs the string in question. If $\lambda$ is omitted, we take $\lambda = 2/3$.
    
\begin{definition}[$\mathsf{rK}^t$]
    For $x,y\in\bool^*$, $\lambda\in [0,1]$, $t\in\mathbb{N}$, the \emph{randomized $t$-time-bounded Kolmogorov complexity} of $x$ conditioned on $y$ is defined as
   \[
        \rK^{t}_\lambda(x \mid y) \eqdef \min_{M\in\bool^*}\mleft\{|M|
  \,\,\middle\vert\,\,  \Prob_{r\sim\bool^{t}}\mleft[U(M, y, r, 1^t) \textnormal{ outputs $x$}\mright]\geq \lambda\mright\}.
    \]
    We define $\rK^t(x)$ as $\rK^t(x\mid \epsilon)$.
\end{definition}

Note that in $\rK^t$, the string still has a deterministic, standard shorter description encoding a program; it is the program's execution that is randomized, so the natural decoding procedure uses randomness to recover the string. Randomization can provide significant savings in description length in  some concrete settings \citep{DBLP:conf/focs/0001LORS23}. Nevertheless, under derandomization assumptions from complexity theory, it follows that $\K^{\poly(t)}(x)$ and $\rK^{\poly(t)}(x)$ are related up to an additive logarithmic term for every large enough string $x$ (see, e.g.,~\citep[Appendix A.2]{GKLO22}).

In recent years, another variant of time-bounded Kolmogorov complexity has been particularly useful due to its coding properties \citep{DBLP:conf/icalp/LuOZ22} and connections to pseudorandomness.

\begin{definition}[$\mathsf{pK}^t$ \citep{GKLO22}]
    For $x,y\in\bool^*$, $\lambda\in [0,1]$, $t\in\mathbb{N}$, the \emph{probabilistic $t$-time-bounded Kolmogorov complexity} of $x$ conditioned on $y$ is defined as
   \[
        \pK^{t}_\lambda(x \mid y) \eqdef \min \left \{ k \,\,\,\middle\vert\,\,\, \Pr_{r \sim \{0,1\}^t} \mleft [ \exists\,M \in \{0,1\}^{\leq k}~\text{such that}~U(M,y,r,1^t)~\text{outputs}~x \mright ] \geq \lambda \right \}.
    \]
    We define $\pK^t(x)$ as $\pK^t(x\mid \epsilon)$.
\end{definition}

Equivalently, $\pK^{t}(x) \leq k$ if $\Pr_r[ \K^t(x \mid r) \leq k] \geq 2/3$. As a consequence, a string $x$ has bounded $\pK^t$ complexity if for most choices of $r$ it admits a small description given access to $r$. Note that here the description of $x$ depends on $r$. This captures the setting where both encoder and decoder share access to a random string $r$. In contrast, in $\rK^t$ there is a fixed randomised machine that produces $x$ with high probability over its internal randomness.\footnote{For the reader familiar with interactive proofs in the context of complexity theory, we mention that the distinction between $\rK^t$ and $\pK^t$ is analogous to the definitions of the complexity classes $\mathsf{MA}$ and $\mathsf{AM}$, respectively.}

We will also consider a randomized variant of Levin's $\Kt$ complexity that admits particularly useful coding properties (see \citep{DBLP:conf/icalp/HuLO26} and references therein for additional context).

\begin{definition}[$\mathsf{rKt}$ \cite{DBLP:conf/icalp/Oliveira19}]
    For $x,y\in\bool^*$, $\lambda\in [0,1]$, the \emph{randomized Levin complexity} of $x$ conditioned on $y$ is defined as
   \[
        \rKt_\lambda(x \mid y) \eqdef \min_{t \in \nat,\; M \in \bool^*}\mleft\{|M|  + \lceil \log t \rceil 
  \,\,\middle\vert\,\,  \Prob_{r\sim\bool^{t}}\mleft[U(M, y, r, 1^t) \textnormal{ outputs $x$}\mright]\geq \lambda\mright\}.
    \]
    We define $\rKt(x)$ as $\rKt(x\mid \epsilon)$.
\end{definition}

For completeness, we mention that one can also consider a ``probabilistic'' variant of $\Kt$ called $\pKt$ \citep{DBLP:conf/tcc/HiraharaLO24}, which is similar to the extension of $\K^t$ to $\pK^t$. While we will not employ $\pKt$ in the survey, we refer to \citep{pass2026valuepromptllmrelativekolmogorovcomplexity} for a recent application of a related notion in the context of machine learning.

A choice of universal machine $U$ gives rise to a certain \emph{universal distribution} that may be sampled by returning the output of $U$ on a uniformly random input. In the case of time-unbounded Kolmogorov complexity, for any string $x$, $\K(x)$ is approximately equal to the logarithm of the inverse of the probability of $x$ under this distribution.

Analogously, one may define a \emph{time-bounded universal distribution} for a given time bound $t \in \nat$. Namely, for a fixed universal machine $U$, \cite{DBLP:conf/focs/ImpagliazzoL90} define $\mathcal{Q}^t$ as the distribution of the output of $U(r, 1^t)$ for a uniformly random string $r \sim \bool^t$. For any auxiliary string $y \in \bool^*$, let $\mathcal{Q}^{t, y}$ be the distribution with the same definition as $\mathcal{Q}^t$, except the universal machine $U$ is additionally given access to $y$.

The following is the complexity measure corresponding to this family of distributions.

\begin{definition}[$\qq^t$ \cite{DBLP:conf/focs/HiraharaN23}]

    For $x,y\in\bool^*$ and $t\in\mathbb{N}$, the \emph{$\qq^t$-complexity} of $x$ conditioned on $y$ is defined as 
		\[
		\mathsf{q}^{t}(x\mid y) \eqdef \log\left ( \frac{1}{\Pr[\mathcal{Q}^{t, y} = x]}\right ).
		\]
        We define $\qq^{t}(x)$ as $\qq^{t}(x\mid \epsilon)$; that is, $\log(1/\Pr[\mathcal{Q}^t = x])$.
    
\end{definition}

We refer to the textbooks \citep{DBLP:series/txcs/LiV19,shen2017kolmogorov} for a comprehensive treatment of Kolmogorov complexity. For a modern perspective on time-bounded Kolmogorov complexity, see \citep{DBLP:journals/eatcs/LuO22}.

\subsubsection{Circuit Complexity}\label{sec:prelim_circuits}

A \emph{boolean circuit} is a finite directed acyclic graph with one or more source nodes, one sink node, and each internal node corresponding to some boolean function $g: \bool^* \to \bool$ from a fixed set $G$, called a \textit{basis}. A standard choice of $G$ is the set $\{\mathsf{AND}, \mathsf{OR}, \mathsf{NOT}\}$. Each node of a boolean circuit is called a \emph{gate}, with source nodes being called \emph{input gates} and the sink node being called the \emph{output gate}.\footnote{One may also consider circuits with multiple output gates, but we are mainly interested here in circuits representing boolean functions, which have only one bit of output.} A boolean circuit $C$ with $n$ input nodes induces a boolean function $f: \bool^n \to \bool$, computed by associating each bit of a given input string $x \in \bool^n$ with the corresponding input gate of $C$, performing the computation specified by the structure of $C$ and the functions of the internal gates, and taking $f(x) \in \bool$ to be the value resulting at the output gate. 

Since a single boolean circuit has a fixed number of input gates, it computes a
function on inputs of only one length. To capture a boolean function
$f : \bool^* \to \bool$ defined on all input lengths, we consider a \emph{circuit family}
$\mathcal{C} = \{C_n\}_{n \in \mathbb{N}}$, where each $C_n$ has $n$ input gates
and computes the restriction of $f$ to $\bool^n$. Two measures of the cost of a
circuit are its \emph{size}, i.e. the number of gates it contains, and its \emph{depth}, i.e. the
length of the longest path from an input gate to the output gate. 

We use the term \emph{fan-in} to refer to the number of inputs to a gate. For unrestricted boolean circuits, we adopt the convention that the fan-in of each gate is at most two. For bounded-depth circuit classes, we allow the fan-in to be unbounded unless stated otherwise. Note that ${\sf NOT}$ gates always have fan-in equal to 1.

From the perspective of compression, the truth table of a Boolean function $f$ can be seen as a long string of length $N = 2^n$. A small circuit $C$ that computes $f$ can then be interpreted as a compressed representation with a local decoder: given an index $i \in [N]$, $C(i)$ outputs the $i$-th bit of $f$. 

Next, we define several classes of boolean functions, which may all be specified according to a class of boolean circuit families. 

\begin{definition}[$\Ppoly$]
    $\Ppoly$ is the class of boolean functions computable by a circuit family $\{C_n\}$ over the basis $\{\mathsf{AND}, \mathsf{OR}, \mathsf{NOT}\}$, where for some constant $d \in \nat$, each $C_n$ has size at most $n^d$.
\end{definition}

\begin{definition}[$\AC^0$]
    $\AC^0$ is the class of boolean functions computable by a circuit family
    $\{C_n\}$ over the basis
    $\{\mathsf{AND}, \mathsf{OR}, \mathsf{NOT}\}$, where for some constants $c, d \in \nat$, each $C_n$ has size at most $n^c$ and depth at most $d$.
\end{definition}

\begin{definition}[$\TC^0$]
    The class $\TC^0$ has the same definition as $\AC^0$, except circuits are over the larger basis $\{\mathsf{AND}, \mathsf{OR}, \mathsf{NOT}, \mathsf{THR}\}$, where ${\sf THR}$ is a \emph{threshold} gate (also known as a weighted majority), outputting 1 iff $\sum_i w_i \cdot z_i \geq t$, where $w_i, t \in \mathbb{R}$.\footnote{Since each $\mathsf{AND}$, $\mathsf{OR}$, and $\mathsf{NOT}$ gate can be simulated by a $\mathsf{THR}$ gate with appropriate weights $w_i$ and threshold $t$, we typically assume that $\TC^0$ circuits contain $\mathsf{THR}$ gates only.}
\end{definition}

$\TC^0$ circuits, also known as threshold circuits, can be interpreted as a discrete analogue of neural networks. We recall that a current frontier in complexity theory is to establish super-polynomial circuit size lower bounds against depth-$2$ threshold circuits for an explicit function (see \citep{DBLP:conf/stoc/KaneW16, DBLP:conf/stoc/ChenTW26} for partial progress and related discussion). We refer to \Cref{fig:depthtwothreshold} for a diagram representation of this circuit class.

\begin{figure}[ht]
\centering
\begin{tikzpicture}[
    topgate/.style={
        circle,
        draw=blue!40!black,
        fill=blue!8,
        line width=1.2pt,
        minimum size=1.4cm,
        font=\large
    },
    bottomgate/.style={
        circle,
        draw=blue!40!black,
        fill=blue!8,
        line width=1.1pt,
        minimum size=1.25cm,
        font=\normalsize
    },
    inputnode/.style={
        circle,
        draw=blue!40!black,
        fill=blue!8,
        line width=1pt,
        minimum size=0.75cm,
        font=\normalsize
    },
    dotsnode/.style={
        fill=white,
        inner xsep=5pt,
        inner ysep=3pt
    }
]

\coordinate (out) at (6,6);
\coordinate (g1) at (1.5,3.5);
\coordinate (g2) at (4.1,3.5);
\coordinate (gdots) at (6,3.5);
\coordinate (g3) at (7.9,3.5);
\coordinate (g4) at (10.5,3.5);

\coordinate (i1) at (1.0,0.65);
\coordinate (i2) at (3.5,0.65);
\coordinate (idots) at (6,0.65);
\coordinate (i3) at (8.5,0.65);
\coordinate (i4) at (11.0,0.65);

\foreach \g in {g1,g2,g3,g4}
{
    \draw[
        gray!60,
        line width=1pt
    ]
    (\g) -- (out);
}

\draw[
    gray!60,
    line width=1pt,
    dotted
]
(gdots) -- (out);

\foreach \i in {i1,i2,i3,i4}
{
    \foreach \g in {g1,g2,g3,g4}
    {
        \draw[
            gray!55,
            line width=0.7pt
        ]
        (\i) -- (\g);
    }
}

\foreach \g in {g1,g2,g3,g4}
{
    \draw[
        gray!55,
        line width=0.7pt,
        dotted
    ]
    (idots) -- (\g);
}

\draw[
    gray!70!black,
    line width=1.2pt
]
(out) -- (6,7);

\node[above,font=\large] at (6,7)
{$C(x_1,\ldots,x_n)$};

\node[topgate] at (out)
{$\mathrm{THR}$};

\node[bottomgate] at (g1)
{$\mathrm{THR}$};

\node[bottomgate] at (g2)
{$\mathrm{THR}$};

\node[bottomgate] at (g3)
{$\mathrm{THR}$};

\node[bottomgate] at (g4)
{$\mathrm{THR}$};

\node[inputnode] at (i1)
{$x_1$};

\node[inputnode] at (i2)
{$x_2$};

\node[inputnode] at (i3)
{$x_{n-1}$};

\node[inputnode] at (i4)
{$x_n$};

\node[dotsnode] at (gdots)
{$\cdots$};

\node[dotsnode] at (idots)
{$\cdots$};

\end{tikzpicture}

\caption{\small{A depth-$2$ threshold circuit $C$.
The top threshold gate receives the outputs taking values in
$\{0,1\}$ from the threshold gates in the bottom layer,
each of which is connected to the input variables
$x_1,\ldots,x_n$.}
}
\label{fig:depthtwothreshold}
\end{figure}

\begin{definition}[$\AC^0$ with ${\sf MOD}_m$ gates]

    For a modulus $m \geq 2$, let $\mathsf{MOD}_m$ output $1$ iff the number of its inputs equal to $1$ is divisible by $m$. $\AC^0[m]$ has the same definition as $\AC^0$, except circuits are over the basis 
    \\$\{\mathsf{AND}, \mathsf{OR}, \mathsf{NOT}, \mathsf{MOD}_m\}$. The case of $m = 2$ gives the parity gate, sometimes written $\AC^0[\oplus]$. $\ACC^0$ is the class defined as $\bigcup_{m \geq 2}
    \AC^0[m]$.
    
\end{definition}

The following inclusion of circuit classes hold:
$$
\AC^0 \; \subseteq \;  \AC^0[\oplus] \;  \subseteq \;  \ACC^0 \;  \subseteq \;  \TC^0 \;  \subseteq \;  \P/\poly.
$$

For standard background on Boolean circuits, nonuniform computation, and circuit classes, see \citep{arora2009computational, jukna2012boolean}.

\subsubsection{Meta-Complexity}

Broadly speaking, problems in \emph{meta-complexity} ask about the complexity of a given object. One may restrict these problems to different kinds of object (for example, the input may simply be a string, or it may be understood to represent a partly or fully specified truth-table), and one may ask about different measures of complexity (for example, one may ask about the circuit complexity or the Kolmogorov complexity of the input).

We begin with the Minimum Circuit Size Problem \cite{DBLP:conf/stoc/KabanetsC00}, a prototypical example of a problem in meta-complexity.

\begin{definition}[$\MCSP$]

    The \emph{Minimum Circuit Size Problem}, denoted $\MCSP$, is the problem defined as follows. The input consists of a truth-table $T$, represented as a string, along with a size threshold $s \in \nat$. $(T, s)$ is a yes-instance of $\MCSP$ if and only if there exists a boolean circuit of size at most $s$ computing the function with truth-table $T$.
    
\end{definition}

\begin{definition}[$\partialMCSP$]

    $\partialMCSP$ is the problem defined as follows. For some $n, m \in \nat$, the input consists of a collection of input/output pairs $((x_i, b_i))_{i \in [m]}$, where $x_i \in \bool^n$ and $b_i \in \bool$ for each $i \in [m]$, and a threshold parameter $s \in \nat$. $((x_i, b_i)_{i \in [m]}, s)$ is a yes-instance of $\pMCSP$ iff there exists a boolean circuit $C$ of size at most $s$ such that $C(x_i) = b_i$ for all $i \in [m]$.

\end{definition}

Next, we will consider computational problems related to the task of estimating the time-bounded Kolmogorov complexity of an input string (see \citep{SimonsMetaComplexity2023} for further context and resources).

\begin{definition}[$\mathsf{MINKT}$]

    Inputs to the problem $\mathsf{MINKT}$ consist of a string $x$ along with a size threshold $s \in \nat$ and time parameter $t \in \nat$. $(x, 1^s, 1^t)$ is a yes-instance of $\mathsf{MINKT}$ if and only if $\K^t(x) \leq s$. 

\end{definition}

Observe that $\mathsf{MINKT}$ is in $\NP$. It is not in $\P$ under the existence of cryptographically secure one-way functions \citep{DBLP:conf/stoc/KabanetsC00}. Interestingly, while believed to be intractable, $\mathsf{MINKT}$ is not known to be $\NP$-complete (see \citep{HLO26} and references therein). 

\newcommand{\GapMINKT}{\mathsf{Gap}\text{-}\mathsf{MINKT}}

For the connections to learning and cryptography discussed in \Cref{sec:Imp_worlds}, the complexity of certain robust variants of $\mathsf{MINKT}$ is more relevant. We will informally refer to $\GapMINKT$ as the task of estimating the $\K^t$ complexity of an input string $x$ of length $n$ up to an additive term $O(\log n)$.\footnote{A bit more formally, in this promise problem we are interested in distinguishing $\K^t(x) \leq s$ from $\K^{\poly(t)}(x) \geq s + O(\log n)$.} We say that ``$\GapMINKT \in \P$'' if there is a polynomial-time algorithm that solves this problem on every input. We refer to \citep{DBLP:conf/stoc/Hirahara21} for the relevant formal definition. We say that  ``$\GapMINKT \in \mathsf{HeurP}$'' if there is a polynomial-time algorithm that is heuristically correct, i.e., it solves the problem on most inputs (but might output an incorrect estimate on a small fraction of inputs). We refer to \citep{DBLP:conf/crypto/LiuP21} for the relevant formal definition.

It is known that the task of estimating the $\K^t$ complexity of an input string is essentially equivalent (under polynomial-time reductions) to the following problem \citep{DBLP:conf/stoc/Hirahara21}.

\begin{definition}[The $\mathsf{GapK}^t\text{-}\mathsf{vs}\text{-}\mathsf{K}$ Problem]
       We say that $\mathsf{GapK}^t\text{-}\mathsf{vs}\text{-}\mathsf{K} \in \pr\BPP$ if there is a randomized polynomial-time algorithm $A$ such
that for every input $x \in \{0,1\}^*$ of length $n$ and every time bound $t \in \mathbb{N}$,
$$
\Pr_A \Big [\,\K(x) \leq A(x, 1^t) \leq \K^t(x) + O(\log n)\,\Big] \geq 2/3.
$$
\end{definition}

Similarly, we consider the complexity of computing Levin's $\Kt$ complexity. In the $\mathsf{MKtP}$ problem, we are given as input a tuple $(x,1^s)$, and the task is to decide if $\Kt(x) \leq s$. This problem is in the complexity class $\EXP$, i.e., deterministic exponential time. 

While the statement $\mathsf{MINKT} \notin \P$ immediately implies that $\NP \nsubseteq \P$, a proof that $\MKtP \notin \P$ merely implies that $\EXP \nsubseteq \P$, which is already known by the time hierarchy theorem. Still, it remains a longstanding open problem to establish this lower bound for $\MKtP$ \citep{DBLP:journals/siamcomp/AllenderBKMR06}. We will revisit this question in \Cref{sec:uncond_lbs}.

We refer to some general surveys of meta-complexity for broader context  \citep{allender2017complexity, Allender21_survey,DBLP:journals/eatcs/Hirahara22,10.1145/3744715.3744722}.

\subsubsection{Natural Properties}\label{sec:natural_prop}

In an influential work, \citep{DBLP:journals/jcss/RazborovR97} showed that many combinatorial circuit size lower bound arguments implicitly provide an efficient algorithm that distinguishes ``structured'' functions from ``random'' functions. As explained later in the survey, this provides a fruitful connection between complexity lower bounds and algorithmic upper bounds. 

Let $\mathcal{C}$ be a Boolean circuit class, and for a size parameter $s \colon \mathbb{N} \to \mathbb{N}$, let $\mathcal{C}[s]$ denote the set of Boolean functions $f \colon \{0,1\}^n \to \{0,1\}$ computed by a $\mathcal{C}$-circuit of size at most $s(n)$. We write $\mathcal{C}_n[s(n)]$ when we restrict the class to Boolean functions on $n$ input bits.

A function $f \colon \{0,1\}^n \to \{0,1\}$ can be represented by a string $\mathsf{tt}(f) \in \{0,1\}^N$ of length $N = 2^n$ that describes the truthtable of $f$. Similarly, given a string $w \in \{0,1\}^N$, we write $\mathsf{fnc}^w \colon \{0,1\}^n \to \{0,1\}$ to denote the corresponding Boolean function. 

In the discussion below, we will consider algorithms that receive as input the description of a Boolean function as an $N$-bit string. Roughly speaking, a natural property is an efficient algorithm that rejects every function in $\mathcal{C}[s]$ but accepts a large set of functions. In other words, it provides an efficient test to reject functions of low complexity while at the same time accepting a large collection of ``random'' functions.

\begin{definition}[Natural Properties \citep{DBLP:journals/jcss/RazborovR97}]\label{def:nat_prop}
    We say that there is a $\mathsf{prBPP}$\emph{-natural property} against a circuit class $\mathcal{C}[s]$ if there is a probabilistic polynomial-time algorithm $A$ operating over input strings of length $N = 2^n$ for which the following conditions hold:
    \begin{itemize}
        \item \emph{Constructive}: $A$ runs in time polynomial in its input length $N$.
        \item \emph{Largeness}: For every $n$ and $N = 2^n$, there is a set $\mathcal{F}_n \subseteq \{0,1\}^N$ with $\Pr_{w \sim \mathcal{U}_N}[w \in \mathcal{F}_n] \geq 1/2$ such that for every $w \in \mathcal{F}_n$, we have $\Pr_A[A(w) = 1] \geq 2/3$.
        \item $\mathcal{C}[s]$-\emph{usefulness}: For every $n$ and $f \colon \{0,1\}^n \to \{0,1\}$ such that $f \in \mathcal{C}_n[s(n)]$, we have $\Pr_A[A(\mathsf{tt}(f)) = 1] \leq 1/3$.
    \end{itemize}
\end{definition}

\begin{figure}[t]
\centering
\begin{tikzpicture}[font=\small]

\filldraw[
    fill=yellow!5,
    draw=black,
    line width=1.2pt
]
(0,0) rectangle (12,4.8);

\filldraw[
    fill=orange!10,
    draw=orange!70!black,
    line width=1.2pt,
    rounded corners=6pt
]
(0.45,1.45) rectangle (2.9,3.1);

\node[align=center,font=\large]
at (1.675,2.275)
{$\mathcal{C}_{n}[s(n)]$};

\filldraw[
    fill=green!10,
    draw=green!45!black,
    line width=1.2pt,
    rounded corners=8pt
]
(3.35,0.3) rectangle (11.5,3.85);

\node[align=center,font=\Large]
at (7.425,2.075)
{$\mathcal{F}_{n}$};

\node[anchor=north east,font=\large]
at (11.75,4.62)
{$\{0,1\}^{N}$};

\end{tikzpicture}

\caption{\small{A diagram representation of a natural property (\Cref{def:nat_prop}).
The outer rectangle is the ground set  of all
$N$-bit strings.
The small region represents the ``structured''
strings corresponding to functions in the low complexity class
$\mathcal{C}_{n}[s(n)]$.
The large region represents the ``random'' strings in
$\mathcal{F}_{n}\subseteq\{0,1\}^{N}$.
Its relative size illustrates the \emph{largeness} condition, while
$\mathcal{F}_{n}\cap\mathcal{C}_{n}[s(n)]=\varnothing$
illustrates \emph{usefulness} against $\mathcal{C}_{n}[s(n)]$.
A natural property is \emph{constructive}: it postulates the existence of an efficient
algorithm that distinguishes these two sets, thereby separating
the structured strings in $\mathcal{C}_{n}[s(n)]$ from the
random strings in $\mathcal{F}_{n}$.}
}
\label{fig:naturalproperty}
\end{figure}

In the definition above, we use $\mathsf{prBPP}$ instead of $\mathsf{BPP}$ because the randomized procedure $A$ might accept some strings with probability close to $1/2$. In the terminology of complexity theory, it might define a promise problem as opposed to a language. We refer to \Cref{fig:naturalproperty} for a pictorial description of a natural property.

For broader background on circuit lower bounds and the natural proofs barrier, see \citep{arora2009computational, jukna2012boolean}.

\subsubsection{Pseudorandomness}\label{sec:pseudorandomness}

A pseudorandom generator with \emph{seed length} $n$ and \emph{output length} $m>n$ is a deterministic map
$$
G \colon \bool^n \to \bool^m.
$$
For a uniform seed $s\sim\mathcal{U}_n$, we write $G(\mathcal{U}_n)$ for the distribution of $G(s)$. The quantity $m-n$ is called the \emph{stretch} of $G$. More generally, one considers families $\{G_n\}_{n\in\mathbb{N}}$ whose output length $m=m(n)$ grows with the seed length.

\begin{definition}[Distinguishers and Pseudorandomness]\label{def:prg}
A \emph{distinguisher} is a Boolean function, or more generally a randomized procedure, $D\colon\bool^m\to\bool$. Its distinguishing advantage against $G$ is
$$
\mathsf{Adv}_G(D)
\eqdef
\left|
\Pr_{s\sim\mathcal{U}_n}[D(G(s))=1]
-
\Pr_{y\sim\mathcal{U}_m}[D(y)=1]
\right|,
$$
where the probabilities also include the internal randomness of $D$, if any. We say that $D$ \emph{$\varepsilon$-distinguishes} the output of $G$ from the uniform distribution if $\mathsf{Adv}_G(D) > \varepsilon$. Given a class $\mathcal{C}$ of distinguishers, we say that $G$ \emph{fools $\mathcal{C}$ with error $\varepsilon$} if $\mathsf{Adv}_G(D)\leq\varepsilon$ for every $D\in\mathcal{C}$. Equivalently, $G$ is \emph{$\varepsilon$-pseudorandom} against $\mathcal{C}$.
\end{definition}

Pseudorandom generators commonly appear in two regimes. In the \emph{cryptographic regime}, $G$ is efficient and must fool distinguishers whose running time may be substantially larger than the time needed for one evaluation of $G$. The existence of cryptographic pseudorandom generators is equivalent, under polynomial-time reductions, to the existence of one-way functions and pseudorandom functions. These primitives are often used to establish conditional hardness results for learning.

In the \emph{complexity-theoretic regime}, the generator may use more time than the distinguishers it fools. Here a generator is often accompanied by a \emph{reconstruction procedure} that converts a distinguisher into useful computational structure. Such generators and reconstruction procedures will later be used to construct learning algorithms.

The survey will use several variants of these notions. We record the basic definitions here as a guide for the reader and refer to Vadhan's monograph \citep{DBLP:journals/fttcs/Vadhan12} for a comprehensive treatment.

\newpage

\section{Classical Connections Between Learning and Compression}\label{sec:classical_connections}

We start off with two classical connections between learning and compression. They allow us to link learnability, a statistical notion, with compressibility, a combinatorial one. In many contexts, this offers a particularly fruitful perspective for the design of learning algorithms.

\subsection{Model Compression}

We start with one of the most basic connections between learning and compression. In the (realizable) PAC learning setting, a labelled sample may contain many examples, but a learner can summarize the information they exhibit by a comparatively short hypothesis (``model'') that agrees with every observed label. We refer to this operation as \emph{model compression}.\footnote{This differs from the \emph{sample compression} schemes considered in the next subsection, where the learner retains a small part of the sample from which a hypothesis can later be reconstructed.}

Why should a compressed model generalize well, i.e., remain accurate on fresh examples that are not part of the training data? The answer is the principle commonly known as \emph{Occam's Razor}: among explanations that fit the observations, a simple explanation is less likely to fit merely by accident \cite{DBLP:journals/ipl/BlumerEHW87}. The informal justification is easy to understand: any fixed substantially incorrect hypothesis is unlikely to agree with many independent examples, and a short model description leaves few candidate hypotheses over which this unlikely event must be union bounded. Thus compression of the model controls generalization.

We now formalize this intuition. Let $\mathcal{C}$ be a concept class and $\mathcal{H}$ a hypothesis class, where $\mathcal{H}$ need not be equal to $\mathcal{C}$. For a distribution $D_n$ over $\bool^n$, define
$$
    \mathsf{err}_{D_n}(h,c) \eqdef \Pr_{x\sim D_n}[h(x)\neq c(x)].
$$

\begin{definition}[Consistent Hypothesis Finder]\label{def:consistent_hypothesis_finder}
    A \emph{consistent hypothesis finder} for $\mathcal{C}$ using $\mathcal{H}$ is a procedure $A$ with the following property. For every $n,m\geq 1$, every $c\in\mathcal{C}_n$, and every labelled sample
    $$
        S=\bigl \{(x_1,c(x_1)),\ldots,(x_m,c(x_m))\bigr \},
    $$
     $A(S)$ outputs a hypothesis $h\in\mathcal{H}_n$ such that $h(x_i)=c(x_i)$ for every $i\in[m]$.
\end{definition}

Thus a consistent hypothesis finder solves a search problem with no explicit reference to a distribution or to the PAC parameters $\varepsilon$ and $\delta$. Note that the hypothesis class $\mathcal{H}$ records the set of possible models, and it may differ from the target class $\mathcal{C}$. The next result shows that consistency implies generalization whenever the range of possible output models is sufficiently small and the number of samples is large enough.

\begin{theorem}[PAC Learning by Finding a Consistent Hypothesis \citep{DBLP:journals/ipl/BlumerEHW87}]\label{thm:consistent_hypothesis_finder}
    Let $\mathcal{C}$ be a concept class, $\mathcal{H}$ be a hypothesis class, and suppose that $A$ is a consistent hypothesis finder for $\mathcal{C}$ using $\mathcal{H}$. Fix $n\geq 1$ and assume that $\mathcal{H}_n$ is finite. Let $D_n$ be any distribution over $\bool^n$, let $c\in\mathcal{C}_n$, and let $\varepsilon,\delta\in(0,1)$. If $A$ is run on $m$ independent examples from $\EX(c,D_n)$, where
    $$
        m\geq \frac{1}{\varepsilon}\left(\ln|\mathcal{H}_n|+\ln\frac{1}{\delta}\right),
    $$
    then, with probability at least $1-\delta$ over the sample, its output hypothesis $h$ satisfies $\mathsf{err}_{D_n}(h,c)\leq\varepsilon$.
\end{theorem}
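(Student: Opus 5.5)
The plan is the standard union-bound argument over the finite hypothesis class $\mathcal{H}_n$. Call a hypothesis $h\in\mathcal{H}_n$ \emph{bad} if $\mathsf{err}_{D_n}(h,c)>\varepsilon$, and let $\mathcal{B}\subseteq\mathcal{H}_n$ denote the set of bad hypotheses. Since $A$ is a consistent hypothesis finder (\Cref{def:consistent_hypothesis_finder}), its output $h$ always agrees with $c$ on every sample point. Hence the event that $A$ outputs a bad hypothesis is contained in the event that \emph{some} $h\in\mathcal{B}$ is consistent with the whole sample $S$. The key point is that this containment does not depend on how $A$ chooses among consistent hypotheses. In particular, it does not matter that the output depends on the sample, so we never need to reason about $A$'s internal behaviour. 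If $A$ is randomized, the same containment holds for every fixing of its coins.

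Next I will bound the probability that a fixed bad hypothesis survives. Fix $h\in\mathcal{B}$. Each example $x_i\sim D_n$ is drawn independently and satisfies $h(x_i)=c(x_i)$ with probability $1-\mathsf{err}_{D_n}(h,c)<1-\varepsilon$. By independence of the $m$ examples,
\[
\Pr_S\bigl[h(x_i)=c(x_i)\ \text{for all } i\in[m]\bigr] < (1-\varepsilon)^m \le e^{-\varepsilon m}.
\]
The second inequality uses $1-\varepsilon\le e^{-\varepsilon}$.

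Then I will take a union bound over $\mathcal{B}$. This gives
\[
\Pr_S\bigl[\exists h\in\mathcal{B} \text{ consistent with } S\bigr] \le |\mathcal{B}|\,e^{-\varepsilon m} \le |\mathcal{H}_n|\,e^{-\varepsilon m}.
\]
Finiteness of $\mathcal{H}_n$ is exactly what makes this bound meaningful. Plugging in $m\ge \frac{1}{\varepsilon}\bigl(\ln|\mathcal{H}_n|+\ln\frac1\delta\bigr)$ gives $|\mathcal{H}_n|e^{-\varepsilon m}\le \delta$. So with probability at least $1-\delta$ no bad hypothesis is consistent with $S$, and therefore the output of $A$ satisfies $\mathsf{err}_{D_n}(h,c)\le\varepsilon$.

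There is no real obstacle here. The only step needing care is the first: the union bound must be applied to the fixed, sample-independent set $\mathcal{B}$, and not to the data-dependent output $h$. Consistency of $A$ is what lets us pass from the latter to the former.
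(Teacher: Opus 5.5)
Your proposal is correct and matches the paper's proof essentially step for step. You define the fixed set $\mathcal{B}$ of bad hypotheses, bound the probability that a fixed $h\in\mathcal{B}$ is consistent with all $m$ examples by $(1-\varepsilon)^m\le e^{-\varepsilon m}$, and take a union bound over $\mathcal{B}$ to get $|\mathcal{H}_n|e^{-\varepsilon m}\le\delta$, with your remark about randomized $A$ being a harmless addition the paper leaves implicit.
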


\begin{proof}
Let $c$ be the unknown target concept, and let $\mathcal{B}\subseteq\mathcal{H}_n$ be the set of \emph{bad} hypotheses $h$ satisfying $\mathsf{err}_{D_n}(h,c)>\varepsilon$. Fix $h\in\mathcal{B}$. A single example $(x,c(x))$ drawn from $\EX(c,D_n)$ is consistent with $h$ with probability $1-\mathsf{err}_{D_n}(h,c)<1-\varepsilon$. Since the $m$ examples are independent, the probability that $h$ is consistent with the entire sample is at most
$$
    (1-\varepsilon)^m\leq e^{-\varepsilon m}.
$$
If the consistent hypothesis finder outputs a bad hypothesis, then the sample must be consistent with at least one member of $\mathcal{B}$. Consequently, a union bound gives
$$
\begin{aligned}
    \Pr\bigl[\mathsf{err}_{D_n}(A(S),c)>\varepsilon\bigr]
    &\leq \Pr\bigl[\exists h\in\mathcal{B}\text{ consistent with }S\bigr] \\
    &\leq \sum_{h\in\mathcal{B}}\Pr[h\text{ is consistent with }S] \\
    &\leq |\mathcal{H}_n|\,e^{-\varepsilon m} \\
    &\leq \delta,
\end{aligned}
$$
where the final inequality follows from the assumed lower bound on $m$.
\end{proof}

The role of model compression can now be made explicit. Suppose every $h \in  \mathcal{H}_n$ can be encoded using $\ell(n)$ bits. Then $\ln|\mathcal{H}_n|\leq \ell(n) \ln 2$ and it is therefore enough to take
$$
    m\geq \frac{1}{\varepsilon}\left(\ell(n)\ln 2+\ln\frac{1}{\delta}\right).
$$
Thus the number of examples required for generalization is controlled by the description length of the output model. The training sample is described by  $m(n + 1)$ bits, while the explanation extracted from it uses only $\ell(n)$ bits. If the consistent hypothesis finder runs in polynomial time, $\ell(n)=\poly(n)$, and its output hypotheses are efficiently evaluatable under the encoding, then this construction gives an efficient PAC learner. 

To sum up, compressing the observations into a short model prevents overfitting because there are too few short models for a substantially incorrect one to agree with all the data, except with small probability. 

\paragraph{A Classical Example: Learning Through Polynomial Threshold Representations.}
An important example of the preceding model compression perspective is the class of \emph{linear threshold functions} (LTFs), also known as halfspaces or perceptrons. Such a function has the form
$
h(x)=\operatorname{sign}(w_0+\langle w,x\rangle).
$
Given a labelled sample realizable by a linear threshold function, consistency can be expressed as a linear program. After encoding the labels as elements of $\{\pm 1\}$, one seeks $w_0,w$ satisfying
$
y_i(w_0+\langle w,x_i\rangle)\geq 1
$
for every sample point $(x_i,y_i)$. Realizability guarantees feasibility after a suitable rescaling, and linear programming finds rational weights of polynomial description length in polynomial time. Thus linear programming provides an efficient consistent hypothesis finder. Moreover, there are only $2^{O(n^2)}$ distinct linear threshold functions on $\{0,1\}^n$, so the preceding \Cref{thm:consistent_hypothesis_finder} shows that the resulting hypothesis generalizes from polynomially many examples. This yields a distribution-independent PAC learner for linear threshold functions.

Feature expansion extends this observation to \emph{polynomial threshold functions} (PTFs), the natural extension of LTFs to higher degree. Let $\Phi_d(x)$ be the vector whose coordinates are all multilinear monomials $\prod_{i\in T}x_i$ with $|T|\leq d$. Its dimension is
$
N=\sum_{j=0}^{d}\binom{n}{j}\leq (en/d)^d,
$
and every degree-$d$ polynomial threshold function becomes a linear threshold function over the expanded representation $\Phi_d(x)$. Mapping each example through $\Phi_d$ and applying the preceding linear programming algorithm therefore gives a PAC learner with running time
$
\poly\bigl(n^d,1/\varepsilon,\log(1/\delta)\bigr).
$
Consequently, proving that a concept class has polynomial threshold degree at most $d$ immediately gives a learning algorithm with this running time. The learner need not know a representation of the target, since the linear program searches for any consistent polynomial threshold function.

This approach yields learning algorithms with provable guarantees for rich concept classes such as decision trees, DNF formulas, bounded-depth Boolean formulas, and more. We refer to the survey \citep{hellerstein2007pac} for a comprehensive discussion of this technique and its applications.\\

The Occam principle and closely related ideas appear in the literature under several names. It admits several generalizations. For instance, infinite hypothesis classes can be treated using the theory of VC dimension (e.g., LTFs over $\mathbb{R}^n$), while exact consistency can be relaxed by allowing the hypothesis finder to output an $h$ that agrees with most examples in the sample. For further background, we refer to standard textbooks such as \citep{KearnsVazirani:94, ShalevShwartzBenDavid2014}.

\subsection{Sample Compression}

The preceding subsection compressed a labelled sample into a short model. Sample compression takes a different approach: it retains only a few of the observed examples, together with a small amount of side information, and reconstructs a hypothesis from this summary. The reconstructed hypothesis need not itself have a short standalone description. What matters is that, relative to the original sample, it is determined by very few examples.

Fix an input length $n$. A \emph{sample compression scheme} for a concept class $\mathcal{C}_n$ with parameters $(k,\ell)$ consists of a compression map $\mathsf{Comp}$ and a reconstruction map $\mathsf{Rec}$ with the following guarantee. Given any labelled sample sequence
$$
    S=\bigl((x_1,c(x_1)),\ldots,(x_m,c(x_m))\bigr)
$$
consistent with some unknown $c\in\mathcal{C}_n$, the map $\mathsf{Comp}$ outputs a pair $(S|_I,\sigma)$, where $I\subseteq[m]$ has size at most $k$ and $\sigma\in\{0,1\}^\ell$. Here $S|_I$ denotes the subsequence of $S$ indexed by $I$, listed in increasing order of indices. The reconstruction map uses only this pair to produce a hypothesis
$$
    h=\mathsf{Rec}(S|_I,\sigma)
$$
that is consistent with the entire original sample $S$. As usual, strings shorter than $\ell$ can be padded, and the reconstructed hypothesis may be improper. We call $k+\ell$ the \emph{size} of the scheme. 

\paragraph{A Classical Example: Rectangles in the Plane.}
Although we stated the definition over the Boolean cube, sample compression applies without change to arbitrary (possibly infinite) instance domains. Consider the class of axis-aligned rectangles in $\mathbb{R}^2$, where an example is labelled positive precisely when it belongs to an unknown target rectangle $R^\star$. Given a realizable sample containing at least one positive example, $\mathsf{Comp}$ retains a leftmost, a rightmost, a lowest, and a highest positive example, breaking ties according to their order in the sample and retaining only distinct examples. The reconstruction map $\mathsf{Rec}$ outputs the smallest axis-aligned rectangle $R_S$ containing these points. (If there are no positive examples, the compression is empty and $\mathsf{Rec}$ returns the empty hypothesis.) Every positive example lies in $R_S$ by the choice of the four extreme points. On the other hand, $R_S\subseteq R^\star$, since it is the bounding rectangle of points belonging to the axis-aligned rectangle $R^\star$. Therefore every negative example lies outside $R_S$, and the reconstructed rectangle is consistent with the entire sample. This gives a sample compression scheme with parameters $(k = 4, \ell = 0)$. We refer to \Cref{fig:sample_compression_rectangles} for a pictorial description.\\

\begin{figure}[t]
\centering
\begin{tikzpicture}[scale=0.92]
    \draw[thick,->] (0,0) -- (8.65,0) node[right] {$x_1$};
    \draw[thick,->] (0,0) -- (0,6.30) node[above] {$x_2$};

    \fill[gray!8] (0.95,0.80) rectangle (7.60,5.40);
    \draw[gray!75,dashed,line width=1.2pt]
        (0.95,0.80) rectangle (7.60,5.40);
    \node[gray!75,anchor=south east] at (7.55,5.44)
        {$R^\star$};

    \fill[blue!8] (1.80,1.65) rectangle (6.70,4.55);
    \draw[blue!70!black,line width=1.4pt]
        (1.80,1.65) rectangle (6.70,4.55);
    \node[blue!70!black,anchor=north east] at (6.62,4.47)
        {$R_S=\mathsf{Rec}(S|_I)$};

    \foreach \p in {
        (2.50,2.30),
        (2.70,3.85),
        (3.55,3.30),
        (4.40,2.55),
        (5.20,3.65),
        (5.80,2.25)
    }{
        \fill[black] \p circle (2.1pt);
    }

    \fill[blue!70!black] (1.80,3.10) circle (3.3pt);
    \fill[blue!70!black] (6.70,3.00) circle (3.3pt);
    \fill[blue!70!black] (3.90,1.65) circle (3.3pt);
    \fill[blue!70!black] (5.00,4.55) circle (3.3pt);

    \node[blue!70!black,anchor=east] at (1.70,3.10)
        {$x^{\mathrm{L}}$};
    \node[blue!70!black,anchor=west] at (6.80,3.00)
        {$x^{\mathrm{R}}$};
    \node[blue!70!black,anchor=north] at (3.90,1.55)
        {$x^{\mathrm{B}}$};
    \node[blue!70!black,anchor=south] at (5.00,4.65)
        {$x^{\mathrm{T}}$};

    \foreach \p in {
        (0.35,2.50),
        (2.80,5.95),
        (8.20,3.90),
        (6.40,0.25),
        (8.15,1.25),
        (0.40,5.85)
    }{
        \node[red!75!black,font=\large] at \p {$\times$};
    }

    \fill[black] (9.75,4.10) circle (2.1pt);
    \node[anchor=west,font=\small] at (10.00,4.10)
        {Positive example};

    \fill[blue!70!black] (9.75,3.20) circle (3.3pt);
    \node[anchor=west,font=\small] at (10.00,3.20)
        {Retained example};

    \node[red!75!black,font=\large] at (9.75,2.30)
        {$\times$};
    \node[anchor=west,font=\small] at (10.00,2.30)
        {Negative example};
\end{tikzpicture}
\caption{\small{A sample compression scheme for axis-aligned rectangles in the plane.
The four extreme positive examples determine the reconstructed rectangle
$R_S$. Since $R_S\subseteq R^\star$, where $R^\star$ denotes the unknown target rectangle, $R_S$ contains every positive sample
point and no negative sample point. All labels in the original sample can be reconstructed from the four extreme positive points.}}
\label{fig:sample_compression_rectangles}
\end{figure}

Next, we explain why any such compression scheme yields a PAC learner.

\begin{theorem}[PAC Learning from Sample Compression \citep{littlestone1986relating}]\label{thm:sample_compression_pac}
    Suppose that $\mathcal{C}_n$ has a sample compression scheme with parameters $(k,\ell)$. Let $D_n$ be any distribution over $\bool^n$, let $c\in\mathcal{C}_n$, and let $\varepsilon,\delta\in(0,1)$. Draw $m$ independent examples from $\EX(c,D_n)$, apply the compression map, and output the reconstructed hypothesis $h$. Then
    $$
        \Pr\bigl[\mathsf{err}_{D_n}(h,c)>\varepsilon\bigr]
        \leq 2^\ell\sum_{i=0}^{k}\binom{m}{i}e^{-\varepsilon(m-i)}.
    $$
    In particular, it suffices to take
    $$
        m =O\!\left(\frac{k\log(2/\varepsilon)+\ell+\log(1/\delta)}{\varepsilon}\right)
    $$
    to obtain error at most $\varepsilon$ with probability at least $1-\delta$.
\end{theorem}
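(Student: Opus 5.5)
The plan is the standard Littlestone--Warmuth union bound over the possible compressed representations. The key observation is that the reconstructed hypothesis depends only on which indices $I$ were retained and on the side string $\sigma$. Fix in advance a set $I\subseteq[m]$ with $|I|=i\le k$ and a string $\sigma\in\{0,1\}^\ell$. Define $h_{I,\sigma}=\mathsf{Rec}(S|_I,\sigma)$, which is a random hypothesis that is a function only of the examples indexed by $I$. The failure event ``the output $h$ has $\mathsf{err}_{D_n}(h,c)>\varepsilon$'' is contained in the union, over all pairs $(I,\sigma)$, of the events
\[
E_{I,\sigma} \;=\; \bigl\{\,\mathsf{err}_{D_n}(h_{I,\sigma},c)>\varepsilon \text{ and } h_{I,\sigma} \text{ is consistent with } S|_{[m]\setminus I}\,\bigr\}.
\]
This containment holds because the actual output of $\mathsf{Comp}$ is some such pair, and the guarantee of the scheme says the reconstruction is consistent with all of $S$, in particular with the examples outside $I$.

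Next I bound each $\Pr[E_{I,\sigma}]$ by conditioning on the examples in $S|_I$. Since the $m$ examples are independent, after conditioning the hypothesis $h_{I,\sigma}$ is a fixed function, and the $m-i$ examples outside $I$ are still i.i.d.\ from $\EX(c,D_n)$. As in the proof of \Cref{thm:consistent_hypothesis_finder}, a fixed hypothesis with error exceeding $\varepsilon$ agrees with all $m-i$ of these examples with probability at most $(1-\varepsilon)^{m-i}\le e^{-\varepsilon(m-i)}$. Averaging over the conditioning gives $\Pr[E_{I,\sigma}]\le e^{-\varepsilon(m-i)}$. There are $\binom{m}{i}$ choices of $I$ of size $i$ and at most $2^\ell$ choices of $\sigma$, so summing over $i=0,\ldots,k$ yields the displayed bound. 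The one subtlety is that $S|_I$ is listed in a fixed order determined by $I$, so enumerating subsets, rather than ordered tuples, suffices.

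For the sample size, I assume $m\ge 2k$, so that $m-i\ge m/2$ for every $i\le k$. I then bound
\[
\sum_{i\le k}\binom{m}{i}\le (em/k)^k,
\]
which makes the failure probability at most $2^\ell (em/k)^k e^{-\varepsilon m/2}$. It remains to show this is at most $\delta$ when
\[
m = C\bigl(k\log(2/\varepsilon)+\ell+\log(1/\delta)\bigr)/\varepsilon .
\]
Taking logarithms, the condition becomes $\varepsilon m/2 \ge \ell\ln 2 + k\ln(em/k) + \ln(1/\delta)$. The only mildly delicate step is the self-referential term $k\ln(em/k)$. To handle it, I use the elementary fact that $a\ln(bm)\le \varepsilon m/4$ whenever $m\ge (8a/\varepsilon)\ln(8ab/\varepsilon)$. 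With $a=k$ and $b=e/k$, this threshold is $O\bigl((k/\varepsilon)\log(1/\varepsilon)\bigr)$. The remaining terms $\ell\ln 2+\ln(1/\delta)$ are absorbed by the other $\varepsilon m/4$ once the constant $C$ is chosen large enough. I expect this calculation to be the only place requiring care; the probabilistic argument itself is direct.
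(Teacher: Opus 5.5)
Your proposal is correct and follows the paper's argument: condition on the retained examples $S|_I$ so that $h_{I,\sigma}$ is fixed, bound its probability of agreeing with the remaining $m-|I|$ independent examples by $e^{-\varepsilon(m-|I|)}$, and take a union bound over the at most $2^\ell\binom{m}{i}$ pairs $(I,\sigma)$. Your sample-size calculation, which uses $m\ge 2k$ and $\sum_{i\le k}\binom{m}{i}\le (em/k)^k$, fills in the step the paper states in one line; the trivial case $k=0$ needs separate handling, and your $\log(1/\varepsilon)$ should be read as $\log(2/\varepsilon)$ when $\varepsilon$ is close to $1$.
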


\begin{proof}[Proof Sketch.]
    The subtlety compared with the model compression argument is that the reconstructed hypothesis depends on the retained examples, which are themselves random. We therefore cannot fix the hypothesis before drawing the sample and charge it for agreeing with all $m$ examples.

    Instead, fix a set of indices $I\subseteq[m]$ of size at most $k$ and a message $\sigma\in\{0,1\}^\ell$, and condition on the examples indexed by $I$. The reconstructed hypothesis $h_{I,\sigma}=\mathsf{Rec}(S|_I,\sigma)$ is now fixed, while the remaining $m-|I|$ examples are still independent of it. If $\mathsf{err}_{D_n}(h_{I,\sigma},c)>\varepsilon$, the probability that it agrees with all these remaining examples is therefore at most
    $$
        (1-\varepsilon)^{m-|I|}\leq e^{-\varepsilon(m-|I|)}.
    $$
    Although the compression map may choose $(I,\sigma)$ adaptively after seeing the entire sample, if the scheme outputs a bad hypothesis then this event occurs for its selected pair. A union bound over the at most $2^\ell\binom{m}{i}$ choices with $|I|=i$, for every $0\leq i\leq k$, therefore gives the claimed probability bound. The standard estimate $\sum_{i=0}^k\binom{m}{i}\leq (em/k)^k$, together with the stated choice of $m$, makes this bound at most $\delta$.
\end{proof}

Thus a bounded sample compression scheme yields a PAC learner. If compression, reconstruction, and evaluation of the reconstructed hypothesis are efficient, then the resulting learner is efficient as well. Without these requirements, the conclusion is information-theoretic only, providing an upper bound on the sample complexity of learning $\mathcal{C}_n$.

Remarkably, the converse also holds at the qualitative level in the information-theoretic setting. A (possibly infinite) binary concept class is PAC learnable exactly when it has finite VC dimension, and Moran and Yehudayoff \citep{moran2016sample} proved that every class of VC dimension $d$ has a sample compression scheme of size $2^{O(d)}$. Consequently, for binary concept classes, the existence of a bounded sample compression scheme characterizes PAC learnability, even though the known quantitative relationship is far from tight.

\begin{problem}[Sample Compression Conjecture]\label{prob:sample_compression}
    Does every binary concept class of VC dimension $d$ admit a sample compression scheme of size $O(d)$?
\end{problem}

Closing the existing exponential gap would show that the amount of information one needs to retain from a sample is governed, up to constant factors, by the same parameter that controls PAC sample complexity.

\newpage

\section{Complexity-Theoretic Foundations of Learning and Compression}\label{sec:complexity_learning}

\subsection{Learning Paradises}\label{sec:paradises}

We consider five possible learning scenarios, ordered from a strong unsupervised setting to a far weaker supervised setting in which only weak heuristic learnability is assumed. Still, even the weakest of these assumptions asks for learning power over \emph{unrestricted} Boolean circuits, a capability well beyond current theoretical results and practical algorithms (see \Cref{sec:Imp_worlds} for more details). We therefore call each scenario a learning paradise.

\begin{itemize}
    \item \textbf{P.I: Efficient Extrapolation.}\\
    There are polynomials $p_1(\cdot), p_2(\cdot)$ and a randomized algorithm $L$ running in time $p_1(n,m,s,t,1/\varepsilon,1/\delta)$ that solves the $(m,n,s,t,\varepsilon,\delta)$-extrapolation task whenever $m \geq p_2(s,1/\varepsilon,1/\delta)$. 
    \item \textbf{P.II: Efficient Distribution Learning.}\\
    There is a polynomial $p(\cdot)$ and  a randomized algorithm $A$ running in time $p(n,s,t,1/\varepsilon,1/\delta)$ that $\varepsilon$-learns every distribution $D \in \mathcal{D}^{s,t}_n$ with probability at least $1 - \delta$. 
    \item \textbf{P.III: Efficient Agnostic Learning (Under Feasible Example Distributions).} There is a polynomial $p(\cdot)$ and a randomized algorithm $A$ running in time $p(n, 1/\varepsilon, \log(1/\delta), s, t, a)$ that agnostically learns $\SIZE[s]$ over $\mathcal{D}^{a, t}$.
    \item \textbf{P.IV: Efficient PAC Learning (with MQs and Under Feasible Example Distributions).}
    There is a polynomial $p(\cdot)$ and a randomized algorithm $A$ running in time $p(n, 1/\varepsilon, \log(1/\delta), s, t, a)$ that PAC learns $\SIZE[s]$ over $\mathcal{D}^{a, t}$ with membership queries.
    \item \textbf{P.V: Efficient Heuristic Learning (without MQs).}
    For any polynomials $s(\cdot), t_1(\cdot), t_2(\cdot)$ and distribution families $P \in \SAMP[t_1]$ and $D \in \SAMP[t_2]$, there is a polynomial $p(\cdot)$ and a randomized algorithm $A$ running in time $p(n, s, t_1, t_2)$ that weakly predicts $\SIZE[s]$ on average over $P$ and $D$.

\end{itemize}

There are many natural variants of assumption P.IV. For example, one can vary whether the learner is allowed to make membership queries, whether the examples come from the uniform distribution or from arbitrary distributions, and whether the learner is proper. To keep the presentation simple and to avoid the overly pessimistic distribution-free model, we adopt the membership query PAC model over feasible distributions as a representative supervised learning scenario. 

Learning under feasible distributions is a natural restriction in complexity theoretic and cryptographic applications. There are two distributions to keep track of. In PAC Learning (P.IV), the feasible distribution is over input examples $x$, while the target concept $f$ is an arbitrary function from the class. In Heuristic Learning (P.V), there is also a feasible distribution over the concepts themselves. Cryptographic reductions often have both kinds of efficient sampling: the target concept is generated by an efficient setup or key generation experiment, while the inputs are generated by an efficient challenge or encryption experiment. In particular, attacks on public-key encryption can often be phrased as learning a polynomial-size predicate, with the concept sampled by key generation and the examples sampled by encryption. We refer to \Cref{sec:related_work} for additional discussion.

\subsection{Basic Relations}

We now explain the basic ordering among the five learning paradises:
$$
\text{P.I} \;\; \implies \;\; \text{P.II}  \;\; \implies  \;\; \text{P.III}  \;\; \implies \;\; \text{P.IV}  \;\; \implies \;\; \text{P.V}.
$$
Throughout the proof sketches below, parameters are rescaled by fixed polynomials. We suppress these routine changes in notation.

\begin{lemma}[P.I $\implies$ P.II]\label{lem:worldI-worldII}

    If Efficient Extrapolation is possible, then Efficient Distribution Learning is possible.
    
\end{lemma}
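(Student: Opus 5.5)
Given $D \in \mathcal{D}^{s,t}_n$ sampled by a machine $M_D$ of length at most $s$ running in time $t$, the plan is to view $m$ i.i.d.\ samples from $D$ as one long sequence $(x_1,\ldots,x_m)$ produced by a single short randomized program. Let $M'$ be the machine that hardcodes $M_D$ and a loop bound $m$, and runs $M_D$ independently $m$ times, concatenating the outputs. Then $|M'| \leq s + O(\log m) + O(1)$, and $M'$ runs in time $t' = O(m\cdot t)$ up to simulation overhead. Its output distribution is exactly $D^{(m)}$ split into blocks of length $n$. Because the blocks are independent, the conditional distribution $(\mathcal{X}_i \mid x_{<i})$ equals $D$ for every $i$ and every prefix.

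The distribution learner $A$ works as follows. It sets $m = p_2(s',1/\varepsilon,1/\delta')$ with $s' = s+O(\log m)$; this is self-referential only logarithmically, so a polynomial $m$ suffices. It draws $m$ samples $x_1,\ldots,x_m$ from $D$ and picks $i \sim [m]$ uniformly. It then outputs the circuit $C$ that, on random input $r$, computes $L(x_{<i};r)$, where $L$ is the extrapolator from P.I run with parameters $(m,n,s',t',\varepsilon,\delta')$. The samples $x_{<i}$ are hardwired into $C$. The circuit has size polynomial in $p_1$ by the standard conversion of a polynomial-time algorithm into a circuit. Since $L$ solves extrapolation for $M'$, we get
\[
\Pr_{i,\vec x}\bigl[\TV(L(x_{<i}),D)\leq\varepsilon\bigr] \geq 1-\delta'.
\]
This is exactly the guarantee that $C(\mathcal{U}_T)$ is $\varepsilon$-close to $D$ with probability at least $1-\delta'$. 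Taking $\delta'=\delta$ gives P.II. Only $i-1$ samples are actually consumed, and the remaining ones need not be drawn.

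The main obstacle is bookkeeping rather than any conceptual step. First, the description length of $M'$ must stay polynomially bounded and be known to $A$, which only needs $s$, $m$, and $t$. Second, the block convention must match: every output of $M_D$ should have length exactly $n$, which we can force by padding inside $M'$, and $t'$ must be large enough that the simulation never times out, so that the zero-default output never occurs. Third, the definition quantifies over all $M$ of length at most $s'$, so the extrapolation guarantee applies to $M'$ even though $L$ never sees $M'$. A further subtlety is that the randomness of $L$ must be fixed as the circuit's input and not drawn by $A$; this is fine because $L$'s output distribution over its own coins is what is being compared to $D$. The running time of $A$ is then $\poly(n,s,t,1/\varepsilon,1/\delta)$, as required.
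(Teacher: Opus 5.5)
Your proposal is correct and follows the paper's proof essentially step for step. The paper likewise forms the product machine $M^{\otimes m}$ of description length $s+O(\log m)$, observes that every next-block conditional equals $D$, and outputs the sampler that runs the extrapolator $L$ on a hardwired prefix $x_{<I}$ for $I\sim[m]$, since $(I,x_{<I})$ is distributed exactly as in the extrapolation experiment; your extra bookkeeping (the mild self-reference in $m$, padding, avoiding the time-out default, and treating $L$'s coins as the circuit's input) covers the routine details the paper suppresses.
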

\begin{proof}[Proof Sketch.]

    Fix a distribution $D \in \mathcal{D}^{s,t}_n$, and let $M$ be a randomized machine of description length at most $s$ and running time at most $t$ that samples $D$. We reduce learning $D$ from samples to extrapolating a sequence whose coordinates are independent samples from $D$.

    Choose a sequence length $m$ large enough for the extrapolation guarantee, with accuracy and confidence parameters $\eta$ and $\gamma$ to be fixed below. Let $M^{\otimes m}$ be the randomized machine that runs $M$ independently $m$ times and outputs the resulting sequence $(X_1,\ldots,X_m)$. Its description length is $s' = s + O(\log m)$ and its running time is $t' = O(m(t+n))$. Since $m$ only needs to dominate a polynomial in the relevant parameters, we may choose it polynomially large so that the assumption P.I applies to $(m,n,s',t',\eta,\gamma)$.

    For this product source, the conditional distribution of the next block is always the original distribution:
    $$(X_i \mid X_{<i}=x_{<i}) \equiv D$$
    for every $i$ and every prefix $x_{<i}$ in the support. Given sample access to $D$, the distribution learner draws samples $x_1,\ldots,x_m$, chooses $I \sim [m]$, and forms the prefix $y=x_{<I}$. It outputs the sampler $C_y$ that has $y$ hardwired and, on fresh random coins, runs the extrapolation algorithm $L$ on input $y$.

    The distribution of $(I,y)$ in this procedure is exactly the distribution of $(i,X_{<i})$ in the extrapolation experiment for $M^{\otimes m}$. Therefore, by assumption P.I,
    $$\Pr_{I,y}\left[\Delta_{\mathsf{TV}}(C_y,D) \leq \eta\right] \geq 1-\gamma.$$
    Taking $\eta=\varepsilon$ and $\gamma=\delta$ gives a distribution learner for $D$. The returned sampler is efficient because it only simulates the polynomial-time extrapolator with a polynomial-length prefix hardwired.
\end{proof}

\begin{lemma}[P.II $\implies$ P.III; \cite{DBLP:conf/colt/Xiao10}]\label{lem:worldII-worldIII}

    If Efficient Distribution Learning is possible, then Efficient Agnostic Learning is possible.
    
\end{lemma}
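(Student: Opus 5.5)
We would prove this by learning the joint example distribution and then computing, approximately, the best prediction rule for it. Fix a joint distribution $E \in \mathcal{D}^{a,t}$, viewed as a distribution over $\bool^{n}\times\bool$ sampled by a machine of description length $a$ and running time $t$. By assumption P.II, applied with accuracy $\eta$ and confidence $\delta/2$, a polynomial-time algorithm given samples $(x,b)\sim E$ outputs, with probability $1-\delta/2$, a sampler circuit $C$ whose output distribution $E' = C(\mathcal{U}_T)$ satisfies $\TV(E',E)\le\eta$. For any hypothesis $h$ the map $(x,b)\mapsto[h(x)=b]$ is a test, so $\bigl|\Pr_{E}[h(x)=b]-\Pr_{E'}[h(x)=b]\bigr|\le\eta$. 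In particular, ${\sf opt}_n(\SIZE[s],E')\ge {\sf opt}_n(\SIZE[s],E)-\eta$. It therefore suffices to output an efficiently evaluable $h$ that is nearly Bayes-optimal for $E'$. We may use an improper hypothesis, since agnostic learning does not require $h\in\SIZE[s]$.

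The natural candidate is the Bayes predictor of $E'$: $h(x)=1$ iff $\Pr_{E'}[b=1\mid x]\ge 1/2$. It achieves $\Pr_{E'}[h(x)=b]\ge {\sf opt}_n(\mathcal{C},E')$ for every class $\mathcal{C}$. The difficulty is that evaluating this conditional probability requires conditional sampling from $C$ given its first $n$ output bits, and that is inverting the sampler. We plan to get around this by applying P.II a second time, so that the learned object itself provides conditional access. Concretely, we learn the reversed or augmented distribution $(x,b)$ through a learner whose output sampler is structured so that the label is produced after $x$. Alternatively, following the extrapolation-style structure from the proof of \Cref{lem:worldI-worldII}, we learn the distribution of $x$ and of $(x,b)$ separately and estimate $\Pr[b=1\mid x]=\Pr[(x,1)]/\Pr[x]$ by sampling.

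We expect this conditional-evaluation step to be the main obstacle. Sample-based density estimates are only meaningful for heavy $x$, while for light $x$ the prediction is irrelevant on average but hard to estimate pointwise. The cleaner route we would try first avoids pointwise conditionals altogether. Once we hold the explicit sampler $C$, we draw polynomially many fresh labelled samples from $E'$ and reduce to a search over a finite hypothesis class with a uniform-convergence (Occam) bound, as in \Cref{thm:consistent_hypothesis_finder}. Exhaustive search over $\SIZE[s]$ is not polynomial, however, so instead we use the learned sampler to define a \emph{polynomial-size} improper hypothesis. The function $x\mapsto$ ``majority label among runs of $C$ whose $x$-part equals $x$'' is computed by reverse sampling. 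Here we would invoke the standard fact, due to Impagliazzo and Luby, that a learner of all samplable distributions (P.II) yields inversion of samplers, since learning the distribution of $(C(r),r)$ yields distributional inversion. With this fact, the conditional label distribution given $x$ can be sampled to accuracy $\eta$ on all but an $\eta$-fraction of $x$'s.

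Granting approximate conditional sampling, the argument finishes as follows. Let $h(x)$ be the majority of $O(\log(1/\eta)/\eta^2)$ conditional label samples, with randomness fixed by averaging. Its agreement with $E'$ is within $O(\eta)$ of the Bayes-optimal agreement, hence at least ${\sf opt}_n(\SIZE[s],E')-O(\eta)$. Combining with the total-variation transfer gives agreement at least ${\sf opt}_n(\SIZE[s],E)-O(\eta)$ on $E$. Setting $\eta=\Theta(\varepsilon)$, and boosting the confidence by hypothesis testing on fresh samples to obtain the $\log(1/\delta)$ dependence, completes the proof.
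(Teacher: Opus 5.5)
Your overall plan matches the paper's. You learn the joint distribution, transfer agreement through total variation, distributionally invert the learned sampler $C$ to get a randomized conditional-label hypothesis, and take a majority vote; this is \Cref{lem:rand-h} followed by \Cref{lem:aiowf-all-wb}. The gap is in the one step that carries the weight: why P.II yields inversion of samplers.

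Your justification, that learning the distribution of $(C(r),r)$ yields distributional inversion, is circular. P.II returns only another sampler $C'$ for pairs $(y,r)$, with no promised internal structure. Extracting from $C'$ the distribution of $r$ conditioned on $y$ is exactly the inversion problem you had identified as the obstacle. You also cannot train on the conditional distribution directly, since you cannot generate samples from it. The same objection applies to your earlier idea of a learner whose output sampler ``produces the label after $x$'', since P.II gives no control over that structure. Impagliazzo--Luby does not supply the fact either: it converts inversion of all candidate one-way functions into distributional inversion, and says nothing about learners.

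The missing idea is a cryptographic detour, in two steps.

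First, P.II breaks every AIPRF (\Cref{lem:dist-break-prf}). Given $z$ and an oracle $\mathcal{O}$, run the learner on samples $(X,\mathcal{O}(X))$ with $X$ uniform. Draw one fresh pair $(x',b')$ from the returned sampler, and accept iff $x'$ was not queried during training and $b'=\mathcal{O}(x')$. When $\mathcal{O}=f_z(u,\cdot)$ this accepts with probability $1-O(\eta)$, because P.II must succeed on this graph distribution for every $z$ and $u$. When $\mathcal{O}$ is a random function it accepts with probability at most $1/2$.

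Second, the auxiliary-input equivalences AIPRF $\Leftrightarrow$ AIPRG $\Leftrightarrow$ AIOWF $\Leftrightarrow$ distAIOWF of \Cref{lem:ai-crypto} (via GGM, HILL, and Impagliazzo--Luby) give a PPT inverter $I$ for the auxiliary-input function $f_C(r)=C^{(0)}(r)$. It satisfies that $(C^{(0)}(r),I(C,C^{(0)}(r)))$ is close to $(C^{(0)}(r),r)$, so $h(x)=C^{(1)}(I(C,x))$ is the conditional label sampler you need. The auxiliary-input form is essential. $C$ is an arbitrary circuit produced by the learner, so you need inversion for every sufficiently large such $C$, not for one fixed function or on average over $C$.

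With this step supplied, the rest of your sketch goes through: Markov over $x$, the majority vote, the total-variation transfer, and your confidence boosting for the $\log(1/\delta)$ dependence, which the paper glosses over.
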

\begin{proof}[Proof Sketch.]

    Fix an agnostic learning instance consisting of a joint example distribution $E_n \in \mathcal{D}^{a,t}_{n+1}$. Since $E_n$ is feasibly samplable, it belongs to the distribution class handled by  P.II, with polynomially related parameters.

    Simulating the distribution learner on samples obtained from $\EX(E_n)$ gives, with high probability, an explicit sampler $S=(S^{(0)},S^{(1)})$ for a distribution $\widehat{E}$ satisfying $\Delta_{\mathsf{TV}}(\widehat{E},E_n) \leq \eta$. If we could sample from the conditional distribution of $S^{(1)}$ given $S^{(0)}=x$, then we would obtain labels that are close to the true labels according to $E_n$. The only obstacle is that $S$ is an explicit (i.e.,~white-box) sampler for pairs, not an explicit conditional sampler.

    Interestingly, distribution learning also supplies the ingredient needed to remove this obstacle. By \Cref{lem:dist-break-prf}, P.II rules out AIPRFs; by \Cref{lem:ai-crypto}, this rules out AIOWFs; and by \Cref{lem:rand-h}, the nonexistence of AIOWFs allows one to produce a randomised hypothesis $h$ such that $(x,h(x))$, with $x$ drawn from the marginal $S^{(0)}$, is close to $\widehat{E}$.

    By the triangle inequality, the joint distribution $(x,h(x))$ is close to $E_n$. Then, as in \Cref{lem:aiowf-all-wb}, by sampling many choices of internal randomness for $h$ and taking the majority hypothesis, one obtains, with high probability, a deterministic hypothesis $H$ such that 
    \[\Pr_{(x, b) \sim E_n}[H(x) = b] \geq \opt_n(\mathcal{F}_{all}, E_n) - \varepsilon' \geq \opt_n(\mathcal{C}, E_n) - \varepsilon', \]
    where $\mathcal{F}_{all}$ denotes the set of all boolean functions $f: \bool^n \to \bool$, $\mathcal{C} \subseteq \mathcal{F}_{all}$ is any desired concept class, and $\varepsilon'$ depends on the closeness of $E_n$ and $(x, h(x))$. Choosing $\eta$ and all failure probabilities as small enough polynomial fractions of $\varepsilon$ and $\delta$ yields the desired agnostic learning guarantee. More details about this approach can be found in \Cref{sec:wb}.
    
\end{proof}

\begin{lemma}[P.III $\implies$ P.IV]\label{lem:worldIII-worldIV}

    If Efficient Agnostic Learning is possible, then Efficient PAC Learning is possible.
    
\end{lemma}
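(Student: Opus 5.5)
We reduce PAC learning with membership queries to agnostic learning by simply ignoring the membership oracle and viewing the PAC instance as a special agnostic instance. Fix a target $f \in \SIZE[s]$ on $n$ inputs and an example distribution $D_n \in \mathcal{D}^{a,t}_n$. Let $E_n$ be the joint distribution of $(x, f(x))$ with $x \sim D_n$. Then $\EX(E_n)$ is exactly $\EX(f, D_n)$, so the PAC learner can answer every example query of the agnostic learner directly. Moreover ${\sf opt}_n(\SIZE[s], E_n) = 1$, witnessed by $f$ itself. So an agnostic hypothesis $h$ with $\Pr_{(x,b)\sim E_n}[h(x)=b] \geq {\sf opt}_n - \varepsilon$ satisfies $\Pr_{x \sim D_n}[h(x)=f(x)] \geq 1-\varepsilon$, which is precisely the PAC guarantee with the same $\varepsilon$ and $\delta$.

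The only step requiring care is checking that $E_n$ belongs to the feasible joint class $\mathcal{D}^{a',t'}_{n+1}$ handled by P.III, with $a',t'$ polynomial in $n,s,t,a$. The sampler for $E_n$ runs the machine $M$ for $D_n$ (description length $a$, time $t$) to obtain $x$, then evaluates the circuit for $f$ on $x$ and appends the output bit.

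This sampler must carry a description of $f$, a circuit of size $s$, hardwired in its code. Its description length is therefore $a' = a + O(s\log s) + O(1)$. Its running time is $t' = t + \poly(s,n)$, since the circuit is evaluated by the universal machine.

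This nonuniform dependence on the circuit description is the main subtlety, but it costs only a polynomial. The learner never needs to know the sampler; it only needs the parameters $a', t'$, which it computes from $(n,s,t,a)$ and passes to the P.III algorithm.

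The running time is then $p(n,1/\varepsilon,\log(1/\delta), s, t', a')$, which is polynomial in $(n,1/\varepsilon,\log(1/\delta),s,t,a)$. The resulting learner does not even use membership queries, so it is a stronger conclusion than P.IV requires.
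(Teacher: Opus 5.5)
Your proposal is correct and takes the same route as the paper. You view the PAC instance as the agnostic instance $E_n=(x,f(x))$, note that $E_n$ lies in a polynomially larger feasible joint class, and use ${\sf opt}_n(\SIZE[s],E_n)=1$. Your explicit bounds on the sampler's description length and running time simply spell out the paper's $\mathcal{D}_{n+1}^{\poly(a,s),\poly(t,s)}$ bound.
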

\begin{proof}

    This implication is immediate from the definitions. More specifically, observe that PAC learning a target function $f \in \SIZE[s]$ over a distribution $D_n \in \mathcal{D}_n^{a,t}$ is equivalent to agnostically learning $\SIZE[s]$ over the distribution $E_n \in \mathcal{D}_{n + 1}^{\poly(a,s), \poly(t,s)}$ that samples $x \sim D_n$ and outputs $(x, f(x))$. Since 
    \[\opt_n(\SIZE[s], E) = 1,\]
    the definition of an agnostic learner $A$ guarantees that, with probability at least $1 - \delta$, $A(1^n, 1^{\varepsilon^{-1}}, \delta^{-1})$ with oracle access to $\EX(f, D_n) = \EX(E_n)$ outputs $h$  such that $\Pr_{x \sim D_n}[h(x) = f(x)] \geq 1 - \varepsilon$. This satisfies the requirement of a PAC learner.

\end{proof}

\begin{lemma}[P.IV $\implies$ P.V]\label{lem:worldIV-worldV}

    If Efficient PAC Learning is possible, then Efficient Heuristic Learning is possible.
    
\end{lemma}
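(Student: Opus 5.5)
The plan is to go through cryptography rather than simulate membership queries directly. A heuristic learner in P.V has no membership-query access, but its target $f$ is drawn from an efficiently samplable distribution $P$. So it suffices to show that P.IV rules out auxiliary-input cryptography, and then to exploit the non-existence of one-way functions to sample from the posterior over targets.

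\textbf{Step 1 (P.IV rules out AIPRFs).} Let $f=\{f_z\}$ be a candidate AIPRF. For a fixed key $u$, the function $f_z(u,\cdot)$ is computed by a circuit of size $s=\poly(|z|)$, with $z$ hardwired. Define a distinguisher $B^{O}(z)$ as follows.
\begin{itemize}
\item Run the P.IV learner with accuracy $\varepsilon=1/10$ under the uniform distribution on $\bool^{n(|z|)}$, which lies in $\mathcal{D}^{a,t}$ for small $a,t$.
\item Answer each example request by querying $O$ at a fresh uniform point, and answer each membership query with $O$ itself.
\item Evaluate the output hypothesis $h$ on $k=O(1)$ fresh uniform points $x$, and accept iff $h(x)=O(x)$ on at least a $3/4$ fraction of them.
\end{itemize}
If $O=f_z(u,\cdot)$, then $B$ accepts with probability at least $2/3$ for every large $z$. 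If $O=R$ is a random function, each fresh point is unqueried except with negligible probability, so $h(x)$ matches $R(x)$ with probability $1/2$ and $B$ accepts with probability at most $1/3$. This distinguishes for all large $z$, so no AIPRF exists. By \Cref{lem:ai-crypto}, no AIOWF exists, and hence no distAIOWF exists.

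\textbf{Step 2 (posterior sampling via distributional inversion).} Fix polynomials $s,t_1,t_2$ and samplers $P\in\SAMP[t_1]$, $D\in\SAMP[t_2]$, and let $m=\poly(n)$ be chosen below. Define
\[
F_{1^n}(r,\rho_1,\ldots,\rho_m)=(x_1,f(x_1),\ldots,x_m,f(x_m)),
\]
where $\sigma_f=P_n(r)$ and $x_i=D_n(\rho_i)$. This is a polynomial-time auxiliary-input function family, with $z=1^n$ as the auxiliary input. Since no distAIOWF exists, for every inverse polynomial $\gamma$ there is a PPT inverter $I$ such that, for all large $n$, the pair $(I(F(w)),F(w))$ is $\gamma$-close to $(w,F(w))$ for uniform $w$.

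The heuristic learner works as follows on a challenge $x'$.
\begin{itemize}
\item Draw $m$ examples from $\EX(f,D_n)$ and feed the resulting tuple to $I$.
\item Read off the coin string $r'$ from the output, set $h=P_n(r')$, and output $h(x')$.
\end{itemize}
Up to error $\gamma$, the pair (sample, $h$) has the same distribution as the pair (sample, $f$). In particular, $h$ is distributed as a posterior sample given the examples.

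\textbf{Step 3 (generalisation, the main obstacle).} Posterior sampling alone only yields the Bayes-type guarantee $\Exp[p^2+(1-p)^2]$, which need not be bounded away from $1/2$. We therefore need $h$ to agree with $f$ on fresh points, and the plan is an Occam argument over coin strings in the spirit of \Cref{thm:consistent_hypothesis_finder}.

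In the ideal experiment, $(f,h)$ is a pair of exchangeable draws that are both consistent with the $m$ examples. Condition on $f$. The sampled $h$ equals $P_n(r')$ for some $r'\in\bool^{t_1}$. For a fixed $r'$ with $\mathsf{err}_{D_n}(P_n(r'),f)>\varepsilon$, the probability that $P_n(r')$ is consistent with all $m$ examples is at most $(1-\varepsilon)^m$. A union bound over the $2^{t_1}$ coin strings then shows that $h$ is consistent but $\varepsilon$-bad with probability at most $2^{t_1}e^{-\varepsilon m}$. Taking $m=O(t_1/\varepsilon)$ makes this small.

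Consequently, in the real experiment the learner predicts $f(x')$ with probability at least
\[
1-\varepsilon-2^{t_1}e^{-\varepsilon m}-\gamma.
\]
Taking $\varepsilon$ and $\gamma$ to be small constants, or inverse polynomials, gives weak (indeed strong) predictability on average over $P$ and $D$. The learner runs in time $\poly(n,s,t_1,t_2)$, which establishes P.V.

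The delicate points are bookkeeping ones. First, the inverter's guarantee holds only for large auxiliary inputs. This matches the requirement in \Cref{def:bfkl} of success for all sufficiently large $n$, because the absence of AIOWFs provides an inverter that works for all large $z$. Second, the statistical error $\gamma$ transfers to the joint distribution of (examples, $h$, $x'$), because $x'$ is drawn independently of everything the inverter sees.
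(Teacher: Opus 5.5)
Your argument is correct, but after the first step it follows a different route from the paper's. Your Step~1 matches the paper's first step; it is essentially \Cref{lem:pac-break-prf}. From there the paper stays in the distinguisher-to-predictor paradigm. Having ruled out AIOWFs, and hence PRGs, it breaks the direct-product generator $(r_P,x_1,\ldots,x_m)\mapsto(x_1,\ldots,x_m,f(x_1),\ldots,f(x_m))$ and applies Yao's hybrid argument (\Cref{sec:yao}) to get a next-bit predictor, which it reads as a weak learner, following Blum, Furst, Kearns, and Lipton. You instead distributionally invert the map from (target coins, example coins) to the labelled sample, so that $h$ is an approximate posterior sample. You then bound its error with the Occam union bound over the at most $2^{t_1}$ coin strings of $P_n$, as in \Cref{thm:consistent_hypothesis_finder}. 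This is essentially the inversion-based argument that the paper only cites in a footnote for non-uniform example distributions. Your route buys three things: it treats every $D\in\SAMP[t_2]$ uniformly (the paper's exposition of the generator route covers only uniform examples), it gives strong rather than merely weak predictability with no amplification step, and it makes the sample size explicit, $m=O(t_1/\varepsilon)$. The paper's route has its own advantages. Its final step needs only a distinguisher rather than a distributional inverter, and it reuses the generator and hybrid machinery that drives the rest of \Cref{sec:learn-via-dist}. Both routes use the code of $P$, and yours also uses the code of $D$; this is allowed because in P.V the learner may depend on $P$ and $D$. One step should be written out. The event $h(x')\neq f(x')$ depends on the triple $(f,\text{sample},h)$, whereas distributional inversion only controls the pair $(\text{sample},h)$. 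The transfer is valid because the inverter sees only the sample, so $f$ and $h$ are conditionally independent given the sample in both the real and the ideal experiment. Hence the statistical distance between the two triples equals the distance between the two pairs, which is at most $\gamma$, and $x'$ is independent of everything else.
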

\begin{proof}[Proof Sketch.] Note that, under our definitions, the PAC learning algorithm (P.IV) is allowed to make membership queries, while the heuristic learner (P.V) is not. This means in particular that we cannot simply simulate the PAC learner when designing the heuristic learner from random examples.

    The first step is to show that P.IV rules out auxiliary-input pseudorandom functions. Given a candidate AIPRF $F_z$, fix an auxiliary string $z$ and a random key $u$, and view $g(x)=F_z(u,x)$ as a polynomial size circuit. The PAC learner can be run under the uniform distribution, with both its random examples and its membership queries answered by the oracle being tested. If the oracle is $F_z(u,\cdot)$, the learner outputs a hypothesis with high agreement with the oracle on a fresh uniform input. If the oracle is a truly random function, then after only polynomially many queries its value on a fresh input is independent of the learner's hypothesis except with negligible collision probability. Testing agreement on fresh inputs therefore distinguishes the AIPRF from a random function, contradicting security. (We refer the reader to \Cref{lem:heur-break-prf} for a more detailed argument.)

    In turn, by \Cref{lem:ai-crypto}, the failure of AIPRFs gives the failure of AIOWFs, and in particular rules out infinitely-often one-way functions and pseudorandom generators. We can then apply the framework  of Blum, Furst, Kearns, and Lipton \cite{DBLP:conf/crypto/BlumFKL93}, discussed in \Cref{sec:learn-via-dist}. Their argument starts from the fact that every candidate generator can be distinguished at every large enough input length, applies a hybrid argument to obtain a next-bit predictor for the direct-product generator associated with a random target function, and interprets that predictor as a learner from random examples. We refer to \Cref{sec:learn-via-dist} for more details.

    Consequently, for every feasible distribution $P$ over polynomial-size target circuits and every feasible example distribution $D$, there is an efficient learner that predicts $f(y)$ with advantage $1/\poly(n)$, where $f \sim P$, the learner receives examples from $\EX(f,D)$, and $y \sim D$. This is exactly weak heuristic learning as in P.V.
\end{proof}

\subsection{Impagliazzo's  Worlds and Meta-Complexity}\label{sec:Imp_worlds}

In an influential paper, Impagliazzo \citep{DBLP:conf/coco/Impagliazzo95} introduced five potential computational worlds connected to the complexity of $\NP$ problems and the possibility of secure cryptography. The point was not only to ask whether $\P=\NP$, but to distinguish several possible realities according to whether hard problems remain hard on typical inputs and whether such hardness can support secure cryptography. The two relevant worlds in the context of our discussion are Heuristica and Pessiland. 

In Heuristica, $\NP$ is hard in the worst case, but this hardness disappears on average: every $\NP$ problem has an efficient algorithm that is correct on almost all inputs. In the notation of the diagram in \Cref{fig:learning-paradises-impagliazzo}, this corresponds to a world where $\P \neq \NP$ but $\Dist\NP \subseteq \Avg\P$. 

In Pessiland, hardness exists on average, but the hard problems are not structured enough to yield secure cryptography. The relevant feature of Pessiland is the failure of one-way functions, equivalently the existence of efficient inverters for every candidate one-way function.

We refer to \citep{DBLP:journals/fttcs/BogdanovT06} for a detailed discussion of Heuristica, Pessiland, and their connections to cryptography and average-case complexity theory.

The learning paradises from \Cref{sec:paradises} sit between these two worlds, as summarized in \Cref{fig:learning-paradises-impagliazzo}. The left side of the diagram starts from the average-case easiness of $\NP$ promised by Heuristica and passes to meta-complexity via ``$\GapMINKT \in \P$'', i.e., the worst-case easiness of estimating time-bounded Kolmogorov complexity.\footnote{Note that this is a nontrivial result: from the assumption that every problem in $\NP$ is easy on average, we derive that a specific $\NP$ problem from meta-complexity is easy in the worst case. This worst-case easiness seems particularly useful when applying the assumption to construct learning algorithms that must work on every input concept.} This leads to extrapolation and then, by the reductions in \Cref{sec:paradises}, to progressively weaker learning paradises. The right side records that the weakest of these paradises is already tied to the nonexistence of one-way functions and to average-case algorithms for estimating time-bounded Kolmogorov complexity. Thus the feasibility of different learning tasks is closely tied to meta-complexity, i.e., the  complexity of estimating time-bounded Kolmogorov complexity. 

\begin{figure}[!ht]
\centering
\begin{tikzpicture}[
    every node/.style={align=center},
    statement/.style={
        draw,
        rounded corners,
        inner sep=5pt,
        text width=3.3cm,
        minimum height=1.05cm
    },
    worlddef/.style={
        statement,
        fill=gray!15
    },
    heading/.style={font=\bfseries},
    arrow/.style={font=\large},
    arrowlabel/.style={
        font=\scriptsize,
        fill=white,
        inner sep=1.5pt
    }
]
\newcommand{\longimp}{\scalebox{1.05}{$\Longrightarrow$}}
\newcommand{\longiff}{\scalebox{1.05}{$\Longleftrightarrow$}}
\newcommand{\downimp}{\rotatebox[origin=c]{270}{\longimp}}
\newcommand{\downiff}{\rotatebox[origin=c]{270}{\longiff}}
\node[heading] (heuristica) at (0,0) {Heuristica};
\node[worlddef] (distnp) at (0,-1.35) {$\Dist\NP \subseteq \Avg\P$};
\node[statement] (gapP) at (0,-3.35) {``$\GapMINKT \in \P$''};
\node[statement] (p1) at (4.9,-3.35) {P.I (Extrapolation)};
\node[statement] (p2) at (4.9,-5.35) {P.II (Distribution)};
\node[statement] (p3) at (4.9,-7.35) {P.III (Agnostic)};
\node[statement] (p4) at (4.9,-9.35) {P.IV (PAC)};
\node[statement] (p5) at (4.9,-11.35) {P.V (Heuristic)};
\node[heading] (pessiland) at (9.8,0) {Pessiland};
\node[worlddef] (nowf) at (9.8,-11.35) {$\mathsf{OWF}$s do not exist};
\node[statement] (gapHeurP) at (9.8,-13.35) {``$\GapMINKT \in \Heur\P$''};
\draw[dotted] (heuristica) -- (distnp);
\draw[dotted] (pessiland) -- (nowf);
\node[arrow] at (0,-2.35) {\downimp};
\node[arrowlabel, anchor=west] at (-1.3,-2.35) {\cite{DBLP:conf/coco/Hirahara20}};
\node[arrow] at (2.45,-3.35) {\longimp};
\node[arrowlabel] at (2.45,-2.40) {
    \begin{tabular}{c}
        \cite{DBLP:journals/eccc/HN25}\\[-0.1em]
        \Cref{sec:solomonoff}
    \end{tabular}
};
\node[arrow] at (4.9,-4.35) {\downimp};
\node[arrowlabel, anchor=west] at (5.24,-4.35) {\Cref{lem:worldI-worldII}};
\node[arrow] at (4.9,-6.35) {\downimp};
\node[arrowlabel, anchor=west] at (5.24,-6.35) {\Cref{lem:worldII-worldIII} \cite{DBLP:conf/colt/Xiao10}};
\node[arrow] at (4.9,-8.35) {\downimp};
\node[arrowlabel, anchor=west] at (5.24,-8.35) {\Cref{lem:worldIII-worldIV}};
\node[arrow] at (4.9,-10.35) {\downimp};
\node[arrowlabel, anchor=west] at (5.24,-10.35) {\Cref{lem:worldIV-worldV}};
\node[arrow] at (7.35,-11.35) {\longiff};
\node[arrowlabel] at (7.35,-12.32) {
    \begin{tabular}{c}
        \cite{DBLP:conf/crypto/BlumFKL93}\\[-0.1em]
        \Cref{lem:heur-break-prf} / \Cref{sec:learn-via-dist}
    \end{tabular}
};
\node[arrow] at (9.8,-12.35) {\downiff};
\node[arrowlabel, anchor=west] at (10.14,-12.35) {\cite{DBLP:conf/focs/LiuP20}};
\end{tikzpicture}
\caption{\small{Learning paradises P.I -- P.V (\Cref{sec:paradises}) between Impagliazzo's worlds \citep{DBLP:conf/coco/Impagliazzo95} Heuristica and Pessiland, including their relationship to compression and meta-complexity via the computational task of estimating time-bounded Kolmogorov complexity.}}
\label{fig:learning-paradises-impagliazzo}
\end{figure}

Another consequence is that secure cryptography excludes all five learning paradises. Indeed, P.V rules out one-way functions by \Cref{lem:heur-break-prf}. Consequently, in a cryptographic world, unrestricted Boolean circuits cannot be provably learned efficiently even with the weak guarantees of P.V: any such theorem would also refute standard cryptographic assumptions.\footnote{This is not in apparent contradiction to the astonishing success of machine learning algorithms in practice. Recall that the learning guarantees here concern unrestricted Boolean circuits. Learning algorithms deployed in practice are not known to satisfy such guarantees, and they are not known to break standard cryptographic assumptions.}

A precise characterization of the learning power associated with both Heuristica and Pessiland is possible, although in a different learning setting \citep{HN26}. It would be very interesting to obtain an equally precise understanding of how the learning paradises P.I through P.V relate to standard complexity and cryptographic statements.

While the preceding discussion has focused on the relationship between learning and compression, and on its connections to cryptography and average-case complexity, the broader role of compression problems in complexity theory remains far from fully understood. A central open question is whether efficient algorithms for estimating time-bounded Kolmogorov complexity would yield
efficient algorithms for $\NP$-complete problems. This is open even in an average-case sense: we do not know whether $\GapMINKT \in \P$ implies that $\NP$ is easy on average. Informally, such a result would say that the ability to perform compression, or to recognize strings admitting short efficient descriptions, is powerful enough to solve broad classes of computational search and optimization tasks. From the perspective of learning, this would elevate the role of compression beyond efficient prediction: it would show in a certain sense that compression also underlies efficient planning and decision-making. For recent papers on the computational complexity of  compression, see \citep{DBLP:conf/focs/Hirahara22, HLO26} and references therein.

\begin{problem}[Computational Complexity of Compression]\label{prob:complexity_compression}
 What is the hardness of the task of computing/estimating the time-bounded Kolmogorov complexity of an input string? Formally, the following central problems from meta-complexity remain open:
 \begin{itemize}
     \item Does $\GapMINKT \in \P$ imply that $\Dist\NP \subseteq \Avg\P$? In other words, does the easiness of estimating complexity imply that $\NP$ is easy on average?
     \item Does $\mathsf{MINKT} \in \P$ imply that $\NP \subseteq \P$? In other words, does the easiness of (exactly) computing complexity imply that $\NP$ is easy in the worst case?
 \end{itemize}
\end{problem}

\subsection{Auxiliary Results}

We collect here some lemmas needed in results mentioned in the preceding sections.

\begin{lemma}\label{lem:heur-break-prf}

    If Efficient Heuristic Learning is possible, then ioPRFs do not exist. It follows that (infinitely-often) one-way functions and pseudorandom generators do not exist.
    
\end{lemma}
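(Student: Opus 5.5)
Given a candidate ioPRF $f=\{f:\bool^n\times\bool^n\to\bool\}$, I will build a single PPT distinguisher with noticeable advantage at \emph{every} sufficiently large $n$, which refutes the ioPRF property. Take $P_n$ to be the distribution that picks $u\sim\bool^n$ and outputs the circuit $\sigma_u$ computing $f(u,\cdot)$; since $f$ is polynomial-time computable, $\sigma_u$ has size $s(n)=\poly(n)$ and $P\in\SAMP[t_1]$ for a polynomial $t_1$. Take $D_n=\mathcal{U}_n$, so $D\in\SAMP[t_2]$. P.V then gives a polynomial-time algorithm $A$ and a polynomial $q$ such that, for all large $n$,
\[
\Pr_{u,\,\EX(f(u,\cdot),\mathcal{U}_n),\,x'\sim\mathcal{U}_n}\bigl[A^{\EX}(x')=f(u,x')\bigr]\ \ge\ \tfrac12+\tfrac1{q(n)}.
\]

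The distinguisher $B^{\mathcal{O}}(1^n)$ has oracle access to $\mathcal{O}$, which is either $f(u,\cdot)$ or a random function $R_n$. It runs $A$, answering each example request by drawing a fresh uniform $x$, querying $\mathcal{O}(x)$, and returning $(x,\mathcal{O}(x))$. It then draws a fresh challenge $x'\sim\mathcal{U}_n$, queries $b=\mathcal{O}(x')$, and outputs $1$ iff $A(x')=b$. In the pseudorandom case this exactly simulates the heuristic-learning experiment, so $B$ accepts with probability at least $1/2+1/q(n)$. In the random case, let $T\le\poly(n)$ bound the running time of $A$, and hence the number of examples it uses. Unless $x'$ coincides with one of these at most $T$ example points, which happens with probability at most $T/2^n$, the value $R_n(x')$ is a uniform bit independent of $A$'s entire view. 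Therefore $B$ accepts with probability at most $1/2+T/2^n$.

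The advantage is thus at least $1/q(n)-T/2^n\ge 1/(2q(n))$ for all large $n$. This violates the ioPRF requirement with $\delta=1/(2q)$, which demands that $B$'s advantage be below $\delta(n)$ for infinitely many $n$. I expect the only delicate point to be this quantifier bookkeeping. P.V is applied to one fixed pair of distribution families $P,D$, so it yields success at all sufficiently large $n$, and that is exactly what is needed to beat "infinitely often" security. One must also check that the polynomials $s,t_1,t_2$ depend only on $f$, so that P.V applies.

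For the consequences, I will invoke the standard constructions PRG $\Rightarrow$ PRF (GGM) and OWF $\Rightarrow$ PRG (HILL), which in the paper are recorded in their auxiliary-input forms in \Cref{lem:ai-crypto}. These reductions are uniform and black-box, and they preserve security length by length up to polynomial rescaling of input lengths, since the breaker at length $n$ calls the primitive at a polynomially related length. As a result, an infinitely-often OWF or PRG would yield an ioPRF, which has just been ruled out.
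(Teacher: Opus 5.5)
Your proposal is correct and is essentially the paper's proof: the same choice of $P_n$ (a uniformly random key $u$ hardwired into $f(u,\cdot)$) with $D_n=\mathcal{U}_n$, the same distinguisher comparing $A$'s prediction on a fresh $x'$ with $\mathcal{O}(x')$, the same collision argument in the random-function case, and the same appeal to the standard reductions for OWFs and PRGs. Your explicit quantifier bookkeeping (P.V gives advantage at least $1/(2q(n))$ at every large $n$, which defeats infinitely-often security) makes precise a point the paper leaves implicit; the one cosmetic fix is that $A$ takes $x'$ as its input, so $B$ should sample $x'$ before running $A$, which does not affect your bound because the example points are drawn independently of $x'$.
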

\begin{proof}

    Let $f = \{f: \bool^{n} \times \bool^{n} \to \bool\}_{n \in \nat}$ be an ioPRF, and let $n \in \nat$ be a sufficiently large input length at which it is secure. Define a distribution $P_n$ over target functions as follows: sample $u \sim \bool^n$ uniformly at random and output a description of the circuit $f_u \colon \bool^n \to \bool$ defined by $f_u(x) = f(u, x)$ (with $u$ hardcoded). Since $u$ is sampled uniformly and $f$ is polynomial-time computable, $P_n \in \P\SAMP$. Let $D_n = \mathcal{U}_n \in \P\SAMP$.

    By assumption, Boolean circuits are heuristically learnable over $\P\SAMP$. Hence there exists an efficient 
    machine $A$ and a polynomial $q'$ such that for all sufficiently large $n$,
    \begin{align}\label{eq:heur-setting}
        \Pr_{\substack{u \sim \mathcal{U}_n;\\ 
        \EX(f_u, \mathcal{U}_n);\\ 
        x' \sim \mathcal{U}_n}}
        \!\left[A^{\EX(f_u,\,\mathcal{U}_n)}(x') = f_u(x')\right] 
        \;\geq\; \frac{1}{2} + \frac{1}{q'(n)}.
    \end{align}
    We now construct an efficient distinguisher $D$ for $f$ as follows.
    \begin{quote}
        Given oracle access to some function $\mathcal{O}$, which is either $f(u,\cdot)$ for a random $u \sim \bool^n$ or a truly random function $R_n \colon \bool^n \to \bool$, $D$ draws $x' \sim \mathcal{U}_n$, queries $\mathcal{O}$ to obtain $b$, and outputs $1$ if and only if $A^{\EX(\mathcal{O}, \mathcal{U}_n)}(x') = b$, where the oracle $\EX(\mathcal{O}, \mathcal{U}_n)$ may be simulated by sampling $x \sim \mathcal{U}_n$ and querying $\mathcal{O}(x)$ to get the label.
    \end{quote}
    When the oracle $\mathcal{O}$ is $f(u, \cdot)$ for $u \sim \bool^n$, the learner $A$ is run in a setting exactly matching the setting of \Cref{eq:heur-setting}, so $D$ outputs $1$ with probability at least 
    $(1/2) + (1/q'(n))$.

    When the oracle $\mathcal{O}$ is $R_n$, $b = R_n(x')$ is uniformly random and independent of the output of $A$ on any input not previously inspected by $A$. Since $A$ inspects at most polynomially many inputs, $D$ outputs $1$ with probability exponentially close to $1/2$.

    Therefore,
    \[
        \left|\Pr_{u \sim \bool^n}\!\left[D^{f(u,\cdot)}(1^n) = 1\right] 
        - \Pr_{R_n}\!\left[D^{R_n(\cdot)}(1^n) = 1\right]\right| 
        \;=\; \Omega\! \left( \frac{1}{q'(n)} \right ),
    \]
    contradicting the assumption that $f$ is an ioPRF. Thus, ioPRFs do not exist. This implies that standard PRFs likewise do not exist. The non-existence of both infinitely-often and standard OWFs and PRGs follows by well-known reductions among these primitives \cite{DBLP:journals/siamcomp/HastadILL99, DBLP:journals/jacm/GoldreichGM86}.
\end{proof}

Using a very similar argument, we can rule out \emph{auxiliary input} PRFs assuming easiness of PAC learning, even when membership queries are allowed.
\begin{lemma}\label{lem:pac-break-prf}

    If Efficient PAC Learning is possible, then AIPRFs do not exist.
        
\end{lemma}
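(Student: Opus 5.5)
We argue by contradiction, closely following the proof of \Cref{lem:heur-break-prf}. Suppose P.IV holds, and let $f=\{f_z\colon \bool^{n(|z|)}\times\bool^{n(|z|)}\to\bool\}_{z\in\bool^*}$ be a candidate AIPRF. We will build a single PPT distinguisher $D$ that, for every sufficiently long auxiliary input $z$, distinguishes $f_z(u,\cdot)$ from a random function $R$ with advantage at least $1/4$. This contradicts the requirement that some such $z$ exist for every inverse polynomial $\delta$ (infinitely many, in fact).

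\textbf{The target function.} Fix $z$ and a key $u$, and set $g_{z,u}(x)=f_z(u,x)$. Since $f$ is computed by a polynomial-time machine $F(z,\cdot)$, the function $g_{z,u}$ on $n=n(|z|)$ inputs has a circuit of size $s=\poly(|z|)$, with $z$ and $u$ hardwired. The example distribution is $\mathcal{U}_n$, which lies in $\mathcal{D}^{a,t}_n$ for $a=O(1)$ (or $O(\log n)$) and $t=\poly(n)$. By P.IV there is a learner $A$ running in time $p(n,1/\varepsilon,\log(1/\delta),s,t,a)$ that PAC learns $\SIZE[s]$ over $\mathcal{D}^{a,t}$ with membership queries. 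We instantiate it with $\varepsilon=\delta=1/8$; the resulting running time is $\poly(|z|)$.

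\textbf{The distinguisher.} Given $z$ and oracle $\mathcal{O}$, $D$ runs $A(1^n,1^8,8)$ with parameters $s,t,a$. It answers each membership query $x$ with $\mathcal{O}(x)$, and simulates each call to $\EX(\mathcal{O},\mathcal{U}_n)$ by drawing $x\sim\mathcal{U}_n$ and returning $(x,\mathcal{O}(x))$. When $A$ outputs $h$, the distinguisher samples a fresh $x'\sim\mathcal{U}_n$ and outputs $1$ iff $h(x')=\mathcal{O}(x')$. Since $A$ runs in polynomial time, $D$ is PPT.

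\textbf{Analysis.}
\begin{itemize}
\item \emph{Pseudorandom oracle.} If $\mathcal{O}=g_{z,u}$, the simulation is exact. Because the PAC guarantee holds for every target in $\SIZE[s]$, it holds in particular for each fixed $u$. So with probability at least $7/8$ we have $\Pr_{x'}[h(x')=g_{z,u}(x')]\ge 7/8$, and $D$ outputs $1$ with probability at least $3/4$.
\item \emph{Random oracle.} If $\mathcal{O}=R$, the learner has seen at most $\poly(n)$ oracle points. With probability $1-\poly(n)2^{-n}$ the fresh point $x'$ is not among them. Conditioned on that, $R(x')$ is a uniform bit independent of $h$. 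Hence $D$ outputs $1$ with probability at most $1/2+\mathrm{negl}(n)$.
\end{itemize}
The gap is at least $1/4-\mathrm{negl}$ for every sufficiently long $z$ with $n(|z|)$ large, so $f$ is not an AIPRF. We may assume $n(|z|)\ge |z|^{\Omega(1)}$, as is standard; otherwise the key space is tiny and the negligible terms are handled by padding.

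\textbf{Main obstacle.} The key point is that the learner's guarantee must hold \emph{at every} large input length, for every target and every feasible distribution. This is what lets us defeat the auxiliary-input (``for infinitely many $z$'') notion of security: the learner cannot fail on the adversarially chosen $z$. The subtle step is checking that the parameters $s,t,a$ are polynomial in $|z|$ uniformly over all $z$. This holds because $F$ is a single fixed machine, so the circuit for $g_{z,u}$ is obtained uniformly from $(z,u)$ with fixed polynomial size. We also need $D$ to be able to run $A$ at all: $D$ needs only the numerical parameters, not the circuit description of $g_{z,u}$, since $A$ is black-box in the target.
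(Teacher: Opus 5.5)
Your proposal is correct and is essentially the paper's argument. Like the paper, you rerun the distinguisher of \Cref{lem:heur-break-prf}, using that $f_z(u,\cdot)\in\SIZE[p]$ for one fixed polynomial $p$ uniformly over all $z,u$, so the worst-case PAC guarantee covers every key, and you answer the learner's membership queries with $\mathcal{O}$.

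Your concrete parameters and fresh-point test fill in details the paper's sketch omits. The one loose end is the regime $n(|z|)=O(\log|z|)$, where the learner could query the whole domain; it is more cleanly handled by brute-forcing the $n(|z|)$-bit key than by ``padding'', though the paper ignores this case as well.
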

\begin{proof}[Proof Sketch.]

    The proof is very similar to that of \Cref{lem:heur-break-prf}. Let $f_z: \bool^{n(|z|)} \times \bool^{n(|z|)} \to \bool$ be a purported AIPRF. The key idea is that, for \emph{every} choice of $z \in \bool^{*}$ and $u \in \bool^{n(|z|)}$, $f_z(u, -) \in \SIZE[p]$ for some fixed polynomial $p$. We are guaranteed that the learner $A$ will succeed (with high probability) on every such target function. Another difference is that $A$ is allowed to make membership queries to the target function, which the distinguisher $D$ may simulate by querying $\mathcal{O}$. 
\end{proof}

Finally, the same consequence holds assuming easiness of distributional learning. This lemma is a key ingredient in the proof that distribution learning is at least as strong as agnostic learning.

\begin{lemma}[\cite{kearns1994learnability, DBLP:conf/colt/Xiao10}]\label{lem:dist-break-prf}
    If Efficient Distribution Learning is possible, then AIPRFs do not exist. 
\end{lemma}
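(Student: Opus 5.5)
We argue by contradiction, following the template of \Cref{lem:heur-break-prf} and \Cref{lem:pac-break-prf}. Suppose $f_z \colon \bool^{n(|z|)} \times \bool^{n(|z|)} \to \bool$ is an AIPRF, and let $A$ be the distribution learner given by P.II. For a fixed $z$ and key $u$, consider the distribution $E_{z,u}$ over $\bool^{n} \times \bool$ (with $n = n(|z|)$) that samples $x \sim \mathcal{U}_n$ and outputs $(x, f_z(u,x))$. The point is that $E_{z,u} \in \mathcal{D}^{s,t}_{n+1}$ for some fixed polynomials $s,t$ in $|z|$: a sampler with $z$ and $u$ hardwired has description length $O(|z|+n)$ and runs in polynomial time, since $f_z$ is polynomial-time computable. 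Hence $A$ must $\varepsilon$-learn $E_{z,u}$ for \emph{every} $z$ and $u$. This worst-case guarantee is exactly what lets us handle the auxiliary input.

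The distinguisher $D^{\mathcal{O}}(z)$ works as follows. It simulates samples $(x_j, \mathcal{O}(x_j))$ with $x_j \sim \mathcal{U}_n$ and runs $A$ with accuracy $\varepsilon = 1/10$ to obtain a sampler $C$ for pairs $(x,b)$. It then tests whether $C$ has learned the labels. One tempting test is to draw $(x,b) \sim C$ and check $\mathcal{O}(x)=b$, but in the pseudorandom case this only gives agreement close to the typical fraction of the $x$-marginal, which is fine. The remaining danger is that in the random case $C$ might concentrate on the $x$'s that appeared in the training sample and so pass the test. To rule this out, $D$ also checks that the $x$-marginal of $C$ is close to uniform, for instance by estimating the collision probability, or simply by rejecting whenever the sampled $x$ lies among the $m$ training points. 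Concretely, $D$ outputs $1$ iff the sampled $x$ is not among the training inputs and $\mathcal{O}(x) = b$.

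\textbf{Pseudorandom oracle.} If $\mathcal{O} = f_z(u,\cdot)$, then with probability $1-\delta$ we have $\Delta_{\mathsf{TV}}(C, E_{z,u}) \le \varepsilon$. Under $E_{z,u}$ the test passes except when $x$ hits the polynomially many training points, which happens with probability at most $m 2^{-n}$. So $D$ accepts with probability at least $1-\varepsilon-\delta-m2^{-n} \ge 0.8$.

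\textbf{Random oracle.} If $\mathcal{O} = R$, then $C$ depends only on the $m$ training pairs and $A$'s coins. For any $x$ outside the training set, $R(x)$ is a uniform bit independent of $C$'s output, so $D$ accepts with probability at most $1/2$. Moreover, $D$ makes only polynomially many oracle queries.

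This gives a constant distinguishing gap for every large $z$, contradicting AIPRF security, which needs only infinitely many secure $z$.

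The main obstacle I expect is the random-oracle case. The output $C$ is an arbitrary sampler, possibly correlated with the training data, so one has to argue carefully that the independence of $R$ on unseen points survives after conditioning on $A$'s view. This is why the test discards sampled $x$'s that collide with the training inputs. A second technicality is keeping the parameters $(n,s,t)$ for $E_{z,u}$ polynomial in $|z|$ uniformly over all $z$, so that $A$ runs in polynomial time and $D$ is PPT.
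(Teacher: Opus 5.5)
Your proposal is correct and follows essentially the same route as the paper. You run the distribution learner on the graph distribution $(X,\mathcal{O}(X))$, which is efficiently samplable with $(z,u)$ hardwired, and then accept iff a fresh sample $(x',b')$ from the learned sampler has $x'$ outside the training inputs and $b'=\mathcal{O}(x')$; your bounds (at least $1-\varepsilon-\delta-m2^{-n}$ in the pseudorandom case, at most $1/2$ in the random case) match the paper's sketch, with slightly more explicit constants.
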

\begin{proof}[Proof Sketch.]
    Let $f=\{f_z\}$ be a purported AIPRF, and let $A$ be the distribution learner promised by P.II. Fix $z \in \bool^*$, write $n=n(|z|)$, and for an oracle $\mathcal{O}\colon \bool^n \to \bool$ define the graph distribution
    $$
        G_{\mathcal{O}} \equiv (X,\mathcal{O}(X)) \quad \text{where } X \sim \mathcal{U}_n .
    $$
    If $\mathcal{O}=f_z(u,\cdot)$, then $G_{\mathcal{O}}$ is sampled by the efficient sampler that has $(z,u)$ hardwired, chooses $X \sim \mathcal{U}_n$, and outputs $(X,f_z(u,X))$. Thus, for every $z$ and $u$, P.II applies to $G_{f_z(u,\cdot)}$ with polynomially related size and time parameters.

    We construct an oracle distinguisher $D$. Given $z$ and oracle access to $\mathcal{O}$, it simulates sample access to $G_{\mathcal{O}}$ by drawing $X \sim \mathcal{U}_n$ and querying $\mathcal{O}(X)$. It runs $A$ with a small constant accuracy parameter $\eta$ and confidence parameter $\eta$, obtaining a sampler $C$ over pairs $(x,b) \in \bool^n \times \bool$. It then validates $C$ as follows. Letting $Q$ be the set of inputs $x \in \bool^n$ queried to $\mathcal{O}$ during the training phase, take one fresh sample $(x', b') \sim C$, and accept if and only if $x' \notin Q$ and $b' = \mathcal{O}(x')$.

    On the pseudorandom side, $A$ is being run on samples from a distribution in the class handled by P.II. Hence, with probability at least $1-\eta$, the output sampler $C$ satisfies $\Delta_{\mathsf{TV}}(C(\mathcal{U}),G_{f_z(u,\cdot)}) \leq \eta$. Total variation closeness implies that $C$ is supported almost entirely on the graph of $f_z(u,\cdot)$ (so $b' = \mathcal{O}(x')$ with high probability) and that its first marginal is close to $\mathcal{U}_n$ (so $x' \notin Q$ with high probability). Therefore the validation test accepts with probability $1-O(\eta)$.

    On the random side, let $\mathcal{O}=R$ be a uniformly random function. After conditioning on the polynomially many graph points revealed while simulating the samples for $A$, the values of $R$ at fresh points are still independent fair bits. The ``$x' \notin Q$'' test rules out the only simple way for the sampler $C$ to pass the graph consistency test, namely putting noticeable mass on the already revealed inputs. Since $C$ depends only on the coins of $A$ and $R|_Q$, $R(x')$ is uniform and independent of $C$ for $x' \notin Q$. Consequently, the output of any polynomial-time learner passes the ``$b' = \mathcal{O}(x')$'' test with probability bounded away from the pseudorandom case, after fixing $\eta$ sufficiently small.

    Thus $D$ distinguishes $f_z(u,\cdot)$ for $u \sim \mathcal{U}_n$ from a truly random function with inverse polynomial advantage for all sufficiently large relevant input lengths. This contradicts the auxiliary input pseudorandomness condition. Therefore AIPRFs do not exist.
\end{proof}

\subsection{White-Box Learning and One-Way Functions}\label{sec:wb}

In this section, we explore connections between  \emph{white-box learning}, in which a labelled example distribution is given explicitly as a sampling circuit, and the existence of (auxiliary-input) one-way functions. These connections are useful as tools to establish the ordering among the learning paradises of \Cref{sec:paradises}, particularly that P.II (Efficient Distribution Learning) implies P.III (Efficient Agnostic Learning), and may also be of interest in their own right.

\subsubsection{White-Box Learning Frameworks}

The model of white-box learning we consider is the same as Agnostic Learning (P.III), except that the learner is additionally provided an explicit description of a sampler for the example distribution in question. For this reason, we restrict ourselves here to distribution families $E = \{E_n\}_{n \in \nat}$ samplable by polynomial-size circuits.

More specifically, consider a joint labelled example distribution $E_n \in \P\SAMP/\poly$ supported over $\bool^n \times \bool$, where for $(x, b) \sim E_n$, $x \in \bool^n$ is called the ``example string'' and $b \in \bool$ is called the ``label''. We let $E^{(0)}_n$ denote the marginal distribution over example strings and $E_n^{(1)}$ that over the labels. 

As mentioned above, the learner is provided a sampling circuit $S_n$ for $E_n$. Of course, this could potentially make the task of learning much easier. However, note that the learning task is still non-trivial: even though we can easily sample input/output pairs for the function in question, it may still be hard to compute the function on a desired input.

As in the setting of agnostic learning, we consider optimality with respect to a fixed concept class, $\mathcal{C}$. That is, define
\[{\sf opt}_n(\mathcal{C}, E) \eqdef \max_{c \in \mathcal{C}_n}\left\{\Pr_{(x, b) \sim E_n}[c(x) = b]\right\}.\]

Since a white-box learner has the sampler $S_n$ available, it can simulate this circuit on random inputs to obtain labelled examples distributed according to $E_n$. For this reason, it does not require oracle access to $\EX(E_n)$.

White-box learning is defined as follows.

\begin{definition}[White-Box Learning \cite{DBLP:conf/focs/ApplebaumBX08, xiao2009new}]\label{def:wb-learn}

    Let $\mathcal{E} \subseteq \P\SAMP/\poly$ be a class of joint distributions. An algorithm $A$ \emph{white-box learns} $\mathcal{C}$ over $\mathcal{E}$ if, for every $E = \{E_n\}_{n \in \nat} \in \mathcal{E}$ and sequence $S = \{S_n\}_{n \in \nat}$ with $S_n$ being a sampling circuit for $E_n$, for all sufficiently large $n \in \nat$, for all $\delta, \varepsilon \in (0,1)$,
    \[\Pr_{A}\left [ \; A(1^n, 1^{\varepsilon^{-1}}, \delta^{-1}, S_n) \text{ outputs } h \text{ such that } \Pr_{(x, b)\sim E_n}[h(x) = b] \geq {\sf opt}_n(\mathcal{C}, E) - \varepsilon \; \right] \geq 1 - \delta.\]
    
\end{definition}

We will primarily be interested in White-Box Learning for the following choices of concept class $\mathcal{C}$: $\mathcal{C} = \SIZE[s]$ for a fixed function $s: \nat \to \nat$, and $\mathcal{C} = \mathcal{F}_{all}$, where $(\mathcal{F}_{all})_n$ denotes the set of all boolean functions $f: \bool^n \to \bool$. In particular, in the latter case, the learner aims to approximate the Bayes-optimal predictor for $\mathcal{E}$.

\subsubsection{White-Box Learning and Breaking Auxiliary-Input One-Way Functions}\label{sec:white_owf}

In this section, we establish the following:
$$
\neg \exists \text{AIOWF}  \;\; \iff \;\; \text{White-Box Learning $\mathcal{F}_{all}$}.
$$
\vspace{-0.6cm}

The implication from $\neg \exists \text{AIOWF}$ to the easiness of White-Box Learning $\mathcal{F}_{all}$ was proved in \cite{xiao2009new} and implicit in \cite{DBLP:conf/focs/ApplebaumBX08}.

Also note that, since White-Box Learning $\mathcal{F}_{all}$ implies White-Box Learning $\mathcal{C}$ for any desired concept class $\mathcal{C} \subseteq \mathcal{F}_{all}$, the above implies that 
$$
\neg \exists \text{AIOWF}  \;\; \implies \;\; \text{White-Box Learning $\SIZE[s]$}
$$
for any $s: \nat \to \nat$. The latter may be used in a proof that P.II (Efficient Distribution Learning) $\implies$ P.III (Efficient Agnostic Learning), as in \Cref{lem:worldII-worldIII}.

\cite{xiao2009new} leaves as an open question whether White-Box Learning $\SIZE[s]$, for a fixed polynomial $s$, implies that AIOWFs can be broken. Note that a positive answer would show the equivalence of White-Box Learning with respect to $\SIZE[s]$ and $\mathcal{F}_{all}$ over distributions samplable in $\P\SAMP/\poly$.

\begin{problem}[White-Box Learning and AIOWF]

    Does White-Box Learning $\SIZE[s]$ imply the non-existence of AIOWF? Alternatively, is there evidence to the contrary?

\end{problem}

We begin with the following auxiliary lemma.

\begin{lemma}\label{lem:rand-h}

    Assume $\neg \exists$ AIOWF. Let $E$ be a family of labelled example distributions with sampling circuits $S$, as in \Cref{def:wb-learn}. There exists a polynomial-time algorithm that, for all sufficiently large $n \in \nat$, on input $(1^n, 1^{(\varepsilon')^{-1}}, S_n)$, outputs a \emph{randomized} hypothesis $h$ such that $\Delta((x, h(x)), E_n) \leq \varepsilon'(n)$, where $x \sim E^{(0)}_n$.
    
\end{lemma}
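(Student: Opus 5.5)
We will invoke \Cref{lem:ai-crypto} and use the equivalent hypothesis $\neg\exists$ distAIOWF. The plan is to view the sampling circuit itself as an auxiliary-input function and read a conditional sampler off a distributional inverter. Concretely, for a circuit $S_n \colon \bool^{r} \to \bool^n \times \bool$, write $S_n(\rho) = (S_n^{(0)}(\rho), S_n^{(1)}(\rho))$. Define the auxiliary-input function family indexed by $z = S_n$ (padded to a canonical encoding) as
\[
f_z(\rho) \eqdef S_n^{(0)}(\rho).
\]
This family is polynomial-time computable by the machine that evaluates the circuit $z$ on $\rho$. The input length $r \leq |z|$ is polynomially related to $|z|$.

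Since distAIOWFs do not exist, for every inverse polynomial $\delta$ there is a PPT machine $I$ such that, for all but finitely many $z$,
\[
\Delta\bigl((I(z, f_z(\rho)), f_z(\rho)),\; (\rho, f_z(\rho))\bigr) \leq \delta(|z|),
\]
where $\rho$ is uniform. We take $\delta$ to be $\varepsilon'$, rescaled polynomially from $n$ to $|z|$. Strictly speaking the distAIOWF definition quantifies a single $\delta$; negating it gives the statement for every inverse polynomial $\delta$, which is what we need. To handle $\varepsilon'$ given in unary as part of the input, we would fix a polynomial family of $\delta$'s, or run the inverter built for accuracy $1/|z|^c$ with $c$ chosen from $\varepsilon'$. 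This is a standard routine parameter adjustment.

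The randomized hypothesis is then defined as
\[
h(x) \eqdef S_n^{(1)}\bigl(I(S_n, x)\bigr).
\]
The learning algorithm simply outputs the code of $I$ with $S_n$ hardwired, composed with $S_n^{(1)}$. It runs in polynomial time.

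For correctness, apply the deterministic map $(\rho', x) \mapsto (x, S_n^{(1)}(\rho'))$ to both distributions in the inverter guarantee. Statistical distance does not increase under this map. On the ideal side, $(\rho, f_z(\rho))$ maps to $(S_n^{(0)}(\rho), S_n^{(1)}(\rho))$, which is exactly $E_n$. On the inverter side, it maps to $(x, h(x))$ with $x \sim E_n^{(0)}$, since $f_z(\rho)$ is distributed as $E_n^{(0)}$. Hence
\[
\Delta\bigl((x, h(x)), E_n\bigr) \leq \varepsilon'.
\]

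The main obstacle is the quantifier structure. Non-existence of distAIOWFs gives an inverter that succeeds on all but finitely many auxiliary inputs $z$, whereas we need success for all sufficiently large $n$ for the fixed family $S$. This follows because each $S_n$ is a distinct $z$ with $|z| \to \infty$, so only finitely many $n$ can fail. The care needed is to make sure the negated definition indeed yields an ``almost all $z$'' inverter uniformly in the accuracy parameter, which we would handle by the rescaling described above.
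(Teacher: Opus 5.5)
Your proposal is correct and matches the paper's proof: the paper also views the marginal sampler as the auxiliary-input function $f(r)=S_n^{(0)}(r)$, takes a distributional inverter $I$ using $\neg\exists$ distAIOWF (via \Cref{lem:ai-crypto}), outputs $h(x)=S_n^{(1)}(I(S_n^{(0)},x))$, and finishes with the data-processing inequality for statistical distance. Your extra handling of the quantifiers (all but finitely many $z$, rescaling $\delta$ from $|z|$ to $n$, padding $z$, and uniformity in $\varepsilon'$) is more careful than the paper, which simply fixes an inverse polynomial $\varepsilon'$ and takes an inverter for it.
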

\begin{proof}

    Let $\{E_n = (E_n^{(0)}, E_n^{(1)})\}_{n \in \nat}$, with each $E_n$ supported over $\bool^n \times \bool^{m(n)}$, be a labelled example distribution samplable by poly-size circuits $\{S_n = (S_n^{(0)}, S_n^{(1)})\}_{n \in \nat}$, each taking randomness of length $\rho(n) = \poly(n)$ as input. Define a poly-time computable auxiliary input function $f$ as follows.
    \[f_{S_n^{(0)}}(r) = S_n^{(0)}(r).\]
    That is, $f$ has the sampler $S_n^{(0)}$ hard-coded as an auxiliary input. Its input is a choice of randomness $r$, and its output is the output of $S_n^{(0)}$ run on $r$.

    Assuming $\neg$ $\exists$ AIOWF, by \Cref{lem:ai-crypto}, $\neg \exists$ distAIOWF. For any desired inverse polynomial $\varepsilon'$, let $I$ be a PPT $\varepsilon'$-distributional inverter for $f$. That is, for any sufficiently large $n \in \nat$,
    \begin{align}\label{eq:I-inverts}
        \Delta\left(\; (f_{S_n^{(0)}}(r), I(S_n^{(0)}, f_{S_n^{(0)}}(r))), \; (f_{S_n^{(0)}}(r), r) \;\right) \leq \varepsilon'(n),
    \end{align}
    where $r \sim \mathcal{U}_{\rho(n)}$.
    
    The desired algorithm will output the following randomized hypothesis $h$:
    \begin{quote}
        On input $x \in \bool^n$, $h$ invokes $I$ on $x$ with auxiliary input $S_n^{(0)}$ to produce a pre-image $r$; then, $h$ outputs $S_n^{(1)}(r)$.
    \end{quote} 
    Observe that for $r \sim \mathcal{U}_{\rho(n)}$ and $x = S_n^{(0)}(r)$ (that is, $x \sim E_n^{(0)}$),
    \begin{align}
        \Delta((x, h(x)), E_n) &= \Delta((x, S_n^{(1)}(I(S_n^{(0)}, x))), (x, S_n^{(1)}(r))) \tag{definition of $h$}
        \\&\leq \Delta((x, I(S_n^{(0)}, x)), (x, r)) \notag
        \\&= \Delta((f_{S_n^{(0)}}(r), I(S_n^{(0)}, x)), (f_{S_n^{(0)}}(r), r)) \tag{definition of $f$}
        \\&\leq \varepsilon'(n). \tag{Eq.~(\ref{eq:I-inverts})}
    \end{align}

    This completes the proof.
    
\end{proof}

The next two lemmas establish an equivalence between $\neg \exists$ AIOWF and the easiness of White-Box Learning $\mathcal{F}_{all}$.

\begin{lemma}[\cite{DBLP:conf/focs/ApplebaumBX08, xiao2009new}]\label{lem:aiowf-all-wb}

    If $\neg \exists$ AIOWF, then White-Box Learning $\mathcal{F}_{all}$ over $\P\SAMP/\poly$ is easy.
    
\end{lemma}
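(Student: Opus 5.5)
The plan is to start from \Cref{lem:rand-h} and derandomize the hypothesis by majority vote, as previewed in the proof sketch of \Cref{lem:worldII-worldIII}. Given $S_n$ and $\varepsilon$, first invoke \Cref{lem:rand-h} with accuracy $\varepsilon' = \varepsilon/8$. This yields, in polynomial time, a randomized hypothesis $h$ with $\Delta((x,h(x)),E_n)\leq \varepsilon'$ for $x\sim E^{(0)}_n$. Write $p(x)=\Pr[E^{(1)}_n=1\mid E^{(0)}_n=x]$ and $q(x)=\Pr_h[h(x)=1]$. The Bayes-optimal predictor is $c^*(x)=\mathbb{1}[p(x)\geq 1/2]$, with
\[
{\sf opt}_n(\mathcal{F}_{all},E)=\Ex_x[\max(p(x),1-p(x))].
\]
The joint distributions share the same $x$-marginal, so the statistical distance equals $\Ex_x|p(x)-q(x)|$. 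Hence $\Ex_x|p(x)-q(x)|\leq\varepsilon'$.

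Next, define the deterministic hypothesis $H(x)=\mathbb{1}[q(x)\geq 1/2]$. Its success probability is $\Ex_x[p(x)]$ when $H(x)=1$ and $\Ex_x[1-p(x)]$ otherwise. Wherever $H$ disagrees with $c^*$, $p(x)$ and $q(x)$ lie on opposite sides of $1/2$. The loss there is
\[
|2p(x)-1|\leq 2|p(x)-q(x)|.
\]
So $H$ achieves ${\sf opt}_n-2\varepsilon'$.

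We cannot compute $q(x)$ exactly. The output hypothesis $\widehat H$ therefore hardwires $S_n$ and the code of the inverter $I$. On input $x$, it runs $h$ independently $k=O(\log(1/\varepsilon)/\varepsilon^2)$ times and outputs the majority. For inputs with $|q(x)-1/2|\geq\varepsilon'$, a Chernoff bound makes $\widehat H(x)=H(x)$ except with probability $\varepsilon'$. For the remaining ``borderline'' inputs, any answer costs at most
\[
|2p-1|\leq 2|q-1/2|+2|p-q|\leq 2\varepsilon'+2|p-q|.
\]
Summing the contributions gives
\[
\Pr_{(x,b)\sim E_n}[\widehat H(x)=b]\geq {\sf opt}_n(\mathcal{F}_{all},E)-O(\varepsilon')\geq{\sf opt}_n-\varepsilon.
\]
If a deterministic circuit is required, fix the majority's coins. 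Estimate the success of a few candidate coin strings on fresh samples from $S_n$ and keep the best one. By Markov and Hoeffding this loses only $O(\varepsilon)$, with confidence $1-\delta$ after $O(\log(1/\delta))$ repetitions. This uses only the white-box sampler, so the running time stays polynomial in $n$, $1/\varepsilon$ and $\log(1/\delta)$.

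The main obstacle is the threshold step. A naive conversion from randomized to deterministic can lose a constant factor rather than an additive $\varepsilon$. The key point is that, relative to the Bayes-optimal predictor, loss is incurred only where $p$ sits near $1/2$ or where $q$ misestimates $p$. This is exactly what the $L_1$ bound from \Cref{lem:rand-h} controls. One also has to check that \Cref{lem:rand-h} applies uniformly for all sufficiently large $n$ with $\varepsilon'$ an inverse polynomial chosen from $\varepsilon$, which holds because the distributional inverter works for any inverse polynomial.
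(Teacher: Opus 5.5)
Your proof is correct. Its first half matches the paper: invoke \Cref{lem:rand-h}, then take a majority over independent runs of $h$ with a case split according to distance from $1/2$. The derandomization step, however, is organized differently.

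The paper fixes $m = 8\varepsilon^{-4} n\log(1/\delta)$ coin strings up front. It uses a Chernoff bound with failure probability $\delta 2^{-n}$ and a union bound over all $2^n$ inputs, so that with probability $1-\delta$ the deterministic $H$ agrees with $f_{opt}$ simultaneously on every good input with $\alpha(x) > 1/2+2\varepsilon^2$. The bad inputs are handled by Markov's inequality on $\Delta(h(x),B_x)$, which is why the paper needs $\varepsilon'=\varepsilon^4$.

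You instead split on the hypothesis's own bias $q(x)$ and apply the plug-in bound $|2p-1|\leq 2|p-q|$ directly to $\Ex_x|p(x)-q(x)|$. So $\varepsilon'=\Theta(\varepsilon)$ suffices, and only $k=O(\varepsilon^{-2}\log(1/\varepsilon))$ runs are needed, independent of $n$ and $\delta$. You then bound the loss only in expectation over the coins. The deterministic hypothesis comes from Markov over the coin string followed by empirical selection among $O(\log(1/\delta))$ candidates, using fresh samples from $S_n$. The Markov step is valid precisely because no deterministic hypothesis can exceed ${\sf opt}_n(\mathcal{F}_{all},E)$, so the loss is nonnegative. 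This deserves to be stated explicitly, since it would not hold verbatim for a restricted class.

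Your route gives a smaller hypothesis and linear dependence on the inverter's accuracy. The cost is an extra validation phase that needs sample access to $E_n$, which is available both in the white-box model and in the agnostic setting of \Cref{lem:worldII-worldIII}. The paper's route needs no validation and gives a stronger pointwise guarantee: Bayes-optimality on every confidently labelled input at once.
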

\begin{proof}

    The desired white-box learner, given a sampler $S_n$ for $E_n$ and parameters $\varepsilon$ and $\delta$, will construct $h$ from $S_n$ as in \Cref{lem:rand-h} with parameter $\varepsilon' = \varepsilon^4$. It will then sample $m \eqdef 8\varepsilon^{-4}\cdot n \cdot \log(1/\delta)$ choices of randomness $r_1, .., r_{m}$ for $h$, and output the deterministic hypothesis $H$ that, on input $x$, outputs ${\sf maj}_{i \in [m]}\{h(x; r_i)\}$.

    We now argue for the correctness of this white-box learner. Recall that $h$ is such that $\Delta((x, h(x)), E_n) \leq \varepsilon'$, where $x \sim E^{(0)}_n$. Let $f_{opt}: \bool^n \to \bool$ be the function that maximises $\Pr_{(x, b) \sim E_n}[f_{opt}(x) = b]$, and, for $x \in \bool^n$, let $B_x$ denote the conditional distribution of $E_n^{(1)}$ given $E_n^{(0)} = x$. Define 
    \[\alpha(x) = \max_{b \in \bool}\left\{\Pr[B_x = b]\right\}.\]
    Note that $f_{opt}(x) = \arg\max_{b \in \bool} \{\Pr[B_x = b]\}$, so $\Pr[f_{opt}(x) = B_x] = \alpha(x)$.

    By Markov's inequality, the set $\mathcal{X}_{bad} = \{x : \Delta(h(x), B_x) > \varepsilon^2\}$ satisfies $\Pr_{x \sim E_n^{(0)}}[x \in \mathcal{X}_{bad}] \leq \varepsilon^2$. For all $x \notin \mathcal{X}_{bad}$, we have $|\Pr[h(x) = b] - \Pr[B_x = b]| \leq \varepsilon^2$ for every $b \in \bool$.
    
    For $x \notin \mathcal{X}_{bad}$, we consider two cases.

    \paragraph{Case I: $\alpha(x) > 1/2 + 2\varepsilon^2$.} Note that
    \begin{align*}
        \Pr_{r_i}[h(x; r_i) = f_{opt}(x)] &\geq \alpha(x) - \varepsilon^2
        \\&> \frac{1}{2} + \varepsilon^2.
    \end{align*}
    Then, by a Chernoff bound, 
    \[\Pr_{r_1, ..., r_{m}}\left[H(x) = f_{opt}(x)\right] \geq 1 - \delta\cdot 2^{-n}.\]
    \hfill\mbox{$\diamond$}

    \paragraph{Case II: $\alpha(x) \leq 1/2 + 2\varepsilon^2$.} Note that $\alpha(x) \geq 1/2$ also holds in this case, by definition of $\alpha(x)$. Consider any fixed $H$. If $H(x) = f_{opt}(x)$, then $\Pr[H(x) = B_x] \geq 1/2$. On the other hand, if $H(x) = 1 - f_{opt}(x)$, then $\Pr[H(x) = B_x] \geq 1 - \Pr[f_{opt}(x) = B_x] \geq 1/2 - 2\varepsilon^2$. It follows that 
    \[ \Pr[f_{opt} = B_x] - \Pr[H(x) = B_x] \leq \frac{1}{2} + 2\varepsilon^2 - \left (\frac{1}{2} -2\varepsilon^2\right ) = 4\varepsilon^2.\]
    \hfill\mbox{$\diamond$}

    By a union bound, with probability $1 - \delta$ over the choice of $r_1, ..., r_{m}$, all $x$ satisfying Case I satisfy $H(x) = f_{opt}(x)$. Then, with probability $1 - \delta$, for all $x$ satisfying either case, 
    \[\Pr[f_{opt} = B_x] - \Pr[H(x) = B_x] \leq 4\varepsilon^2.\]
    Overall, with probability at least $1 - \delta$ over the randomness of the white-box learner, it outputs a hypothesis $H$ such that 
    \[\Pr_{(x, b) \sim E_n}[H(x) = b] \geq {\sf opt}_n(\mathcal{F}_{all}, E) - 4\varepsilon^2 - \varepsilon^2 \geq {\sf opt}_n(\mathcal{F}_{all}, E) - \varepsilon,\]
    assuming $\varepsilon < 1/5$ without loss of generality.
    
    This completes the proof. 
    
\end{proof}

\begin{lemma}\label{lem:all-wb-aiowf}

    If White-Box Learning $\mathcal{F}_{all}$ over $\P\SAMP/\poly$ is easy, then $\neg \exists$ AIOWF.
    
\end{lemma}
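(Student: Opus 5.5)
We argue by contraposition through \Cref{lem:ai-crypto}. It suffices to show that white-box learning $\mathcal{F}_{all}$ breaks every candidate AIPRF; the same lemma then rules out AIOWFs. So let $f=\{f_z\}$ be a purported AIPRF with key and input length $n=n(|z|)$, and let $A$ be the assumed white-box learner. The difficulty is that a white-box learner needs an explicit sampler, while a PRF distinguisher only has oracle access. The natural plan is therefore to build the sampler from the auxiliary input and a key the distinguisher chooses itself, and then use the learned hypothesis to predict the oracle.

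Concretely, I would break the related AIOWF $g_z(u,x_1,\dots,x_k)=(x_1,f_z(u,x_1),\dots,x_k,f_z(u,x_k))$, or more directly use a labelled example distribution whose label is a hidden bit. The cleanest route seems to be inverting an arbitrary candidate AIOWF $F_z:\bool^{n}\to\bool^{\ell}$ via hard-core bits. Consider the circuit sampler $S_z$ that picks $r\sim\mathcal{U}_n$ and a random $\rho\in\bool^n$, and outputs the example $(F_z(r),\rho)$ with label $\langle r,\rho\rangle$ (Goldreich--Levin). This sampler is polynomial-size with $z$ hardwired, so it lies in $\P\SAMP/\poly$. Its description is known to the inverter, which is given $z$.

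The inverter runs $A$ on $S_z$ with accuracy $\varepsilon$ and obtains $H$ with $\Pr[H(F_z(r),\rho)=\langle r,\rho\rangle]\ge \opt(\mathcal{F}_{all})-\varepsilon$. The key step, and the one I expect to be the main obstacle, is lower-bounding $\opt(\mathcal{F}_{all})$. When $F_z$ is, say, injective, the Bayes-optimal predictor achieves $1$. In general it achieves only $\Ex_y[\text{bias}]$, which may be close to $1/2$ if preimages are plentiful. To handle this I would first reduce to the case where $F_z$ has (nearly) recoverable preimages. The plan is to hash the output and a guessed preimage-size index with pairwise independent hashing, as in Impagliazzo--Luby's reduction to distributional inversion: append $h(r)$ for a random $h$ with output length $j$, where $j\approx\log|F_z^{-1}(y)|$ is guessed. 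This makes $r$ almost determined by $(F_z(r),h,h(r))$, so the Bayes optimum is $1-o(1)$.

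With $\opt\ge 1-\eta$, $H$ then predicts $\langle r,\rho\rangle$ with probability $1-\eta-\varepsilon$ over $r,\rho$. By Markov's inequality, for a noticeable fraction of $r$ the predictor is $3/4+\Omega(1)$ correct over $\rho$, and Goldreich--Levin list decoding recovers $r$ (together with the hashed information), yielding a preimage of $F_z(r)$ with noticeable probability. This contradicts one-wayness at every large $z$, so no AIOWF exists. The remaining routine work is bookkeeping: the guess of $j$ costs a factor $n$, and enough hash bits must be used so that the prediction advantage survives the averaging.
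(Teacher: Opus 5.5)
There is a genuine gap, and it sits exactly at the step you flag as the main obstacle. Hashing makes the label $\langle r,\rho\rangle$ nearly determined by the example $(F_z(r),h,h(r),\rho)$. But the inverter is given only $z$ and $y=F_z(r)$. Every query it makes to $H$ during Goldreich--Levin decoding must contain $h(r)$, which depends on the unknown $r$. The hashed value is an \emph{input} to the predictor, not something the decoding recovers.

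The only way to supply it is to pick $w\in\bool^j$ at random and hope that $w\in h(F_z^{-1}(y))$. This happens with probability roughly $|F_z^{-1}(y)|/2^j$, so you need $j\le\log|F_z^{-1}(y)|+O(\log n)$. On the other hand, pairwise independence pins down the preimage only when $j\ge\log|F_z^{-1}(y)|+c\log n$; at $j\approx\log|F_z^{-1}(y)|$ itself the collision bound is trivial. So the useful $j$ lies in a $y$-dependent window of width $O(\log n)$, which the sampler cannot compute.

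Consequently a large overall Bayes optimum is not what you need. You need $H$ to be accurate on a sub-population that may carry only an inverse-polynomial fraction of the mass. Closing this requires three things:
\begin{itemize}
\item guessing $j$ inside the inverter;
\item choosing the learner's accuracy $\varepsilon$ much smaller than the weight of that window, and using that $\Pr[H(x)=b\mid x]$ never exceeds the Bayes-optimal value at $x$, so Markov's inequality transfers near-optimality to the window;
\item checking that, conditioned on $w\in h(F_z^{-1}(y))$, the simulated example is distributed nearly as a genuine one.
\end{itemize}
That is a full Impagliazzo--Luby style argument, not bookkeeping. It is repairable, but your proposal does not contain it.

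The paper avoids all of this. It uses \Cref{lem:ai-crypto} to pass to an AIPRG $G_z\colon\bool^n\to\bool^{2n}$. The output is information-theoretically far from uniform, so an exhaustive-search distinguisher plus Yao's hybrid argument shows that the sampler $E_z$ outputting $(z,i,G_z(v)_{<i})$ with label $G_z(v)_i$ has $\opt_N(\mathcal{F}_{all},E_z)\ge 1/2+1/(8n)$. The learned hypothesis is then an efficient next-bit predictor with advantage $\Omega(1/n)$. On uniform strings, every predictor is correct with probability exactly $1/2$. A distinguishing game needs only a noticeable Bayes advantage, and the stretch provides this for free; the preimage-size issue you are fighting is absorbed into the black-box OWF-to-PRG transformation.

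Your first idea would also have worked, provided the key is drawn \emph{inside} the sampler rather than fixed by the distinguisher. Let the example be $k=\poly(n)$ labelled points $(x_i,f_z(u,x_i))$ together with a fresh $x'$, with label $f_z(u,x')$. By the Occam bound of \Cref{thm:consistent_hypothesis_finder}, applied to the at most $2^n$ functions $f_z(u',\cdot)$, the Bayes optimum is close to $1$ once $k$ is a large enough polynomial. For a truly random oracle, the label of a fresh point is a fair bit independent of everything else.
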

\begin{proof}

    Toward a contradiction, suppose there exists AIOWF. By \Cref{lem:ai-crypto}, there likewise exists an AIPRG, $G_z :\bool^n \to \bool^{2n}$. By an exhaustive search running in time exponential in $n$, it is possible to distinguish outputs of $G_z$ from truly random strings. Let $D$ be this inefficient distinguisher. In particular, for any given $z$, 
    \[\left | \Pr_{v \sim \mathcal{U}_n}\left[D(z, G_z(v)) = 1\right] - \Pr_{u \sim \mathcal{U}_{2n}}\left[D(z, u) = 1\right]\right| \geq 1 - o(1).\]
    By Yao's hybrid argument (see \Cref{sec:yao}), there exists an inefficient next-bit predictor $P^D$ such that, for any $z$, 
    \[\Pr_{P^D; \; i \sim [2n]; \; v \sim \mathcal{U}_n}\left[ P^D(z, G_z(v)_{< i}) = G_z(v)_{i}\right] > \frac{1}{2} + \frac{1}{8n},\]
    where $y_j$ denotes the $j^{\text{th}}$ bit of a string $y$, and $y_{<j}$ denotes the prefix of $y$ up to and including position $j - 1$.

    Now, consider the distribution $E_z$, which samples a random index $i \sim [2n]$ and a random seed $v \sim \mathcal{U}_n$, and then outputs 
    \[(z, i, G_z(v)_{<i}, 0^{2n - i + 1}; G_z(v)_i).\] 
    It is easy to check that $E_z$ is samplable by a polynomial-size circuit, which can be constructed in polynomial time given $z$. Moreover, by the existence of $P^D$, we have that 
    \[\opt_{N}(\mathcal{F}_{all}, E_z) \geq \frac{1}{2} + \frac{1}{8n},\]
    where $N = |z| + 2n + O(\log n)$ is the length of an example string from $E_z$. Going forward, we omit the trailing 0s used for padding.

    We claim that a white-box learner for $E_z$ may be used to break $G_z$. Let $A'$ be a white-box learner with respect to $\mathcal{F}_{all}$. Consider the following adversary $A$:
    \begin{quote}
        On input $(z, y)$, construct a sampler $S_z$ for $E_z$, and simulate $A'(S_z)$ with accuracy parameter $\varepsilon = n^{-2}$ and confidence $\delta = 2^{-n}$ to obtain a hypothesis $h$. Sample a random $j \sim [2n]$, and accept iff $h(z, j, y_{<j}) = y_j$.
    \end{quote}

    First consider the case that $y \sim G_z(\mathcal{U}_n)$. In this case, 
    \begin{align*}
        \Pr_{A, y}\left [ A(z, y) = 1 \right] &= \Pr_{A', y, j}\left [ h(z, y_{<j}) = y_j \right]
        \\&= \Pr_{A', (x, b) \sim E_z}\left [ h(x) = b \right]
        \\&\geq \frac{1}{2} + \frac{1}{8n} - \frac{1}{n^2} - \frac{1}{2^n}
        \\&> \frac{1}{2} + \frac{1}{10n}.
    \end{align*}

    Next, in the case that $y \sim \mathcal{U}_{2n}$, $y_j$ is a uniformly random bit independent of $(z, y_{<j})$, so 
    \[\Pr_{A, y}\left [ A(z, y) = 1 \right] = \frac{1}{2}.\]

    We conclude that $A$ breaks $G_z$, as desired.
\end{proof}

\newpage

\section{Learning via Meta-Complexity and Algorithmic Information Theory}\label{sec:learning_via_meta}

\subsection{Detecting Structure: Supervised Learning via Distinguishers}\label{sec:learn-via-dist}

Valiant introduced his framework of \emph{PAC learning} in 1984 \cite{DBLP:journals/cacm/Valiant84}; in the same paper, he already recognized a tension between the possibility of efficient learning and the possibility of cryptography. He points out in particular that if PAC learning is easy, then it would be easy to break a pseudorandom function. Though the converse of this specific statement remains unknown, the connection also goes in the other direction, where algorithms for ``detecting structure from randomness'' may be used to carry out learning tasks.

In this section, we describe a general framework for establishing relationships in the latter direction. In this framework, algorithms for \emph{meta-complexity} (namely, deciding whether a given object is of ``high'' or ``low'' complexity) are used to break cryptographic and complexity-theoretic pseudorandom generators, yielding learning algorithms by applying reconstruction properties of those generators. We consider two different styles of generator, which we refer to as the Direct Product Generator and the Nisan-Wigderson Generator, respectively. 

The Direct Product Generator takes the form
\[G^f_{DP}(\overline{x}) = (x_1, ..., x_m; b_1, ..., b_m),\]
where $\overline{x} = (x_1, ..., x_m) \in (\bool^n)^m$ are input examples, and $(b_1, ..., b_m) \in \bool^m$ are some labels associated with a target function $f\colon \bool^n \to \bool$.

The Nisan-Wigderson Generator takes the form
\[G^f_{NW}(z) = (f(z|_{S_1}), ..., f(z|_{S_m})),\]
where $z \in \bool^{O(n^2)}$ is an input specifying some examples, $f \colon \bool^n \to \bool$ is a target function, and $S_1, ..., S_m$ are \emph{NW-designs}. An NW-design is a collection of $m$ subsets (where $m$ is the desired output length) chosen from a small ground set of size $\ell$. They are constructed to have bounded intersections: for instance, any two distinct subsets $S_i, S_j$ might overlap in at most $\log m$ elements. Roughly speaking, because the subsets intersect very little, the inputs fed into a ``hard'' function $f$ producing the output bits remain mostly independent.

In the results we survey below, we will see various instantiations of these generators, which may be broken under various assumptions about the easiness of meta-complexity. We mention in advance that learners obtained by breaking the Direct Product Generator can work from \emph{random examples}, whereas learners obtained by breaking the Nisan-Wigderson Generator require \emph{membership queries}. In cases where the corresponding assumption from meta-complexity is known to hold, the approach has led to unconditional learning algorithms.

\subsubsection{From Distinguishers to Predictors}\label{sec:yao}

The central technique driving all the results in this section is Yao's hybrid argument, which converts a \textit{distinguisher} into a \textit{next-bit predictor} \cite{DBLP:conf/focs/Yao82a}. First, we explain what it means for an algorithm to be a distinguisher. Consider an arbitrary joint distribution $(X, Y)$, where $X$ is supported over $\bool^{m}$ and $Y$ over $\bool^{\ell}$. Let $A$ be a function (which we will typically identify with an algorithm) mapping $m + \ell$ input bits to one output bit. We consider the case that $A$ \emph{distinguishes} $(X, Y)$ from $(X, \mathcal{U}_{\ell})$ with advantage $\varepsilon > 0$, where $\mathcal{U}_{\ell}$ denotes the uniform distribution over $\bool^{\ell}$. Namely,
\[\left | \Pr_{A}[A(X,Y) = 1] - \Pr_{A}[A(X, \mathcal{U}_{\ell}) = 1] \right | \geq \varepsilon.\]
Yao showed that, from any such $A$, one can construct another algorithm $P_A: \bool^{m} \times \bool^{<\ell} \to \bool$ taking an output of $X$ together with some prefix of an output of $Y$ and producing the next unseen bit of $Y$ with better-than-trivial success probability. In particular,
\[\Pr_{P_A, i \sim [\ell]}[P_A(X, Y_1, ..., Y_{i - 1}) = Y_i] \geq \frac{1}{2} + \frac{\varepsilon}{2\ell}.\]
We refer to such an algorithm $P_A$ as a \emph{next-bit predictor}.

To prove this statement, Yao employs a hybrid argument. Roughly, one defines ``hybrid'' distributions
\[H_j = (X, Y_1, ..., Y_j, \mathcal{U}_{\ell - j})\]
for $j \in \{0, ..., \ell\}$, so $H_0 = (X, \mathcal{U}_{\ell})$ and $H_{\ell} = (X,Y)$. Observe that
\[\Pr[A(H_0) = 1] - \Pr[A(H_\ell) = 1] = \sum_{j \in \{0, ... \ell - 1\}}\Pr[A(H_j) = 1] - \Pr[A(H_{j + 1}) = 1].\]
Supposing that $\Pr[A(X,\mathcal{U}_{\ell}) = 1] - \Pr[A(X, Y) = 1] \geq \varepsilon$, so the above is at least $\varepsilon$, it holds that the difference $\Pr[A(H_j) = 1] - \Pr[A(H_{j + 1}) = 1]$ is at least $\varepsilon/\ell$ in expectation over the choice of $j$.\footnote{The case that $\Pr[A(X, Y) = 1] - \Pr[A(X, \mathcal{U}_{\ell}) = 1] \geq \varepsilon$ can be handled similarly.} This expected advantage of $A$ at the $j^{\text{th}}$ bit of $Y$ can then be leveraged to obtain a next-bit predictor at the corresponding position: when given a random bit $u \sim \mathcal{U}$ at position $j + 1$, $A$ is slightly less likely to output $1$ when $u = Y_{j+1}$ than when $u \neq Y_{j+1}$. We refer the reader to \cite{DBLP:journals/fttcs/Vadhan12} for further detail.

\subsubsection{Direct Product and Random Examples}\label{sec:direct_product_random_examples}

\paragraph{From PRG Distinguishers to Learning Random Target Functions.} 

An early instantiation of the framework described above can be seen in the work of Blum et al. \cite{DBLP:conf/crypto/BlumFKL93}, where the authors show that breaking a cryptographic pseudorandom generator yields a learner working in a certain average-case setting. In more detail, from a successful distinguisher against a candidate PRG construction, one obtains a learner succeeding with high probability over randomly sampled target functions (see \Cref{def:bfkl}). 

The starting point is the assumption that for every candidate PRG $G: \bool^n \to \bool^{\ell(n)}$ there exists an efficient algorithm $A: \bool^{\ell(n)} \to \bool$ distinguishing outputs of $G$ from uniformly random strings. That is, for some polynomial $p$, for all sufficiently large $n$,
\[\left | \Pr[A(G(\mathcal{U}_n)) = 1] - \Pr[A(\mathcal{U}_{\ell(n)}) = 1] \right | \geq \frac{1}{p(n)}.\]

Let $P = \{P_n\}_{n \geq 1}$ be a collection of distributions over target functions. For the sake of this exposition, we will discuss the case of weak prediction over example strings drawn from the uniform distribution.\footnote{\cite{DBLP:conf/crypto/BlumFKL93} use a different argument, inverting an appropriate one-way function, to yield strong prediction over examples drawn from other distributions in $\P\SAMP$. See Theorem 11 in \cite{DBLP:conf/crypto/BlumFKL93}.} We define a candidate PRG $G$ as follows, where $m = m(n)$ is a sufficiently large parameter.
\begin{quote}
    On input $(r_P, x_1, ..., x_m)$, with $r_P \in \bool^{\poly(n)}$ and each $x_i \in \bool^n$, $G$ samples $\sigma_f \sim P_n$ using randomness $r_P$, where $\sigma_f$ is the representation of a function $f$. For each $i \in [m]$, it then computes $b_i \eqdef f(x_i)$. $G$ outputs $(x_1, ..., x_m, b_1, ..., b_m)$.
\end{quote}
The goal is then to transform a distinguisher $A$ for $G$ into a learner for $P_n$ over $\mathcal{U}_n$ in the sense of Definition~\ref{def:bfkl}. By definitions of $G$ and $A$, we have 
\[\left | \Pr_{\substack{\sigma_f \sim P_n; \\ x_1, ..., x_m \sim \mathcal{U}_n}}[A(x_1, ..., x_m, f(x_1), ..., f(x_m)) = 1] - \Pr[A(\mathcal{U}_{m\cdot n + m}) = 1] \right | \geq \frac{1}{p(n)}.\]
By Yao's hybrid argument, for some PPT machine $A'$ built from $A$, 
\[\Pr_{\substack{\sigma_f \sim P_n; \;A'; \; i \sim [m]; \\ x_1, ..., x_m \sim \mathcal{U}_n}}[A'(x_1, ..., x_m, f(x_1), ..., f(x_{i-1})) = f(x_i)] \geq \frac{1}{2} + \frac{1}{2m\cdot p(n)}.\]
We may then define a learner $A''$ as follows.
\begin{quote}

    On input $((x_1, f(x_1)), ..., (x_m, f(x_m)); x')$, sample $i \sim [m]$, and output 
    \\$A'(x_1, ..., x_{i-1}, x', x_{i+1}, ..., x_m, f(x_1), ..., f(x_{i - 1}))$.
    
\end{quote}
By the definitions of $A'$ and $A''$, we have that 
\[\Pr_{\substack{\sigma_f \sim P_n; \\ (x_i, f(x_i))_{i \in [m(n)]} \sim \EX(f, \,\mathcal{U}_n)^{m(n)}; \\ x' \sim \mathcal{U}_n }}\left[A''((x_1, f(x_1)), ..., (x_{m(n)}, f(x_{m(n)})), x') = f(x')\right] \geq \frac{1}{2} + \frac{1}{2m \cdot p(n)}.\]
Thus, $A''$ satisfies the requirement of weak prediction.

\paragraph{From RRHS Refutation to PAC Learning over Arbitrary Distributions.}

Since the insight of \cite{DBLP:conf/crypto/BlumFKL93}, several papers continued to employ distinguishers to obtain learners. In an elegant example of such work, Vadhan \cite{DBLP:conf/colt/Vadhan17} formalises a distinguishing task to capture \emph{worst-case} PAC learning over \emph{arbitrary} example distributions. Specifically, for a concept class $P$ and its dual class $P^* = Q$ (where for an evaluation function ${\sf Eval}: \{0,1\}^s \times \{0,1\}^t \to \{0,1\}$, $p_y(x) = q_x(y) = \mathsf{Eval}(y,x)$), Vadhan defines the \emph{Random Right-Hand Side (RRHS) Refutation} problem for $Q$, where the task is to distinguish
\begin{itemize}
    \item $(x_1, b_1), ..., (x_m, b_m)$, for independently random bits $b_i \sim \mathcal{U}$, from
    \item $(x_1, b_1), ..., (x_m, b_m)$, where $b_i = q_{x_i}(y) = p_y(x_i)$ for a fixed concept $f = p_y \in P$.
\end{itemize}
Observe that this is precisely the task of distinguishing the output of the Direct Product Generator
\[G^{f}_{DP}(x_1, \ldots, x_m) = \bigl(x_1, \ldots, x_m,\, f(x_1), \ldots, f(x_m)\bigr)\]
from 
\[(x_1, \ldots, x_m, \mathcal{U}_m),\]
for arbitrary strings $x_1, ..., x_m$.

We are interested here in the reduction from PAC learning to RRHS Refutation. Let $A$ be an RRHS Refuter for some concept class $\mathcal{C}$: that is, for any $f \in \mathcal{C}$ and any choice of $x_1, ..., x_m$, $A$ distinguishes the two distributions above. As in the work of \cite{DBLP:conf/crypto/BlumFKL93}, Yao's hybrid argument yields the existence of a machine $A'$ such that
\[\Pr_{r \sim \mathcal{U}_{\poly(n)}; \;i \sim [m]}\left[\;A'(x_1, ..., x_m, f(x_1), ..., f(x_{i-1}); r) = f(x_i)\;\right] \geq \frac{1}{2} + \frac{1}{\poly(n)}.\]
Letting $D_n$ be an arbitrary example distribution, an averaging argument implies that for some polynomial $p$, with probability at least $1/p(n)$ over the choice of $r$, $i$, and $x_1, ..., x_{i-1}, x_{i+1}, ..., x_m \sim D^{m-1}_n$, 
\[\Pr_{x_i \sim D_n}\left[\;A'(x_1, ..., x_m, f(x_1), ..., f(x_{i-1}); r) = f(x_i)\;\right] \geq \frac{1}{2} + \frac{1}{p(n)}.\]
Now, let $B$ be the algorithm that takes $m$ examples $(x_1, f(x_1)) ..., (x_m, f(x_m))$ from its oracle $\EX(f, D_n)$, chooses a random $i \sim [m]$ and randomness $r$ for $A'$, and then outputs the hypothesis 
\[h(x) \eqdef  A'(x_1, ..., x_{i - 1}, x, x_{i + 1}, ..., x_m, f(x_1), ..., f(x_{i-1}); r).\] 
By the above, 
\[\Pr_{B; \; \EX(f, D_n)}\left [\; B^{\EX(f, D_n)} \text{ outputs } h \text{ such that } \Pr_{x\sim D_n}[h(x) = f(x)] \geq \frac{1}{2} + \frac{1}{p(n)}\;\right ] \geq \frac{1}{p(n)}.\]
Note that the confidence parameter $\delta = 1 - p(n)^{-1}$ can be boosted by parallel repetition, using empirical estimation to determine the best hypothesis produced. 

In this way, the easiness of RRHS Refutation yields a weak PAC learner. A classic result of Schapire shows that such a learner can be \emph{boosted} to obtain a (strong) PAC learner \cite{DBLP:journals/ml/Schapire90}.

\paragraph{Correlative RRHS Refutation and Agnostic Learning.}

In work independent of \cite{DBLP:conf/colt/Vadhan17}, Kothari and Livni \cite{DBLP:conf/innovations/KothariL18} give a related refutation problem, denoted \emph{correlative RRHS Refutation}, to characterise the harder task of \emph{agnostic learning}. Recall that in the setting of agnostic learning, examples from the oracle $\EX$ are not guaranteed to correspond exactly to any fixed target function from the relevant concept class; the learner is only asked to produce a hypothesis that is ``almost'' as accurate as the most accurate function from the concept class \cite{DBLP:journals/ml/KearnsSS94, DBLP:journals/iandc/Haussler92}.

One difference compared to the RRHS Refutation task of \cite{DBLP:conf/colt/Vadhan17} is that correlative RRHS Refutation is defined in a \emph{distribution-specific} setting. Then, the task of correlative RRHS Refutation over some particular example distribution characterises that of agnostic learning over the \textit{same} distribution, thanks to the existence of distribution-specific boosting algorithms for agnostic learners \cite{DBLP:conf/nips/KalaiK09, DBLP:conf/innovations/Feldman10}. This is unlike the setting of \cite{DBLP:conf/colt/Vadhan17}, which requires an RRHS refuter working on \emph{every} distribution to obtain a learner working on every distribution.

As in the works discussed above, a weak learner in \cite{DBLP:conf/innovations/KothariL18} arises from a hybrid argument. More specifically, suppose we have a \textit{correlative RRHS refuter} that is capable of detecting samples with correlation $\delta$ with circuits in $\SIZE[s]$. Namely, let $\mathcal{E}$ be a joint example distribution satisfying
\[\max_{c \in \SIZE[s(n)]}\left\{\Pr_{(x,b) \sim \mathcal{E}_n}\left[c(x) = b\right]\right\} \geq \frac{1}{2} + \delta,\]
and suppose we have an algorithm $A$ that $\Omega(1)$-distinguishes
\begin{itemize}
    \item $(x_1, b_1), ..., (x_m, b_m)$, for independently random $(x_i, -) \sim \mathcal{E}_n$ and $b_i \sim \mathcal{U}$, from
    \item $(x_1, b_1), ..., (x_m, b_m)$, where $(x_i, b_i) \sim \mathcal{E}_n$ independently.
\end{itemize}
Then, by a hybrid argument, given $m$ random examples from $\mathcal{E}_n$, $A$ yields a construction of a hypothesis $h$ such that 
\[\Pr_{(x, b) \sim \mathcal{E}_n}[h(x) = b] \geq \frac{1}{2} + \frac{1}{O(m)}.\]
This is sufficient to apply known boosting algorithms \cite{DBLP:conf/nips/KalaiK09, DBLP:conf/innovations/Feldman10} to obtain a (strong) agnostic learner over $\mathcal{D}$.

\paragraph{From Average Case Easiness of NP to Agnostic Learning.}

Hirahara and Nanashima \cite{DBLP:conf/focs/HiraharaN21} build on \cite{DBLP:conf/innovations/KothariL18} to prove that efficient agnostic learning of $\P/\poly$ over distributions in $\P\SAMP/\poly$ is possible under the assumption that $\DistNP \subseteq \Avg\P$. The paper also builds on an earlier line of work due to Hirahara, which proved that several promise problems in meta-complexity exhibit \textit{worst-case to average-case reductions} (e.g. \cite{DBLP:conf/focs/Hirahara18, DBLP:conf/focs/Hirahara20}). In particular, if an appropriate `gap' version of the time-bounded Kolmogorov complexity problem, denoted $\GapMINKT$, can be solved in deterministic poly-time on average over the uniform distribution, then it is likewise easy in the worst case. 

The key contribution of \cite{DBLP:conf/focs/HiraharaN21} consists in reducing correlative RRHS Refutation to $\GapMINKT$. The intuition is that samples drawn according to the `random' case are of high complexity with high probability, whereas the $\K^t$ complexity of samples from the `correlative' case can be bounded. For simplicity, consider the setting of realizable PAC learning, so the examples in the latter case are actually consistent with some function in $\SIZE[s]$ for a polynomial $s$. 

In the `random' case, by Symmetry of Information for Kolmogorov complexity and a counting argument, one can show that for any time-bound $t$, with high probability over $x_i \sim \mathcal{D}_n$ and $b_i \sim \mathcal{U}$,
\[\K^t(x_1, ..., x_m, b_1, ..., b_m) \gtrsim \K(x_1, ..., x_m) + m.\]
On the other hand, if the labels $b_i$ are equal to $f(x_i)$ for some function $f \in \SIZE[s]$, for a sufficiently large polynomial $t'$, we have
\begin{align*}
    \K^{\poly(t')}(x_1, ..., x_m, b_1, ..., b_m) &\leq \K^{t'}(x_1, ..., x_m) + O(s(n)) + O(\log n) 
    \\&\ll \K^{t'}(x_1, ..., x_m) + m,
\end{align*}
provided that $m$ is sufficiently larger than $s$. This follows by encoding a description of an $s$-size circuit for $f$, and computing each $b_i$ by running this circuit on input $x_i$. The idea is then to run an algorithm for $\GapMINKT$ on the given samples to distinguish ``high'' complexity from ``low'' complexity. 

At a technical level, several difficulties arise from the differing (finite and infinite) time-bounds involved, and from the fact that one cannot exactly compute $\K^t$ complexity for a desired $t \in \nat$ under the assumption $\DistNP \subseteq \Avg\P$. In fact, the definition of $\GapMINKT$ is such that one only gets a poly-time algorithm $A$ such that $A(1^t, x)$ outputs $k \in \nat$ such that 
\[\K^{p(t)}(x) - \log p(t) \leq k \leq \K^t(x)\]
for some polynomial $p$. Roughly speaking, the authors overcome this by proving that, for a sufficiently large time-bound $t$ (depending on the running time to sample from a given distribution $\mathcal{D}_n \in \P\SAMP/a(n)$), 
\[\K^t(\overline{x}) \approx \K(\overline{x}) \pm a(n)\]
with high probability over strings $\overline{x} = (x_1, ..., x_m)$ sampled from $(\mathcal{D}_n)^m$. Thus, the differences between complexity measures with different time-bounds can be made immaterial by taking the number of samples $m$ sufficiently larger than $a(n)$. 

The approximate equality above follows from a \emph{coding theorem} for time-bounded Kolmogorov complexity, which holds under the assumption $\DistNP \subseteq \Avg\P$. In particular, for any distribution $\mathcal{D}_n \in \P\SAMP/a(n)$ and any string $x$ in the support of $\mathcal{D}_n$, for sufficiently large $t \in \nat$ (depending on the exact complexity of sampling $\mathcal{D}$),
\[\K^t(x) \leq \log(1/\mathcal{D}_n(x)) + a(n) + O(\log n).\]
Since $\mathcal{D}_n \in \P\SAMP/a(n)$ implies $(\mathcal{D}_n)^m \in \P\SAMP/a(n)$ when $m = \poly(n)$, one obtains a similar statement for $\overline{x} \in \supp((\mathcal{D}_n)^m)$. Given this statement, it holds that
\begin{align*}
    \mathop{\mathbb{E}}_{\overline{x} \sim \mathcal{D}_n^m}[\K^t(\overline{x}) - \K(\overline{x})] &\leq \mathop{\mathbb{E}}_{\overline{x} \sim \mathcal{D}_n^m}[\K^t(\overline{x})] - \mathsf{H}(\mathcal{D}_n^m) + O(\log n)
    \\&\leq \mathop{\mathbb{E}}_{\overline{x} \sim \mathcal{D}_n^m}[\log(1/\mathcal{D}_n^m(\overline{x}))] + a(n) + O(\log n) - \mathsf{H}(\mathcal{D}_n^m)
    \\&\leq a(n) + O(\log n),
\end{align*}
yielding the desired closeness between time-bounded and -unbounded Kolmogorov complexity in expectation over $\mathcal{D}_n^m$.

A subsequent work of Goldberg, Kabanets, Lu, and Oliveira \cite{GKLO22} extends the result from \cite{DBLP:conf/focs/HiraharaN21}, obtaining the same conclusion of efficient agnostic learning from the weaker assumption that $\Dist\NP \subseteq \Avg\BPP$: that is, average-case easiness of $\NP$ for \emph{randomised} algorithms. In this setting, one does not have derandomisation (which is available to \cite{DBLP:conf/focs/HiraharaN21} via \cite{DBLP:journals/mst/BuhrmanFP05}), and thus does not have a worst-case to average-case reduction for $\GapMINKT$. Instead, the authors obtain worst-case easiness of a notion of time-bounded \emph{probabilistic} Kolmogorov complexity, $\pK^t$, which generalizes $\K^t$ and $\rK^t$. Given this, the rest of the proof proceeds similarly to \cite{DBLP:conf/focs/HiraharaN21}.

\paragraph{Agnostic Learning from Average Case Kolmogorov Complexity.}

Goldberg and Kabanets \cite{DBLP:conf/coco/GoldbergK23} further weaken the prerequisite assumptions for agnostic learning, exhibiting a direct connection from the average-case easiness of time-bounded (deterministic) Kolmogorov complexity to the existence of agnostic learners. The authors consider average-case easiness of $\K^t$ complexity as well as another complexity measure denoted $\KT$ \cite{DBLP:journals/siamcomp/AllenderBKMR06}, showing that agnostic learners follow from either assumption. $\KT$ complexity is known to be polynomially related to boolean circuit size, bringing the assumption close to a `natural property' in the sense of Razborov and Rudich \cite{DBLP:journals/jcss/RazborovR97}, and thus drawing a connection to a line of work obtaining learners from natural properties, discussed further in \Cref{sec:nw-mem} below. For both $\K^t$ and $\KT$, \cite{DBLP:conf/coco/GoldbergK23} assume average-case easiness for randomised algorithms over the uniform distribution. 

Compared to the settings of \cite{DBLP:conf/focs/HiraharaN21} and \cite{GKLO22}, a few additional challenges arise. One difficulty has to do with the task of showing, as in \cite{DBLP:conf/focs/HiraharaN21}, that $\K^t(\overline{x}) \approx \K(\overline{x})$ with high probability over strings $\overline{x} \sim (\mathcal{D}_n)^m$. As mentioned above, in the prior works, this was proved via coding theorems for time-bounded Kolmogorov complexity. In the case of \cite{DBLP:conf/focs/HiraharaN21}, a coding theorem holds for $\K^t$ under the stronger assumption of average-case easiness of $\NP$. In the case of \cite{GKLO22}, due to \cite{DBLP:conf/icalp/LuOZ22}, a coding theorem holds unconditionally for $\pK^t$. In the setting of \cite{DBLP:conf/coco/GoldbergK23}, though a coding theorem is not available, a similar statement can be shown to hold with high probability in the presence of an additional random string $r$. Namely, for sufficiently large $t$,
\[\Pr_{r \sim \mathcal{U}_{t}}\left[\K^{\poly(t)}(x, r) \leq \log(1/\mathcal{D}_n(x)) + a(n) + |r| + O(\log n)\right] \geq 1 - o(1).\]
By incorporating this additional randomness $r$ into the algorithm for correlative RRHS refutation, one may obtain an agnostic learner as desired.

Consideration of $\KT$ rather than $\K^t$ leads to additional challenges, primarily stemming from the fact that $\KT$ complexity does not admit a worst-case to average-case reduction in the regime of a small approximation gap. On the other hand, the assumed average-case easiness of $\KT$ is still sufficient to learn \emph{over the uniform distribution}. Hence, one method to obtain a learner is to reduce the task of learning over arbitrary distributions in $\P\SAMP/\poly$ to the task of learning over the uniform distribution. This can be accomplished by using an average-case algorithm for $\KT$ as an inverter for a distributional one-way function; more details about this general approach can be found in the discussion of \cite{DBLP:conf/alt/BinnendykCKRS22} below. However, in the case of non-uniformly sampled target distributions, the learner resulting from this argument needs to know the distribution explicitly.\footnote{Unlike the setting of white-box learning discussed in \Cref{sec:wb}, here, the learner only needs the code of a sampler for the marginal distribution over example strings without labels.} \cite{DBLP:conf/coco/GoldbergK23} show that this caveat can be removed by assuming worst-case easiness of computing $\KT$ complexity, employing a more complicated amortization argument to bound the $\KT$ complexity of many concatenated samples from $\mathcal{D}_n$ in lieu of a coding theorem (see also \cite{DBLP:journals/siamcomp/AllenderGMMM18}). It is left open whether the caveat can be removed assuming only average-case easiness of $\KT$.

\subsubsection{Nisan-Wigderson and Membership Queries}\label{sec:nw-mem}

All the results we have covered so far have employed distinguishers for variants of the Direct Product Generator, $G^f_{DP}$, to construct efficient learners. We now shift attention to learners arising from distinguishers for the Nisan-Wigderson Generator, $G^f_{NW}$.

The NW PRG is very well known for its applications in hardness/randomness tradeoffs \cite{DBLP:journals/jcss/NisanW94, DBLP:journals/cc/BabaiFNW93, IW97}. In that context, one instantiates the PRG with a function $f$ known to be hard. The NW PRG then takes the form
\[G^f_{NW}(z) = \left(\; f(z|_{S_1}), \; f(z|_{S_2}), \; ..., \; f(z|_{S_{m}}) \;\right),\]
where $f \colon \bool^n \to \bool$, $|z| = d = O(n^2)$, $m < 2^n$, and $S_1, ..., S_{m} \subseteq [d]$ are ``NW designs'', with $|S_i| = n$ and $|S_i \cap S_j| \leq \log m$ for all $i \neq j \in [m]$. A key idea is that the generator admits a \emph{reconstruction procedure}, $R$. Namely, if $D$ is an efficient algorithm $(1/\poly(m))$-distinguishing $G_f$ from uniform, then $R^{f}(D)$ outputs a $\poly(m)$-size circuit approximating $f$ with advantage $1/\poly(m)$ with high probability. For the purposes of the hardness/randomness paradigm, if we assume in advance that $\poly(m)$-size circuits cannot $(1/2 + \poly^{-1}(m))$-approximate $f$, then we reach a contradiction, and we conclude that the NW PRG is indeed pseudorandom.

\paragraph{Learning from Natural Properties via NW Reconstruction.}  Carmosino et al.~\cite{DBLP:conf/coco/CarmosinoIKK16} provide the first and most influential example of this approach, showing that a distinguisher for $G_{NW}^f$ implies the easiness of PAC learning with membership queries.  As in the previous section, for the purposes of learning, \cite{DBLP:conf/coco/CarmosinoIKK16} apply this reconstruction framework in reverse. The starting point is an assumption of the existence of a \textit{natural property}. Razborov and Rudich introduced the notion of a natural property to pinpoint limitations of existing techniques for proving circuit lower bounds \cite{DBLP:journals/jcss/RazborovR97}. In particular, known proofs of circuit lower bounds were based on natural properties for those circuit classes (including lower bounds for ${\sf AC}^0$ \cite{DBLP:journals/mst/FurstSS84} and ${\sf AC}^0[p]$ for prime $p$ \cite{Razborov1987LowerBO, DBLP:conf/stoc/Smolensky87}). On the other hand, there is no (say, $\P/\poly$-constructive) natural property for general boolean circuits unless cryptographic PRGs do not exist. Very roughly, a natural property for a given circuit class $C$ is equivalent to average-case easiness of the Minimum Circuit Size Problem for $C$: the existence of a poly-time algorithm accepting all truth-tables of low $C$-complexity and rejecting a large enough fraction of truth-tables with high $C$-complexity (see \Cref{def:nat_prop} for the formal details). The assumption of \cite{DBLP:conf/coco/CarmosinoIKK16} is in this sense an assumption of the easiness of meta-complexity. The assumed natural property algorithm becomes a distinguisher for $G^f_{NW}$, and one applies a hybrid argument to obtain a next-bit predictor as specified by the reconstruction procedure. 

There are several differences here compared to the setting of $G_{DP}$, which arise from how the generator makes use of its input. In the case of $G_{DP}$, several ``input'' strings $x_1, ..., x_m$ are given explicitly and independently, and each $x_i$ leads to a distinct label $b_i$ depending on $f$. In the case of $G_{NW}$, the input strings to $f$ are given implicitly by a single seed $z$; ``extracting'' the strings $x_i$ from $z$ requires computing some restrictions based on the designs $S_i$. Due to this more ``economical'' use of the input, $G_{NW}$ can be made to yield a much longer output from a short seed. For this reason, a natural property suffices to distinguish outputs of $G_{NW}$ from uniformly random strings. On the other hand, in the case of $G_{DP}$, inputs of length $n \cdot m$ lead to outputs of length only $n\cdot m + m$, and it is not clear whether outputs of $G_{DP}$ have circuit complexity less than this.

A drawback of $G_{NW}$ compared to $G_{DP}$ is that the reconstruction procedure requires oracle access to the function $f$. This is due to the fact that a given seed $z \sim \mathcal{U}_d$ gives rise to \emph{dependent} input strings $x_1, ..., x_m = z|_{S_1}, ..., z|_{S_m}$. Due to the overlap between sets in the NW design, if one wants to know the value of $f$ on a particular choice of $x$, this imposes restrictions on the values of the other strings $x_j$ appearing in the hybrid. The NW reconstruction procedure handles this with a pre-processing stage, making oracle queries to $f$ to build a table containing $f(x_j)$ for all possible completions of $x_j$ depending on $x$. Ultimately, this results in a learner requiring membership queries.

Regardless of the limitations of learners resulting from natural properties, the connection to the literature on circuit lower bounds makes their construction worthwhile: by the known natural property for ${\sf AC}^0[p]$, Carmosino et al. unconditionally obtain a quasi-polynomial time learning algorithm for this circuit class (see \Cref{fig:cikk-learning-ac0p}).

\begin{figure}[ht]
\centering
\begin{tikzpicture}[
    lowerboundbox/.style={
        rounded corners=7pt,
        draw=blue!55!black,
        fill=blue!5,
        line width=1pt,
        minimum width=4.0cm,
        minimum height=3.0cm
    },
    naturalbox/.style={
        rounded corners=7pt,
        draw=violet!70!black,
        fill=violet!6,
        line width=1pt,
        minimum width=4.3cm,
        minimum height=3.0cm
    },
    nwbox/.style={
        rounded corners=7pt,
        draw=gray!80!black,
        fill=gray!10,
        line width=1pt,
        minimum width=4.2cm,
        minimum height=3.0cm
    },
    reconstructionbox/.style={
        rounded corners=7pt,
        draw=orange!75!black,
        fill=orange!6,
        line width=1pt,
        minimum width=4.6cm,
        minimum height=3.0cm
    },
    learnerbox/.style={
        rounded corners=7pt,
        draw=green!45!black,
        fill=green!8,
        line width=1pt,
        minimum width=5.2cm,
        minimum height=3.0cm
    },
    flowarrow/.style={
        ->,
        line width=1pt,
        draw=gray!70!black
    }
]

\node[font=\bfseries\large] at (7,7.55)
{The CIKK learning algorithm for $\mathsf{AC}^0[p]$};

\node[lowerboundbox] (lowerboundstage) at (2.1,5.45) {};
\node[naturalbox] (naturalstage) at (7,5.45) {};
\node[nwbox] (nwstage) at (12,5.45) {};
\node[reconstructionbox] (reconstructionstage) at (9.5,1.55) {};
\node[learnerbox] at (3.5,1.55) {};

\node[font=\small,align=center,text width=3.6cm] at (2.1,5.45)
{\textbf{\textcolor{blue!55!black}{Circuit lower bound}}\\[0.3cm]
Razborov--Smolensky\\
lower bounds against\\
$\mathsf{AC}^0[p]$ circuits};

\node[font=\small,align=center,text width=3.9cm] at (7,5.45)
{\textbf{\textcolor{violet!70!black}{Natural property  $D$}}\\[0.3cm]
Constructive truth table test\\
useful against small\\
$\mathsf{AC}^0[p]$ circuits};

\node[font=\small,align=center,text width=3.8cm] at (12,5.45)
{\textbf{\textcolor{gray!80!black}{NW distinguisher}}\\[0.22cm]
For unknown $f\in\mathsf{AC}^0[p]$,\\
{\scriptsize
$G^f_{NW}(z)=\big(f(z|_{S_i})\big)_{i\in[m]}$
}\\[0.2cm]
$D$ distinguishes its\\
outputs from $\mathcal{U}_m$};

\node[font=\small,align=center,text width=4.2cm] at (9.5,1.55)
{\textbf{\textcolor{orange!75!black}{NW reconstruction}}\\[0.3cm]
$R^f(D)$ outputs a circuit\\
approximating $f$\\[0.2cm]
oracle access to $f$ becomes\\
membership queries};

\node[font=\small,align=center,text width=4.8cm] at (3.5,1.55)
{\textbf{\textcolor{green!40!black}{CIKK learner}}\\[0.25cm]
Randomized quasipolynomial time\\
PAC learning under $\mathcal{U}_n$\\
with membership queries};

\draw[flowarrow]
    (lowerboundstage.east) -- (naturalstage.west);

\draw[flowarrow]
    (naturalstage.east) -- (nwstage.west);

\coordinate (nwstart) at
    ($(nwstage.south west)!0.75!(nwstage.south east)$);

\draw[flowarrow]
    (nwstart) --
    (nwstart |- reconstructionstage.east) --
    (reconstructionstage.east);

\draw[flowarrow]
    (7.15,1.55) -- (6.15,1.55);

\end{tikzpicture}

\caption{\small{The diagram presents the key steps of the $\AC^0[p]$ learning algorithm of Carmosino, Impagliazzo, Kabanets, and Kolokolova \citep{DBLP:conf/coco/CarmosinoIKK16}, a breakthrough result in computational learning theory. The Razborov and Smolensky circuit lower bounds \citep{Razborov1987LowerBO,DBLP:conf/stoc/Smolensky87} yield a natural property \citep{DBLP:journals/jcss/RazborovR97} against $\mathsf{AC}^0[p]$. For an unknown target $f\in\mathsf{AC}^0[p]$, this property distinguishes the output of the Nisan-Wigderson generator $G^f_{NW}$ \citep{DBLP:journals/jcss/NisanW94} from a random truth table. The NW reconstruction procedure converts the distinguisher into a hypothesis using membership query access to $f$. To date, this complexity-theoretic approach remains the only known learning technique for this circuit class.}}
\label{fig:cikk-learning-ac0p}
\end{figure}

In a 2017 follow-up, the authors further show that, from so-called \textit{tolerant} natural properties, one obtains \emph{agnostic} learners \cite{DBLP:conf/approx/CarmosinoIKK17}. A tolerant natural property is essentially equivalent to easiness of the Minimum Approximate Circuit Size Problem, ${\sf MACSP}$, which asks, for a given truth-table $T$ and parameters $s$ and $\delta$, whether there exists a circuit of size $s$ computing $T$ on at least a $1 - \delta$ fraction of inputs. Under this framework, the authors generalize the learning algorithm for ${\sf AC}^0[p]$ to target functions with distance $1/\poly(n)$ from ${\sf AC}^0[p]$.\\

\begin{problem}[Learning {$\AC^0[p]$} from random examples]\label{prob:learn-ac0p-random-examples}
Is there a subexponential time algorithm that learns $\AC^0[p]$ circuits under the uniform distribution using random examples instead of membership queries?
\end{problem}

For recent work related to \Cref{prob:learn-ac0p-random-examples}, we refer to \citep{DBLP:conf/stoc/GurJKRSS24, CJ26}.

\paragraph{From Uniform Learning to Samplable Distributions via Distributional Inversion.}

The approach of \cite{DBLP:conf/coco/CarmosinoIKK16} as described above only yields a PAC learner working on the uniform distribution. A subsequent work due to Binnendyk et al.~\cite{DBLP:conf/alt/BinnendykCKRS22}, following the \emph{indirect learning} approach of \cite{DBLP:conf/colt/FurstJS91}, shows that the result of \cite{DBLP:conf/coco/CarmosinoIKK16} can be extended to obtain learners working over any efficiently samplable distribution, provided (1) that the code of the sampler is known to the learner, and (2) that the target function \emph{composed} with this sampler belongs to the learner's concept class. Below, we restrict our attention to the case that the concept class as well as the sampler complexity of the target distribution are unrestricted polynomial-size circuits.

It is well-known that a natural property would preclude the existence of a PRG with polynomial stretch: outputs of the generator have bounded circuit complexity, while uniformly random strings almost always have maximum circuit complexity \cite{DBLP:journals/jcss/RazborovR97}. Then, by a sequence of results from cryptography \cite{DBLP:journals/jacm/GoldreichGM86, DBLP:journals/siamcomp/HastadILL99, DBLP:conf/focs/Yao82a, DBLP:conf/focs/ImpagliazzoL89}, a natural property implies the existence of efficient distributional inverters for every poly-time function. Roughly, when given the output $y$ of some function $g: \bool^n \to \bool^{m(n)}$, such an inverter outputs a \emph{nearly uniformly distributed} member of $\{x \mid g(x) = y\}$.

Let $A$ be the PAC learner of \cite{DBLP:conf/coco/CarmosinoIKK16} working over the uniform distribution with membership queries. Suppose we would like to learn some target function $f$ over a distribution $D \in \P\SAMP$. The definition of $\P\SAMP$ means that $D$ can be sampled by a polynomial-time machine; i.e. there is a poly-time algorithm $S$ such that $S(r)$ is distributed according to $D$ when the string $r$ is uniformly distributed. Let $I$ be a distributional inverter for $S$.

The key idea is to simulate $A$ on the target function $f(S(-))$. Clearly, knowing $S$ and given membership-query access to $f$, we can simulate membership queries to $f(S(-))$. Then by \cite{DBLP:conf/coco/CarmosinoIKK16}, with high probability, $A$ outputs a hypothesis $h$ such that 
\[\Pr_{r \sim \mathcal{U}_{\poly(n)}}[h(r) = f(S(r))] \geq 1 - 1/\poly(n).\]
We output the hypothesis $h'(-) \eqdef h(I(-))$. 

Intuitively, on input $x \sim D$, we have $h'(x) = h(I(S(r))) \approx h(r)$, where $r \sim \mathcal{U}_{\poly(n)}$. Since $h(r) = f(S(r)) = f(x)$ with high probability, one can show that the hypothesis $h'$ successfully approximates $f$ over $D$, as desired.\\

We refer to \citep{DBLP:journals/cc/Pich25} for additional settings where one can derive learning consequences from complexity lower bound assumptions.

\subsection{Solomonoff Induction: Unsupervised Learning via Kolmogorov Complexity}\label{sec:solomonoff}

In this section, we explain how to efficiently achieve sequence prediction under a meta-complexity assumption about the feasibility of estimating time-bounded Kolmogorov complexity. Formally, we have the following result.

\begin{theorem}[{\citep[Theorem 2.1]{DBLP:journals/eccc/HN25}}]\label{thm:HN25_learning}
If $\mathsf{GapK}^t\text{-}\mathsf{vs}\text{-}\mathsf{K} \in \pr\BPP$, then $(m,n,s,t,\varepsilon,\delta)$-extrapolation can be solved in randomized  time $\poly(n,m,s,t, 1/\varepsilon,1/\delta)$ for $m \geq \widetilde{O}(n^2 \cdot s \cdot \varepsilon^{-6} \cdot \delta^{-5})$.  
\end{theorem}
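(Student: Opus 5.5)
The approach is to make Solomonoff induction efficient. The oracle for $\mathsf{GapK}^t\text{-}\mathsf{vs}\text{-}\mathsf{K}$ will estimate algorithmic probabilities, and those estimates will drive next-block sampling. The overall plan is to exhibit an efficiently samplable predictor $\widehat{Q}$ over $\{0,1\}^{nm}$, built from the oracle. By the chain rule for KL divergence (\Cref{lem:chainruleKL}), its total divergence from the source $X=U^\mathsf{r}(M,1^t)$ will be at most $O(s+\log(\text{stuff}))$. Averaging over $i\sim[m]$ then gives expected conditional KL $O(s)/m$ per block. Pinsker (\Cref{lem:Pinsker}) and Markov then give $\Delta_\mathsf{TV}\le\varepsilon$ except with probability $\delta$, once $m\gtrsim s\,\varepsilon^{-2}\delta^{-1}$ times overheads. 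The stated polynomial $n^2 s\,\varepsilon^{-6}\delta^{-5}$ suggests these overheads come from the approximation errors introduced below.

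\textbf{Step 1: a universal semimeasure with a good dominance bound.} Consider the time-bounded universal distribution $\mathcal{Q}^{t'}$ with $t'=\poly(t,nm)$. Since $M$ is a program of length $\le s$ that is simulated on random coins, every prefix satisfies $\Pr[\mathcal{Q}^{t'}\text{ starts with }x_{<i}]\ge 2^{-s-O(\log t)}\Pr[X_{<i}=x_{<i}]$. This dominance already yields $\KL(X\|\mathcal{Q}^{t'})\le s+O(\log t)$, and the next-block conditional of $\mathcal{Q}^{t'}$ is the ideal predictor. The obstacle is that sampling from the conditional $\mathcal{Q}^{t'}(\cdot\mid x_{<i})$ is not efficient. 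I therefore replace probabilities by complexities: $-\log \mathcal{Q}(y)\approx \qq^{t'}(y)$, which is sandwiched between $\K(y)$ and $\K^{t'}(y)$ up to $O(\log)$ terms (the coding theorem).

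\textbf{Step 2: estimating conditional probabilities with the oracle.} Let $A(x,1^{t'})$ be the $\mathsf{GapK}^t\text{-}\mathsf{vs}\text{-}\mathsf{K}$ algorithm. Define a score $w(y)=2^{-A(y)}$ for prefixes $y$. Next-block prediction will be done bit by bit within the block. Given the current prefix $y$, I output bit $b$ with probability proportional to $2^{-A(yb)}$, using repeated runs with majority or median to make $A$ effectively deterministic. The analysis compares this with the true semimeasure. The upper estimate $A\le\K^{t'}+O(\log)$ will give dominance of the source, just as in Step 1, since $\K^{t'}(x_{<i})\lesssim -\log\Pr[X_{<i}=x_{<i}]+s$ for typical $x$. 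The lower estimate $A\ge\K$, together with Kraft's inequality for prefix-free $\K$, ensures the scores do not sum to much more than $1$, so the normalized predictor loses little.

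\textbf{Step 3 (the main obstacle).} The $O(\log nm)$ additive slack in the estimates is incurred at each of the $nm$ bit-steps. Naively this gives total KL of about $nm\log(nm)$, which destroys the $s/m$ bound. The fix I would try is to normalize only at block granularity and to account the slack against a telescoping sum. Concretely, consider the product over $j$ of the ratios $w(x_{\le j})/w(x_{<j})$. This product telescopes to $w(x_{<i})^{-1}w(x_{\le i})$, so the gaps cost $O(\log)$ per \emph{prediction} rather than per bit. I would then dilute that cost further: predict only at a random subset of positions, or set $\varepsilon$-dependent thresholds, so that the cost per block is $O(\varepsilon^{c})$. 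The remaining factors of $n$, $\varepsilon^{-1}$ and $\delta^{-1}$ come from this dilution and from handling strings where the gap promise is violated, which a Markov argument shows form a small-probability set under $X$. Running time is $\poly(n,m,s,t,1/\varepsilon,1/\delta)$, since each bit requires two oracle calls at time bound $t'$.
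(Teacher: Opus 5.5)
There is a genuine gap in Steps~2--3. You correctly locate the obstacle, but the proposed fix does not close it: pointwise estimates $2^{-A(\cdot)}$ do not give you a consistent measure that you can sample from efficiently.

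The predictor you actually run, $\Pr[b\mid y]\propto 2^{-A(yb)}$, has normalizer $2^{-A(y0)}+2^{-A(y1)}$. The promise $\K\le A\le\K^{t'}+O(\log(nm))$ only bounds the ratio of this normalizer to $2^{-A(y)}$ by about $\poly(nm)\cdot 2^{\K^{t'}(y)-\K(y)}$. So its log-loss does not telescope. The identity in Step~3 concerns the unnormalized ratios $2^{A(x_{<j})-A(x_{\le j})}$, not the probabilities your sampler uses. The best per-bit bound you get is $O(\log(nm))$ plus a depth term.

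Normalizing at block granularity instead means sampling $x_i\propto 2^{-A(x_{<i}x_i)}$ over $2^n$ candidates. Neither enumeration nor two oracle calls per bit implements this, since bit by bit you would need sums of scores over all completions. So either the running-time claim or the correctness claim fails. Even granting such a sampler, your own accounting charges $O(\log(nm))$ per prediction, so the average per-block KL is $O(\log(nm))$, not $O(s/m)$. ``Predicting only at a random subset of positions'' is not available either, since the guarantee is over $i\sim[m]$ with the full prefix $x_{<i}$.

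This is exactly the difficulty the paper flags in \Cref{sec:compression-to-learning-poly-alphabet} and \Cref{sec:compression-to-learning-arbitrary-alphabet}. Normalized incremental weights $2^{-(c(xy)-c(x))}$ are not known to yield conditional guarantees, and they cannot be normalized efficiently over exponentially large alphabets. Separately, Step~2 quietly assumes a coding theorem for deterministic $\K^{t'}$ against randomized sources, which you would need to justify; with $q^t$ the domination bound is immediate.

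The paper's route, following HN25, never replaces probabilities by oracle scores. It keeps a genuine distribution, the time-bounded universal distribution with advice $Q_{t,z}$, and uses the hypothesis in two other ways.
\begin{itemize}
\item It rules out auxiliary-input one-way functions. The universal extrapolation procedure, based on distributional inversion of the universal sampler, then samples $\mathsf{Next}_n(y;Q_{t,z})$ in time exponential in $\mathsf{cd}^t(y\mid z)=q^t(y\mid z)-\K(y\mid z)$.
\item Under the hypothesis, the slow-growth law and the chain rule for computational depth hold. They keep this depth at $O(\delta^{-1}\log t)$ for all but a $\delta$ fraction of positions, which bounds the running time. The analogous chain rule $\sum_{i\le b} q^t(\bar{x}_i\mid\bar{x}_{<i})\le q^t(\bar{x}_1,\ldots,\bar{x}_b)+b\cdot O(\log t)$, combined with domination, bounds the loss.
\end{itemize}
The other idea you are missing is the advice/context split. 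Group the sequence into $b$ big blocks of $w$ symbols, pass the completed big blocks as advice $z=\bar{x}_{<i}$, and condition only on the prefix $\bar{x}_i^{<j}$ of the current big block. The $O(\log t)$ chain-rule loss is then paid once per big block. Because inside a big block you condition a genuine measure, the exact KL chain rule (\Cref{lem:chainruleKL}) spreads it over $w=\Theta(\varepsilon^{-2}\delta^{-1}\log t)$ positions, and Markov and Pinsker finish. That is the concrete form of the ``dilution'' you are reaching for, but it requires both ingredients.

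Finally, the paper only sketches constant $\delta$, obtaining $m=\Omega(s\varepsilon^{-2}\delta^{-1})$. The stated bound $\widetilde{O}(n^2 s\varepsilon^{-6}\delta^{-5})$ comes from the more involved argument in HN25 for $\delta=o(1)$, which the paper does not reproduce.
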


First, we review Solomonoff induction, which considers this extrapolation task in the time-unbounded setting.

\subsubsection{The Classical Theory}

In Solomonoff induction \citep{Solomonoff1964a, Solomonoff1964b} (see \citep{hutter2024introduction} for a modern introduction), given a prefix
$x_{<i} = (x_1,\ldots,x_{i-1})$, the extrapolation algorithm  predicts the next block $x_i \in \{0,1\}^n$ by \emph{conditioning} a certain ``universal'' distribution on the event that
the generated sequence begins with $x_{<i}$. Equivalently, one gives higher probability weight to continuations that make
the full string $(x_{<i},x_i)$ have small Kolmogorov complexity.\footnote{In this section, we take $\K$ to denote \emph{prefix complexity}, which is the same as our stated definition of Kolmogorov complexity, except it requires that the set of inputs on which the chosen universal machine halts is \textit{prefix-free}. By Kraft's inequality, $\sum_{y} 2^{-\K(wy)} \le 1$ in the case of prefix complexity, whereas the same sum over all finite strings $y$ diverges in the case of plain complexity. Though the ideas in this section can be reformulated to accommodate plain Kolmogorov complexity, we adopt this convention to match the classical literature.} Roughly, we predict the next block $x_i$ given the prefix $x_{<i}$ according to the Solomonoff (semi)measure using 
$$
S(x_i \mid x_{< i}) \propto \sum_y 2^{- \K(x_{<i} x_i y)},
$$
where the symbol $\propto$ means ``proportional to''. This is a
form of \emph{Occam's Razor}, i.e., short explanations are  preferred. It is then possible to show that, for sequences generated by a 
randomized program of description length $s$, the resulting predictor is accurate on most $i \in [m]$ once the number of observations $m \gg s/\varepsilon^2$.

 To show that this predictor works as claimed above, we combine the following ingredients:

 \begin{enumerate}
     \item \emph{Universality of the Solomonoff measure.} If the full sequence $x = (x_1,\ldots,x_m)$ is generated by a randomized
program $M$ of description length  $s$, then the measure $S$ dominates the true source $\mu_M$ up to a multiplicative factor $2^{-O(s)}$, i.e.,
\[
S(x_1,\ldots,x_m) \;\ge\; 2^{-O(s)} \cdot \mu_M(x_1,\ldots,x_m)
\]
for every sequence $(x_1, \ldots, x_m)$.
\item \emph{Log-loss bound and average KL-divergence.}   Taking logarithms and averaging over $x = (x_1,\ldots,x_m)\sim\mu_M$,
one obtains the log-loss bound
\[
\Ex_{x \sim \mu_M}\!\left[
\log \frac{\mu_M(x_1,\ldots,x_m)}{S(x_1,\ldots,x_m)}
\right]
\;\le\; O(s).
\]
The key point is that this global bound can be decomposed into one-step prediction errors. In more detail, for every $x=(x_1,\ldots,x_m)$ with $\mu_M(x)>0$, we can
factor both distributions sequentially:
\[
\mu_M(x) = \prod_{i=1}^m \mu_M(x_i \mid x_{<i}),
\qquad
S(x) = \prod_{i=1}^m S(x_i \mid x_{<i}).
\]
Consequently,
\[
\log \frac{\mu_M(x)}{S(x)}
\;=\;
\sum_{i=1}^m
\log \frac{\mu_M(x_i \mid x_{<i})}{S(x_i \mid x_{<i})}.
\]
Now taking expectation over $x \sim \mu_M$ and applying linearity of expectation, 
\begin{eqnarray}
    \Ex_{x \sim \mu_M}\!\left[\log \frac{\mu_M(x)}{S(x)}\right]
& = & \sum_{i = 1}^m \Ex_{x_{\leq i} \sim \mu_M} \left [ 
\log
\frac{\mu_M(x_i \mid x_{<i})}{S(x_i \mid x_{<i})}
\right]
 \nonumber \\ 
& = &  \sum_{i=1}^m
\Ex_{x_{<i} \sim \mu_M} 
\left[
\sum_{y \in \zo^n}
\mu_M(y \mid x_{<i})
\log
\frac{\mu_M(y \mid x_{<i})}{S(y \mid x_{<i})}
\right]. \nonumber
\end{eqnarray}
But the inner sum is exactly the KL-divergence from the true conditional distribution of the
next block to the Solomonoff posterior. Therefore, combining the equations above,
\[
\sum_{i=1}^m
\Ex_{x_{<i} \sim \mu_M}
\Bigl[
\mathsf{KL}\bigl(\mu_M(\cdot \mid x_{<i}) \,\|\, S(\,\cdot \mid x_{<i})\bigr)
\Bigr]
\;\le\; O(s).
\]
\item \emph{Passing to statistical distance.} Finally, for any large enough $m = O(s \cdot 1/\varepsilon^2 \cdot 1/\delta)$, using a standard Markov argument and  applying Pinsker's inequality, one obtains that the predicted next-block
distribution is $\varepsilon$-close in total variation distance to the true conditional distribution produced by $M$ except for at most a $\delta$-fraction of $i \in [m]$.
 \end{enumerate}
 
Thus, in the time-unbounded setting, Solomonoff induction solves the extrapolation problem once the sequence is sufficiently longer than the description of the underlying generator program.

\subsubsection{The Time-Bounded Setting}\label{sec:time-boundedSol}

We consider a simpler setting where $\delta > 0$ is a small constant, and describe a randomized algorithm that runs in time $\poly(n,m,t, 1/\varepsilon)$ and succeeds as long as $m = m(n,s,\varepsilon)$ is sufficiently large. The proof of \Cref{thm:HN25_learning}, which provides a polynomial-time procedure when $\delta = o(1)$, is significantly more involved. We refer to \citep{DBLP:journals/eccc/HN25} for the details.

To move to the time-bounded world, we replace the  universal distribution by the \emph{time-bounded} universal
distribution $Q_{t,z}$, where $z$ is an auxiliary string.  

\begin{definition}[Time-Bounded Universal Distribution with Advice]
Fix $t \in \N$ and an advice string $z \in \zo^*$. The $t$-time-bounded universal distribution
$\Q_{t,z}$ is the distribution induced by the random variable $U^\mathsf{r}(w,z, 1^t)$, where
$w \sim \mathcal{U}_t$ is chosen uniformly from $\zo^t$. Equivalently, for every string $y \in \zo^*$,
\[
\Q_{t,z}(y)
\;\eqdef\;
\Prob_{w \sim \mathcal{U}_t,\;U^\mathsf{r}}\bigl[\, U^\mathsf{r}(w,z,1^t) = y \,\bigr].
\]
When $z=\epsilon$, we omit the advice string and write $\Q_t$ instead of $\Q_{t,z}$. For convenience, we also define
\[
q^t(y \mid z) \;\eqdef\; - \log Q_{t,z}(y).
\]
\end{definition}

We also need the following definition.

\begin{definition}[Universal Extrapolation Distribution]\label{def:advUE}
The \emph{universal extrapolation distribution}
\[
\mathsf{Next}_n(y;Q_{t,z})
\]
 is the distribution of the next $n$ bits of a $Q_{t,z}$-sample, conditioned on
extending the prefix $y$.
\end{definition}

It is important to distinguish the two roles here: $z$ is an advice string in the
Kolmogorov complexity sense, while $y$ is the observed prefix in the probabilistic conditioning sense. This distinction is a crucial idea from \citep{DBLP:journals/eccc/HN25} to implement Solomonoff's theory in the
time-bounded setting.\footnote{In the classical theory, since computational efficiency is not relevant, there is no need to consider an advice string, and the entire observed prefix is used for conditioning. We stress that in both settings -- time bounded and time unbounded -- moving the entire prefix to the advice portion does not guarantee extrapolation.} Jumping ahead, what is special in the time-bounded setting is that one splits the given prefix into advice and context strings for efficiency reasons and then proves that this restricted-context extrapolation is still correct.

In more detail, prior work already gave a \emph{universal sampling procedure} for  approximately sampling from $\mathsf{Next}_n(y;Q_{t,z})$, but its running time is exponential in the \emph{computational depth} of $y$ given $z$  \citep{DBLP:conf/focs/ImpagliazzoL90, DBLP:conf/focs/HiraharaN23}.\footnote{This algorithm assumes that auxiliary-input one-way functions do not exist, which follows from the hypothesis of \Cref{thm:HN25_learning}.} In other words, consider the computational depth quantity
 \[
 \mathsf{cd}^t(y \mid z) \;\eqdef\; q^t(y \mid z) - \K(y \mid z),
 \]
 where we set $\mathsf{cd}_t(y) \eqdef \mathsf{cd}_t(y \mid \epsilon)$. If one tries to extrapolate directly from the full prefix by setting $y = x_{<i}$ and $z = \epsilon$, the runtime can be exponential in
$\mathsf{cd}_t(x_{<i})$.  For some sequences generated by an $s$-bit randomized program, this depth can be
as large as $\Omega(s)$. 

As alluded to above, a key idea from \citep{DBLP:journals/eccc/HN25} is to reorganize the observed prefix into an \emph{advice part} and a \emph{context part}. For appropriate parameters $w,b \in \mathbb{N}$ with $m = b \cdot w$, the
sequence $(x_1, \ldots, x_m)$ is partitioned into $b$ larger blocks, each consisting of $w$ of the original $n$-bit symbols. When
predicting position $j \in [w]$ inside block $i \in [b]$, the learner uses all previous blocks as advice, but keeps only the
prefix of the current block up to position $j-1$ as context. If we write $\bar{x}_i$ for the $i$-th large block,
$\bar{x}_{<i}$ for the concatenation of the previous blocks, and $\bar{x}_i^{<j}$ for the prefix of the current
block before position $j$, then the predictor outputs according to
\[
\mathsf{Next}_n(\bar{x}_i^{<j};Q_{t,\bar{x}_{<i}}).
\]
In other words, the long-range history is moved into \emph{advice}, while a short local prefix is kept as \emph{context}. 

Next, we explain in more detail how this approach reduces the running time while maintaining correctness of the corresponding predictor. For a formal presentation, see \citep{DBLP:journals/eccc/HN25}.\\

\noindent \emph{I. Bounding the running time.} For the runtime analysis, we need two ingredients: the \emph{slow growth law} and the \emph{chain rule for computational depth}.  

The slow growth law states that if $x \sim \mu_M = U^\mathsf{r}(M,\epsilon,1^t)$, where $M$ is a $t$-time randomized machine of description length $s$, then with high probability
$$
\mathsf{cd}^{\poly(t)}(x) \;\leq\; O(s).
$$
The chain rule states that, for a string $x = (x_1,\ldots, x_{m'})$ with each $|x_i| \leq n'$, and any large enough $t'$, we have
$$
\sum_{i=1}^{m'} \mathsf{cd}^{\poly(t')}(x_i \mid x_{<i}) \;\leq\; \mathsf{cd}^{t'}(x) + m' \cdot O(\log t').
$$
These two results are known to hold under the assumption of \Cref{thm:HN25_learning}, as established by \citep{DBLP:journals/eccc/HN25}. 

Putting the results above together, we get that the expected value of $\mathsf{cd}^{\poly(t)}(x_i \mid x_{<i})$ over a random position $i \sim [m]$ is $O(s/m + \log t) = O(\log t)$. By Markov's inequality,
$$
\Pr_{i \sim [m]}\Big [ \mathsf{cd}^{\poly(t)}(x_i \mid x_{< i}) \;\geq \; \delta^{-1} \cdot O(\log t) \Big ] \;\leq\; \delta.
$$
In other words, for most choices of $i$ and $x$, the time-bounded universal sampling procedure (approximately) samples from $\mathsf{Next}_n(x_i;Q_{t,x_{<i}})$  in time $2^{O(\delta^{-1} \cdot \log t)} = \poly(t)$, for $\delta = \Omega(1)$.\footnote{With low probability over $i$ and $x$, we can abort the algorithm and output ``fail'' if its running time reaches a fixed threshold.} 

One can perform a similar runtime analysis once we break the sequence $x = (x_1, \ldots, x_m)$ into $b$ larger blocks, each consisting of $w$ of the original $n$-bit symbols $x_i$, and use the universal extrapolation procedure to sample instead from $\mathsf{Next}_n(\bar{x}_i^{<j};Q_{t,\bar{x}_{<i}})$.
 (In order to guarantee correctness, the proof employs $w = O(\varepsilon^{-2} \cdot \delta^{-1} \cdot \log t)$, as explained below.) 
~\\

\noindent \emph{II. Correctness of the time-bounded predictor.} Recall that our predictor partitions $x = (x_1, \ldots, x_m)$ into $b$ length-$w$ blocks. Given an index $j \in [w]$ inside block $i \in [b]$, it uses the universal sampling procedure in order to predict according to $\mathsf{Next}_n(\bar{x}_i^{<j};Q_{t,\bar{x}_{<i}})$. By the analysis explained above, for most blocks the sampler is correct and we (approximately) sample from  $\mathsf{Next}_n(\bar{x}_i^{<j};Q_{t,\bar{x}_{<i}})$. It remains to show that this is close to the true distribution, i.e., 
$$
\Pr_{i,\;j,\;(\bar{x}_{<i},\bar{x}_{i}^{<j}) \sim \mu_M} \Big[  \Delta_\mathsf{TV} \big ( \bar{\mathcal{X}}^j_i \mid \bar{x}_{<i} \bar{x}^{< j}_{i}  ,\mathsf{Next}_n(\bar{x}_i^{<j};Q_{t,\bar{x}_{<i}}) \big )  \;\leq\; \varepsilon \Big ] \;\geq\; 1- \delta.
$$

To prove this, let $\bar{\mathcal{X}}_i$ denote the random $i$-th block under $\mu_M$, and let
$\bar{\mathcal{X}}_{<i}$ denote the concatenation of the previous blocks. Up to replacing $t$ by
$\poly(t)$ (which we suppress in the notation), similarly to the chain rule for computational depth described above, it is possible to show that for every $(\bar{x}_1,\ldots,\bar{x}_b)$,
\[
\sum_{i=1}^b q^t(\bar{x}_i \mid \bar{x}_{<i})
\;\leq\;
q^t(\bar{x}_1,\ldots,\bar{x}_b) + b \cdot O(\log t).
\]
On the other hand, the time-bounded universal distribution $Q_t$ dominates every $s$-bit  source $\mu_M$ up to a factor $2^{-O(s)}$. Equivalently,
\[
q^t(\bar{x}_1,\ldots,\bar{x}_b)
\;\leq\;
O(s) - \log \mu_M(\bar{x}_1,\ldots,\bar{x}_b)
\]
for every sequence in the support of $\mu_M$. Combining the two inequalities and expanding the source probability  block-by-block,
\[
\sum_{i=1}^b \Bigl(
q^t(\bar{x}_i \mid \bar{x}_{<i})
+ \log \mu_M(\bar{x}_i \mid \bar{x}_{<i})
\Bigr)
\;\leq\; O(s) + b \cdot O(\log t).
\]
Next, we average over $(\bar{x}_1,\ldots,\bar{x}_b) \sim \mu_M$. For each $i \in [b]$, we claim that the expectation of the $i$-th summand is exactly
\[
\Ex_{\bar{x}_{<i} \sim \mu_M}
\Bigl[
\KL\bigl((\bar{\mathcal{X}}_i \mid \bar{x}_{<i}) \,\|\, Q_{t,\bar{x}_{<i}}\bigr)
\Bigr].
\]
Indeed, by definition, $q^t(y \mid \bar{x}_{<i}) = -\log Q_{t,\bar{x}_{<i}}(y)$, and so
\[
q^t(y \mid \bar{x}_{<i}) + \log \mu_M(y \mid \bar{x}_{<i})
\;=
\log \frac{\mu_M(y \mid \bar{x}_{<i})}{Q_{t,\bar{x}_{<i}}(y)}.
\]
Taking expectation over $y \sim (\bar{\mathcal{X}}_i \mid \bar{x}_{<i})$ yields the desired statement. As a consequence,
\[
\Ex_{i \sim [b]}
\Ex_{\bar{x}_{<i} \sim \mu_M}
\Bigl[
\KL\bigl((\bar{\mathcal{X}}_i \mid \bar{x}_{<i}) \,\|\, Q_{t,\bar{x}_{<i}}\bigr)
\Bigr]
\;\leq\;
\frac{O(s)}{b} + O(\log t).
\]
In other words, the amortized KL-divergence over the $b$ blocks is $O(\log t)$ for $b = \Omega(s/\log t)$. 

The remaining steps follow from standard arguments. First, we apply
the chain rule for KL-divergence to distribute the bound of $O(\log t)$ across the $w$ positions in a
block, yielding an average KL-divergence of $O(\log t)/w$ over positions $j \in [w]$. For $w = \Omega(\varepsilon^{-2} \cdot \delta^{-1} \cdot \log t)$, we obtain a bound of $O(\log t)/w = \varepsilon^2 \cdot \delta$ per symbol. Then, from Markov's inequality, the per-symbol KL-divergence is at most $\varepsilon^2$ for a $(1-\delta)$-fraction of positions inside a block. Finally, the desired bound on statistical distance then follows from Pinsker's inequality.

To sum up, the predictor satisfies the desired guarantees provided that $m = b \cdot w = \Omega((s/\log t) \cdot (\varepsilon^{-2} \delta^{-1} \log t) ) = \Omega(s \cdot \varepsilon^{-2} \cdot \delta^{-1})$.\\

As mentioned above, in order to obtain an efficient predictor when $\delta = o(1)$, a more involved argument seems required, and the known lower bound on $m$ as a function of $\varepsilon$ and $\delta$ is sub-optimal.

\begin{problem}[Optimal Parameters in Extrapolation from Meta-Complexity]\label{prob:extrap_optimal}
Prove that if $\mathsf{GapK}^t\text{-}\mathsf{vs}\text{-}\mathsf{K} \in \pr\BPP$, then $(m,n,s,t,\varepsilon,\delta)$-extrapolation can be solved in randomized  time $\poly(n,m,s,t, 1/\varepsilon,1/\delta)$ for $m \geq \widetilde{O}(s \cdot \varepsilon^{-2} \cdot \delta^{-1})$.
\end{problem}

\subsection{Equivalences Between Learning and Meta-Complexity}\label{sec:learning_meta-equivalences}

In \Cref{sec:learn-via-dist} we described several settings where one can get efficient learning from  suitable distinguishers. In turn, the necessary distinguishers  follow from the existence of efficient algorithms for meta-computational problems. The latter allow one to distinguish randomness from structure, or equivalently, input strings that can be compressed from (a large fraction of) incompressible ones. 

We now discuss two settings where appropriate relaxations of the meta-computational problems are actually \emph{computationally equivalent} to the learning task. Let $\mathcal{C}$ be a class of functions (e.g., polynomial-size Boolean circuits), and let $\Gamma$ be a learning model (e.g., PAC learning under the uniform distribution from random examples). We present equivalences of the following form:
\begin{center}
 $\mathcal{C}$ admits efficient learning algorithms in model $\Gamma$
 $\;\Longleftrightarrow\;$
 The meta-computational problem $L_{\mathcal{C},\Gamma}$ is easy
\end{center}
In the equivalence, $L_{\mathcal{C},\Gamma}$ is a suitable computational problem that we associate to the learning model and concept class.

The idea that the design of learning algorithms is equivalent to the search for efficient procedures that detect patterns in data offers a fundamentally new perspective: while \emph{learning is inherently a sublinear task}, where an algorithm must inspect a small portion of a large object (the unknown function) and produce an approximate representation of it (the output hypothesis), a typical problem from meta-complexity, such as $\mathsf{MCSP}$ and its variants, involves deciding if an input string is ``random'' or ``structured'', a standard algorithmic problem where \emph{the algorithm has access to the entire input}. 

Concretely, as explained below, \citep{Ilango-Loff-Oliveira-CCC20} shows that a circuit class $\mathcal{C}$ can be PAC learned under the uniform distribution from random examples in polynomial time if and only if a variant of $\mathsf{MCSP}$ for partial functions and with respect to $\mathcal{C}$-circuit complexity is easy on average. Similarly, \citep{DBLP:conf/coco/CarmosinoIKK16} can be phrased (see \citep{DBLP:conf/coco/OliveiraS17}) as an equivalence between the efficient learnability of $\mathcal{C}$ with membership queries under the uniform distribution and the existence of a promise-BPP natural property against $\mathcal{C}$ (see \Cref{sec:natural_prop}).

Next, we explain in more detail the aforementioned equivalences. For simplicity, we mostly focus on the case where $\mathcal{C}$ corresponds to general Boolean circuits, though the equivalences hold more generally  under mild assumptions on $\mathcal{C}$ related to their closure properties and basic expressive power.

\paragraph{Learning from Random Examples.}

Ilango, Loff, and Oliveira \cite{Ilango-Loff-Oliveira-CCC20} rephrase the ideas developed in \cite{DBLP:conf/colt/Vadhan17} (see \Cref{sec:direct_product_random_examples}), making more explicit a connection between the easiness of meta-complexity and the possibility of learning. More specifically, the authors show that if a certain problem from meta-complexity, denoted $\pMCSP$, is easy on average, then PAC learning general Boolean circuits on the uniform distribution from random examples is easy. Conversely, a learning algorithm that uses random examples implies the easiness on average of $\pMCSP$.

In \cite{Ilango-Loff-Oliveira-CCC20}, $\pMCSP$ refers to the following formulation of a circuit minimisation problem on partially defined functions.
\begin{quote}

    An instance of the language $\pMCSP$ consists of an integer $n \in \nat$ represented in unary, a collection of input/output pairs $((x_i, b_i))_{i \in [m]}$ for some $m \in \nat$, where $x_i \in \bool^n$ and $b_i \in \bool$ for each $i \in [m]$, and a threshold parameter $s \in \nat$.
    \\
    \\$(1^n, ((x_i, b_i))_{i \in [m]}, s)$ is a yes-instance of $\pMCSP$ iff there exists a Boolean circuit $C$ of size at most $s$ such that $C(x_i) = b_i$ for all $i \in [m]$.
    
\end{quote}
The authors further consider an \emph{average-case} notion of $\pMCSP$, in which an algorithm $A$ is said to solve $\pMCSP$ on average over the uniform distribution with $m$ samples, advantage $\gamma$, and parameter $s$ if for every boolean function $f \in \SIZE[s]$, 
\[\left | \Pr_{\substack{A, \\ x_1, \ldots , x_m \sim \mathcal{U}^m_n}}[A(x_1, ..., x_m; f(x_1), ..., f(x_m)) = 1] - \Pr_{\substack{A, \\ x_1, ..., x_m \sim \mathcal{U}^m_n \\ b_1, ..., b_m \sim \mathcal{U}^m_1}}[A(x_1, ..., x_m; b_1, ..., b_m) = 1] \right | \geq \gamma(n).\]
For large enough $m(n) = \poly(s(n))$ and small enough $\gamma(n) = \Omega(1/\poly(n))$, we remark that this is a rather lenient notion of average-case easiness over the uniform distribution: it is enough to distinguish labels consistent with a function from the class from random labels.

Now observe that the task above is precisely the problem of breaking the Direct Product Generator $G_{DP}^f$ with the input $\overline{x}$ sampled uniformly at random and the target function $f$ belonging to $\SIZE[s]$. Applying the techniques discussed in the context of \cite{DBLP:conf/colt/Vadhan17}, such an algorithm may be transformed into a weak learner for $\SIZE[s]$ working on examples drawn from the uniform distribution. To obtain a strong PAC learner for $\SIZE[s]$ over the uniform distribution, the authors apply a result of \cite{DBLP:conf/colt/BonehL93}  to boost accuracy.

Conversely, it is not difficult to show that an algorithm that learns $\SIZE[s]$ under the uniform distribution from random examples can be used to solve the average-case version of $\pMCSP$ introduced above. Given a sequence $(x_1, \ldots, x_m; \ell_1, \ldots, \ell_m)$, where we assume $m(n)$ is large enough, our task is to decide whether the labels $\ell_1, \ldots, \ell_m$ are \emph{consistent} with a fixed but unknown $f \in \SIZE[s]$ or \emph{random}. We simulate the learner over the sample set $\{(x_1, \ell_1), \ldots, (x_{m-1}, \ell_{m-1})\}$. Let $h$ be its output hypothesis. If $h(x_m) = \ell_m$, we output $1$ (meaning ``consistent''); otherwise we output $0$ (meaning ``random''). On the one hand, if the labels are consistent with some function $f \in \SIZE[s]$, $h$ is likely to correctly label $x_m$ by $\ell_m$, in which case we output $1$. On the other hand, in the random case, no matter the output hypothesis $h$, since the label $\ell_m$ is an independent random bit we get that $h(x_m) = \ell_m$ with probability exactly $1/2$. Therefore, we distinguish the two cases with noticeable probability, i.e., we  solve $\pMCSP$ on average.

\paragraph{Learning from Membership Queries.}  
The preceding discussion used the term natural property informally. The precise randomized formulation relevant here is the notion of a $\pr\BPP$-natural property introduced in \Cref{sec:natural_prop}. Such a property is decided by a randomized algorithm running in time polynomial in the truth-table length $N=2^n$. It accepts with probability at least $2/3$ every truth table in some dense set, while accepting the truth table of each function in $\mathcal{C}[s]$ with probability at most $1/3$. No guarantee is imposed on the remaining truth tables: on these inputs the acceptance probability may lie between $1/3$ and $2/3$, which explains the promise in $\pr\BPP$. Informally, the algorithm efficiently recognizes many unstructured functions while rejecting every function of low $\mathcal{C}$-complexity. This can be seen as a weak average-case procedure for the $\MCSP$ problem for $\mathcal{C}$-circuits.

As discussed above, Carmosino, Impagliazzo, Kabanets, and Kolokolova \cite{DBLP:conf/coco/CarmosinoIKK16} turn such a property into a learner for $\mathcal{C}$ under the uniform distribution: the property distinguishes truth tables produced by the Nisan--Wigderson generator from random truth tables, and the NW reconstruction procedure converts this distinguisher into a hypothesis for the target. The reconstruction uses oracle access to the target, which accounts for the membership queries.

Oliveira and Santhanam \cite{DBLP:conf/coco/OliveiraS17} noted that the reverse implication from learning to natural properties also holds, even if one starts with a barely non-trivial learning algorithm. Concretely, a non-trivial learner is a randomized oracle algorithm $A^f$ that, for some constant $c\geq 1$ and every $f\in\mathcal{C}[s]$, outputs with probability at least $1-1/n$ a hypothesis $h$ satisfying
\[
    \Pr_{x\sim\mathcal{U}_n}[h(x)=f(x)]\geq \frac12+\gamma(n),
    \qquad \gamma(n)=\frac{1}{n^c},
\]
and runs in time $T(n)=2^n/n^{\omega(1)}$. The super-polynomial saving in the denominator is a convenient way to make the definition independent of $c$; a fixed sufficiently large polynomial saving also works.

The idea works as follows. Given $w\in\{0,1\}^{N}$, simulate $A^f$ for $f=\mathsf{fnc}^w$, answering each membership query directly from $w$. Estimate the agreement of the resulting $h$ with $f$ to additive error $\gamma/4$ using polynomially many fresh uniform inputs. Reject if the estimate exceeds $1/2+\gamma/2$, and accept otherwise. If $f\in\mathcal{C}[s]$, the learner succeeds and the test rejects with probability at least $2/3$.

To see why the test accepts most functions, fix a size bound $S\leq T(n)$. By a simple counting argument, there are at most $2^{O(S\log S)}$ Boolean circuits of size $S$. For any fixed circuit $h$ and a uniformly random function $f$, the values $\mathbf{1}[h(x)=f(x)]$ are independent unbiased bits. A Chernoff bound therefore gives
\[
    \Pr_f\left[\Pr_x[h(x)=f(x)]\geq \frac12+\frac{\gamma}{4}\right]
    \leq \exp(-\Omega(\gamma^2 2^n)).
\]
Thus the fraction of functions that can be approximated by a fixed hypothesis $h$ is at most $\exp(-\Omega(\gamma^2 2^n))$. A union bound over all possible $h$ (circuits of size at most $S$) shows that the fraction of functions approximated by any such circuit is at most $\exp(O(S\log S)-\Omega(\gamma^2 2^n))=o(1)$ under the non-trivial time bound. Since every hypothesis output by $A$ has size at most $T(n)$, the test accepts almost every truth table with high probability. It runs in time $T(n)\cdot\poly(n)=\poly(N)$, so it satisfies constructivity, density, and $\mathcal{C}[s]$-hardness from \Cref{sec:natural_prop}. Truth tables outside the two guaranteed cases may have intermediate acceptance probability, which is precisely why the promise class $\pr\BPP$ is appropriate. Together with the learning-from-natural-properties direction, this gives the desired equivalence, up to standard parameter translations.\\

It would be interesting to further develop the dictionary between meta-complexity and learning.

\begin{problem}[Equivalences Between Learning and Meta-Complexity]\label{prob:equiv_learning_meta}
    Can we characterize efficient learnability in other learning frameworks via meta-complexity?
\end{problem}

Next, we discuss a surprising application of the equivalence between learning and meta-complexity.

\subsection{Learning Speedups: Reducing the Running Time via Meta-Complexity}\label{sec:speedup}

A PAC learning algorithm for a class $\mathcal{C}$ of functions comes with three parameters of interest: the \emph{confidence} that it succeeds, the \emph{error} of its output hypothesis, and its \emph{running time}. While it is not difficult to amplify the confidence of a learning algorithm \citep[Chapter 4]{KearnsVazirani:94}, reducing its error is much more subtle. Originally developed to address a question in learning theory, \emph{boosting} (see \citep{BoostingBook}) is a key technique in machine learning that allows one to reduce the error of the output hypothesis of a PAC learner. Thus both the confidence and error parameters can be boosted. Could it be the case that the \emph{running time} of a learning algorithm can also be improved in a generic fashion?

Surprisingly, \citep{DBLP:conf/coco/OliveiraS17} relies on complexity-theoretic techniques \citep{DBLP:conf/coco/CarmosinoIKK16} to show that, in some cases, one can make learning algorithms run much faster! They termed this phenomenon \emph{learning speedups:} a generic approach to improve the \emph{running time} of a learning algorithm. How is this possible? Key to the learning speedup compiler is to explore a fourth dimension of learning algorithms: they tend to solve more difficult tasks than the ones we typically apply them to.

To achieve a learning speedup, we explore the \emph{complexity} of the unknown function $f \colon \{0,1\}^n \to \{0,1\}$ in $\mathcal{C}$ that we are challenged to learn, by \emph{reducing this task} to the learnability of a \emph{more complex} function $g \colon \{0,1\}^m \to \{0,1\}$ over a \emph{smaller} number $m$ of input bits. The rationale is simple: on an instance of the learning problem of length $n$, if we run the original (slow) learning algorithm on a problem of much smaller dimension $m \ll n$, we obtain a faster learning algorithm as a function of $n$. For this to succeed, we need to guarantee that $g$, while more complex than $f$, is still in $\mathcal{C}$ (since the original learner is only promised to learn functions in $\mathcal{C}$), and we must discover a way to implement the reduction between learning problems. To achieve the latter, we  \emph{implicitly} simulate the original learning algorithm on $g$ (while only having access to $f$), and show that a good output hypothesis for $g$ can be transformed into a good hypothesis for $f$.\footnote{At the technical level, this idea is implemented in a more sophisticated fashion, by reducing the learnability of $f$ to the task of learning a collection $\{g_i\}_{i \in \mathcal{S}}$ of functions, in a way that the smaller functions $g_i$ collectively encode enough information that allows us to approximate $f$ using approximators to randomly sampled functions $g_i$ that are implicitly constructed from $f$.}

Here we explain the proof of the following speedup result.

\begin{theorem}[From Nontrivial Learning to Subexponential Time Learning \citep{DBLP:conf/coco/OliveiraS17}]
    Let $\mathcal{C}$ be a circuit class at least as powerful as bounded-depth circuits with parity gates. Suppose that, for every constant $k \geq 1$, $\mathcal{C}$-circuits of size $n^k$ can be PAC learned under the uniform distribution with membership queries in time $\leq 2^{n}/n^{\omega(1)}$. Then, for every $k \geq 1$ and $\varepsilon > 0$, $\mathcal{C}$-circuits of size $n^k$ can be PAC learned under the uniform distribution with membership queries in time $O(2^{n^\varepsilon})$.
\end{theorem}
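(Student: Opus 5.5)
The plan is to go through natural properties, using the two directions of the learning/meta-complexity equivalence from \Cref{sec:learning_meta-equivalences}. Fix $k\geq 1$ and $\varepsilon>0$, and let $f\in\mathcal{C}$ be the target, of size $n^k$ on $n$ inputs. Choose a smaller dimension $m=n^{\varepsilon'}$ for a suitable small $\varepsilon'>0$ depending on $\varepsilon$.

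\emph{Step 1: a natural property from the nontrivial learner.} Apply the assumed learner to $\mathcal{C}$-circuits of size $m^{k'}$ on $m$ inputs, where $k'$ is a large constant chosen later. By the Oliveira--Santhanam argument above, this learner yields a $\pr\BPP$-natural property. On truth tables of length $M=2^m$, it rejects every function in $\mathcal{C}_m[m^{k'}]$ and accepts a dense set of truth tables. It runs in time $2^m/m^{\omega(1)}\cdot\poly(m)=\poly(M)$, using only the counting and union-bound argument for largeness.

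\emph{Step 2: CIKK reconstruction at scale $m$.} Instantiate the Nisan--Wigderson generator $G^f_{NW}$ with designs whose output length is exactly $M=2^m$. Each output bit is $f(z|_{S_i})$, so the sets $S_i\subseteq[d]$ must have size $n$. The ground set then has size $d=O(n^2/\log M)$ if the intersections are at most $\log M=m$, and we take $\log M\approx m$.

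For each fixed seed $z$, the truth table $G^f_{NW}(z)\in\{0,1\}^{M}$ is the truth table of a function $g_z\colon\{0,1\}^m\to\{0,1\}$. Computing $g_z(i)$ requires computing the design set $S_i$ from the index $i$ and evaluating $f$ once. Here we use that $\mathcal{C}$ contains bounded-depth circuits with parity gates: designs can be built from low-degree polynomials over $\mathbb{F}_2$-extension fields, and the index-to-set map is then computable in $\AC^0[\oplus]$ of size $\poly(n)$. Hence $g_z\in\mathcal{C}_m[\poly(n)]$.

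Choose $k'$ so that $m^{k'}\geq\poly(n)$. Since $m=n^{\varepsilon'}$, this means $k'\approx O(k)/\varepsilon'$, a constant. Every $g_z$ is then rejected by the property, while uniform strings are accepted, so the property distinguishes $G^f_{NW}$ from uniform. The NW reconstruction from Yao's hybrid argument (\Cref{sec:yao}) now applies. Using membership queries to $f$, it fixes the seed bits outside a single set $S_i$ and tabulates the at most $2^{\log M}$-sized restrictions of $f$ on neighbouring sets. This produces a circuit with advantage $1/\poly(M)$ over $f$ under $\mathcal{U}_n$. The running time is $\poly(M,n)=2^{O(n^{\varepsilon'})}$.

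\emph{Step 3: amplification.} The reconstruction gives only a weak learner with advantage $1/\poly(M)=2^{-O(m)}$. To boost it to accuracy $1-\varepsilon_0$, I would either use Impagliazzo--Wigderson-style hardness amplification inside $\mathcal{C}$ or apply uniform-distribution boosting with membership queries, as in CIKK. CIKK amplify by replacing $f$ with a direct-product/XOR amplified version, which stays in $\AC^0[\oplus]$-closed classes up to polynomial size, and then decoding with a local list-decoder. Both routes cost $\poly(M,n,1/\varepsilon_0)$ overhead. Taking $\varepsilon'<\varepsilon$ gives total time $O(2^{n^\varepsilon})$.

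The main obstacle is Step 2, specifically ensuring that the functions $g_z$ have $\mathcal{C}$-circuits of size polynomial in $n$, and hence at most $m^{k'}$, once the design computation and any amplification encoding are included. This is exactly where the closure assumption on $\mathcal{C}$ enters. I would first try the explicit polynomial-based designs, since they are computable with parity gates in constant depth, and verify that the amplified target, for example $\bigoplus_j f(x_j)$, remains small in $\mathcal{C}$.
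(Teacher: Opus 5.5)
Your proposal is correct and follows the same route as the paper's sketch. The nontrivial learner at dimension $m=n^{\varepsilon'}$ yields a $\pr\BPP$-natural property against $\mathcal{C}_m[m^{k'}]$; the NW functions $g_z$ lie in that class by closure under $\AC^0[\oplus]$; and CIKK reconstruction with membership queries runs in time $2^{O(m)}$, with your amplification step more explicit than the paper's "standard amplification." One point to tighten: only the CIKK option in Step 3 works as written. The amplified target $f^{\mathrm{amp}}$ must already be placed inside the generator in Step 2, so that the property is applied to truth tables built from $f^{\mathrm{amp}}$. Generic boosting does not apply directly, because the reconstruction only gives weak hypotheses under $\mathcal{U}_n$, not under the reweighted distributions a booster requires.
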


\begin{proof}[Proof Sketch.]
Fix $k\geq 1$ and $\varepsilon>0$. The argument is the composition of the two directions described in \Cref{sec:learning_meta-equivalences}. First, the assumed non-trivial learner yields a $\pr\BPP$-natural property (\Cref{def:nat_prop}). Indeed, on the truth table of an $m$-variate function $g$, we simulate the learner using the truth table to answer its membership queries, and then test whether its output hypothesis has noticeable agreement with $g$. The learner succeeds when $g$ is computed by a small $\mathcal{C}$-circuit, whereas the counting argument from the preceding section shows that almost every function cannot be noticeably approximated by a circuit of size $2^m/m^{\omega(1)}$. Reversing the output of this test therefore gives, for every constant $d$, a $\pr\BPP$-natural property against $\mathcal{C}[m^d]$. Its running time is polynomial in the truth-table length $2^m$, and hence is $2^{O(m)}$.

We now apply the learning-from-natural-properties construction discussed in \Cref{sec:learn-via-dist}. Let $f\in\mathcal{C}[n^k]$ be the target function, and choose a parameter $m=n^{\gamma}$, where $\gamma>0$ will be fixed below. The black-box Nisan--Wigderson generator associates with $f$ a family of auxiliary functions $g_z\colon\{0,1\}^m\to\{0,1\}$. Since $\mathcal{C}$ contains bounded-depth circuits with parity gates and has the standard closure properties, each $g_z$ is computed by a $\mathcal{C}$-circuit of size $m^d$, for some constant $d=d(k,\gamma)$. This is the reason the theorem assumes non-trivial learners for $\mathcal{C}[n^d]$ for every constant $d$, rather than only for one fixed size bound.

The natural property against $\mathcal{C}[m^d]$ rejects every truth table $\mathsf{tt}(g_z)$ while accepting a dense set of random truth tables. It therefore distinguishes the NW outputs associated with $f$ from a uniformly random function on $m$ bits. The NW reconstruction procedure converts this distinguisher into a hypothesis that approximates $f$ under the uniform distribution. As explained earlier, the reconstruction can form the required truth-table entries of the functions $g_z$ using oracle access to $f$; these become the membership queries of the resulting learner.

The natural property and the reconstruction run in time polynomial in $2^m$, so the total running time is $2^{O(m)}\cdot\poly(n)=2^{O(n^{\gamma})}$. Choose $\gamma<\varepsilon$ small enough to absorb the constant hidden in the exponent. Standard accuracy and confidence amplification then gives a PAC learner running in time $O(2^{n^\varepsilon})$, as desired.
\end{proof}

As explained above, the key conceptual insight behind the learning speedup compiler is that to learn a function $f \in \mathcal{C}[n^k]$ in sub-exponential time, we invoke the nontrivial learner on concepts of size $m^{\ell}$, where $\ell$ is larger than $k$ but the input length $m$ is smaller than $n$. Crucially, we save  time by using that the original learner is able to succeed on relatively harder learning tasks of smaller instance length. The tradeoff is considerably more general. In particular, a nontrivial learner that learns exponential size concepts can be transformed into a polynomial-time learner for polynomial-size concepts \citep{DBLP:conf/coco/OliveiraS17}.

Despite its obvious relevance and potential for applications, this intriguing phenomenon is currently restricted to PAC learning algorithms that work under the uniform distribution and can make membership queries. It also requires the class $\mathcal{C}$ to be reasonably expressive, since this is needed to implement the  techniques from \citep{DBLP:conf/coco/CarmosinoIKK16}.  

\begin{problem}[Learning Speedups in Different Learning Frameworks]\label{prob:learning_speedup}
    Systematically investigate learning speedups in different learning models. In
particular, is there a generic speedup for learning under the uniform distribution from random examples?
\end{problem}

\newpage

\section{Learning vs Compression via Shannon's Information Theory}\label{sec:shannon_connections}

We have seen several connections between learning and compression through the lens of meta-complexity and algorithmic information theory.  In this section, we explore connections from the perspective of Shannon's information theory, taking the theory of source coding as our point of departure. As we will see, the relationship between computational learning and meta-complexity can be understood as a complexity-theoretic counterpart of the classical relationship between learning and universal coding. Our goal is to review several tools for establishing such connections and to explain how notions from meta-complexity fit into this framework. We also briefly discuss how this perspective relates to language modeling, one of the central paradigms of modern machine learning.

\subsection{Starting from Shannon's Information Theory}\label{sec:information-theory}

Shannon's information theory~\cite{Shannon1948} provided a precise mathematical framework for quantifying ``information'' and characterized what can and cannot be achieved in data compression and communication.
Originally developed to address problems in communication engineering, the theory laid the foundations for the mathematical study of information and has profoundly influenced theoretical computer science.

One of the central results of Shannon's theory, and the one most relevant to our discussion, is the noiseless source coding theorem.
In the setting of this theorem, a message $x$ is drawn from a fixed distribution $\D$ over a finite message space $\mathcal{X}$, and $\D$ is known to both the encoder and the decoder.
The goal is to encode $x$ as a short binary string from which $x$ can be uniquely recovered.
By fixing a canonical binary representation, we may assume without loss of generality that $\mathcal{X}=\binset^n$ for some $n$.

Shannon's construction begins by ordering the messages in $\supp(\D)$ as $x_1,\ldots,x_m$ so that $p_1\geq p_2\geq\cdots\geq p_m$, where $p_i\eqdef\D(x_i)$.
For each $i$, let
\[
\ell_i\eqdef\left\lceil\log\frac{1}{p_i}\right\rceil
\qquad\text{and}\qquad
P_{<i}\eqdef\sum_{j<i}p_j.
\]
The encoder $\Enc$ maps $x_i$ to the first $\ell_i$ bits after the binary point in the binary expansion of $P_{<i}$. One can verify that the original message is uniquely determined by its encoding; we omit the argument.\footnote{For an introduction to data compression, we refer to \citep{blelloch2001introduction}.}

By construction, the expected codeword length satisfies
\[
\begin{aligned}
\Exp_{x\sim\D}\bigl[|\Enc(x)|\bigr]
&=\sum_{x\in\supp(\D)}\D(x)\left\lceil\log\frac{1}{\D(x)}\right\rceil\\
&<\sum_{x\in\supp(\D)}\D(x)\left(\log\frac{1}{\D(x)}+1\right)\\
&=\Ent{\D}+1.
\end{aligned}
\]

Notably, Shannon also established the converse, showing that entropy gives a lower bound on the expected codeword length of any reasonable coding.
In particular, every uniquely decodable\footnote{An encoding is uniquely decodable if every finite concatenation of codewords admits a unique decomposition into codewords. Shannon did not state his result using this notion; the corresponding formulation for uniquely decodable codes was later established by McMillan \cite{McMillan1956}.} binary encoding $C\colon\supp(\D)\to\binset^*$ satisfies
\[
\Exp_{x\sim\D}\bigl[|C(x)|\bigr]\geq\Ent{\D}.
\]
In this sense, Shannon's information theory \emph{solves} the compression problem, provided that the source distribution is known to both the encoder and the decoder.
 
For data arising in the real world, however, the source distribution is rarely given explicitly.
For instance, we encounter many different kinds of data in everyday life, including text, music, and images.
To apply Shannon's construction to all such data using a single encoder, we would first have to specify a distribution over the possible inputs.
But can we identify a natural distribution that captures the diverse forms of data generated by human activity?
Even in a much narrower setting, such as compressing an email written by a professor, the resulting code may differ substantially depending on whether we model the message as drawn from a distribution over all human-written text or from one restricted to academic writing.
How, then, should we specify the underlying source distribution?
Should a description of the distribution itself be included as part of the encoding?

\subsubsection{Universal Coding and Kolmogorov Complexity as Its Ultimate Goal}

The problem of compressing strings without knowing the underlying source distribution is known as \emph{universal coding} and dates back to the early works \cite{kolmogorov1965three,Fitingof1966OptimalCoding,Davisson1973UniversalNoiselessCoding}. For instance, Lempel--Ziv algorithms are prominent examples \cite{DBLP:journals/tit/ZivL77,DBLP:journals/tit/ZivL78}. For every stationary ergodic source (roughly, a source whose statistical behaviour is stable over time and can be inferred from a single sufficiently long sample sequence), their amortized code length converges to the entropy rate without requiring an explicit description of the source.
Such guarantees are source-universal but do not capture every simple individual sequence. For example, consider the infinite sequence obtained by concatenating the binary representations
\[ 0,1,10,11,100,\ldots \]
of the nonnegative integers. The sequence has a short generating program, but LZ78 does not asymptotically compress it.

Kolmogorov complexity and its time-bounded variants, which are central notions in meta-complexity, can be seen as the ultimate goal of universal coding.
Recall that the Kolmogorov complexity $\K(x)$ of a string $x$ is the length of the shortest program that outputs $x$ and thus measures the length of its shortest algorithmic description without presupposing any source distribution.
In this sense, $\K(x)$ represents the ideal compression length of $x$ in the absence of computational constraints.
Its time-bounded variant $\K^t(x)$ additionally requires the program to output $x$ within $t$ steps and thus provides the corresponding ideal when the decoding time is also constrained.

\subsubsection{Universal Coding and Learning}\label{sec:codinglearningstat}

We now relate universal coding to distribution learning.
Both problems concern an unknown source distribution: universal coding seeks to compress its outputs without knowing the distribution in advance, whereas distribution learning uses samples to approximate it.

\paragraph{Learning $\Rightarrow$ Universal Coding.}
In one direction, suppose that the unknown source distribution is $\D$ over $\binset^n$, and that we have a distribution learner for a class containing $\D$.

Let us first consider the case in which the learning guarantee is measured in KL divergence. Given samples from $\D$, suppose that the learner outputs a distribution $\mathcal{E}$ satisfying $\KL(\D\|\mathcal{E}) \leq \epsilon$.
We can then use $\mathcal{E}$ as the coding distribution in place of $\D$. A Shannon code based on $\mathcal{E}$ assigns each string $x$ a codeword of length at most $\log \frac{1}{\mathcal{E}(x)} + 1$.
Ignoring this standard one-bit rounding overhead, the excess expected code length over the entropy of $\D$ is exactly the KL divergence:
\[
\Exp_{x\sim\D}\left[\log\frac{1}{\mathcal{E}(x)}\right] - \Ent{\D} =
\KL(\D\|\mathcal{E}) \le \epsilon.
\]
Thus, distribution learning in KL divergence immediately yields a code whose expected length is close to the optimal expected length for $\D$.

Now consider the case in which the learning guarantee is measured in total variation distance, as in \Cref{def:distlearning}. Suppose that, given samples from $\D$, the learner outputs a distribution $\mathcal{E}$ satisfying $\TV(\D,\mathcal{E}) \leq \epsilon$.
Again, we would like to use $\mathcal{E}$ in place of $\D$ when constructing a Shannon code. However, closeness in total variation distance alone does not imply any bound on $\KL(\D\|\mathcal{E})$, which is precisely the expected excess ideal code length incurred by using $\mathcal{E}$ instead of $\D$. Indeed, the KL divergence can be infinite if $\mathcal{E}(x)=0$ for some $x$ satisfying $\D(x)>0$.

The following proposition provides a substitute for such a KL-divergence bound. Intuitively, it shows that, with high probability over $x\sim\D$, the ideal code length assigned by $\mathcal{E}$ is not much larger than that under the true distribution $\D$. We will use it to derive two consequences: a bound on the expected length of a Shannon code and a comparison with Kolmogorov complexity.

\begin{proposition}\label{prop:infexchange}
Let $\D$ and $\mathcal{E}$ be distributions over a finite set satisfying $\TV(\D,\mathcal{E})\leq\epsilon$.
Then, for every $\delta\in(0,1]$,
\[
\Pr_{x\sim\D}\left[
\log\frac{1}{\mathcal{E}(x)}
\leq
\log\frac{1}{\D(x)}+\log\delta^{-1}
\right]
\geq 1-\delta-\epsilon.
\]
\end{proposition}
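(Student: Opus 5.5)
Equivalently, we need to show that the ``bad'' set
\[
B \eqdef \{x : \mathcal{E}(x) < \delta\cdot \D(x)\}
\]
has $\D$-probability at most $\delta+\epsilon$. Indeed, $\log\frac{1}{\mathcal{E}(x)} > \log\frac{1}{\D(x)}+\log\delta^{-1}$ holds exactly when $\mathcal{E}(x)<\delta\D(x)$. This includes the case $\mathcal{E}(x)=0$ with $\D(x)>0$, where the left side is $+\infty$. The plan is to compare $\D(B)$ with $\mathcal{E}(B)$ in two ways: once via the definition of $B$, and once via total variation distance.

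First, summing the defining inequality over $B$ gives
\[
\mathcal{E}(B)=\sum_{x\in B}\mathcal{E}(x)\leq \delta\sum_{x\in B}\D(x)=\delta\,\D(B)\leq\delta .
\]
Second, total variation distance bounds the discrepancy on any event: $\D(B)-\mathcal{E}(B)\le \TV(\D,\mathcal{E})\le\epsilon$. Combining the two,
\[
\D(B)\le \mathcal{E}(B)+\epsilon\le \delta+\epsilon .
\]
Taking complements yields the claimed probability of at least $1-\delta-\epsilon$.

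The argument has no real obstacle. The only care needed is with conventions. Points with $\D(x)=0$ never occur under $x\sim\D$, so we may restrict to $\supp(\D)$. Points with $\mathcal{E}(x)=0$ are correctly placed in $B$ by using the multiplicative form $\mathcal{E}(x)<\delta\D(x)$ rather than logarithms. One could slightly sharpen the bound to $\delta\D(B)+\epsilon$, but this is unnecessary.
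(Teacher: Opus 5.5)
Your proposal is correct and matches the paper's proof: the paper uses the same bad set $B=\{x:\mathcal{E}(x)<\delta\D(x)\}$, bounds $\mathcal{E}(B)\leq\delta\D(B)\leq\delta$, and then applies the total variation bound to get $\D(B)\leq\delta+\epsilon$. Your remarks on the conventions are also the right ones: working on $\supp(\D)$ and using the multiplicative form of the condition so that points with $\mathcal{E}(x)=0$ fall into $B$.
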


\begin{proof}
Let
\[
B\eqdef\{x:\mathcal{E}(x)<\delta\D(x)\}.
\]
By the definition of $B$, we have $\mathcal{E}(B)<\delta\D(B)\leq\delta$.
Since total variation distance bounds the difference in probability assigned to any event,
\[
\D(B)\leq\mathcal{E}(B)+\TV(\D,\mathcal{E})\leq\delta+\epsilon.
\]
For every $x\in \bar{B} \cap \supp(\D)$, we have $\mathcal{E}(x)\geq\delta\D(x) (>0)$, which is equivalent to 
\[\log\frac{1}{\mathcal{E}(x)}
\leq
\log\frac{1}{\D(x)}+\log\delta^{-1}.\]
This completes the proof of the proposition.
\end{proof}

As a first consequence, we show that the expected length of a code based on $\mathcal{E}$ is close to the entropy of $\D$. In constructing the code, we mix $\mathcal{E}$ with the uniform distribution to avoid zero probabilities, which could otherwise make the KL divergence infinite.

\begin{corollary}\label{cor:statl2c}
Let $\D$ and $\mathcal{E}$ be distributions over $\binset^n$ satisfying $\TV(\D,\mathcal{E})\leq\epsilon$.
Define
\[
\bar{\mathcal{E}}\eqdef\frac{1}{2}\mathcal{E}+\frac{1}{2}\mathcal{U}_n.
\]
Then,
\[
\Exp_{x\sim\D}\left[\log\frac{1}{\bar{\mathcal{E}}(x)}\right]
\leq
\Ent{\D}+\epsilon n+O(1),
\]
where $O(1)$ is a universal constant.
\end{corollary}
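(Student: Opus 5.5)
We will split the expectation according to a single set of good points, chosen in the spirit of \Cref{prop:infexchange}. Let
\[
G\eqdef\{x:\mathcal{E}(x)\geq \D(x)\},
\]
so that $\D(\bar{G})-\mathcal{E}(\bar{G})=\sum_{x\notin G}(\D(x)-\mathcal{E}(x))=\TV(\D,\mathcal{E})\le\epsilon$. This is the set $\bar{B}$ from \Cref{prop:infexchange} with $\delta=1$, but we use the sharper identity for its complement rather than the proposition's generic bound $\D(B)\le\delta+\epsilon$. The latter would only give $\D(\bar G)\le 1+\epsilon$, which is useless at $\delta=1$.

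Next we bound the two contributions pointwise using the mixture. For $x\in G$ we have $\bar{\mathcal{E}}(x)\geq \frac12\mathcal{E}(x)\ge\frac12\D(x)$, hence
\[
\log\frac{1}{\bar{\mathcal{E}}(x)}\le\log\frac{1}{\D(x)}+1.
\]
For every $x$, and in particular $x\notin G$, we use the uniform component: $\bar{\mathcal{E}}(x)\ge \frac12 2^{-n}$, so $\log(1/\bar{\mathcal{E}}(x))\le n+1$.

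It remains to combine the two regions. We have
\[
\Exp_{x\sim\D}\left[\log\frac{1}{\bar{\mathcal{E}}(x)}\right]\le \sum_{x\in G}\D(x)\log\frac{1}{\D(x)}+\D(\bar G)\,n+1 .
\]
The delicate point is that the entropy sum here is restricted to $G$, and we must charge the missing mass $\D(\bar G)$ at rate $n$ without double counting. Since the terms $\D(x)\log(1/\D(x))$ are nonnegative, $\sum_{x\in G}\D(x)\log\frac{1}{\D(x)}\le \Ent{\D}$. This leaves the term $\D(\bar G)\,n$, and $\D(\bar G)$ need not be small: it is $\mathcal{E}(\bar G)+\TV(\D,\mathcal{E})$, and $\mathcal{E}(\bar G)$ can be large.

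The main obstacle is therefore to replace $\D(\bar G)$ by the excess $\D(\bar G)-\mathcal{E}(\bar G)\le\epsilon$. On $\bar G$ we write $\D(x)=\mathcal{E}(x)+(\D(x)-\mathcal{E}(x))$ and bound the two pieces separately.
\begin{itemize}
\item The mass $\mathcal{E}(x)\le\D(x)$ is charged at $\log(1/\bar{\mathcal{E}}(x))\le\log(2/\mathcal{E}(x))$.
\item The excess mass $\D(x)-\mathcal{E}(x)$ is charged at $n+1$, contributing at most $\epsilon(n+1)$ in total.
\end{itemize}
For the first piece, $\sum_{x\notin G}\mathcal{E}(x)\log(1/\mathcal{E}(x))$ must be compared with $\sum_{x\notin G}\D(x)\log(1/\D(x))$. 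For $t\le 1/e$ the function $t\log(1/t)$ is increasing, so whenever $\mathcal{E}(x)\le\D(x)\le 1/e$ the comparison holds termwise. There are at most two points with $\D(x)>1/e$, and each contributes at most $O(1)$ to either sum, since $t\log(1/t)$ is bounded on $[0,1]$. Summing everything then gives $\Ent{\D}+\epsilon n+O(1)$, as claimed.
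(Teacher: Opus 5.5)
Your proof is correct, and it takes a genuinely different route from the paper. The paper works with $Z(x)=\log(1/\bar{\mathcal{E}}(x))-\log(1/\D(x))$ and uses the uniform component only to cap $Z\le n+1$. It then bounds $\Exp_{x\sim\D}[Z_+]$ by a layer-cake sum, applying \Cref{prop:infexchange} once at each scale $\delta=2^{-k}$ for $k=0,\ldots,n$. Each level pays an additive $\epsilon$, which accumulates to $\epsilon n$, plus a geometric term $2^{-k}$, which sums to the $O(1)$.

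You instead use the exact structure of total variation at a single scale. You split $\D(x)$ into the overlap $m(x)=\min\{\D(x),\mathcal{E}(x)\}$ and the excess $(\D(x)-\mathcal{E}(x))_+$, whose total mass is exactly $\TV(\D,\mathcal{E})$. The excess is charged at the worst-case rate $n+1$ supplied by $\mathcal{U}_n$. The overlap is charged at $\log(1/m(x))+1$, and you compare this with $\Ent{\D}$ using the monotonicity of $t\log(1/t)$ on $[0,1/e]$, noting that at most two atoms exceed $1/e$. When you sum the pieces, the linear terms are $\D(G)+\mathcal{E}(\bar G)=\sum_x m(x)=1-\TV(\D,\mathcal{E})$. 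Your bookkeeping therefore gives $\Ent{\D}+n\,\TV(\D,\mathcal{E})+c$ with an explicit constant $c<2.1$, and no dyadic summation is needed.

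What the paper's route buys in exchange is modularity. It uses only the tail inequality of \Cref{prop:infexchange}, so it goes through unchanged whenever one has such a high-probability comparison of code lengths. Your argument, by contrast, relies on the exact excess-mass decomposition, which is specific to $\TV$.
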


\begin{proof}
For each $x\in \supp(\D)$, let $Z(x)\eqdef\log(1/\bar{\mathcal{E}}(x))-\log(1/\D(x))$ and $Z_+(x)\eqdef\max\{Z(x),0\}$.
Since $\bar{\mathcal{E}}(x)\geq2^{-n-1}$, we have $Z(x)\leq n+1$.
Moreover, since $\bar{\mathcal{E}}(x)\geq\mathcal{E}(x)/2$, for every integer $k\geq0$,
\[
\Pr_{x\sim\D}[Z(x)>k+1]
\leq
\Pr_{x\sim\D}\left[
\log\frac{1}{\mathcal{E}(x)}
>
\log\frac{1}{\D(x)}+k
\right]
\leq
2^{-k}+\epsilon,
\]
where the last inequality follows from \Cref{prop:infexchange} with $\delta=2^{-k}$.
Summing these tail bounds gives
\[
\Exp_{x\sim\D}[Z_+(x)]
\leq
1+\sum_{k=0}^{n}(2^{-k}+\epsilon)
\leq
\epsilon n+O(1).
\]
Since $\Exp_{x\sim\D}[\log(1/\bar{\mathcal{E}}(x))]=\Ent{\D}+\Exp_{x\sim\D}[Z(x)]$, the result follows.
\end{proof}

As a second consequence, we show that, with high probability, the ideal code length assigned by $\mathcal{E}$ is close to the Kolmogorov complexity, i.e., its idealized description length.

\begin{corollary}\label{cor:statl2calg}
Let $\D$ and $\mathcal{E}$ be distributions over $\binset^n$ satisfying $\TV(\D,\mathcal{E})\leq\epsilon$.
Then, for every $\delta\in(0,1]$,
\[
\Pr_{x\sim\D}\left[
\log\frac{1}{\mathcal{E}(x)}
\leq
\K(x)+O\!\left(\log n+\log\delta^{-1}\right)
\right]
\geq 1-\delta-\epsilon.
\]
\end{corollary}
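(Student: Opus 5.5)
The plan is to combine \Cref{prop:infexchange} with a pure counting bound saying that, for any distribution $\D$ over $\binset^n$, the quantity $\log(1/\D(x))$ rarely exceeds $\K(x)$ by much. Chaining the two inequalities and taking a union bound over their failure events gives the statement. Note that no computability assumption on $\D$ or $\mathcal{E}$ is needed, so the argument must be information-theoretic; the only algorithmic input is the fact that few strings have short descriptions.

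\emph{Step 1 (counting).} I would show that for every integer $k\geq 0$,
\[
\Pr_{x\sim\D}\left[\K(x)\leq \log\frac{1}{\D(x)}-k\right]\leq O(n)\cdot 2^{-k}.
\]
Partition the bad strings according to the value $j=\K(x)$. Every $x\in\binset^n$ satisfies $\K(x)\leq n+O(1)$, so only $j\in\{0,\ldots,n+O(1)\}$ occur. For fixed $j$, there are at most $2^{j+1}$ strings with $\K(x)=j$, since there are at most that many programs of length at most $j$. Each such bad string has $\D(x)\leq 2^{-(j+k)}$, so these strings carry total $\D$-mass at most $2^{j+1}\cdot 2^{-j-k}=2^{1-k}$. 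Summing over the $O(n)$ possible values of $j$ gives the claim. Choosing $k=\log(n/\delta)+O(1)$ makes this probability at most $\delta/2$. So with probability at least $1-\delta/2$,
\[
\log\frac{1}{\D(x)}<\K(x)+O(\log n+\log\delta^{-1}).
\]

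\emph{Step 2 (exchange).} I would apply \Cref{prop:infexchange} with parameter $\delta/2$. With probability at least $1-\delta/2-\epsilon$,
\[
\log\frac{1}{\mathcal{E}(x)}\leq\log\frac{1}{\D(x)}+\log(2/\delta).
\]

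\emph{Step 3 (union bound).} Both events hold simultaneously with probability at least $1-\delta-\epsilon$. On that event, adding the two inequalities yields
\[
\log\frac{1}{\mathcal{E}(x)}\leq \K(x)+O(\log n+\log\delta^{-1}),
\]
which is the statement.

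The main point of care is Step 1. The bound must hold for an arbitrary, possibly uncomputable $\D$, so one cannot invoke a coding theorem; the argument has to come from counting programs against probability mass. In particular, the summation over description lengths $j$ is what produces the extra $\log n$ term in the statement. Everything else is bookkeeping of constants and of how $\delta$ is split between the two events.
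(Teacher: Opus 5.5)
Your proposal is correct and follows essentially the same route as the paper. The paper bounds $\Pr_{x\sim\D}[\log(1/\D(x))>\K(x)+s]\le 2^{-s}\sum_{x\in\binset^n}2^{-\K(x)}=2^{-s}\cdot O(n)$ using the same count of programs at each complexity level, then applies \Cref{prop:infexchange} with parameter $\delta/2$ and takes a union bound; your partition by $j=\K(x)$ is simply that sum written out level by level.
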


\begin{proof}
For every $s>0$,
\[
\begin{aligned}
\Pr_{x\sim\D}\left[\log\frac{1}{\D(x)}>\K(x)+s\right]
=
\sum_{\substack{x\in\binset^n:\\\D(x)<2^{-\K(x)-s}}}\D(x)
\leq
2^{-s}\sum_{x\in\binset^n}2^{-\K(x)}.
\end{aligned}
\]
For each $k \geq 0$, by counting the number of programs of length $k$, there are at most $2^k$ strings of Kolmogorov complexity $k$. Moreover, for any string $x$ of length $n$, $\K(x) \leq n + O(\log n)$. Therefore, the expression above is upper-bounded by 
\[2^{-s}\cdot\sum_{k = 0}^{n + O(\log n)}\sum_{\substack{x \in \bool^n : \\ \K(x) = k}}2^{-k} = 2^{-s}\cdot\sum_{k = 0}^{n + O(\log n)} O(1) = 2^{-s}\cdot O(n).\]

Thus, by choosing $s=O(\log n+\log\delta^{-1})$, we obtain $\log(1/\D(x))\leq\K(x)+O(\log n+\log\delta^{-1})$ except with probability at most $\delta/2$.

By \Cref{prop:infexchange} with parameter $\delta/2$, we also have
\[
\log\frac{1}{\mathcal{E}(x)}
\leq
\log\frac{1}{\D(x)}+\log\frac{2}{\delta}
\]
except with probability at most $\delta/2+\epsilon$.
Combining the two inequalities and taking a union bound proves the claim.
\end{proof}

In particular, if $\mathcal{E}$ admits a short description, then incorporating this description into the code yields, for most $x\sim\D$, an encoding of $x$ whose length is close to optimal.

\paragraph{Universal Coding $\Rightarrow$ Learning.}
Conversely, any code $C\colon \mathcal{X}\to\bin^*$ for an unknown source distribution $\D$ over $\mathcal{X}$, with code-length function $c(x)\eqdef |C(x)|$, naturally induces a distribution $\D_c$ by normalizing $2^{-c(x)}$:
\begin{equation}\label{eq:defDc}
    \D_c(x)\eqdef\frac{2^{-c(x)}}{Z_c},
\qquad
Z_c\eqdef\sum_{y\in\mathcal{X}}2^{-c(y)}.
\end{equation}
This construction is natural since the ideal code length
$c(x)=-\log \D(x)$, with the convention $-\log 0=\infty$,
yields $\D_c=\D$. It is therefore natural to ask how close $\D_c$ is to the true distribution $\D$ when $c(x)$ only approximates the ideal code length.

The following proposition answers this question. It shows that the KL divergence from $\D$ to $\D_c$ is determined by the excess expected code length over the entropy of $\D$, together with the normalization term $\log Z_c$.

\begin{proposition}\label{prop:compression-to-distribution}
Let $\D$ be a distribution over $\mathcal{X}$, and let $C\colon \mathcal{X}\to\bin^*$ be a code with code-length function $c(x)\eqdef |C(x)|$. Then
\[
\KL(\D\Vert\D_c) =
\left(
\Exp_{x\sim\D}[c(x)]-\Ent{\D}
\right)
+\log Z_c.
\]
\end{proposition}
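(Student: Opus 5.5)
The proof will be a direct computation from the definition of KL divergence. The plan is to substitute the explicit form of $\D_c$ from \eqref{eq:defDc} into $\log(\D(x)/\D_c(x))$ and take expectations over $x \sim \D$.

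First, the KL divergence should be well defined, possibly as $+\infty$. For every $x\in\supp(\D)$ the codeword $C(x)$ is a finite string, so $c(x)<\infty$ and $\D_c(x)=2^{-c(x)}/Z_c>0$. Hence $\supp(\D)\subseteq\supp(\D_c)$.

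Next, for each $x\in\supp(\D)$ we have
\[
\log\frac{\D(x)}{\D_c(x)} = \log \D(x) - \log 2^{-c(x)} + \log Z_c = c(x) - \log\frac{1}{\D(x)} + \log Z_c .
\]
Taking the expectation over $x\sim\D$ and using linearity gives
\[
\KL(\D\Vert\D_c) = \Exp_{x\sim\D}[c(x)] - \sum_{x\in\supp(\D)}\D(x)\log\frac{1}{\D(x)} + \log Z_c .
\]
The middle sum is exactly $\Ent{\D}$ by the definition of entropy, and the constant $\log Z_c$ comes out of the expectation because $\sum_x \D(x)=1$. This yields the claimed identity.

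No step is a genuine obstacle; this is bookkeeping. The one point needing care is the normalizer. If $\mathcal{X}$ is finite, $Z_c$ is a finite positive number. If $\mathcal{X}$ is countable, one should note that $Z_c$ is finite when, for instance, $C$ is uniquely decodable, since then the Kraft--McMillan inequality gives $Z_c\le 1$. In that case $\log Z_c\le 0$ and the identity also shows $\KL(\D\Vert\D_c)\le \Exp[c(x)]-\Ent{\D}$.
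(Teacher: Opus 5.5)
Your proposal is correct and follows essentially the same direct computation as the paper: write $\log(1/\D_c(x)) = c(x) + \log Z_c$ for $x \in \supp(\D)$, take the expectation over $x \sim \D$, and recognize the entropy term. Your extra remarks on $\supp(\D)\subseteq\supp(\D_c)$ and on the finiteness of $Z_c$ add care that the paper leaves implicit.
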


\begin{proof}
By the definition of $\D_c$, for every $x\in\supp(\D)$,
\[
\log\frac{1}{\D_c(x)}
=
\log\frac{Z_c}{2^{-c(x)}}
=
c(x)+\log Z_c.
\]
Therefore,
\[
\begin{aligned}
\KL(\D\Vert\D_c)
&=
\Exp_{x\sim\D}\left[
\log\frac{\D(x)}{\D_c(x)}
\right]\\
&=
\Exp_{x\sim\D}\left[
\log\frac{1}{\D_c(x)}
\right]
-\Ent{\D}\\
&=
\Exp_{x\sim\D}[c(x)]
-\Ent{\D}
+\log Z_c.
\end{aligned}
\]
\end{proof}

Consequently, if $\Exp_{x\sim\D}[|C(x)|]\leq\Ent{\D}+\epsilon$, then
\[
\KL(\D\Vert\D_c)\leq\epsilon+\log Z_c.
\]
Moreover, if $C$ is a prefix-free code on $\mathcal{X}$, meaning that no codeword is a prefix of another, then Kraft's inequality gives
\[
Z_c = \sum_{y\in\mathcal{X}}2^{-|C(y)|} \leq 1.
\]
Thus, $\log Z_c\leq 0$, and hence
\[
\KL(\D\Vert\D_c)\leq\epsilon.
\]
Applying Pinsker's inequality, we further obtain the total variation bound
\[
\TV(\D,\D_c)
\leq 
O(\sqrt{\epsilon}).
\]

Although the preceding proposition gives a direct transformation from universal coding to distribution learning, its hypothesis may appear somewhat stringent. The entropy $\Ent{\D}$ is the minimum expected code length attainable when the source distribution $\D$ is known, whereas a universal code must be constructed without knowledge of $\D$. It is therefore natural to ask what can be concluded when $c(\cdot)$ instead approximates the Kolmogorov complexity $\K(\cdot)$, viewed as an idealized universal code length.

The following proposition answers this question. It gives a similar bound at the cost of an additive term $|\D|$, where $|\D|$ denotes the length of the shortest description of a probabilistic program that samples exactly from $\D$ (with no restriction on its running time).

\begin{proposition}\label{prop:kolmogorov-to-distribution}
Let $\D$ be a distribution over a finite set $\mathcal{X}$, and let $|\D|$ denote the length of the shortest description of a probabilistic program that samples exactly from $\D$. Let $C\colon\mathcal{X}\to\binset^*$ be a code, and let $c(x)\eqdef |C(x)|$. Then
\[
\KL(\D\Vert\D_c) \leq \Exp_{x\sim\D}\bigl[c(x)-\K(x)\bigr] + |\D| + \log Z_c + O(1).
\]\end{proposition}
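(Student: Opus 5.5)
The plan is to reduce to \Cref{prop:compression-to-distribution} and then compare entropy with expected Kolmogorov complexity. That proposition gives the exact identity
\[
\KL(\D\Vert\D_c)=\Exp_{x\sim\D}[c(x)]-\Ent{\D}+\log Z_c .
\]
Adding and subtracting $\Exp_{x\sim\D}[\K(x)]$, it therefore suffices to prove
\[
\Exp_{x\sim\D}[\K(x)]\leq \Ent{\D}+|\D|+O(1).
\]
This is the classical statement that the expected Kolmogorov complexity of samples from a computable source exceeds the entropy by at most the description length of the source.

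To prove this inequality, I would use a source-coding argument relative to the program $P$ of length $|\D|$ that samples $\D$ exactly. From $P$, one can (without time bounds) compute $\D(x)$ for every $x\in\mathcal{X}$. This is possible because $P$ samples exactly from a distribution on a finite set, so its output probabilities can be enumerated by dovetailing over random tapes.

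There is one caveat: to know these probabilities exactly rather than only approximate them from below, one needs to know when to stop. I would handle this by working with prefix complexity and a lower-semicomputable semimeasure, as in the footnote of \Cref{sec:solomonoff}. Given $P$, the function $x\mapsto\D(x)$ is lower semicomputable. By the coding theorem (equivalently, the Kraft--Chaitin construction of a Shannon--Fano-type prefix code relative to $P$), every $x$ with $\D(x)>0$ then has a description of length
\[
\log\frac{1}{\D(x)}+O(1)
\]
given $P$. Prepending a self-delimiting encoding of $P$ yields
\[
\K(x)\leq \log\frac{1}{\D(x)}+|\D|+O(1).
\]
If one insists on plain complexity, the same bound holds up to an $O(\log|\D|)$ term, which I would absorb by taking $|\D|$ to mean the self-delimiting description length.

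Averaging this pointwise bound over $x\sim\D$ gives
\[
\Exp_{x\sim\D}[\K(x)]\leq \Ent{\D}+|\D|+O(1).
\]
Substituting this into the identity from \Cref{prop:compression-to-distribution} yields the claimed inequality.

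The main obstacle is the pointwise coding bound $\K(x)\leq\log(1/\D(x))+|\D|+O(1)$ with only an additive $O(1)$ overhead beyond $|\D|$. The difficulty is making the ``given $P$, encode $x$ by its Shannon codeword'' step rigorous when $P$'s output probabilities are only lower semicomputable, and ensuring that the concatenation of the encoding of $P$ with the codeword is self-delimiting. I would try the Kraft--Chaitin online construction first, since it handles lower-semicomputable weights directly and incurs only a constant loss.
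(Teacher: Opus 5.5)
Your proposal is correct and takes the same route as the paper. You apply \Cref{prop:compression-to-distribution}, add and subtract $\Exp_{x\sim\D}[\K(x)]$, and bound $\Exp_{x\sim\D}[\K(x)]\le\Ent{\D}+|\D|+O(1)$ by averaging the coding-theorem bound $\K(x)\le\log(1/\D(x))+|\D|+O(1)$. The paper only cites the coding theorem, so your Kraft--Chaitin justification, and your note that the additive $O(1)$ needs $\K$ to be prefix complexity and $|\D|$ a self-delimiting length, clarify the argument without changing it.
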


\begin{proof}
By the coding theorem,
\[
\K(x)\leq\left\lceil\log\frac{1}{\D(x)}\right\rceil+|\D|+O(1)
\]
for every $x\in\supp(\D)$.
Therefore,
\[
\Exp_{x\sim\D}[\K(x)]\leq\Ent{\D}+|\D|+O(1).
\]
From \Cref{prop:compression-to-distribution}, we obtain
\[
\begin{aligned}
\KL(\D\Vert\D_c)
&=
\Exp_{x\sim\D}[c(x)]-\Ent{\D} + \log Z_c\\
&=
\Exp_{x\sim\D}[c(x)-\K(x)] + \Exp_{x\sim\D}[\K(x)]-\Ent{\D} + \log Z_c\\
&\leq
\Exp_{x\sim\D}[c(x)-\K(x)] + |\D| + \log Z_c  + O(1).
\end{aligned}
\]
\end{proof}

Again, if $C$ is prefix-free and $\Exp_{x\sim\D}[c(x)-\K(x)]\leq\epsilon$, then
\[
\KL(\D\Vert\D_c)\leq|\D|+\epsilon+O(1).
\]
However, since $|\D|\ge1$ this bound does not yield a small total variation distance via Pinsker's inequality.

One might hope to make the contribution of $|\D|$ negligible so that
Pinsker's inequality yields a small total variation distance. This can be
achieved by compressing a batch of independent samples and thereby
amortizing $|\D|$ over the batch. Suppose that a prefix-free code
$C_m\colon\mathcal{X}^m\to\binset^*$ satisfies
\[
\Exp_{\bar{x}\sim\D^m}
\left[
|C_m(\bar{x})|-\K(\bar{x})
\right]
\leq
\epsilon_m,
\]
and let $\D_{c_m}$ be the distribution induced by the length function $c_m(\bar{x})\eqdef |C_m(\bar{x})|$.

An exact sampler for $\D^m$ simply runs an exact sampler for $\D$
independently $m$ times. It therefore has description length
\[
|\D^m|\leq|\D|+O(\log m).
\]
Consequently, \Cref{prop:kolmogorov-to-distribution} gives
\[
\KL(\D^m\Vert\D_{c_m})
\leq
|\D|+\epsilon_m+O(\log m).
\]

To turn this approximation of $\D^m$ into an approximation of $\D$,
choose $I$ uniformly at random from $[m]$, draw
$X_{<I}\sim\D^{I-1}$, and let $\widehat{\D}$ be the conditional distribution of the $I$-th coordinate under $\D_{C_m}$, conditioned on $I-1$ independent samples from $\D$.

By the chain rule for KL divergence (\Cref{lem:chainruleKL}), 
\[
\begin{aligned}
\Exp_{I,X_{<I}}
\left[
\KL(\D\Vert\widehat{\D})
\right]
&\leq
\frac{|\D|+\epsilon_m+O(\log m)}{m}.
\end{aligned}
\]
Here, the expectation is taken over both the uniformly random index $I$
and the prefix $X_{<I}\sim\D^{I-1}$. 
Thus, provided that
$\epsilon_m=o(m)$, the expected KL divergence tends to zero as
$m\to\infty$. Pinsker's inequality and Jensen's inequality then imply
\[
\begin{aligned}
\Exp_{I,X_{<I}}
\left[
\TV(\D,\widehat{\D})
\right]
&\leq
\sqrt{
\frac{1}{2}
\Exp_{I,X_{<I}}
\left[
\KL(\D\Vert\widehat{\D})
\right]
}\\
&\leq
\sqrt{
\frac{1}{2}
\cdot
\frac{|\D|+\epsilon_m+O(\log m)}{m}
}
\xrightarrow[m\to\infty]{}0.
\end{aligned}
\]

\paragraph{Conclusion.} Taken together, these elementary arguments give information-theoretic transformations in both directions.
A distribution learner yields a code whose expected length is close to the source entropy, while a code approximating Kolmogorov complexity can be converted into a learner for distributions having short sampler descriptions.
These transformations do not, however, address computational efficiency.
Kolmogorov complexity is uncomputable, and evaluating or sampling from the conditional distribution $\widehat{\D}$ need not be efficient.
This leads to the corresponding complexity-theoretic question: for efficiently samplable distributions, how is efficient learning related to efficient compression near time-bounded Kolmogorov complexity?
Efficient learning belongs to computational learning theory, whereas efficiently approximating time-bounded Kolmogorov complexity is a central problem in meta-complexity.

\newcommand{\q}{\mathrm{q}}
\subsection{Computational Learning and Meta-Complexity}
\label{sec:computational-learning-meta-complexity}

As we saw in \Cref{sec:information-theory}, distribution learning is closely related to compression, particularly universal coding, at the information-theoretic level.
The relationship between efficient distribution learning and meta-complexity can be viewed as a complexity-theoretic counterpart of this connection.
The natural question is how much of this connection survives when the transformations between learning and compression must be efficient and code lengths are measured against \emph{polynomial-time-bounded} Kolmogorov complexity.

More specifically, we consider both directions of this question.
Can an efficient learner be converted into an efficient encoder that produces short time-bounded descriptions of samples from the target distribution?
Conversely, can the ability to find short time-bounded descriptions be converted into an efficient learner?
The first direction requires turning an efficiently samplable distribution into a code, whereas the reverse direction requires turning description lengths into a probability distribution that can be efficiently sampled from and, in some settings, conditioned.
The purpose of this section is to survey how these two transformations are connected to meta-complexity.

For simplicity, we informally use $\K^{\poly}$ to denote $\K^t$ for an unspecified polynomial time bound $t$, and use $\rK^{\poly}$ and $\q^{\poly}$ analogously. 
We leave the particular polynomials implicit and do not track polynomial overheads in the time bounds.

\subsubsection{Learning to Compression via Arithmetic Coding}
\label{sec:learning-to-compression-arithmetic}

In the learning-to-compression transformation from \Cref{sec:codinglearningstat}, the learner produces a hypothesis distribution $\mathcal{E}$.
Applying Shannon coding to the smoothed distribution $\bar{\mathcal{E}}$ from \Cref{cor:statl2c} yields an expected description length close to the entropy of the source.
This argument, however, is information-theoretic.
The Shannon code described in \Cref{sec:information-theory} first orders all messages and then encodes each message using the cumulative probabilities of the preceding messages.
For the message space $\binset^n$, carrying out this construction directly may require examining exponentially many strings.
Moreover, a distribution learner may output only a sampler for $\mathcal{E}$, rather than an algorithm that evaluates $\mathcal{E}(x)$.
Given a sampling circuit for $\mathcal{E}$, computing $\mathcal{E}(x)$ amounts to counting the random seeds on which the circuit outputs $x$, and a brute-force computation may take exponential time.
Thus, the information-theoretic argument alone does not yield a polynomial-time encoder and decoder.
Indeed, this computational gap is inherent for general efficiently samplable sources: under the plausible assumption that $\mathsf{SZK}\neq\BPP$, some such sources cannot be compressed close to their entropy by any polynomial-time encoder~\cite{DBLP:journals/cc/TrevisanVZ05}.

One way to avoid this global enumeration is to require the learner to output a \emph{next-bit predictor}, rather than merely a sampler, and then apply \emph{arithmetic coding}.
For a distribution $\D$ over $\binset^n$, write
\[
\D(b\mid x_{<i})
\eqdef
\Pr_{X\sim\D}[X_i=b\mid X_{<i}=x_{<i}].
\]
A next-bit predictor approximates these conditional probabilities to inverse-polynomial additive accuracy.
This requirement is a quantitative strengthening of Yao's next-bit prediction criterion, which asks only for a non-negligible advantage over random guessing in predicting the next bit at a uniformly random position from the preceding prefix (see \Cref{sec:yao}). 
Given a next-bit predictor  satisfying this stronger quantitative guarantee, arithmetic coding accesses the learned distribution one bit at a time and constructs the desired encoder and decoder without enumerating its support.

We briefly review how arithmetic coding works.
At a high level, it reconstructs the cumulative-probability interval associated with a message using only next-bit conditional probabilities.
The procedure can be viewed as repeatedly zooming in on the probability interval $[0,1)$.
Initially, the encoder sets $s=0$ and $t=1$, so that the current interval is $[s,t)=[0,1)$.
After reading a prefix $x_{<i}$, it computes the splitting point
\[
u\eqdef s+(t-s)\D(0\mid x_{<i}).
\]
The encoder then reads the next bit $x_i$. If $x_i=0$, it updates $t\leftarrow u$ and retains the interval $[s,u)$.
If $x_i=1$, it updates $s\leftarrow u$ and retains the interval $[u,t)$.
In either case, the length of the current interval is multiplied by $\D(x_i\mid x_{<i})$.
Consequently, after reading the last bit, the retained interval has length
\[
\prod_{i=1}^n \D(x_i\mid x_{<i})=\D(x).
\]
The encoder then chooses a binary prefix whose corresponding dyadic interval is contained in the final interval.
Such a prefix can be chosen to have length
\[
\log\frac{1}{\D(x)}+O(1).
\]
The decoder reverses this procedure. Given the codeword and access to the same next-bit predictor, it interprets the codeword as specifying a dyadic interval. At each step, it partitions the current interval according to the predicted conditional probability and determines which of the two resulting subintervals contains the dyadic interval, thereby recovering the next bit.

Thus, encoding and decoding require only the $n$ conditional probabilities along the path determined by $x$, rather than an enumeration of the entire support. They can therefore be performed in polynomial time whenever the next-bit predictor is efficient. Moreover, decoding remains correct as long as the encoder and decoder obtain the same conditional probabilities on every query; the accuracy of these probabilities affects only the resulting code length.

If the next-bit predictor is randomized, its random coins must be fixed in advance and shared by the encoder and decoder so that they obtain identical answers to the same queries. Thus, if the learner outputs the hypothesis distribution $\mathcal{E}$ in the form of a randomized next-bit predictor, then, given shared access to this predictor, the construction yields a $\pK^{\poly}$-type compressor (namely, the resulting program may depend on the shared random string) whose description length is close to $\log(1/\mathcal{E}(x))$, up to the loss incurred by approximating the conditional probabilities.

One may ask how much stronger it is to require a next-bit predictor rather than merely a sampler.
In general, there is a gap if AIOWFs exist: if $G_z$ is an AIPRG constructed from an AIOWF and $\mathcal{U}$ is uniform, then $(G_z(\mathcal{U}),\mathcal{U})$ is efficiently samplable but admits no efficient next-bit predictor.
Conversely, if AIOWFs do not exist (as is implied by efficient distribution learning for $\P/\poly$-samplable distributions; see~\Cref{lem:dist-break-prf}), the distributional-inversion argument of Impagliazzo and Luby~\cite{DBLP:conf/focs/ImpagliazzoL89} can be used to construct an efficient next-bit predictor from an explicit description of any efficient sampler.
Essentially the same argument appears in our discussion of white-box learning in \Cref{sec:white_owf}.
\paragraph{Application: A Coding Theorem for $\rK^{\poly}$.} The arithmetic-coding construction above has an application in meta-complexity: a coding theorem for randomized time-bounded Kolmogorov complexity.
Recall that $\rK^{\poly}(x)$ is the minimum length of a randomized program that outputs $x$ in polynomial time with probability at least $2/3$.
Although this is a natural way to incorporate randomness into time-bounded Kolmogorov complexity, no optimal coding theorem is known for $\rK^{\poly}$ over all efficiently samplable distributions, in contrast to $\pK^{\poly}$.
Hirahara, Lu, and Nanashima~\cite{DBLP:conf/focs/HiraharaLN24} showed that every next-bits-predictable distribution (i.e., every distribution with a randomized next-bit predictor) admits an efficient coding theorem for $\rK^{\poly}$, with arithmetic coding as a central ingredient.

The main obstacle to obtaining $\rK^{\poly}$-type compression is ensuring that the encoder and decoder reconstruct exactly the same arithmetic-coding intervals. 
In the $\rK$ setting, however, they cannot share random coins, because the encoder must fix the description before the decoder's random coins are drawn. 
Independent executions of the randomized next-bit predictor may produce slightly different estimates of the same conditional probability, and even a small discrepancy can lead the encoder and decoder to use different interval boundaries, causing decoding to fail. 
To prevent this, the construction uses the ``add noise and round'' technique of Goldberg and Sipser~\cite{DBLP:journals/siamcomp/GoldbergS91} and Trevisan, Vadhan, and Zuckerman~\cite{DBLP:journals/cc/TrevisanVZ05}, which adds a small random shift to each estimate and rounds the result to a sufficiently coarse grid.
With high probability over the shift, independent accurate executions of the predictor produce the same rounded estimates.
Thus, only a logarithmic-length description of the shift, rather than the entire randomness of the predictor, needs to be included in the code.
Consequently, if $\D=\{\D_n\}$ is next-bits predictable, then, for every constant $\varepsilon>0$, there is a polynomial time bound for which
\[
\rK^{\poly}(x)
\leq
(1+\varepsilon)\log\frac{1}{\D_n(x)}+O(\log n)
\]
for every $x\in\supp(\D_n)$.

Here, the factor $1+\varepsilon$ arises from next bits having very small conditional probabilities.
For such a bit, the rounding error may be too large relative to its true probability, so the construction records the bit and its position explicitly rather than encoding it through the estimated probability.
The cost of these explicit records accounts for the multiplicative loss in the worst-case bound.
For a sample $x\sim\D_n$, however, such low-probability next bits occur only with correspondingly small probability.
Their contribution can therefore be absorbed into the additive overhead on average, removing the multiplicative $1+\varepsilon$ loss.
Consequently, for every polynomial $q$, there exists a polynomial $p$ such that
\[
\Pr_{x\sim\D_n}\left[
\rK^{p(n)}(x)
\leq
\log\frac{1}{\D_n(x)}+\log p(n)
\right]
\geq
1-\frac{1}{q(n)}.
\]
The construction also gives efficient encoding and decoding algorithms whose expected encoding length is at most $\Ent{\D_n}+O(\log n)$.

Combining this average-case theorem with the distributional-inversion argument mentioned above yields an average-case coding theorem for $\rK^{\poly}$ over arbitrary efficiently samplable distributions under the nonexistence of one-way functions.
Such coding theorems relate $\rK^{\poly}(x)$ to both the statistical information content $\log(1/\D_n(x))$ and the unbounded Kolmogorov complexity $\K(x)$.
This connection allows characterizations of one-way functions previously formulated using $\pK^{\poly}$ to be extended to $\rK^{\poly}$.

\subsubsection{Compression to Learning over Polynomially Many Messages}
\label{sec:compression-to-learning-poly-alphabet}
We now turn to the reverse direction: transforming an \emph{efficient}
compressor, ideally one whose code lengths approximate time-bounded
Kolmogorov complexity, into an \emph{efficient} learner. Recall that in
\Cref{sec:codinglearningstat}, any code-length function $c$ was converted
into a distribution $\D_c$ whose probability mass at each string $x$ is
proportional to $2^{-c(x)}$.

Observe that the KL-divergence bound in
\Cref{prop:kolmogorov-to-distribution} uses no particular property of
$\K(\cdot)$ beyond its coding theorem: for a source distribution $\D$,
\[
\K(x)\lesssim \log\frac{1}{\D(x)}.
\]
Consequently, the same argument applies when $c(\cdot)$ approximates a
complexity measure other than time-unbounded Kolmogorov complexity,
provided that the measure satisfies an analogous coding theorem for the
source distributions under consideration. In particular, when $\D$ is
efficiently samplable and $c(x)\approx\pK^{\poly}(x)$,
the coding theorem for $\pK^{\poly}$ yields the same KL-divergence bound
for the induced distribution $\D_c$. Thus, if we disregard the efficiency
of constructing and sampling from $\D_c$, a compressor approximating
$\pK^{\poly}$ already gives an information-theoretic transformation into
a distribution learner for efficiently samplable distributions. The
remaining question is whether this transformation itself can be
implemented efficiently.

From a complexity-theoretic perspective, the immediate obstacle is efficient sampling.
Even if each value $c(u)$ can be evaluated efficiently, computing the normalization constant and sampling from $\D_c$ may require enumerating exponentially many possible values of $u$.
Indeed, overcoming this sampling problem is one of the central obstacles to deriving efficient learners from meta-complexity assumptions.

One simple way to set aside this obstacle is to consider the case in which each outcome has logarithmic length, so that the sample space has polynomial size.
In this case, both the normalization constant and a sampler for $\D_c$ can be computed by direct enumeration.
We therefore begin with this setting.
Taken by itself, however, this setting is not particularly interesting from the perspective of distribution learning, since an arbitrary distribution over a polynomial-size domain can already be learned efficiently by estimating the probability of every possible outcome from samples.

Nevertheless, generating an output from a polynomial-size alphabet becomes a nontrivial and well-motivated learning problem in the conditional setting.
Suppose that the learner receives i.i.d.\ samples from a joint distribution $\D$ over pairs $(X,Y)$, where $X$ is an observed input or context and $Y$ is the output to be predicted.
We allow $X$ to have length $n$, whereas $Y$ has length $O(\log n)$, so its range $\mathcal{Y}$ has polynomial size. This setting is well motivated. Rather than generating an entire pair consisting of an image and its label, we may wish to estimate the probability of each possible label for a given image.
Similarly, we may wish to generate an answer conditioned on a given question, rather than generating a pair consisting of a question and an answer.
In addition, an output of arbitrary length can be divided into logarithmic-length blocks and generated sequentially.
We return to this idea in \Cref{sec:compression-to-learning-arbitrary-alphabet}.

Accordingly, in the remainder of this subsection, we focus on the following conditional learning task.
Given i.i.d.\ samples from $\D$ and a new input $x$, the learner aims to construct a hypothesis distribution approximating $\D(\cdot\mid x)$, where
\[
\D(y\mid x)\eqdef\Pr_{(X,Y)\sim\D}[Y=y\mid X=x].
\]
Since the range of $X$ may be exponentially large, this task does not reduce to estimating the probability of every pair $(x,y)$.
By contrast, since $\mathcal{Y}$ has polynomial size, for any fixed $x$, once a weight can be evaluated efficiently for every $y\in\mathcal{Y}$, the learner can normalize these weights and sample from them by direct enumeration.
The remaining question is which description length should be used to define these conditional weights. Below, we discuss two natural candidates.

The first is the conditional description length $c(y\mid x)$, where $x$ is treated as auxiliary input available to both the encoder and the decoder.
This gives the hypothesis distribution
\[
\widehat{\D}^{\mathrm{cond}}_c(y\mid x)
\eqdef
\frac{2^{-c(y\mid x)}}{\sum_{z\in\mathcal{Y}}2^{-c(z\mid x)}}.
\]
This construction requires access to a good conditional compressor, which can be substantially more demanding than unconditional compression.
Even in the idealized setting, it is currently unclear whether efficient computation of $\K^{\poly}(\cdot)$ implies efficient computation of $\K^{\poly}(\cdot\mid\cdot)$, whereas the converse is immediate by taking the auxiliary input to be the empty string. Thus, even given an efficient unconditional compressor that closely approximates $\K^{\poly}(\cdot)$, this construction itself does not yield an efficiently computable conditional hypothesis distribution.

The second candidate requires only an unconditional description-length measure. It uses the increase in description length caused by appending $y$ to $x$:
\[
\widehat{\D}^{\mathrm{inc}}_c(y\mid x)
\eqdef
\frac{2^{-(c(xy)-c(x))}}{\sum_{z\in\mathcal{Y}}2^{-(c(xz)-c(x))}}
=
\frac{2^{-c(xy)}}{\sum_{z\in\mathcal{Y}}2^{-c(xz)}}.
\]
This construction has a direct interpretation in terms of the distribution $\D_c$ defined in \Cref{eq:defDc}:
\[
\widehat{\D}^{\mathrm{inc}}_c(y\mid x)=\D_c(y\mid x).
\]
Thus, whenever $\D_c$ is close to $\D$ in KL divergence, the chain rule
for KL divergence (\Cref{lem:chainruleKL}) implies that
$\widehat{\D}^{\mathrm{inc}}_c(\cdot\mid x)$ is close to
$\D(\cdot\mid x)$ in KL divergence on average over $x\sim\D_X$:
\begin{align*}
\Exp_{x\sim\D_X}\left[
\KL\left(
\D(\cdot\mid x)
\middle\|
\widehat{\D}^{\mathrm{inc}}_c(\cdot\mid x)
\right)
\right]
&=
\Exp_{x\sim\D_X}\left[
\KL\left(
\D(\cdot\mid x)
\middle\|
\D_c(\cdot\mid x)
\right)
\right]\\
&=
\KL(\D\|\D_c)
-
\KL\left(\D_X\|(\D_c)_X\right)\\
&\leq
\KL(\D\|\D_c),
\end{align*}
where $\D_X$ and $(\D_c)_X$ denote the respective marginals of $\D$
and $\D_c$ on $\mathcal{X}$. Since $\mathcal{Y}$ has polynomial size,
if $c(\cdot)$ is efficiently computable and polynomially bounded, then
$\widehat{\D}^{\mathrm{inc}}_c(\cdot\mid x)$ can be efficiently computed
and sampled from by enumerating all $y\in\mathcal{Y}$.

The preceding argument is sufficient when the induced joint distribution $\D_c$ is close to $\D$. However, requiring a good approximation to the entire joint distribution can be stronger than necessary for conditional learning, which aims only to approximate $\D(\cdot\mid x)$ and need not model the marginal distribution of $X$. It is therefore natural to ask whether the desired conditional guarantee can be obtained directly, without first approximating the entire joint distribution. Deriving such a conditional learner from meta-complexity assumptions without implicitly learning the marginal distribution of the conditioning variable remains a challenging problem. In meta-complexity, this question is closely connected to two notions, \emph{conditional coding} and \emph{symmetry of information}, which we discuss at a conceptual level below.

Conditional coding is the conditional counterpart of the coding theorem discussed above.
Informally, it asks whether a sample $y$ drawn from $\D(\cdot\mid x)$ admits a $\K^{\poly}$-description of length close to the conditional information content $\log(1/\D(y\mid x))$, with $x$ available as an auxiliary input.
More generally, the same question can be formulated for any conditional description-length measure.
For the learning application, an average-case guarantee is sufficient, since the learner is evaluated on a fresh pair $(x,y)\sim\D$.
If such a guarantee is available for an efficiently computable conditional description-length measure, the same argument as \Cref{prop:kolmogorov-to-distribution} can be applied conditionally to analyze $\widehat{\D}^{\mathrm{cond}}_c$.
For time-unbounded Kolmogorov complexity $\K$, conditional coding follows immediately from the ordinary coding theorem, since the conditional distribution of a samplable joint distribution is itself samplable when no running-time bound is imposed.
By contrast, the validity of its time-bounded analogues is closely connected to complexity-theoretic and cryptographic assumptions.
Consequently, conditional coding and the complexity of time-bounded conditional descriptions have received considerable attention in meta-complexity~\cite{DBLP:conf/stoc/HiraharaILNO23,DBLP:conf/icalp/LuS24,DBLP:conf/focs/HiraharaLN24,HLO26,DBLP:conf/stoc/HuMO26}.

Symmetry of information connects the second construction, which uses only unconditional description lengths, to the first, which describes $y$ with $x$ available as auxiliary input.
Roughly speaking, it asks whether $\K^{\poly}(xy)$ can be decomposed as $\K^{\poly}(x)+\K^{\poly}(y\mid x)$ up to a small additive loss.
The analogous question can also be posed for other description-length measures.
If such a decomposition holds, the incremental description length $\K^{\poly}(xy)-\K^{\poly}(x)$ behaves like the conditional description length $\K^{\poly}(y\mid x)$.
For time-unbounded Kolmogorov complexity $\K$, symmetry of information holds unconditionally, up to the logarithmic additive loss~\cite{zvonkin1970complexity}.
By contrast, whether its time-bounded analogues hold is closely connected to central questions in complexity theory and cryptography, including the $\P$ versus $\NP$ problem and the existence of one-way functions~\cite[e.g.,][]{DBLP:journals/poit/Levin03,DBLP:journals/ipl/LongpreM93,DBLP:journals/iandc/LongpreW95}.
Accordingly, time-bounded forms of symmetry of information, their connections to one-way functions and average-case complexity, and unconditional failures in certain formulations have been studied extensively in the meta-complexity literature~\cite{DBLP:journals/tcs/LeeR05,DBLP:conf/stoc/Hirahara21,DBLP:conf/coco/LiuP22a,GKLO22,GK22,DBLP:conf/coco/Hirahara22,DBLP:conf/stoc/HiraharaILNO23,DBLP:conf/tcc/HiraharaLO24,DBLP:conf/focs/HiraharaLN24,HLO26,DBLP:journals/eccc/GoldbergHLL025,DBLP:conf/stoc/HuMO26}.

We have seen why conditional coding and symmetry of information are natural tools for analyzing the two constructions above.
To the best of our knowledge, however, existing results do not directly convert an unconditional compression guarantee over the target distribution into a conditional KL guarantee for the distribution obtained by normalizing the incremental weights $2^{-(c(xy)-c(x))}$, except in settings where the assumptions under consideration already imply that the context $x$ is also compressed nearly optimally on average under the corresponding marginal distribution.
Namely, it remains open whether conditional coding and symmetry of information can instead yield direct conditional-learning guarantees beyond those obtained by approximating the entire joint distribution.
In particular, it would be of great interest to establish such a transformation for a restricted class of distributions without relying on implausible assumptions that rule out standard cryptography.

Another closely related result is the KL-sampling characterization of conditional pseudoentropy due to Vadhan and Zheng~\cite{DBLP:conf/stoc/VadhanZ12}. For polynomial-size output alphabets, their result characterizes conditional pseudo-Shannon entropy in terms of the minimum average conditional KL divergence achievable by an efficient sampler. This provides another complexity-theoretic link between conditional information and efficient conditional sampling, although it is formulated in terms of pseudoentropy rather than description-length scores. Recent work has revisited and generalized this characterization using multicalibration and weight-restricted calibration~\cite{DBLP:conf/stoc/CasacubertaDV24,DBLP:conf/tcc/HuV25}.

\subsubsection{Compression to Learning over General Messages}
\label{sec:compression-to-learning-arbitrary-alphabet}

For polynomial-length messages, the output space may be exponentially large, making it infeasible to evaluate the score of every possible $y$ and normalize the resulting weights as in \Cref{sec:compression-to-learning-poly-alphabet}.
We discuss two approaches to this setting.
It is worth keeping in mind, however, that the connection between distribution learning and meta-complexity in this time-bounded regime remains largely open.

\paragraph{Universal Extrapolation.} The first approach is to use universal extrapolation.
Recall that $\q^t(x)\eqdef-\log\Q^t(x)$, where $\Q^t$ is the time-bounded universal distribution. 
Thus, for the code-length function $c(x)=\q^t(x)$, the distribution $\D_c$ defined in \Cref{eq:defDc} coincides exactly with $\Q^t$.
The universal distribution satisfies the following domination property. For every distribution $\D$ samplable in time $t$ by a sampler of description length $|\D|$, and every $x\in\supp(\D)$, 
\[\Q^{\poly(t)}(x)\ge 2^{-O(|\D|)}\cdot \D(x)\]
Indeed, the universal sampler can use an initial portion of its random input to specify the sampler for $\D$ and use the remaining bits as its random seed. The probability of selecting this sampler is at least $2^{-O(|\D|)}$, and, conditioned on doing so, the output distribution is exactly $\D$. This bounds the KL divergence from an unknown samplable distribution
$\D$ to $\Q^t=2^{-\q^t}$ in terms of the description length of a
sampler for $\D$, since the domination property immediately implies the following coding-theorem type
bound:
\[
\q^t(x)
=
-\log\Q^t(x)
\leq
\log\frac{1}{\D(x)}
+
O(|\D|).
\]
The claimed KL-divergence bound then follows by the same argument as in
\Cref{prop:kolmogorov-to-distribution}.
To obtain the total variation bound, recall the batch-amortization
argument following \Cref{prop:kolmogorov-to-distribution}, in which the
description-length term is amortized by conditioning a uniformly random
coordinate on the preceding independent samples. Implementing this
argument here requires sampling from the time-bounded universal
distribution conditioned on an observed prefix of the batch.  
This conditional-sampling task is exactly what is called universal
extrapolation in \cite{DBLP:conf/focs/HiraharaN23}, which is equivalent
to the task described by \Cref{def:advUE} together with the procedure
that follows it, specialized here to the case where no additional advice
is given, as required for distribution learning.

Under the nonexistence of infinitely-often one-way functions, universal extrapolation (with no advice) can be performed in time polynomial in the number of bits to be extrapolated, the time bound, the inverse accuracy, and $2^{\q^t(x)-\K(x)}$~\cite{DBLP:conf/focs/ImpagliazzoL90,DBLP:conf/focs/HiraharaN23}.
Its running time is therefore exponential in the computational depth $\q^t(x)-\K(x)$ of the conditioning prefix $x$.
The slow-growth law for computational depth (see \Cref{sec:time-boundedSol}) states that efficient randomized computation increases computational depth by at most a logarithmic amount with high probability. 
Consequently, if the description of a sampler has low computational depth, then the conditioning prefixes generated by that sampler also have low computational depth with high probability.
One concrete example is a sampler that is secretly selected by a uniform efficient randomized algorithm.
Although the selected sampler may itself be a nonuniform object of polynomial length, its computational depth is logarithmic with high probability.
For samplers of logarithmic computational depth, universal extrapolation therefore runs in polynomial time and, combined with the batch-amortization
argument following \Cref{prop:kolmogorov-to-distribution}, yields the following result without any restriction on the size of the message space. 

\begin{theorem}[\cite{DBLP:conf/focs/HiraharaN23}]
If infinitely-often one-way functions do not exist, then every efficiently samplable distribution represented by a sampler of logarithmic computational depth is distribution learnable in polynomial time (with respect to total variation distance, as defined in \Cref{def:distlearning}).
\end{theorem}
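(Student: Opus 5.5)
The plan is to make the batch-amortization argument following \Cref{prop:kolmogorov-to-distribution} efficient, with $\Q^t$ playing the role of the idealized code. Fix a target $\D$ with sampler $M$ of description length $a=|\D|$, running time $t$, and computational depth $O(\log n)$. Let $\D^m$ be the product source, which is sampled by $M^{\otimes m}$ with description length $a+O(\log m)$ and time $\poly(m,t)$. The learner will draw $I\sim[m]$ and $I-1$ samples $x_{<I}$. It then outputs as hypothesis the sampler that runs universal extrapolation $\mathsf{Next}_n(x_{<I};\Q^{t'})$, with $t'=\poly(m,t)$ and no advice. On the rare event that the running time exceeds a fixed polynomial threshold, it aborts and outputs $0^n$.

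\emph{Correctness.} By the domination property, $\q^{t'}(\bar x)\le \log(1/\D^m(\bar x))+O(a+\log m)$ for every $\bar x\in\supp(\D^m)$. Taking expectation, $\KL(\D^m\|\Q^{t'})\le O(a+\log m)$, where $\Q^{t'}$ is restricted or marginalized to the first $nm$ bits. The chain rule (\Cref{lem:chainruleKL}) then gives
\[
\Ex_{I,x_{<I}}\bigl[\KL(\D\,\|\,\mathsf{Next}_n(x_{<I};\Q^{t'}))\bigr]\le O(a+\log m)/m .
\]
Pinsker's inequality (\Cref{lem:Pinsker}) and Jensen's inequality give expected total variation $O(\sqrt{(a+\log m)/m})$. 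Markov's inequality then gives total variation at most $\varepsilon/2$ with probability $1-\delta/3$, provided $m=\poly(a,1/\varepsilon,1/\delta)$.

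\emph{Efficiency.} The prefix $x_{<I}$ must have computational depth $\q^{t'}(x_{<I})-\K(x_{<I})\le O(\log(nm/\delta))$ with probability $1-\delta/3$. I would argue this via the slow growth law (\Cref{sec:time-boundedSol}). The string $x_{<I}$ is produced from the low-depth description of $M$ (together with $I$) by an efficient randomized process. The law says efficient randomized computation raises depth by at most $O(\log(1/\delta))$ plus logarithmic terms with probability $1-\delta$. Strictly speaking we need depth of the prefix, not of the whole sequence, but the prefix is itself the output of an efficient randomized map of $(M,I)$. Given this depth bound, the universal extrapolation algorithm of \cite{DBLP:conf/focs/ImpagliazzoL90,DBLP:conf/focs/HiraharaN23} runs in time $\poly(n,t',1/\varepsilon)\cdot 2^{O(\log(nm/\delta))}=\poly$ under the nonexistence of io-one-way functions. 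It also samples within $\varepsilon/2$ of $\mathsf{Next}_n(x_{<I};\Q^{t'})$. The aborting step costs at most the $\delta/3$ failure mass, and a union bound finishes the argument.

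The main obstacle is the efficiency step. One issue is the uniformity: "no io-OWF" is used to get an extrapolator that succeeds on \emph{all} large input lengths and for every prefix of bounded depth, not just on average. The other issue is rigorously transferring low depth of the sampler description to low depth of the observed prefix. I would first try to invoke the slow growth law with the sampler description (plus $I$) as the seed of low depth. I would also check that the extrapolation accuracy guarantee is in total variation for the specific prefix, so it composes with the Pinsker bound by the triangle inequality.
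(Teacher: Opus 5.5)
Your proposal is correct and follows essentially the same route as the paper. The domination property of $\Q^{t'}$ bounds $\KL(\D^m\|\Q^{t'})$ by $O(|\D|+\log m)$, and the batch-amortization argument (chain rule, Pinsker, Jensen, Markov) turns this into the total variation guarantee; advice-free universal extrapolation under the nonexistence of io-one-way functions is then polynomial-time because the slow-growth law keeps the conditioning prefix at logarithmic computational depth, with the points you flag (per-prefix extrapolation accuracy, depth of the prefix rather than of the whole batch, aborting) resolved as you suggest and treated at the same level of detail in the paper.
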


A limitation of this approach is that it requires ruling out one-way functions.
Known constructions of universal extrapolation rely on consequences of the nonexistence of one-way functions, such as distributional inversion, and this dependence is unavoidable for general distribution classes because efficient universal extrapolation itself implies the nonexistence of one-way functions.
Consequently, even for a restricted class of distributions that might be efficiently learnable unconditionally, this approach still invokes the seemingly implausible assumption that one-way functions do not exist.
Thus, universal extrapolation reveals strong connections among learning, cryptography, and meta-complexity, but it remains unclear whether it can yield unconditional learning results for restricted distribution classes that exclude cryptographically hard sources.

Another major open question concerns the complexity of universal extrapolation itself.
Existing techniques based on distributional inversion require time exponential in the computational depth of the conditioning prefix.
They therefore do not give efficient conditional generation when the prefix has large computational depth.

\begin{problem}
What is the weakest assumption under which universal extrapolation can be performed efficiently for every conditioning prefix?
In particular, is there a meta-complexity assumption that characterizes this worst-case form of universal extrapolation?
\end{problem}

\paragraph{Sequential Generation.} The second approach is to divide a polynomial-length message into polynomially many short blocks and generate them autoregressively.
Write a message as $x=x_1\cdots x_m$, where each block $x_i$ has length $O(\log n)$.
After generating the preceding blocks $x_{<i}$, the learner constructs a conditional distribution over the possible values of $x_i$ and samples the next block from this distribution.
Since each block has only polynomially many possible values, the required normalization can be performed by direct enumeration, as in \Cref{sec:compression-to-learning-poly-alphabet}.

Which compression guarantee is sufficient to construct such a sequential generator?
To formulate this question, first observe the easy converse direction: the same conditional distributions can be used to compress the message block by block.
Once $x_{<i}$ has been decoded, the conditional distribution associated with this prefix determines a code for $x_i$ through the arithmetic-coding construction reviewed in \Cref{sec:learning-to-compression-arithmetic}.
This form of sequential blockwise coding is known as \emph{prequential compression}.
A prequential compressor processes $x_1,\ldots,x_m$ online, encoding each block $x_i$ with the preceding blocks $x_{<i}$ available as auxiliary input.
Ideally, its total conditional description length should be comparable to a one-shot time-bounded description of the entire message $x$, such as $\K^{\poly}(x)$ or one of its variants that admits optimal coding, up to an additive $m\cdot O(\log n)$ overhead for encoding the blocks separately.

This objective is formalized by an \emph{algorithmic chain rule}, as studied in~\cite{DBLP:journals/eccc/KK25,DBLP:journals/eccc/HN25}. For $\q^{\poly}$, suppressing polynomial changes in the time bounds, such a chain rule takes the form
\[
\sum_{i=1}^{m}\q^{\poly}(x_i\mid x_{<i})
\leq
\q^{\poly}(x_1\cdots x_m)+m\cdot O(\log n).
\]

For a source distribution $\D$, we define the corresponding average-case notion by requiring the expected sum of conditional description lengths to be close to the source entropy, up to the same accumulated blockwise overhead\footnote{This is equivalent, in the constant-parameter regime, to the formulation in~\cite{DBLP:journals/eccc/HN25}, which allows an additive $\eta N$ term, where $N$ denotes the total message length. Indeed, if each block has length $b$, then $m=\Theta(N/b)$ and hence $m\cdot O(\log n)=O(N\log n/b)$. Choosing $b=\Theta(\eta^{-1}\log n)$ makes this overhead $O(\eta N)$. Conversely, when $b=\Theta(\log n)$ and $\eta$ is constant, an additive $\eta N$ term is of the form $m\cdot O(\log n)$. Thus, the two formulations are interchangeable by adjusting the constant in the block length.}.
Namely, average-case prequential $\q^{\poly}$-compression requires the following bound
\[
\Exp_{X\sim\D}\left[
\sum_{i=1}^{m}\q^{\poly}(X_i\mid X_{<i})
\right]
\leq
\Ent{\D}+|\mathcal{D}|+m\cdot O(\log n),
\]
where the constant in the ``$O(\log n)$'' term is fixed independently of the block length. Here, $|\D|$ denotes the description length of a sampler for $\D$ and is absorbed into the final term whenever $m\geq|\D|$. We emphasize that this is purely a statement about the source distribution $\D$; we do not assume the existence of an explicit efficient encoder that realizes the conditional description lengths $\q^{\poly}(\cdot\mid\cdot)$.
Replacing $\q^{\poly}$ with $\rK^{\poly}$ gives the corresponding average-case notion of prequential $\rK^{\poly}$-compression.

Hirahara and Nanashima~\cite{DBLP:journals/eccc/HN25} use this viewpoint to give an unconditional, distribution-by-distribution characterization.
Here, it is important to distinguish the elementary symbols to be predicted from the longer blocks used for compression.
Let $Z=(Z_1,\ldots,Z_N)\sim\D$, where each $Z_i$ is a bit or, more generally, a symbol of constant length.
We group consecutive symbols into blocks $X_1,\ldots,X_m$, each consisting of $b=O(\log n)$ symbols.
This grouping is used for prequential compression, while the sequential generator still predicts one elementary symbol $Z_i$ at a time.

Let $\mathcal{A}(Z_{<i})$ denote the output distribution of a sequential generator after observing $Z_{<i}$.
For accuracy $\epsilon>0$ and confidence error $\delta>0$, the desired guarantee is
\[
\Pr_{\substack{Z\sim\D\\ I\sim[N]}}\left[
\TV\bigl(\mathcal{A}(Z_{<I}),(Z_I\mid Z_{<I})\bigr)
\leq\epsilon
\right]
\geq 1-\delta,
\]
where $I$ is uniform over $[N]$.
The authors show that, in the constant-accuracy and constant-confidence regime, the existence of an efficient sequential generator satisfying this guarantee is equivalent to average-case prequential $\q^{\poly}$-compression of the block sequence $X_1,\ldots,X_m$ and, equivalently, to average-case prequential $\rK^{\poly}$-compression. 

Although this does not yield the total variation guarantee required in \Cref{def:distlearning}, it does yield the following Yao-style next-bit learning guarantee, with no restriction on the support size of the target distribution. This follows from an averaging argument over prefixes
and a standard application of Markov's inequality.

\begin{theorem}\label{thm:preq2nbl}
For every $\epsilon,\delta\in(0,1]$, there exist a constant $C$ and a
polynomial-time learner $L$ such that the following holds. Let $\D$ be
any $\Ppoly$-samplable distribution over $\bin^n$. Suppose that, for
every batch size $m$, the sequence obtained by concatenating $m$
independent samples from $\D$ and partitioning the resulting string into
consecutive blocks of length $C\log n$ admits average-case
prequential $\q^{\poly}$-compression or average-case prequential
$\rK^{\poly}$-compression.

Then, given sample access to $\D$, the learner $L$ outputs a
polynomial-time randomized next-bit predictor $h$ satisfying
\[
\Pr_{h\leftarrow L^\D}
\left[
\Exp_{\substack{x\sim\D\\ i\sim[n]}}
\left[
\TV\left(
h(x_{<i}),
(\D_i\mid \D_{<i}=x_{<i})
\right)
\right]
\leq
\epsilon
\right]
\geq
1-\delta.
\]

\end{theorem}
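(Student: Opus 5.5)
The plan is to derive \Cref{thm:preq2nbl} from the Hirahara--Nanashima characterization just described: average-case prequential $\q^{\poly}$- or $\rK^{\poly}$-compression of a block sequence yields an efficient sequential generator $\mathcal{A}$ with constant accuracy $\epsilon'$ and constant confidence error $\delta'$. These constants, together with the block-length constant $C$, will be chosen as functions of $\epsilon$ and $\delta$. I apply this to the source $\D^m$: the concatenation $Z=(x^{(1)},\ldots,x^{(m)})$ of $m$ independent samples, with $N=mn$ elementary bits, grouped into blocks of length $C\log n$. By hypothesis this source is prequentially compressible. We may take $m$ to be a polynomial large enough that the term $|\D^m|\le|\D|+O(\log m)$ is absorbed into the $m'\cdot O(\log n)$ overhead, where $m'$ is the number of blocks. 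The conclusion is a generator $\mathcal{A}$ for which, with probability at least $1-\delta'$ over $Z\sim\D^m$ and $I\sim[N]$, we have $\TV(\mathcal{A}(Z_{<I}),(Z_I\mid Z_{<I}))\le\epsilon'$.

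Next I exploit the product structure. Write $I=(j-1)n+i$ with $j\in[m]$ and $i\in[n]$; then $(j,i)$ is uniform over $[m]\times[n]$. Because the samples are independent, the true conditional $(Z_I\mid Z_{<I})$ equals $(\D_i\mid\D_{<i}=x^{(j)}_{<i})$: the earlier samples $x^{(<j)}$ carry no information. Since total variation is bounded by $1$, the confidence guarantee gives
\[
\Exp_{j,\,x^{(<j)}}\,\Exp_{x\sim\D,\,i\sim[n]}\Bigl[\TV\bigl(\mathcal{A}(x^{(<j)},x_{<i}),(\D_i\mid\D_{<i}=x_{<i})\bigr)\Bigr]\le\epsilon'+\delta'.
\]
Here the inner expectation is over a fresh sample $x=x^{(j)}$, which is independent of the outer pair $(j,x^{(<j)})$.

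The learner $L$ draws $j\sim[m]$ and $j-1$ samples from $\D$ to form the prefix $x^{(<j)}$. It then outputs the predictor $h(x_{<i})\eqdef\mathcal{A}(x^{(<j)},x_{<i})$, with the prefix hardwired; this is polynomial-time because $\mathcal{A}$ is. The inner expectation above is a nonnegative random variable over the learner's coins with mean at most $\epsilon'+\delta'$. By Markov's inequality it exceeds $\epsilon$ with probability at most $(\epsilon'+\delta')/\epsilon$. Choosing $\epsilon'=\delta'=\epsilon\delta/2$ makes this at most $\delta$, which is exactly the claimed guarantee.

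The main obstacle is the parameter bookkeeping in invoking the characterization, rather than the Markov step. First, the block-length constant $C$ depends on $(\epsilon',\delta')$. Second, the $O(\log n)$ overhead constant in the definition of prequential compression must be fixed independently of $C$, so that the overhead becomes negligible relative to the block length $C\log n$; the footnote's $\eta N$ reformulation handles this. Third, the description length $|\D|$ of the $\Ppoly$ sampler may be polynomial, so $m$ must be chosen large enough to amortize it. I would verify that the generator's running time stays polynomial in $n$ for these constant parameters. I would also check that it depends only on $\epsilon,\delta$ and not on $\D$, apart from an upper bound on the sampler size, which can be absorbed by scaling $m$.
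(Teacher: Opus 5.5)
Your proposal is correct and is essentially the paper's argument. The paper obtains the theorem from the Hirahara--Nanashima sequential-generation characterization applied to the concatenation of $m$ independent samples, followed by ``an averaging argument over prefixes and a standard application of Markov's inequality''; that is exactly your route of writing $I=(j-1)n+i$, using independence to identify $(Z_I\mid Z_{<I})$ with $(\D_i\mid\D_{<i}=x^{(j)}_{<i})$, hardwiring a random prefix $x^{(<j)}$ as in the proof of \Cref{lem:worldI-worldII}, and applying Markov with $\epsilon'=\delta'=\epsilon\delta/2$, while the bookkeeping points you flag (the $O(\log n)$ overhead constant fixed independently of $C$, taking $m$ large enough to absorb $|\D|$, and the generator being universal rather than $\D$-dependent) are details the paper leaves implicit.
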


The transformation from prequential $\q^{\poly}$-compression to sequential generation applies the advised universal extrapolation procedure discussed in \Cref{sec:time-boundedSol}.
As reviewed there, the completed blocks $X_{<j}$ are treated as advice. 
The algorithmic chain rule for prequential compression incurs an additional $O(\log n)$ KL loss per block.
To obtain the total variation bound stated above, this loss must once again be amortized through conditional sampling.
Rejection sampling provides the required conditioning.
When generating $Z_i$ in the current block $X_j$, rejection sampling is used only to condition on the previously generated symbols in $X_j$, since the completed blocks $X_{<j}$ are already available as advice.
Since the current block contains only $b=O(\log n)$ symbols, every such conditioning prefix has length $O(\log n)$.
The corresponding conditioning event therefore has inverse-polynomial probability, and rejection sampling runs in polynomial time.
Unlike the universal-extrapolation approach discussed above, this transformation does not require distributional inversion or any assumption ruling out one-way functions.

Accordingly, the result does not claim that every efficiently samplable distribution can be learned, which would be incompatible with cryptographic hardness.
Instead, it gives a distribution-by-distribution characterization: a distribution family admits efficient sequential generation precisely when it satisfies the corresponding prequential-compression condition.

The short-block construction balances coding overhead against the running time of rejection sampling.
To reduce an additive $O(\log n)$ overhead per block to an $\epsilon$ fraction of the block length, the block length must be $\Omega(\epsilon^{-1}\log n)$.
Rejection sampling on such a block takes time $2^{O(\epsilon^{-1}\log n)}=n^{O(1/\epsilon)}$.\footnote{
More precisely, here we use the elementary fact that the universal
distribution assigns probability at least $2^{-O(b)}$ to any
$b$-bit prefix, since an $O(b)$-bit program can simply print an
extension of that prefix, ignoring the advice.}
This is polynomial in $n$ for every constant $\epsilon$, but it is not polynomial jointly in $n$ and $\epsilon^{-1}$.
The current characterization is therefore limited to constant accuracy and confidence, and extending it to general parameters remains open.

\begin{problem}
Does the equivalence between efficient sequential generation, prequential $\q^{\poly}$-compression, and prequential $\rK^{\poly}$-compression extend to arbitrary polynomial-length blocks and arbitrary accuracy and confidence parameters, with running time polynomial in the message length, the block length, $\epsilon^{-1}$, and $\delta^{-1}$?
\end{problem}

In particular, it is important that $\epsilon$ can be made inverse-polynomially small in $n$, as this allows the next-bit prediction guarantee in \Cref{thm:preq2nbl} to be converted into the standard total variation guarantee via the triangle inequality.

\subsubsection{Language Modeling}
\label{sec:language-modeling-compression}

The framework developed above appears directly in modern language modeling.
A \emph{token} is a symbol from a finite vocabulary, such as a byte or a fragment of a word.
An \emph{autoregressive language model} $\mathcal{E}$ assigns, to every prefix $x_{<i}$, a distribution $\mathcal{E}(\cdot\mid x_{<i})$ over the next token.
It generates a message by repeatedly sampling the next token from this distribution.
Thus, with each token viewed as a short block, language modeling is precisely the sequential conditional-generation setting considered in \Cref{sec:compression-to-learning-arbitrary-alphabet}.

The standard objective for training an autoregressive language model is the next-token log loss \cite[e.g.,][]{radford2019language,DBLP:conf/nips/BrownMRSKDNSSAA20}: 
\[
L_{\mathcal{E}}(x)
\eqdef
\sum_{i=1}^{N}\log\frac{1}{\mathcal{E}(x_i\mid x_{<i})} = 
\log\frac{1}{\mathcal{E}(x)}.
\]
As reviewed in \Cref{sec:learning-to-compression-arithmetic}, arithmetic coding uses the same conditional probabilities to encode $x$ in $L_{\mathcal{E}}(x)+O(1)$ bits.
Hence, minimizing log loss is equivalent to minimizing the length of the corresponding lossless encoding.
In this sense, a language model is also a compressor, provided that the model is available to both the encoder and the decoder.

The reverse connection uses the normalized incremental weights discussed in \Cref{sec:compression-to-learning-poly-alphabet}, assigning each possible continuation $y$ the weight $2^{-(c(xy)-c(x))}$.
For each prefix $x$, normalizing these weights defines a distribution over the next token, and iterating this construction turns a compressor into an autoregressive generator.
A good overall compression rate, however, does not in general guarantee that the resulting conditional distributions are accurate.

An early neural implementation of this connection predates modern language models. Schmidhuber and Heil~\cite{SchmidhuberHeil1996} trained a feedforward network to estimate conditional next character probabilities and coupled its outputs with Huffman or arithmetic coding for lossless text compression; in their offline scheme, the trained predictor was shared by the encoder and decoder.

Del\'etang et al.~\cite{DBLP:conf/iclr/DeletangRDCGMGW24} investigate both directions of this correspondence experimentally.
They combine the next-token probabilities produced by large language models with arithmetic coding to obtain lossless compressors for text, images, and audio.
Conversely, they normalize incremental code lengths produced by compressors such as gzip and use the resulting conditional distributions for autoregressive generation.
Related experiments use prequential code length to evaluate neural predictors trained sequentially on previously revealed examples~\cite[e.g.,][]{DBLP:conf/nips/BlierO18,DBLP:conf/iclr/BornscheinLH23}.
Together, these recent developments place the relationship between sequential prediction and compression directly within modern machine learning.

The complexity-theoretic results discussed above provide an idealized time-bounded formulation of the same prediction--compression principle, with $\K^{\poly}$ and related measures such as $\rK^{\poly}$ and $\q^{\poly}$ serving as ideal universal benchmarks for the shortest descriptions that can be decoded in polynomial time.
The existing correspondences thus describe the relationship between prediction and compression at this idealized time-bounded level, but do not yet yield guarantees for concrete large language models.
This raises a natural question: can meta-complexity shed light on concrete language models, for instance, by providing new methods that transfer compression guarantees to sequential-generation guarantees without quantitative loss while preserving computational efficiency?
Developing a connection between the idealized time-bounded theory and modern language models remains an important direction for future research.

\newpage

\section{Meta-Complexity and Computational Hardness from Learning}\label{sec:learning_to_hardness}

Ideas from complexity have long informed the design of new learning algorithms (see, e.g., \citep{DBLP:journals/jacm/LinialMN93}); conversely, ideas from learning have yielded insights in complexity (see, e.g., \citep{DBLP:journals/jcss/ImpagliazzoW01}). More recently, there have been indications of deep and far-reaching connections between the two fields that go well beyond these results. They are part of an emerging theory that shows useful links between \emph{proving complexity lower bounds} and \emph{designing new algorithms}. The learning algorithm of \citep{DBLP:conf/coco/CarmosinoIKK16} and the complexity lower bound of \citep{DBLP:journals/jacm/Williams14} are examples of  breakthroughs achieved in this way.

Learning algorithms in different  models are known for several important classes of Boolean functions, including DNFs \citep{DBLP:journals/jcss/Jackson97}, constant-depth circuits with AND, OR, and NOT gates \citep{DBLP:journals/jacm/LinialMN93}, and constant-depth circuits extended with MOD$_p$ gates, where $p$ is prime \citep{DBLP:conf/coco/CarmosinoIKK16}. However, despite major efforts, a learning algorithm for the class $\mathsf{TC}^0_2$, i.e., two layers of polynomially many threshold gates, is unknown and corresponds to a central problem in computational learning theory. Note that every reasonable concept class can be learned in time $2^n \cdot \poly(n)$, simply by querying the unknown function $f \colon \{0,1\}^n \to \{0,1\}$ on all input strings. 
In contrast, even if we allow the learning algorithm to run in \emph{quantum time} slightly below $2^n$ and make \emph{quantum membership queries}, there is currently no known algorithm for learning $\mathsf{TC}^0_2$.

In a seminal work, \citep{DBLP:conf/colt/FortnowK06} demonstrated that the existence of polynomial-time learning algorithms implies complexity lower bounds. Subsequent results strengthened this connection by showing that even learning in time slightly below $2^n$ suffices to derive lower bounds. Put differently, any ``non-trivial'' learning algorithm can be transformed into a source of hard functions. Hardness can be extracted from non-trivial deterministic  learning algorithms \citep{DBLP:conf/coco/KlivansKO13}, randomised PAC learning algorithms \citep{DBLP:conf/coco/OliveiraS17}, and quantum PAC learning algorithms \citep{DBLP:conf/focs/ArunachalamGGOS21}. In general, the existence of any non-trivial learning algorithm for a class $\mathcal{C}$ of functions (such as $\mathsf{TC}^0_2$) implies that there are explicit problems that cannot be computed in $\mathcal{C}$.\footnote{Interestingly, previous work  \citep{DBLP:conf/innovations/ServedioT17, DBLP:conf/alt/Karchmer24} has shown that non-trivial learning algorithms of the required form can be designed for a large number of circuit classes for which lower bounds are already known.} These results shed some light on the difficulty of designing provably correct learning algorithms for classes such as $\mathsf{TC}^0_2$, for which super-polynomial size complexity lower bounds remain unknown.

In the next sections, we survey results of this kind and present a concrete example illustrating how learning algorithms can be used to prove lower bounds in meta-complexity.

\subsection{Deterministic Learners}

Below we consider \emph{typical} circuit classes $\mathcal{C}$ that are closed under restrictions and complementation,\footnote{Recall that a class is closed under restrictions if any function $g$ obtained by setting a subset of input bits of $f \in \mathcal{C}$ to $0$ and $1$ is also in $\mathcal{C}$. Closure under complementation means that if a Boolean function $f \in \mathcal{C}$ then the function $g(x) = 1 - f(x) \in \mathcal{C}$.} and contain the identically constant functions.\footnote{If a circuit class $\mathcal{C}$ does not satisfy one of these properties, it is not difficult to use the $\mathcal{C}$-circuit evaluation problem to define an explicit language in the complexity class $\mathsf{P}$ that is not in $\mathcal{C}$.}

As discussed above, any class $\mathcal{C}$ can be  learned in time $2^n\cdot \poly(n)$. For this reason, we say that a learning algorithm is \emph{nontrivial} if its running time is of the form $2^n/n^{\omega(1)}$, i.e., if it admits an arbitrarily large polynomial speed-up over the trivial time bound. Our next result shows that the existence of nontrivial deterministic learning algorithms has significant complexity-theoretic consequences.

Recall that in $\MKtP$ (Minimum $\Kt$ Problem), we are given a string $x \in \{0,1\}^n$ and a parameter $s \in \mathbb{N}$, and the goal is to decide if $\Kt(x) \leq s$. Here $\Kt(x)$ denotes the Levin time-bounded Kolmogorov complexity of $x$ \citep{DBLP:journals/iandc/Levin84}, which is defined as 
		\[
		\mathsf{Kt}(x) \eqdef \min_{\substack{\Pi \in \{0,1\}^* \\ t\in\mathbb{N}}} \mleft\{|\Pi| + \lceil \log t \rceil\,\,\middle\vert\,\,  \text{program } \Pi \textnormal{ outputs $x$ within $t$ steps}\mright\}.
		\]
We can also think of $\MKtP$ as a language contained in $\{0,1\}^*$ by fixing the threshold parameter $s$ to a large bound such as $s(n) = n - \log n$. Showing that $\MKtP \notin \P$ is a longstanding open problem in meta-complexity \citep{DBLP:journals/siamcomp/AllenderBKMR06}.

\begin{theorem}[$\MKtP$ Lower Bounds from Deterministic Learning] \label{thm:lb_from_det_learning}The following results hold:
\begin{itemize}
    \item[\emph{(}i\emph{)}] \emph{Non-Uniform Hardness for $\MKtP$ from Subexponential Time Deterministic Learning:\\} Suppose that a typical circuit class $\mathcal{C} = \{\mathcal{C}_n\}_{n \geq 1}$ can be PAC-learned to error $\varepsilon \leq 0.499$ under the uniform distribution using membership queries in deterministic time $2^{n/2}/n^{\omega(1)}$. Then $\MKtP \notin \mathcal{C}$.
    \item[\emph{(}ii\emph{)}] \emph{Uniform Hardness for $\MKtP$ from Nontrivial Deterministic Learning:\\}
    Suppose that, for every $k \geq 1$, Boolean circuits of size $n^k$ can be PAC-learned to error $\varepsilon \leq 0.499$  under the uniform distribution  using membership queries in deterministic time $2^{n}/n^{\omega(1)}$. Then $\MKtP \notin \P$.
\end{itemize}
\end{theorem}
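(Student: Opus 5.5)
The plan is a diagonalization argument that uses the determinism of the learner, followed by a $\Kt$ upper bound on everything the diagonalization touches. Suppose towards a contradiction that $\MKtP$ (with a suitable fixed threshold $s(n)$) lies in $\mathcal{C}$, and let $A$ be the assumed deterministic membership-query learner running in time $T(n)=2^{n/2}/n^{\omega(1)}$.

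The key first step is to simulate $A$ on the target $f_n = \MKtP_n$ \emph{without} oracle access to $\MKtP$. Every string $q$ queried by $A$ is produced by a deterministic computation of length at most $T(n)$. That computation is specified by $n$, the index of the query, and the previous answers. So if the previous answers can themselves be regenerated cheaply, we get
\[
\Kt(q)\;\le\; O(\log n)+\log T(n)\;\le\; n/2-\omega(\log n).
\]
With $s(n)$ chosen above this bound, every query lies in $\MKtP$. By induction on the queries, the true oracle therefore answers ``$1$'' to every query. This lets us run $A^{f_n}$ exactly, answering all queries by $1$, and the run outputs a hypothesis $h$ that is itself a low-$\Kt$ object: the whole run has description length $O(\log n)$ and time $T(n)$.

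By the learning guarantee, $h$ agrees with $\MKtP_n$ on a $0.501$ fraction of inputs. The contradiction should come from exhibiting many inputs on which $h$ must err. These are the inputs of low $\Kt$ on which $h$ says $0$, together with the inputs of high $\Kt$ on which $h$ says $1$, and we want to choose the threshold so that the error set has density above $0.499$.

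This is the step I expect to be the main obstacle. Since $h$ is only cheap to \emph{construct}, not cheap to evaluate on all $2^n$ points, naive ``pick the first string where $h$ is wrong'' arguments pay $\log t\approx n$. I would first try to apply the learner not to $\MKtP_n$ itself but to a restriction $v\mapsto \MKtP(u\,v)$ on $m\ll n$ free variables, where $u$ is a fixed prefix of intermediate complexity. Closure under restrictions keeps this function in $\mathcal{C}$, and closure under complementation together with the constants lets us flip or pad the function if needed. The prefix $u$ is chosen so that the restricted function is roughly balanced, and so that every string $u\,v$ can be certified cheaply whenever $h(v)$ is computable in time $2^{O(m)}\ll 2^{n/2}$. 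For such $v$, strings with $h(v)=0$ are then forced to be compressible, giving error density about $1/2+\Omega(1)$. The exponent $n/2$ in the hypothesis is exactly the budget that makes these $\log t$ terms fit below the threshold.

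For item (ii), I would replace the nonuniform conclusion by a uniform one through a standard ``easy witness'' style step. If $\MKtP\in\P$, then $\MKtP$ has circuits of size $n^k$ for some $k$, which brings us into the setting of the assumption. The nontrivial $2^n/n^{\omega(1)}$ learner for $\SIZE[n^k]$ is then first accelerated to time $2^{n^{\varepsilon}}$ using the speedup of \Cref{sec:speedup}. There is one caveat: that speedup is stated for randomized learners. To keep determinism, I would instead use the direction ``learner $\Rightarrow$ natural property $\Rightarrow$ learner'' of \Cref{sec:learning_meta-equivalences}. Here the natural property is made deterministic by deciding it with the $\P$-algorithm for $\MKtP$, whose yes-instances have small $\Kt$ and hence small circuits. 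After this, we are in the regime of item (i) at a smaller input length, and the same diagonalization yields $\MKtP\notin\SIZE[n^k]$, contradicting $\MKtP\in\P$.
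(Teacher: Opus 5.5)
Your opening step in (i) matches the paper's. Because $A$ is deterministic, an induction shows that every query $q_i$ is produced by simulating $A(1^n)$ with the previous answers fixed, so the oracle's answers are forced. However, your description of $q_i$ must include the index $i\le T(n)$. The correct bound is therefore
\[
\Kt(q_i)\le O(\log n)+\log i+\log\bigl(T(n)\cdot\poly(n)\bigr)\le O(\log n)+2\log T(n)=n-\omega(\log n),
\]
not $O(\log n)+\log T(n)$. This doubled $\log T(n)$ is exactly why item (i) assumes $T(n)=2^{n/2}/n^{\omega(1)}$, and it forces a threshold of about $n-\log n$ rather than $n/2$.

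\textbf{The gap in (i) is the contradiction step.} You never need to evaluate $h$. The restriction and balanced-prefix plan you sketch is not an argument: nothing in it explains why inputs $v$ with $h(v)=0$ would have to be compressible. The missing idea is to compare with a constant function.

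Take $f_n=\MKtP_n$ with threshold $n-\log n$. Every query is answered $1$, so the transcript of $A^{f_n}(1^n)$ is identical to that of $A^{\bar 1}(1^n)$. Here $\bar 1\in\mathcal{C}_n$, because typical classes contain the constants. The same hypothesis $h$ must therefore be $\varepsilon$-close to both $f_n$ and $\bar 1$. By the triangle inequality, $\mathsf{dist}(f_n,\bar 1)\le 2\varepsilon\le 0.998$. But at most $2^{n-\log n+1}$ strings satisfy $\Kt\le n-\log n$, so $f_n$ is $0$ on a $1-O(1/n)$ fraction of inputs, a contradiction. The paper uses the complement and compares with $\bar 0$, which is the same argument.

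\textbf{Your route for (ii) does not work.} The speedup of \Cref{sec:speedup} goes through the CIKK/NW reconstruction. That reconstruction is randomized (random hybrid index, random fixing of the seed bits outside $S_i$) whether or not the natural property is decided deterministically. So you never recover a deterministic learner, and the argument of (i) depends on determinism. Obtaining deterministic learners from $\MKtP\in\P$ is in fact posed as an open problem at the end of \Cref{sec:uncond_lbs}.

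The missing idea is much simpler: use $\MKtP\in\P$ to remove the $\log i$ term. Suppose some query in the forced run of $A(1^n)$ had $\Kt\ge n-\log n$. Then the \emph{first} such query has a short description: ``simulate $A(1^n)$, test each query with the polynomial-time $\MKtP$ algorithm, and output the first one that has high complexity.'' This gives
\[
\Kt\le O(\log n)+\log\bigl(T(n)\cdot\poly(n)\bigr)<n-\log n
\]
for $T(n)=2^n/n^{\omega(1)}$, a contradiction. Hence all queries have low complexity. The constant-function argument above then shows $\MKtP\notin\SIZE[n^k]$ for every $k$, contradicting $\MKtP\in\P$.
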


\begin{proof}[Proof Sketch]
    To prove the first item, let $A^{\mathsf{MQ}(\cdot)}(1^n)$ be a deterministic algorithm running in time $T(n)$ that learns every $f \in \mathcal{C}_n$ to error $\varepsilon$. Let $\bar{0} \colon \{0,1\}^n \to \{0,1\}$ be the constant zero function. Suppose towards a contradiction that $\MKtP \in \mathcal{C}$. Consider the Boolean function $f_n \colon \{0,1\}^n \to \{0,1\}$ such that $f_n(x) = 1$ if and only if $\Kt(x) \geq n - \log n$. Then $\{f_n\}_{n \geq 1} \in \mathcal{C}$ under the assumption that $\MKtP \in \mathcal{C}$, since $f_n$ is the complement of a restriction of the language $\MKtP$.

    Next, consider the execution of $A^{\mathsf{MQ}(\cdot)}(1^n)$ on $\bar{0}$, and let $h_{\bar{0}}$ be its output hypothesis. Similarly, let $h_{f_n}$ be the output hypothesis produced on $f_n$. Then
    $$
    \mathsf{dist}(h_{\bar{0}}, \bar{0}) \;\leq\; \varepsilon \quad \text{and} \quad \mathsf{dist}(h_{f_n},f_n) \;\leq\; \varepsilon.
    $$
    We claim that if $T(n) = 2^{n/2}/n^{\omega(1)}$ and $n$ is large enough, then $h_{\bar{0}} \equiv h_{f_n}$, i.e., they compute the same function. It is enough to argue that $A^{\mathsf{MQ}(\bar{0})}(1^n)$ and $A^{\mathsf{MQ}(f_n)}(1^n)$ have the same computation transcript, and therefore produce the same hypotheses. To see this, we argue that every membership query to $\mathsf{MQ}(f_n)$ produces $0$ (as in $\mathsf{MQ}(\bar{0})$). Indeed, it is not hard to prove by induction on $i$ that if $q_i \in \{0,1\}^n$ is the $i$-th query, where $i \leq T(n)$, then $\Kt(q_i) < n - \log n$. This is because we can describe $q_i$ by knowing the code of $A$, $n$, and $i$, and by  simulating $A$ on input $1^n$ for at most $T(n)$ steps. Consequently,
    $$
\Kt(q_i) \leq (|A| + \log n + \log i) + \log (T(n) \cdot \poly(n)) \leq O(\log n) + 2 \log T(n) \leq n - \omega(\log n) < n - \log n,
    $$
    where we used that $T(n) = 2^{n/2}/n^{\omega(1)}$. Consequently, 
    $$
\mathsf{dist}(h_{\bar{0}}, h_{f_n}) \;=\; 0.
    $$
    Applying the triangle inequality, we obtain that 
    $$
\mathsf{dist}(\bar{0}, f_n) \;\leq\; 2 \varepsilon.
    $$
    In contrast, since a random $n$-bit string has $\Kt$ complexity $\geq n -\log n$ with probability $1 - o(1)$, we have
    $$
\mathsf{dist}(\bar{0}, f_n) \;\geq\; 1 - o(1).
    $$
   This contradicts our choice of $\varepsilon \leq 0.499$. 

For the proof of the second item, an additional idea is used. Assume now that $T(n) = 2^n/n^{\omega(1)}$ and $\MKtP \in \P$. For each $k$, let $\mathcal{C}^k$ be the class of circuits of size $n^k$. Adapting the proof of the first item, we argue below that $\MKtP \notin \mathcal{C}^k$. Since this holds for all $k$, it implies in particular that $\MKtP \notin \P$, which is a contradiction.

Fix $k \in \mathbb{N}$. Note that, in the argument above, we had to assume a lower running time for the learning algorithm because of the additional $\log T(n)$ contribution coming from the encoding of $i \in [T(n)]$, the index of the query. We avoid the extra term coming from the encoding of the index. We will use that $\MKtP \in \P$ to show that $\Kt(q_i) < n - \log n$ for every query $q_i \in \{0,1\}^n$ made by the learner. In more detail, towards a contradiction, suppose that for some $i$ we have $\Kt(q_i) \geq n - \log n$, and let $i$ be the first such index. Then, we can describe $q_i$ as follows: $q_i$ is the first query made by $A(1^n)$ during the simulation such that $\Kt(q_i) \geq n - \log n$. Since the latter condition can be checked in polynomial time under the assumption that $\MKtP \in \P$, it is not hard to see that 
$$
\Kt(q_i) \;\leq\; O(1) + \log n + \log (T(n) \cdot \poly(n)) \;<\; n - \log n,
$$
for large enough $n$. This contradiction completes the proof.
\end{proof}

Complementing this result, we mention that separations against $\mathcal{C}$ (i.e., non-uniform lower bounds) for some problem in $\mathsf{E} = \mathsf{DTIME}[2^{O(n)}]$ (instead of $\MKtP \in \mathsf{E}$) also follow from  deterministic learners running in time $T(n) = 2^{n}/n^{\omega(1)}$. For more details, see the results and techniques in \citep{DBLP:conf/coco/KlivansKO13}.

\subsection{Randomized and Quantum Learners}

As we consider more powerful learning algorithms, the task of constructing a computationally hard function from them becomes more elusive. For instance, in the proof of \Cref{thm:lb_from_det_learning}, we used in a crucial way that the set of queries produced by the learner is uniquely determined once we fix the function $f \in \mathcal{C}_n$ for the learning task. If the learning algorithm is \emph{randomized}, this property no longer holds, and the strategy described above simply does not work.

Nevertheless, using more sophisticated arguments that rely on the theory of computational pseudorandomness, it is possible to extract complexity lower bounds from any non-trivial learning algorithm. Let $\mathsf{BPE} \eqdef \mathsf{BPTIME}[2^{O(n)}]$  denote randomized exponential time, i.e., the randomized analogue of $\mathsf{E}$. 

\begin{theorem}[Lower Bounds from Nontrivial Randomized Learning \citep{DBLP:conf/coco/OliveiraS17}]\label{thm:rand_learning_lbs}
    Let $\mathcal{C}$ be a typical circuit class. If polynomial-size $\mathcal{C}$-circuits can be learned using membership queries with error $\varepsilon < 0.499$ under the uniform distribution in randomized time $2^n/n^{\omega(1)}$, then $\mathsf{BPE} \nsubseteq \mathcal{C}(\poly)$.
\end{theorem}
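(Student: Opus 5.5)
We argue by contradiction: assume $\mathsf{BPE} \subseteq \mathcal{C}(\poly)$ and derive a contradiction with a hierarchy-type statement. Two ingredients are available in the paper. The first is the learning-to-natural-property direction of \Cref{sec:learning_meta-equivalences}. Running the nontrivial randomized learner on a truth table and testing the agreement of its hypothesis gives a $\pr\BPP$-natural property against $\mathcal{C}[\poly]$. The counting argument there shows that a random function is not noticeably approximated by any hypothesis of size $2^n/n^{\omega(1)}$. Such a property immediately yields \emph{non-uniform} lower bounds in randomized exponential time. Concretely, by diagonalizing against the property one defines in $\mathsf{BPE}$-like time a function whose truth table on $n$ bits is accepted by the test, and hence is not in $\mathcal{C}[n^k]$. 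The subtlety is that the natural property is only a promise algorithm. Finding a truth table it accepts requires randomness and may fail to define a language with bounded error. This is why we get $\mathsf{BPE}$ rather than $\mathsf{E}$, and why extra care is needed.

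The plan has two cases, following the Karp--Lipton/Impagliazzo--Kabanets--Wigderson style. In Case 1, assume $\mathsf{BPE} \not\subseteq \mathcal{C}(\poly)$; then we are done. In Case 2, assume $\mathsf{BPE} \subseteq \mathcal{C}(\poly)$. Since $\mathcal{C}$ is typical and $\mathsf{BPE}$ contains hard-on-average, downward self-reducible, random self-reducible $\mathsf{PSPACE}$-complete functions, we reach a collapse. Pick a $\mathsf{PSPACE}$-complete language $L^\star$ of the Trevisan--Vadhan type that is both downward self-reducible and self-correctable. By assumption $L^\star \in \mathcal{C}(\poly)$. Now use the learner itself to learn $\mathcal{C}$-circuits for $L^\star$ on length $n$, answering membership queries with circuits for length $n-1$ that were already learned. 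Self-correction turns the $0.499$-error hypotheses, after boosting and amplification, into correct circuits. This inductive learn-and-self-correct procedure puts $L^\star$ in $\mathsf{BPTIME}[2^n/n^{\omega(1)}]$ on inputs of length $n$, up to padding. Padding then moves $\mathsf{PSPACE}$, and via the same collapse also $\mathsf{BPE}$, into randomized time noticeably below the trivial bound. This contradicts a randomized time hierarchy. We will use a hierarchy for $\mathsf{BPTIME}$ with one bit of advice, or work with the promise version and use the fact that $L^\star$ is a genuine language to avoid advice.

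The main obstacle is this last hierarchy step. Unconditional hierarchy theorems for $\mathsf{BPTIME}$ are not known; only versions with advice, or versions for promise problems, are available. So we must ensure the contradiction is taken against an object for which a hierarchy holds. A second route seems cleaner and is the one I would try first. Under Case 2, $\mathsf{BPE} \subseteq \mathcal{C}(\poly) \subseteq \P/\poly$, which gives $\mathsf{BPE} = \mathsf{EXP}$ and $\mathsf{PSPACE}$-style collapses via Babai--Fortnow--Nisan--Wigderson. Then diagonalize in $\mathsf{EXP}$, which is deterministic, against the $2^n/n^{\omega(1)}$-time learner-derived algorithm. The role of the learner is to compress the exponential-time computation of $L^\star$ into nontrivial randomized time. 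The remaining work is careful bookkeeping: tracking the polynomial sizes $n^k$ through the self-reduction so that the learner is only ever invoked on $\mathcal{C}$-circuits of a fixed polynomial size.
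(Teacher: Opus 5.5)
\textbf{Verdict: there is a genuine gap.} Your Case 2 is, in outline, the Fortnow--Klivans learn-and-self-correct argument. That is only the \emph{second} half of the route the paper points to. The missing idea is the first half: the original proof first applies the learning speedup of \Cref{sec:speedup}. The nontrivial learner gives a $\pr\BPP$-natural property. Feeding it into Nisan--Wigderson reconstruction on auxiliary functions of $m=n^{\gamma}$ variables yields a \emph{high-accuracy} learner for $\mathcal{C}[n^k]$ running in time $2^{n^{\varepsilon}}$. Without this step, both of your key steps fail at the scale $2^n/n^{\omega(1)}$.

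\begin{itemize}
\item \emph{Accuracy.} A hypothesis with error $0.499$ is far outside the regime where the Trevisan--Vadhan self-corrector applies, and ``boosting and amplification'' is not available here. Standard boosting needs weak learning under the non-uniform distributions the booster creates. XOR/direct-product amplification or list-decodable encodings enlarge the number of input variables by at least a constant factor, which pushes a $2^n/n^{\omega(1)}$ learner above the trivial $2^n$ bound.
\item \emph{The final contradiction.} Even a correct bounded-error algorithm for $L^\star$ in time $2^n/n^{\omega(1)}$ contradicts nothing. $L^\star$ is only known to lie in $2^{O(n)}$ time. $\mathsf{PSPACE}$ languages reduce to it only with polynomial input-length blowup, and $\mathsf{BPE}$ languages are not known to reduce to it at all. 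So padding erases the $n^{\omega(1)}$ saving, and no hierarchy theorem, with or without advice, is violated.
\item \emph{Your second route.} $\mathsf{BPE}\subseteq\P/\poly$ gives Karp--Lipton-type collapses such as $\mathsf{EXP}=\mathsf{MA}$, not $\mathsf{BPE}=\mathsf{EXP}$. A deterministic exponential-time diagonalization against a randomized $2^n/n^{\omega(1)}$-time algorithm would have to range over $2^{2^n/n^{\omega(1)}}$ random strings.
\item \emph{Your opening remark.} The claim that a $\pr\BPP$-natural property ``immediately'' yields a $\mathsf{BPE}$ lower bound is not available either. This is exactly why the property has to be routed through NW reconstruction.
\end{itemize}

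Once the speedup is in place, polynomial blowups are harmless, and the endgame is a space diagonalization rather than a time hierarchy.
\begin{itemize}
\item Assume $\mathsf{BPE}\subseteq\mathcal{C}[n^k]$ for a \emph{fixed} $k$. Then $L^\star\in\mathcal{C}[n^k]$.
\item Learn $L^\star$ length by length with the $2^{n^{\varepsilon}}$-time strong learner, answering queries via downward self-reducibility and repairing hypotheses via self-correction. This gives $L^\star\in\mathsf{BPTIME}[2^{n^{O(\varepsilon)}}]$, hence $\mathsf{PSPACE}\subseteq\mathsf{BPTIME}[2^{n^{O(\varepsilon)}}]\subseteq\mathsf{BPE}$.
\item For each fixed $k$, $\mathsf{PSPACE}$ contains a language with no general circuits of size $n^{ck}$, where $\mathcal{C}[n^k]\subseteq\SIZE[n^{ck}]$. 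That language lies in $\mathsf{BPE}\setminus\mathcal{C}[n^k]$, a contradiction.
\end{itemize}
This argument refutes fixed-polynomial inclusions, which is the form in which Oliveira and Santhanam state the result. Against your non-fixed Case 2 hypothesis $\mathsf{BPE}\subseteq\mathcal{C}(\poly)$ it gives no contradiction, since $\mathsf{PSPACE}\not\subseteq\P/\poly$ is open. So you should set up Case 2 with a fixed $k$.

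(Arunachalam et al.\ reach the theorem by a more direct argument via pseudorandom generators and local list decoding. That ingredient is also absent from your plan.)
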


It is also possible to consider a more expressive class of \emph{quantum} learning algorithms (see \citep{arunachalam2017guest} for more on quantum learning theory).  Let $\mathsf{BQE} \eqdef \mathsf{BQTIME}[2^{O(n)}]$  denote quantum exponential time. 

\begin{theorem}[Lower Bounds from Nontrivial Quantum Learning \citep{DBLP:conf/focs/ArunachalamGGOS21}]\label{thm:quant_learning_lbs}
        Let $\mathcal{C}$ be a typical circuit class. If polynomial-size $\mathcal{C}$-circuits can be learned using quantum membership queries with error $\varepsilon < 0.499$ under the uniform distribution in quantum time $2^n/n^{\omega(1)}$, then $\mathsf{BQE} \nsubseteq \mathcal{C}(\poly)$.
\end{theorem}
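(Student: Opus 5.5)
The plan is to transfer the argument behind \Cref{thm:rand_learning_lbs} to the quantum setting. The strategy is a win-win analysis over whether $\mathsf{BQE}$ has small $\mathcal{C}$-circuits. Suppose for contradiction that $\mathsf{BQE} \subseteq \mathcal{C}(\poly)$. Since $\mathcal{C}$ is typical and contained in $\P/\poly$, this gives in particular $\mathsf{BPE} \subseteq \mathsf{BQE} \subseteq \P/\poly$. The goal is to combine this collapse with the nontrivial quantum learner so as to simulate all of $\mathsf{BQE}$ in quantum time slightly below $2^{n}$ on inputs of length related to $n$. This would contradict a (quantum) time-hierarchy-type statement for $\mathsf{BQE}$ with small advice or infinitely-often flavour, which is the standard endpoint in \citep{DBLP:conf/coco/OliveiraS17}.

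First, I fix a $\mathsf{BQE}$-complete-style hard language $L^\star$ that is \emph{self-reducible} or instance-checkable, so that its circuits can be certified. The natural candidate is a downward self-reducible and random self-reducible language such as a $\mathsf{PSPACE}$-complete problem of Trevisan--Vadhan type. This works because $\mathsf{BQE}\subseteq\P/\poly$ together with standard Karp--Lipton/interactive-proof arguments collapses $\mathsf{EXP}$ and relevant classes. The consequence I want is that every language in $\mathsf{BQE}$ reduces to $L^\star$, and that $L^\star_n$ has polynomial-size $\mathcal{C}$-circuits.

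Second, I run the quantum learner $A$ on $L^\star_n$ with quantum membership queries. I answer those queries coherently by recursively using hypotheses already learned for $L^\star$ on smaller lengths, via downward self-reducibility. Random self-reducibility then lets me turn each error-$0.499$ hypothesis into one that is correct on every input with high probability. The running time recursion is then dominated by $2^{n}/n^{\omega(1)}$ per level, with polynomially many levels. This yields a quantum algorithm deciding $L^\star$, hence all of $\mathsf{BQE}$ via padding, in time $2^{n}/n^{\omega(1)}$ with bounded error. That contradicts the quantum time hierarchy, up to the usual advice or infinitely-often caveats handled as in \citep{DBLP:conf/focs/ArunachalamGGOS21}.

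The main obstacle is the second step. Weak error-$0.499$ hypotheses must be turned into \emph{worst-case} correct circuits that can answer quantum membership queries in superposition. The difficulty is that the queries cannot simply be classically checked, and coherent answers must be essentially error-free to avoid garbage states. My first attempt would be to amplify first: use local list-decoding and random self-reducibility, selected by an instance checker, to get a classical circuit that is correct everywhere with probability $1-2^{-n}$ over its construction. I would then fix this circuit and implement it reversibly, uncomputing its workspace. The second thing to verify is that the exponent bookkeeping leaves a genuine $n^{\omega(1)}$ saving. Failing that, I would fall back on the NW/natural-property route: a nontrivial quantum learner yields a quantum natural property against $\mathcal{C}$, as in \Cref{sec:learning_meta-equivalences}, and I would derive $\mathsf{BQE}\nsubseteq\mathcal{C}(\poly)$ by diagonalizing in $\mathsf{BQE}$ against functions rejected by that property.
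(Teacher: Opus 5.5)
Your main route fails at the running-time step, and the problem is not bookkeeping. Running the learner directly on $L^\star_n$ at the full input length $n$ cannot give a contradiction when the learner takes time $2^n/n^{\omega(1)}$.

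\emph{The recursion does not reach that bound.} Each of the up to $2^n/n^{\omega(1)}$ queries at length $n$ is answered, via downward self-reducibility, by evaluating the corrected length-$(n-1)$ hypothesis. That hypothesis may itself have size about $2^{n-1}/n^{\omega(1)}$, so the total cost is roughly $4^n$.

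\emph{Even a genuine bound would contradict nothing.} Suppose you did get $L^\star\in\mathsf{BQTIME}[2^n/n^{\omega(1)}]$. $L^\star$ is a fixed $\mathsf{PSPACE}$ language with no known $2^n$ lower bound, and no time hierarchy is known for bounded-error quantum time without advice. The saving also does not transfer to other languages:
\begin{itemize}
\item $\mathsf{PSPACE}$ languages reduce to $L^\star$ only with polynomial blow-up $n\mapsto n^c$, which turns the bound into the trivial $2^{n^c}$.
\item $\mathsf{BQE}$ is only known to lie in $\mathsf{SPACE}[2^{O(n)}]$. The collapse you actually get from $\mathsf{E}\subseteq\mathsf{BQE}\subseteq\mathsf{P}/\poly$ is $\mathsf{EXP}=\mathsf{MA}\subseteq\mathsf{PSPACE}$. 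This gives no reduction from $\mathsf{BQE}$ to $L^\star$ on polynomially long instances, so ``all of $\mathsf{BQE}$ via padding'' is not available.
\end{itemize}
The self-reducibility argument you describe is the Fortnow--Klivans argument. It needs learners running in time $2^{n^{o(1)}}$, and it ends with a small-space diagonalization rather than a hierarchy theorem. This is exactly why the original proof of \Cref{thm:rand_learning_lbs} first applies the learning speedup of \Cref{sec:speedup}. The coherence of quantum membership queries, which you flag as the main obstacle, is secondary by comparison.

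You have the right ingredients (the self-reducible $L^\star$ and the learner-as-test), but they are assembled the wrong way round. The missing idea is to invoke the learner only at an input length $m$ polynomially smaller than $n$, so that its near-$2^m$ cost is subexponential in $n$. The cited work does this through a pseudorandom generator rather than by learning $L^\star$ itself, using a win--win on the quantum complexity of $L^\star$:
\begin{itemize}
\item \emph{$L^\star$ is hard for quantum time $2^{n^{\delta}}$.} Build from it a uniform hardness-versus-randomness generator, secure against uniform quantum algorithms, with seed length $\poly(n)$. Its outputs are truth tables of functions on $m=n^{\delta'}$ bits, with $\delta'<\delta$. The generator is computable in exponential time, so $\mathsf{BQE}\subseteq\mathcal{C}(\poly)$, with padding and closure under restrictions, makes every output a $\mathcal{C}$-circuit of size $\poly(m)$. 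The test from your fallback (simulate the learner on the truth table, estimate agreement, accept iff it is low) therefore distinguishes outputs from random functions in quantum time $\poly(2^m)$. The generator's reconstruction then places $L^\star$ in quantum time $2^{O(n^{\delta'})}$, a contradiction. This reconstruction is where downward and random self-reducibility are actually used. It also needs a quantum extension of the Impagliazzo--Jaiswal--Kabanets--Wigderson local list decoder, because the distinguisher is now quantum.
\item \emph{$L^\star$, hence $\mathsf{PSPACE}$, is in quantum subexponential time.} Then a small-space diagonalization against $\mathcal{C}$-circuits can be simulated in $\mathsf{BQE}$.
\end{itemize}

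The last step of your fallback, ``diagonalizing in $\mathsf{BQE}$ against functions rejected by the property,'' does not work on its own. A natural property only certifies that a given truth table is hard. A bounded-error machine cannot use it to settle on one fixed hard truth table: exhaustive search costs about $2^{2^n}$, and outputting a freshly sampled accepted table does not define a language. The property has to be played against a generator, as above.
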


In particular, it follows that if there is a randomized (or even quantum) learning algorithm for the class of depth-$2$ threshold circuits ($\mathsf{TC}^0_2$) that runs in time $2^n/n^{\Omega(1)}$, then a new complexity lower bound would follow (see \Cref{fig:learningcomplexityfrontier} for a pictorial representation and a related discussion). We refer to \citep{DBLP:conf/stoc/AlmanPS26} and references therein for progress on the learnability of subclasses of depth-$2$ threshold circuits.

\begin{figure}[ht]
\centering
\begin{tikzpicture}[
    circuitgate/.style={
        circle,
        draw=blue!45!black,
        fill=blue!8,
        line width=0.9pt,
        minimum size=0.72cm,
        font=\scriptsize
    },
    circuitinput/.style={
        circle,
        draw=blue!45!black,
        fill=blue!8,
        line width=0.8pt,
        minimum size=0.48cm,
        font=\scriptsize
    },
    circuitdots/.style={
        inner sep=0pt,
        font=\small
    },
    learningpanel/.style={
        rounded corners=7pt,
        draw=green!45!black,
        fill=green!8,
        line width=1pt,
        minimum width=6.1cm,
        minimum height=2.15cm
    },
    complexitypanel/.style={
        rounded corners=7pt,
        draw=orange!75!black,
        fill=orange!5,
        line width=1pt,
        minimum width=6.1cm,
        minimum height=2.15cm
    }
]

\coordinate (top) at (7,6.2);
\coordinate (lowerone) at (4.6,4.85);
\coordinate (lowertwo) at (6.0,4.85);
\coordinate (lowerdots) at (7,4.85);
\coordinate (lowerthree) at (8.0,4.85);
\coordinate (lowerfour) at (9.4,4.85);
\coordinate (inputleft) at (5.2,3.45);
\coordinate (inputcentre) at (7,3.45);
\coordinate (inputright) at (8.8,3.45);

\foreach \g in {lowerone,lowertwo,lowerthree,lowerfour}
{
    \draw[gray!60,line width=0.7pt] (\g) -- (top);
}

\foreach \i in {inputleft,inputcentre,inputright}
{
    \foreach \g in {lowerone,lowertwo,lowerthree,lowerfour}
    {
        \draw[gray!50,line width=0.55pt] (\i) -- (\g);
    }
}

\node[font=\bfseries] at (7,8.1)
{A typical circuit class $\mathcal{C}$};

\node[font=\small] at (7,7.35)
{Example: $\mathcal{C}=\mathsf{TC}^0_2$};

\node[circuitgate] at (top) {$\mathsf{THR}$};
\node[circuitgate] at (lowerone) {$\mathsf{THR}$};
\node[circuitgate] at (lowertwo) {$\mathsf{THR}$};
\node[circuitdots] at (lowerdots) {$\cdots$};
\node[circuitgate] at (lowerthree) {$\mathsf{THR}$};
\node[circuitgate] at (lowerfour) {$\mathsf{THR}$};

\node[circuitinput] at (inputleft) {$x_1$};
\node[circuitinput] at (inputcentre) {$\cdots$};
\node[circuitinput] at (inputright) {$x_n$};

\node[learningpanel] at (3.35,1.15) {};
\node[complexitypanel] at (10.65,1.15) {};

\draw[green!45!black,line width=0.9pt,->]
    (6.65,3.15) ..
    controls (5.65,2.9) and (4.55,2.65) ..
    (3.35,2.25);

\draw[orange!75!black,line width=0.9pt,->]
    (7.35,3.15) ..
    controls (8.35,2.9) and (9.45,2.65) ..
    (10.65,2.25);

\node[
    font=\bfseries\large,
    text=green!40!black
]
at (3.35,1.75)
{Learning};

\node[
    font=\normalsize,
    align=center,
    text width=5.4cm
]
at (3.35,1.05)
{Learn an unknown $\mathcal{C}$-circuit};

\node[font=\small] at (3.35,0.45)
{$f\in\mathcal{C}(\poly)
\;\Longrightarrow\;
\text{hypothesis }h\approx f$};

\node[
    font=\bfseries\large,
    text=orange!75!black
]
at (10.65,1.75)
{Complexity};

\node[
    font=\normalsize,
    align=center,
    text width=5.4cm
]
at (10.65,1.05)
{Exhibit a function that does not have small
$\mathcal{C}$-circuits:};

\node[font=\small] at (10.65,0.35)
{$\text{an explicit }g\notin\mathcal{C}(\poly)$};

\end{tikzpicture}

\caption{\small{
Two closely related questions about a typical circuit class
$\mathcal{C}$, illustrated by the class $\mathsf{TC}^0_2$ of
depth-$2$ threshold circuits (see \Cref{sec:prelim_circuits}).
By \Cref{thm:rand_learning_lbs} and \Cref{thm:quant_learning_lbs},
a nontrivial classical or quantum learner for $\mathcal{C}$
yields a new circuit size lower bound against $\mathcal{C}$
for an explicit function.
Conversely, a combinatorial lower bound that naturalises yields
a constructive property that can be converted into a learning
algorithm for $\mathcal{C}$, as discussed in
\Cref{sec:learn-via-dist}.
A recurring pattern across circuit classes at the frontier is
that the two questions have the same status: either learners and
lower bounds are known, or both remain open.
In particular, for $\mathsf{TC}^0_2$, neither a nontrivial learner
nor a superpolynomial circuit size lower bound for an explicit
function is known.
}}
\label{fig:learningcomplexityfrontier}
\end{figure}

The original proof of \Cref{thm:rand_learning_lbs} \citep{DBLP:conf/coco/OliveiraS17} makes use of the learning speedups discussed in \Cref{sec:speedup}, which reduces the result to a proof that \emph{subexponential} time learning in time $2^{n^{o(1)}}$ implies complexity lower bounds. On the other hand, \citep{DBLP:conf/focs/ArunachalamGGOS21} explores a more direct perspective  that can be used to derive both \Cref{thm:rand_learning_lbs} and  \Cref{thm:quant_learning_lbs}. We refer to \citep{DBLP:conf/coco/OliveiraS17, DBLP:conf/focs/ArunachalamGGOS21} for more information about these theorems and their proofs. 

While the results discussed in this section apply to a significantly larger class of learning algorithms, they  come with a technical caveat: the hard function extracted from the learning assumption is hard only on infinitely many input lengths $n$. In contrast, it can be checked that the lower bound arguments  in \Cref{thm:lb_from_det_learning} establish hardness for every sufficiently large input length $n$. This caveat is a consequence of the more indirect nature of their proofs. It would be interesting to introduce techniques to close this gap.

\begin{problem}[Almost-Everywhere Hardness from Learning]
    Show that if a typical circuit class $\mathcal{C}$ admits nontrivial randomized learning algorithms, then there is a language in $\BPE$ that is not in $\mathcal{C}(\poly)$ on every large enough input length.
\end{problem}

We refer to \citep{DBLP:conf/colt/Volkovich16} for additional settings where one can derive complexity-theoretic consequences from learning algorithms.

\subsection{Unconditional Meta-Complexity Lower Bounds via Learning}\label{sec:uncond_lbs}

Nontrivial learning algorithms have been designed for a large number of circuit classes \citep{DBLP:conf/innovations/ServedioT17, DBLP:conf/alt/Karchmer24}. However, in all existing results, lower bounds are already known against the corresponding circuit class. A concrete class for which both lower bounds and learning remain elusive is the class of Boolean functions that can be approximated by low-degree Torus polynomials (see \citep{DBLP:conf/innovations/BhrushundiHLR19}).\footnote{
Degree-$d$ Torus polynomials and degree-$d$ polynomial threshold
functions (PTFs) can both be viewed as weighted majorities over the same
feature representation.
Let $\Phi_d(x)$ denote the vector of all multilinear monomials in
$x$ of degree at most $d$.
In both cases, one first computes a weighted sum
$p(x)=\langle w,\Phi_d(x)\rangle$.
A PTF applies a \emph{sign} 
activation $\mathbf{1}_{[p(x)\geq 0]}$, whereas a Torus polynomial uses
a \emph{periodic} activation: it reduces $p(x)$ modulo $1$ and interprets
values close to $0$ and $1/2$ as the Boolean values $0$ and $1$,
respectively.
Thus, the two models use the same degree-bounded feature expansion
but different activation functions.
As discussed in \Cref{sec:classical_connections}, degree-$d$
PTFs can be learned in time
$\poly\bigl(n^d,1/\varepsilon,\log(1/\delta)\bigr)$.
}

\begin{problem}[Nontrivial Learnability of Torus Polynomials]\label{prob:torus}
Is there a nontrivial randomized or quantum learning algorithm for the class of Boolean functions that can be $1/n$-approximated by $\poly(\log n)$-degree Torus polynomials?
\end{problem}

Since any function in  the circuit class $\mathsf{ACC}^0$ admits such approximating polynomials \citep{DBLP:conf/innovations/BhrushundiHLR19}, a positive answer to \Cref{prob:torus} leads to a new lower bound against $\mathsf{ACC}^0$ via \Cref{thm:quant_learning_lbs}.\footnote{Recall that an $\mathsf{ACC}^0$ circuit is a constant-depth polynomial-size Boolean circuit consisting of unbounded fan-in AND, OR, NOT, and MOD$_m$ gates, where $m$ is a fixed integer and MOD$_m$ computes the sum of the input bits modulo $m$.} The latter would be a breakthrough result in complexity theory. We refer to \citep{DBLP:journals/jacm/Williams14} for more information on $\mathsf{ACC}^0$.

A closely related problem is that of learning the class of Boolean circuits $f \colon \{0,1\}^n \to \{0,1\}$ that can be written as a disjoint union (DOR) of quasi-polynomially many MAJ-AND$_t$ circuits, where the bottom AND gate has fan-in $t = \poly(\log n)$.\footnote{Since MAJ-AND$_t$ circuits of bottom fan-in $t = \poly(\log n)$ can be viewed as degree-$t$ PTFs, MAJ-AND$_t$ circuits (without a top DOR gate) can be learned in quasi-polynomial time.}

\begin{problem}[Nontrivial Learnability of DOR-MAJ-AND Circuits]\label{prob:DORMAJAND}
    Is there a nontrivial randomized or quantum learning algorithm for the class of quasi-polynomial size DOR-MAJ-AND circuits of polylogarithmic bottom fan-in?
\end{problem}

It is not hard to prove that a positive solution to this problem also implies lower bounds against $\mathsf{ACC}^0$, by combining \Cref{thm:quant_learning_lbs} with the simulation $\mathsf{ACC}^0 \subseteq  \mathsf{SYM}^+ \subseteq \mathsf{DOR}_k \text{-} \mathsf{MAJ} \text{-} \mathsf{AND}_t$ for $k = \exp(\poly(\log n))$ and $t = \poly(\log n)$ \citep{beigel1994acc}.\\

\noindent \textbf{Complexity Lower Bounds from Learning.} While \Cref{thm:rand_learning_lbs} and \Cref{thm:quant_learning_lbs} have not yet led to new complexity lower bounds in a \emph{direct} way, it is possible to use these results in an \emph{indirect} way to establish \emph{unconditional} complexity lower bounds. We describe next a general template:
$$
\mathsf{Easiness~Assumption} \quad \xRightarrow[]{(i)}  \quad  \mathsf{Learning~Algorithm}  \quad  \xRightarrow[]{(ii)}  \quad  \mathsf{Computational~Hardness} 
$$
In more detail, we proceed as follows:
\begin{enumerate}
    \item[(\emph{i})] First, from a suitable $\mathsf{Easiness~Assumption}$, we show that nontrivial learning \emph{is} possible.
    \item[(\emph{ii})] Next, we invoke a result such as \Cref{thm:rand_learning_lbs} to derive some form of $\mathsf{Computational~Hardness}$.
    \item[] To obtain an unconditional lower bound, we carefully balance the initial $\mathsf{Easiness~Assumption}$ with the resulting $\mathsf{Computational~Hardness}$ consequence, with the goal of achieving a contradiction between the two.
\end{enumerate}
If we are able to implement this plan, it follows that the original $\mathsf{Easiness~Assumption}$ is \emph{false}, i.e., a corresponding complexity lower bound follows.\\

\noindent \textbf{Hardness of Estimating Complexity.} This approach has borne fruit in the form of an unconditional complexity lower bound in Meta-Complexity. In \citep{DBLP:conf/icalp/Oliveira19}, we considered the Minimum $\mathsf{rKt}$  Problem ($\mathsf{MrKtP}$): given a string $w \in \{0,1\}^n$, estimate its $\mathsf{rKt}$ complexity, a measure of the compressibility, or regularity, of $w$.\footnote{Random strings have $\mathsf{rKt}$ complexity close to $n$, while structured strings can be succinctly described and have much smaller $\mathsf{rKt}$ complexity.} This is a randomised analogue of a fundamental measure introduced by L.~Levin in 1984 \citep{DBLP:journals/iandc/Levin84} called $\mathsf{Kt}$ complexity. While the complexity of estimating the $\mathsf{Kt}$ complexity of an input string ($\mathsf{MKtP}$) remains open, we were able to show that $\mathsf{MrKtP} \notin \mathsf{BPP}$, i.e., the $\mathsf{rKt}$ complexity of a string cannot be computed in probabilistic polynomial time. 

This can be shown by analysing the following $\mathsf{Easiness~Assumption}$: $\mathsf{MrKtP} \in \mathsf{BPP}$. From this and the technique of \citep{DBLP:conf/coco/CarmosinoIKK16} described in \Cref{sec:learn-via-dist}, it follows that polynomial-size Boolean circuits can be learned in randomised polynomial time. This is Item (\emph{i}) in the template presented above. 

On the other hand, the existence of such a learning algorithm implies by \Cref{thm:rand_learning_lbs} that some problem in $\mathsf{BPE}$ (randomised exponential time) cannot be computed by polynomial size Boolean circuits. This corresponds to Item (\emph{ii}) in the aforementioned template. 

Finally, since every problem in $\mathsf{BPE}$ can be reduced using polynomial-size Boolean circuits to $\mathsf{MrKtP}$, it must be the case that $\mathsf{MrKtP} \notin \mathsf{BPP}$. This contradicts the original $\mathsf{Easiness~Assumption}$. 

In other words, it follows unconditionally that $\mathsf{MrKtP} \notin \mathsf{BPP}$. Note that our original problem, $\mathsf{MrKtP}$, makes no reference to learning algorithms!\\

Recall that it is a longstanding open problem to show the analogous result for $\MKtP$, i.e., to prove that $\MKtP \notin \P$. In light of the above discussion and \Cref{thm:lb_from_det_learning}, it is not hard to see that to obtain this result it would be sufficient to show that if  $\MKtP \in \P$ then nontrivial \emph{deterministic} learning of polynomial-size Boolean circuits is possible. In turn, this would be feasible if we had a \emph{deterministic} analogue of the techniques from \Cref{sec:learn-via-dist}.

\begin{problem}[Deterministic Learning from Meta-Complexity]
    Is there a way to obtain deterministic learning algorithms assuming the easiness of meta-computational problems such as $\MKtP$?
\end{problem}

This is closely related to the problem of derandomizing learning algorithms. For more about this line of work, we refer the reader to \citep{DBLP:conf/approx/OliveiraS18}.

To conclude this section, we note that we have only discussed lower bounds that can be extracted from PAC learning algorithms. In principle, it should be possible to obtain stronger complexity-theoretic consequences from the existence of algorithms for more expressive learning tasks (see \Cref{sec:complexity_learning}). We leave this as an interesting direction for further research.

\begin{problem}[Complexity-Theoretic Consequences of Learning]
    Derive stronger complexity-theoretic consequences from the existence of algorithms for learning tasks that are more expressive than PAC learning under the uniform distribution, such as distribution learning and sequence extrapolation.
\end{problem}

\newpage 

\section{Related Work}\label{sec:related_work}

Here we highlight some related lines of work that are not covered in the preceding sections, following roughly the order of the survey. We also provide pointers to some references that cover more practical aspects of learning and its connections to compression. 

\paragraph{Search vs Decision in Meta-Complexity.}
The mismatch between recognising the existence of a short description and producing one is highlighted in \Cref{sec:introduction}. The learning constructions surveyed in \Cref{sec:learning_via_meta} often circumvent this mismatch by  converting a decision procedure into a predictor through a reconstruction argument. Search-to-decision reductions for time-bounded Kolmogorov complexity address the issue directly, asking when a decision procedure suffices to recover a short program. We refer to \citep{MP24, DBLP:conf/coco/HiraharaKLO24, DBLP:conf/focs/HiraharaLN24} for recent work on this direction and related open problems. 

\paragraph{Compression in Supervised Learning.} The term \emph{model compression} (\Cref{sec:classical_connections}) has a different meaning in modern machine learning, where one starts with a trained model and seeks a smaller  approximation. Representative approaches include transferring the predictive behaviour of a large model or ensemble to a smaller network through knowledge distillation \citep{HintonVinyalsDean2015}, compressing an explicitly stored network through pruning, quantization, and entropy coding \citep{HanMaoDally2016}, and training stochastic network parameters subject to an explicit information-theoretic coding budget \citep{HavasiPeharzHernandezLobato2019}. Dataset distillation moves the compression to the data side, replacing a large training set by a small collection of synthetic examples from which a prescribed training procedure recovers comparable performance \citep{WangZhuTorralbaEfros2018}. 

\Cref{sec:classical_connections} also develops the basic connection between sample compression and generalisation, mentions the qualitative equivalence with finite VC dimension, and states the sample compression conjecture. The classical binary sample compression framework was introduced by Littlestone and Warmuth \citep{littlestone1986relating} and developed further by Floyd and Warmuth \citep{FloydWarmuth1995}. Subsequent work extends or refines this perspective in several directions, including multiclass and general loss learning \citep{DavidMoranYehudayoff2016}, stable compression schemes \citep{HannekeKontorovich2021}, list compression and its limitations \citep{HannekeMoranWaknine2024}, and reductions from compression problems in other learning models to classical binary compression \citep{AttiasHannekeRamaswami2025}.

\paragraph{Distributional Restrictions and the Complexity of Learning.} Departing from the pessimistic distribution-free definition of PAC learning, the learning paradises of \Cref{sec:paradises} deliberately restrict the example distributions to feasible ones. This choice can change the computational complexity of the learning task substantially. Blanc et al.\ \citep{DBLP:conf/innovations/Blanc0LST26} show that requiring success only on samplable distributions can enlarge the power of efficient learning, even at the level of sample complexity.  Blanc, Lange, Strassle, and Tan \citep{BlancLangeStrassleTan2025} instead ask when a learner for a restricted family of example distributions can be lifted to arbitrary example distributions. The latter result is governed by the complexity of expressing the new distribution as a mixture of distributions from the original family. This is distinct from the distributional inversion argument in \Cref{sec:learning_via_meta}, which moves between example distributions by inverting their samplers.

\paragraph{Practical Analogues of the Learning Paradises.}
The assumptions P.I through P.V from \Cref{sec:paradises} are much stronger than anything currently achieved in practice, but the tasks they formalize have recognizable counterparts in modern machine learning. These analogies concern the form of each learning task, not the strength of its guarantee. Practical methods exploit regularities in natural data and address restricted families of instances, whereas the paradises require uniform guarantees over broad families of sources, example distributions, or circuits.

P.I is closest to predicting the continuation of a sequence from an observed context, as in models for text, video, or dynamical systems. For instance, Del\'etang et al.~\citep{DBLP:conf/iclr/DeletangRDCGMGW24} study the practical connection between prediction and compression in this setting, while Grau-Moya et al.~\citep{pmlr-v235-grau-moya24a} investigate neural predictors trained to approximate universal prediction.

P.II resembles generative modeling, where independent examples are used to learn a procedure for producing new samples. The denoising diffusion model of Ho, Jain, and Abbeel~\citep{HoJainAbbeel2020} is a prominent example. Generative models can also serve as compressors. Yang, Mandt, and Theis~\citep{YangMandtTheis2023} survey this area, while Kingma et al.~\citep{KingmaSalimansPooleHo2021} turn a variational diffusion model into a lossless compressor whose rate is close to its variational coding bound. From the perspective of this survey, these systems also expose resources not captured by code length, notably encoding and decoding time and the size of the shared learned model. We note that practical systems are generally evaluated by prediction loss, sample quality, or application performance, rather than by guarantees that hold uniformly over all efficiently samplable distributions, as P.I and P.II require.

P.III is closest to learning when labels may be noisy and the target need not belong to the chosen hypothesis class. Haussler~\citep{DBLP:journals/iandc/Haussler92} extends the PAC model to general loss minimization, while Kearns, Schapire, and Sellie~\citep{DBLP:journals/ml/KearnsSS94} formulate the agnostic model, in which performance is compared with that of the best predictor in a class. Tractable results for noisy labels generally assume additional structure. For example, Natarajan et al.~\citep{NatarajanDhillonRavikumarTewari2013} study loss correction when the probability of a label error depends only on the true label. This is one example from a vast theoretical and empirical literature, which we do not attempt to survey here.

P.IV considers learning through chosen queries. Angluin~\citep{Angluin1988Queries} provides a foundational framework for query learning. Practical counterparts arise when a learner can choose which inputs are labelled, as in active learning \citep{CohnGhahramaniJordan1996}, or can experiment with an implemented system, as in black box checking \citep{PeledVardiYannakakis1999} and model extraction \citep{TramerZhangJuelsReiterRistenpart2016}. Membership queries provide stronger access than the common pool-based model of active learning, since they allow the learner to query any input rather than only examples from a fixed unlabelled sample.

P.V has a broader practical analogue. A heuristic learning algorithm may succeed on concepts that arise naturally in an application even if it fails on adversarial or otherwise atypical concepts from the same representation class. This is precisely the relaxation captured by P.V.

The empirical success of modern machine learning remains  compatible with the hardness results discussed above. As explained before, the learning paradises  concern uniform guarantees over broad families of efficient sources, distributions, or circuits, which can be far more general than success on the structured instances encountered in practice. These  stronger theoretical guarantees enable the connections to cryptography and average-case complexity, which in turn yield hardness of learning results.

\paragraph{Cryptographic Hardness of Learning.}
The cryptographic literature  approaches learning from the opposite direction to much of this survey, using cryptographic assumptions to prove that particular concept classes are hard to learn. Classical examples include the reductions of Pitt and Valiant \citep{PittValiant1988} and of Kearns and Valiant \citep{KearnsValiant1994}, followed by other representative examples such as Klivans and Sherstov \citep{DBLP:journals/jcss/KlivansS09}  and Daniely et al.\ \citep{10.1145/2591796.2591820} (see also \citep{DBLP:conf/focs/AllenOW15} and subsequent work \citep{DBLP:conf/stoc/Daniely16, pmlr-v134-daniely21a}). There is an extensive literature on the  hardness of learning tasks under a variety of cryptographic conjectures, and reviewing these results is beyond the scope of this section. 

A related line of work investigates whether weaker complexity-theoretic assumptions are sufficient to rule out efficient learning. Applebaum, Barak, and Xiao identify barriers to basing such results on standard worst-case assumptions \citep{DBLP:conf/focs/ApplebaumBX08}.  Goldberg and Kabanets \citep{GoldbergKabanets2025Witness} show that certain formulations of NP-hardness of learning characterize witness encryption, while Blanc et al.\ \citep{BlancKochStrassleTan2025} derive computational and statistical tradeoffs from NP hardness. 

The discussion of white-box learning from \Cref{sec:wb} concentrates on one-way functions. Liu, Mazor, and Pass \citep{DBLP:conf/innovations/LiuMP25} obtain a corresponding characterization of public-key encryption through a richer white-box learning problem. The existence of secure key-agreement protocols has also been characterized via notions from Kolmogorov complexity and meta-complexity \citep{DBLP:conf/focs/BallLMP23}.

\paragraph{Description Length in Statistical Inference and Prediction.}
The treatment in \Cref{sec:shannon_connections} uses coding as a bridge between probabilistic prediction and compression, with particular attention to computational efficiency. Minimum Message Length (MML), originating in the work of Wallace and Boulton \citep{WallaceBoulton1968}, and Minimum Description Length (MDL), introduced by Rissanen \citep{Rissanen1978}, instead use description length as a criterion for statistical inference and model selection. Barron, Rissanen, and Yu \citep{BarronRissanenYu1998} give a standard account of the connections among MDL, universal coding, and statistical inference, while Vit\'anyi and Li \citep{VitanyiLi2000} relate MDL and MML to Bayesian inference and Kolmogorov complexity.

Dawid \citep{Dawid1984} formulated the prequential (``predictive and sequential'') approach, which evaluates a statistical method through the predictions it issues sequentially; under log loss, these predictions themselves define a code. Rissanen \citep{Rissanen1984} connects universal coding with prediction and estimation, while Feder, Merhav, and Gutman \citep{FederMerhavGutman1992} study universal prediction of individual sequences. Cesa-Bianchi and Lugosi \citep{CesaBianchiLugosi2006} give a standard treatment of sequential prediction and regret. These traditions are close to the coding interpretation developed above, but they generally concern statistical performance rather than the efficient computational transformations that are central here.

Building on Kolmogorov's earlier ideas about separating structure from randomness, G\'acs, Tromp, and Vit\'anyi \citep{GacsTrompVitanyi2001} develop algorithmic statistics, asking how much of an individual string's shortest description should be regarded as an explanatory model and how much as unexplained randomness. This is related to the use of algorithmic information theory for extrapolation in \Cref{sec:solomonoff}, but its primary object is the structure of a given string rather than prediction from a sequence.

\paragraph{Similarity via Compression.}
A different use of compression in machine learning, not covered in the survey, treats the lengths produced by a compressor as a measure of similarity between individual objects. Li et al.\ \citep{DBLP:journals/tit/LiCLMV04} introduce normalized information distance and demonstrate practical approximations using standard compressors, while Cilibrasi and Vit\'anyi develop normalized compression distance into a general clustering method \citep{DBLP:journals/tit/CilibrasiV05}. More recently, Jiang et al.\ \citep{DBLP:conf/acl/JiangYTTDL23} illustrate the same idea in text classification by combining gzip with a nearest neighbor rule. This perspective is complementary to the connections between learning and compression discussed in the survey: compression is used directly as a data analysis primitive, without learning an explicit  model.

\newpage 

\small

\bibliographystyle{alpha}	
\bibliography{references}

\end{document}